\documentclass[a4paper, UKenglish, cleveref, autoref, thm-restate, pdfa]{lipics-v2021}

\pdfoutput=1 
\hideLIPIcs 
\nolinenumbers 

\usepackage{csquotes}
\usepackage{mathtools}
\usepackage{stmaryrd}
\usepackage{tikz}
\usepackage{pifont}
\usepackage{colortbl}
\usepackage{hyperref}

\usetikzlibrary{arrows, arrows.meta}
\usetikzlibrary{backgrounds}
\usetikzlibrary{calc}
\usetikzlibrary{decorations.markings}
\usetikzlibrary{decorations.pathmorphing}
\usetikzlibrary{decorations.pathreplacing}
\usetikzlibrary{fadings}
\usetikzlibrary{fit}
\usetikzlibrary{matrix}
\usetikzlibrary{patterns}
\usetikzlibrary{positioning}
\usetikzlibrary{shapes}
\usetikzlibrary{tikzmark}

\theoremstyle{Definition}
\newtheorem{Definition}[theorem]{Definition}

\numberwithin{theorem}{section}
\numberwithin{corollary}{section}
\numberwithin{proposition}{section}
\numberwithin{lemma}{section}
\numberwithin{Definition}{section}
\numberwithin{claim}{section}

\tikzset{
	graph edge/.style = {draw, thick},
}

\newcommand{\cutout}[1]{} 

\newcommand{\E}{\exists}
\newcommand{\A}{\forall}

\renewcommand{\AA}{{\mathfrak A}}

\newcommand{\Trop}{\mathbb{T}}
\newcommand{\Vit}{\mathbb{V}}
\newcommand{\Nat}{\mathbb{N}}
\newcommand{\Ninf}{\mathbb{N}^\infty}
\newcommand{\Bool}{\mathbb{B}}
\newcommand{\Fuzzy}{\mathbb{F}}
\newcommand{\Lukas}{\mathbb{L}}
\newcommand{\Doubt}{\mathbb{D}}

\newcommand{\Sorb}{\mathbb{S}}

\newcommand{\Tt}{\ensuremath{\mathcal{T}}}
\newcommand{\calS}{\mathcal{S}}
\newcommand{\Semiring}{\ensuremath{\calS}}
\newcommand{\Semi}{\Semiring}
\newcommand{\calL}{\mathcal{L}}
\newcommand{\calT}{\mathcal{T}}

\renewcommand{\phi}{\varphi}
\renewcommand{\theta}{\vartheta}
\renewcommand{\epsilon}{\varepsilon}

\newcommand*{\N}{\ensuremath{\mathbb{N}}}

\newcommand*{\R}{\ensuremath{\mathbb{R}}}

\newcommand*{\llb}{\ensuremath{\llbracket}}
\newcommand*{\rrb}{\ensuremath{\rrbracket}}
\renewcommand{\AA}{{\mathfrak A}}
\newcommand{\BB}{{\mathfrak B}}
\newcommand{\exneq}{\ensuremath \exists^{\neq}}
\newcommand{\foraneq}{\ensuremath \forall^{\neq}}

\newcommand{\ra}{\rightarrow}

\DeclareMathOperator{\supp}{\mathsf{supp}}

\DeclareMathOperator{\FO}{FO}
\DeclareMathOperator{\qr}{qr}
\DeclareMathOperator{\Lit}{Lit}
\DeclareMathOperator{\img}{\mathsf{img}}

\newcommand{\Inf}{\bigsqcap}
\newcommand{\Sup}{\bigsqcup}
\newcommand{\meet}{\sqcap}
\newcommand{\join}{\sqcup}
\newcommand{\bbN}{\mathbb{N}}

\newcommand*{\ext}[1]{\llbracket #1 \rrbracket}

\colorlet{yes}{green!30}
\colorlet{no}{red!30}

\title{Preservation Theorems in Semiring Semantics}

\author{Sophie Brinke}{RWTH Aachen University, Germany}{brinke@logic.rwth-aachen.de}{%
https://orcid.org/0009-0005-2865-0069}{}

\author{Anuj Dawar}{University of Cambridge, UK}{Anuj.Dawar@cl.cam.ac.uk}{https://orcid.org/0000-0003-4764-8894}{}

\author{Erich Grädel}{RWTH Aachen University, Germany}{graedel@logic.rwth-aachen.de}{%
https://orcid.org/0000-0002-8950-9991}{}
\author{Benedikt Pago}{University of Cambridge, UK}{benedikt.pago@cl.cam.ac.uk}{https://orcid.org/0000-0001-6377-1230}{}

\authorrunning{S. Brinke, A. Dawar, E. Grädel, B. Pago}
\Copyright{Sophie Brinke, Anuj Dawar, Erich Grädel, and Benedikt Pago}

\ccsdesc{Theory of Computation~Finite Model Theory}
\keywords{Semiring semantics, preservation theorems, model theory, algebraic logics}

\category{}
\relatedversion{}

\begin{document}

\maketitle

\begin{abstract}
We study the status of classical model-theoretic preservation theorems such as 
the {\L}o\'s-Tarski theorem and the homomorphism preservation theorem in the context of
semiring semantics. 

Semiring semantics has its origins in the provenance analysis of database queries but has been extended to a systematic way of evaluating logical statements to values in a commutative semiring. Depending on the underlying semiring, this allows us to track descriptions of the atomic facts that are responsible for the truth of a statement or practical information about the evaluation such as costs or confidence. The systematic development of semiring semantics for first-order logic and other logical systems raises the question to what extent classical model-theoretic results can be generalised to 
this setting and how such results depend on the underlying semiring. 

The definitions of semantic properties such as preservation under extensions, substructures, or homomorphisms naturally generalise to the setting of semiring semantics. However, the
status of the corresponding preservation theorem strongly depends on the 
algebraic properties of the particular semirings. We prove that these preservation theorems
do indeed hold for all lattice semirings (a quite large class, encompassing practically relevant 
semirings and in particular all min-max semirings).
The proofs combine adaptations of the classical compactness and amalgamation methods 
with specific reduction methods for logical entailment that have been developed in semiring semantics.
On the other side, variants of the existential preservation theorem fail 
for many other semirings, including the tropical semiring, the Viterbi semiring, the {\L}ukasiewicz semiring, and the natural semirings $\N$ and $\N^\infty$.
Surprisingly, the existential preservation theorem does hold for \emph{finite} interpretations in a number of semirings, including the three-element min-max semiring, which extends the Boolean by just a single additional truth value. Thus, the situation for these semirings is in sharp contrast to the Boolean case, where the {\L}o\'s-Tarski theorem holds in general, but not in the finite.

\end{abstract}

\newpage

\section{Introduction}

\noindent{\bf Preservation theorems. } In model theory, preservation theorems 
typically state that the first-order sentences satisfying a specific semantic property
are precisely those that are equivalent to sentences of a particular syntactic
form. Well-known examples include the {\L}o\'s-Tarski theorem saying that 
a first-order sentence is preserved under extensions of its models if, and only if,
it is equivalent to an existential sentence, and the homomorphism preservation theorem which states that the first-order sentences preserved under homomorphisms are precisely those that are equivalent to existential-positive sentences (unions of conjunctive queries).
Preservation theorems matter because they tell us which syntactic forms capture which structural invariance properties, and this drives applications in databases, 
descriptive complexity, the simplification of axiomatisations, quantifier elimination, and the study of expressive power of fragments of first-order logic.

The prominence of preservation theorems in classical model theory has motivated persistent investigations of their status in other model-theoretic contexts, in particular in finite model theory and, more recently, also in many-valued logics. In finite model theory, it has been known for a long time that most of the classical preservation theorems fail when restricted to finite structures. In particular this concerns the {\L}o\'s-Tarski theorem for which Tait~\cite{Tait59} and Gurevich~\cite{Gurevich84} provide examples of  first-order sentences whose finite models are closed under extensions but which are not equivalent, over finite structures, to any existential sentence. This failure has been strengthened by Dawar and Sankaran~\cite{DawarSan21} who construct, for every~$n$, sentences whose finite models are closed under extensions, but which are not equivalent in the finite to any $\Sigma_n$-sentence.
Among the few positive results on preservation theorems in finite model theory, the most significant one, due to Rossman \cite{Rossman08},
establishes that the homomorphism preservation theorem does indeed hold on finite structures. There are many further results on preservation theorems in finite model theory, see~\cite{Rosen02}
for a survey and~\cite{Lopez-thesis} for a more recent treatment.

In many-valued logics, Badia et al.~\cite{BadiaCosDelNog19} generalised the {\L}o\'s-Tarski theorem and also the
Chang-{\L}o\'s-Suszko preservation theorem for $\Pi_2$-sentences under unions of chains to the
specific setting of fixed finite MTL-chains (linearly ordered algebraic models of monoidal t-norm–based logic).
Dellunde and Vidal \cite{DellundeVid19} established a variant of the homomorphism preservation theorem
in the same setting and Carr \cite{Carr26} provided a variant of this for finite many-valued structures, based on an extension of Rossman's proof.

\medskip\noindent{\bf Semiring semantics. } In this paper, we extend the study of preservation theorems to the setting of semiring semantics. This approach, 
which has its origins in the provenance analysis of database queries \cite{GreenKarTan07}, is based on the idea of evaluating logical statements not just to true or false, but more generally to values in some commutative semiring $\Semi = (S, +, \cdot, 0, 1)$. Atomic facts are annotated by values from $S$ which are then propagated through a 
database query or a logical formula, keeping track whether information is used alternatively or jointly. 
Depending on the chosen semiring, provenance valuations give practical information about a statement, for instance concerning the \emph{confidence} that we may 
have in its truth, the \emph{cost} of its evaluation, the required \emph{clearance level} for the access to not freely available data,  
the \emph{number of successful evaluation strategies}, and so on. Beyond such provenance evaluations in specific \emph{application semirings}, 
more precise information is obtained by evaluations in \emph{provenance semirings}
of polynomials or formal power series, which permit us to \emph{track} which
atomic facts are used (and how often) to compute the answer to the query. See  \cite{GreenTan17,Glavic21} for surveys on semiring provenance in databases.

The detailed information on a logical statement, as provided by semiring provenance, is of interest 
not just for databases, but also in many other areas of logic in computer science. 
There is thus ample motivation to extend the approach of semiring provenance beyond database queries to a general semiring semantics 
for logical systems, in particular for full first-order logic.
Further, semiring valuations have also been successfully applied to the strategy analysis for various forms of finite and infinite games \cite{GraedelTan20, GraedelLucNaa21b, Mrkonjic25}.
This poses a number of mathematical challenges that have been addressed in recent research, and it
raises the question to what extent classical logical methods and results also hold in this more general context.
Research on such questions has included the  study of  elementary equivalence and the axiomatisability of finite semiring interpretations \cite{GraedelMrk21},
of the equivalence of the relational calculus with relational algebra (Codd's Theorem) \cite{BadiaKolNog25}, 
0-1 laws \cite{GraedelHelNaaWil22},  Ehrenfeucht--Fra\"{\i}ss\'e games \cite{BrinkeGraMrk24}, the locality theorems by Hanf and Gaifman \cite{BiziereGraNaa23}, compactness \cite{BrinkeDawGraMrkNaa26},
and the interplay between local consistency and global consistency for relations over semirings \cite{AtseriasKol23, AtseriasKol24}.
The study of preservation theorems in semiring semantics extends this research to a further important model-theoretic topic.

From the study of model-theoretic methods and results in the setting of semiring semantics, the following general picture has emerged.
\begin{itemize}
    \item For most of the central model-theoretic notions (such as elementary equivalence, homomorphisms, evaluation and comparison games, locality, compactness, random structures etc.), the \emph{definitions}
    naturally generalise to the semiring setting. In some cases, the generalisations require that semirings with specific algebraic properties are considered (for instance absorptive semirings, or semirings with infinitary operations).
    \item Sometimes, variants of such definitions which are clearly equivalent in the classical Boolean case can become different, non-equivalent notions in semiring semantics. An important example is compactness, which can be formulated either in terms of satisfiability or in terms of entailment, and in the classical setting either one of the formulations of the compactness theorem obviously implies the other. In semiring semantics this is not the case. There, compactness in terms of entailment is a stronger notion than compactness in terms of satisfiability, and there are important semirings, such as the tropical semiring where the two notions can be separated~\cite{BrinkeDawGraMrkNaa26}.
    \item While the \emph{questions} about the status of classical logical theorems in  semiring semantics arise very naturally, their \emph{answers} (and the proofs of these) become much more complicated and often strongly depend on algebraic properties of the underlying semirings. Moreover, they often require the development of new mathematical methods. 
\end{itemize}

\noindent{\bf Our contributions. }
For preservation properties in semiring semantics,
we make similar observations.
The notions of preservation under extensions, substructures, 
or homomorphisms generalise very naturally to semiring semantics,
and so do the straightforward directions (from syntactic forms to semantic properties)
of the preservation theorems that we consider: Existential sentences are preserved under extensions, universal sentences are preserved downwards and existential-positive sentences are preserved under homomorphisms. We study the question for which semirings also the `difficult' direction (from semantic properties to syntactic forms) of the preservation theorems holds, both in the case of semiring interpretations over arbitrary domains and those just over finite domains.
More precisely, our questions concerning, say,  the {\L}o\'s-Tarski theorem can be stated as follows:
Fix a semiring $\Semi$. Is it the case that for every first-order sentence $\psi$ which is preserved under extensions of (finite) $\Semi$-interpretations there exists a purely existential sentence
which is equivalent to $\psi$ under all (finite) $\Semi$-interpretations? Can we identify algebraic notions that imply positive or negative answers to this question?

The questions for universal preservation and preservation under
homomorphism are analogous.
Notice that the classical duality between existential and universal preservation breaks down. While the classical {\L}o\'s-Tarski theorem can be equivalently formulated in terms of
existential or universal preservation, these are separate
issues in semiring semantics.
We also note that preservation theorems for semirings 
neither entail the classical Boolean variants, nor are they entailed by them. This is because on the semantic side, preservation properties for a semiring $\Semi$ are stronger than their Boolean analogues,
but on the syntactic side also $\Semi$-equivalence is a stronger notion than classical logical equivalence.

\medskip
Our main results can be summarised as follows:
\begin{itemize}
    \item The preservation theorems for extensions, subinterpretations and homomorphisms are indeed true for
    all \emph{lattice semirings} $(S, \sqcup, \sqcap, 0, 1)$. These are semirings induced by
a completely distributive complete lattice $(S,\leq)$ (i.e. a partial order closed under suprema and infima in which infima distributive over suprema), with suprema and infima as semiring operations;
they encompass in particular all min-max semirings, such as the fuzzy semiring $\Fuzzy=([0,1],\max,\min,0,1)$, and the \emph{security semiring} for reasoning about access control \cite{FosterGreTan08}.
The proofs combine adaptations of the classical compactness and amalgamation methods to the specific three-element min-max semiring $\Semi_3$ with a reduction method from \cite{BrinkeDawGraMrkNaa26},
    reducing entailments for arbitrary lattice semirings to $\Semi_3$-entailments.

    \item Variants of the existential preservation theorem fail 
    for many other semirings, including the tropical semiring, the Viterbi semiring, the {\L}ukasiewicz semiring and the natural semirings $\N$ and $\N^\infty$.

    \item Surprisingly, the existential preservation theorem does, however, hold for \emph{finite} interpretations in the Viterbi semiring, the tropical semiring, the {\L}ukasiewicz semiring, and every lattice semiring with at least three elements. 
    Thus the situation for these semirings is in sharp contrast to the
    classical Boolean case, where the {\L}o\'s-Tarski theorem holds in general, but not in the finite. To prove this result, we develop a novel technique that allows us to translate evaluation strategies for sentences between models of different finite cardinalities (\cref{lem-relationship-between-polynomials}).
\end{itemize}

In particular, the latter result raises the intriguing question whether the failure of extension preservation in the finite is perhaps merely a sporadic phenomenon that is unique to the Boolean semiring.
Thus, our study provides the foundation for a new line of future research: \emph{For which other (classes of) semirings does the extension preservation theorem survive in the finite?}

\section{Semiring Semantics for First-order Logic}

We briefly describe the foundations of semiring semantics for first-order logic and refer to~\cite{GraedelTan24} for further details.

\begin{Definition}
A \emph{commutative monoid} is an algebraic structure $(M,+,0)$ where $+$ is a binary commutative and associative operation with $0 \in M$ as its neutral element.
A \emph{commutative semiring} is a structure $\Semi = (S, +, \cdot, 0, 1)$ with $0 \neq 1$ such that $(S, +, 0)$ and $(S, \cdot, 1)$ are commutative monoids, $\cdot$ distributes over $+$, and $0 \cdot s = s \cdot 0 = 0$.
\end{Definition}

In the following, we assume that all semirings are commutative and \emph{naturally ordered}, that is, $s\leq t \Leftrightarrow \E r (s+r=t)$ defines a partial order. Addition and multiplication are monotone with respect to the natural order.
A semiring is \emph{absorptive} if $s+st=s$ for all $s,t\in S$, which is equivalent
to the property that multiplication is decreasing with respect to the natural order. Absorptive semirings are particularly important in semiring semantics, because they preserve, to some extent, the common logical dualities.
Addition in absorptive semirings is \emph{idempotent} (i.e., $s+s=s$ for each $s \in S$), so sums coincide with suprema.
Beyond the Boolean semiring $\Bool = (\{0,1\}, \vee, \wedge, 0, 1)$, there are 
many other absorptive semirings that provide useful information about the evaluation of a formula.

\begin{itemize}
\item A totally ordered set $(S, \leq)$ with least element $s$ and greatest element $t$ induces the \emph{min-max semiring} $(S, \max, \min, s, t)$. 
\item A more general class  is the class of  \emph{lattice semirings} $(S,\join,\meet,s,t)$ induced by
a bounded distributive lattice $(S,\leq)$. In fact, every absorptive semiring with idempotent multiplication is a lattice semiring.
	\item The \emph{tropical semiring} $\Trop= (\R^\infty_+ , \min, +, \infty, 0)$ is used for cost analysis.
	\item To reason about confidence, we may use the \emph{Viterbi semiring} $\Vit = ([0,1]_\R, \max, \cdot, 0, 1)$ or the \emph{\L ukasiewicz semiring} $\Lukas = ([0,1]_\R, \max, \odot, 0, 1)$ where $s \odot t \coloneqq\max(s + t - 1, 0)$.
	\item The semiring $\Doubt = ([0, 1]_\R, \min, \oplus, 1, 0)$ with $s \oplus t \coloneqq \min(s+t,1)$ models levels of doubt.
    \item The most important non-absorptive semirings are the natural semiring $(\N,+,\cdot,0,1)$ and its extension to
    $\Ninf=\N\cup\{\infty\}$.
	\item \emph{Provenance semirings} allow us to track the atomic facts that are responsible for the truth of a formula. In the non-absorptive case, the most general 
	provenance semirings are $\N[X]$, the semirings of polynomials over a finite set $X$ of indeterminates and coefficients from $\N$.  In the absorptive setting 
	 the important examples are the semirings $\Sorb(X)$ and $\Sorb^\infty (X)$ of (generalised) absorptive polynomials.
     See \cite{GraedelTan24,Naaf24} for more details. 
\end{itemize}

To define semiring semantics also for infinite universes, and to extend it to valuations of infinite collections $\Phi$ of first-order sentences, 
the semirings need to be equipped with infinitary summation and product operators. Algebraic foundations, mathematical properties
and provenance valuations of such \emph{infinitary semirings} have been studied in \cite{BrinkeGraMrkNaa24}. We will not go into details here and tacitly assume that infinitary sums and products are available and have the desired associativity and commutativity properties. In the specific semirings that we consider here, the definitions of these operations are straightforward.

To evaluate first-order formulae in semirings, classical structures are generalised to semiring interpretations that map the atomic facts and their negations to semiring values. While $0$ represents falsity, every non-zero element provides an annotation of truth.

\begin{Definition}[Semiring interpretations]
\label{def:semiring-interpretation}
	Let $A$ be a (finite or infinite) universe, let $\tau$ be a relational vocabulary, and let $\Semi$ be a semiring. We denote by $\Lit_A (\tau)$ the set of $\tau$-literals of the form $R \bar a$ or $\lnot R \bar a$ that are instantiated with tuples from $A$. Analogously, $\operatorname{Atoms}_A (\tau)$ denotes the set of $\tau$-atoms $R \bar a$ over $A$.
	An \emph{$\Semi$-interpretation} (for $A$ and $\tau$) is a function $\pi \colon \Lit_A(\tau)\ra \Semi$.
	We say that $\pi$ is 
	\emph{model-defining} if for every literal $L\in\Lit_A(\tau)$ precisely one of the values $\pi(L)$ and $\pi(\neg L)$ is 0. 
    Every model-defining $\Semi$-interpretation defines a $\tau$-model $\AA_\pi$ where $\AA_\pi \models L$ if $\pi (L) \neq 0$ for $L \in \Lit_A (\tau)$.
\end{Definition}

We only consider model-defining interpretations in this paper.
Semiring semantics lifts each $\Semi$-interpretation for a universe $A$ and vocabulary $\tau$ to a mapping $\FO_A (\tau) \to \Semi$ where $\FO_A (\tau)$ is the set of first-order formulae instantiated with elements from $A$. 
We assume that all formulae are written in negation normal form. This has emerged as the standard approach for dealing with negation in semiring semantics, and is explained in detail in \cite{GraedelTan24}.

\begin{Definition}[Semiring Semantics]
	For every $\phi \in \FO_A (\tau)$ in negation normal form, the semiring valuation $\pi\ext\phi$ is lifted beyond the literals inductively, by
	\begin{alignat*}{3}
		\pi \llb \psi \vee \theta \rrb &\coloneqq \pi \llb \psi \rrb +  \pi \llb  \theta \rrb, \hspace*{1cm} &  \pi \llb \forall x \psi (x) \rrb &\coloneqq \prod_{a \in A} \pi \llb \psi (a) \rrb
        , \\
		\pi \llb \psi  \wedge \theta \rrb &\coloneqq \pi \llb \psi  \rrb \cdot  \pi \llb  \theta \rrb, \hspace*{1cm}  & \pi \llb \exists x \psi (x) \rrb &\coloneqq \sum_{a \in A} \pi \llb \psi (a) \rrb.
	\end{alignat*}
	Equality atoms are interpreted by their Boolean truth value, that is, $\pi \llb a = a \rrb \coloneqq 1 $ and $\pi \llb a = b \rrb \coloneqq 0$ for $a \neq b$ (and analogously for inequalities). For finite or infinite sets $\Phi \subseteq \FO_A (\tau)$ we set $ \pi \ext \Phi \coloneqq  \prod_{\phi \in \Phi} \pi\ext{\phi}$.
\end{Definition}

\begin{Definition}[Entailment] Let $\Phi, \Psi \subseteq \FO$ be sets of first-order sentences and let $\Semi$ be an (absorptive) semiring. We write
	\begin{enumerate}[(1)]
		\item $\Phi\models_\Semi \Psi$ ($\Phi$ $\Semi$-entails $\Psi$) if  $\pi\ext{\Phi}\leq \pi\ext{\psi}$ for every model-defining $\Semi$-interpretation $\pi$, and
        \item $\Phi\equiv_\Semi \Psi$ ($\Phi$ is $\Semi$-equivalent to $\Psi$) if  $\pi\ext{\Phi} = \pi\ext{\Psi}$ for every model-defining $\Semi$-interpretation~$\pi$.
	\end{enumerate}
    If $\Phi$ or $\Psi$ is a singleton, we replace them with the corresponding formula in these notations.
\end{Definition}

\paragraph*{Notions of preservation in semiring semantics}

We next discuss how semantic preservation properties (upwards, downwards, and under homomorphisms) present themselves in the context of semiring semantics, and why they are implied by 
specific syntactic forms.

\begin{Definition}[Homomorphisms and Subinterpretations]
\begin{itemize}
\item A \emph{homomorphism} between $\Semi$-interpretations $\pi_A$ and $\pi_B$ is a mapping $g \colon A \to B$ such that for any $R \bar a \in \operatorname{Atoms}_A(\tau)$ it holds that $\sum_{\bar a': g (\bar a') = g(\bar a)} \pi_A (R \bar a') \leq \pi_B (R g(\bar a))$ (note that we only take into account the positive literals here).
Further, $g$ is a \emph{strong homomorphism} if $\pi_A (L \bar a) = \pi_B (L g(\bar a))$ for all $L\bar a \in \Lit_A(\tau)$. If $g$, in addition, is injective, we call it an \emph{embedding} and write $g \colon \pi_A \subseteq \pi_B$. Note that for idempotent semirings (where addition is the supremum with respect to the natural order), the inequality above reduces to $\pi_A (R \bar a) \leq \pi_B (R g(\bar a))$ for each $R \bar a \in \operatorname{Atoms}_A(\tau)$.
\item  $\pi_A$ is a \emph{subinterpretation} of $\pi_B$ (and $\pi_B$ is an \emph{extension} of $\pi_A$), denoted $\pi_A \subseteq \pi_B$, if $A \subseteq B$ and $\pi_A (L\bar a) = \pi_B (L\bar a)$ holds for all literals $L\bar a \in \Lit_A (\tau)$.
\item $\pi_A$ is an \emph{elementary subinterpretation} of $\pi_B$ (and $\pi_B$ is an \emph{elementary extension} of $\pi_A$), denoted $\pi_A \preceq \pi_B$, if $A \subseteq B$ and $\pi_A \llb \phi(\bar a) \rrb = \pi_B \llb \phi(\bar a) \rrb$ holds for all $\phi(\bar a) \in \FO_A(\tau)$.
\end{itemize}
\end{Definition}

As an example, consider the $\Vit$-interpretation $\pi_A$ with universe $A=\{a\}$ and vocabulary $\tau = \{R\}$, where $R$ is unary, where we have $\pi_A (Ra) = 1/2$ and $\pi_A (\lnot Ra)=0$. Extending $A$ by some element $b$ and putting $\pi_B (Rb) = 1/4$ while $\pi_B (\lnot Rb) = 0$ yields a $\Vit$-interpretation $\pi_B$, which is an non-elementary extension of $\pi_A$, as $\pi_A \llb \forall x Rx \rrb = 1/2 \neq 1/8 = \pi_B \llb \forall x Rx \rrb$. Further, $g \colon a \mapsto a, b \mapsto a$ is a homomorphism from $\pi_B$ to $\pi_A$, but not a strong one.

Classical preservation properties have the form that 
sentences which are true in a specific structure remain true
when we move to another structure, related to the first one
as an extension, a substructure, or by a homomorphism.
In the context of semiring interpretations we evaluate
formulae not just by true or false, but in a general semiring. The property that the formula remains true
is replaced by the stronger relationship that the formula evaluates to a greater or equal value (w.r.t.\ to the natural order on the semiring) when we move from one interpretation to a new one.

\begin{Definition}[Preservation] 
\begin{itemize}
    \item A sentence $\psi \in\FO (\tau)$ is \emph{preserved under homomorphisms in \Semi} if for all model-defining $\Semi$-interpretations $\pi_A, \pi_B$ with a homomorphism $g \colon A \to B$, we have that $\pi_A \llb \psi \rrb \leq \pi_B \llb \psi \rrb$.

    \item A sentence $\phi \in \FO (\tau)$ is \emph{preserved under subinterpretations in \Semi} if for all model-defining $\Semi$-interpretations $\pi_A \subseteq \pi_B$, we have that $\pi_A \llb \phi \rrb \geq \pi_B \llb \phi \rrb$. 
    \item Analogously, $\phi \in \FO (\tau)$ is \emph{preserved under extensions in \Semi} if $\pi_A \llb \phi \rrb \leq \pi_B \llb \phi \rrb$ is true for all model-defining $\Semi$-interpretations $\pi_A \subseteq \pi_B$.
       \end{itemize}
\end{Definition}

We observe that the directions from the syntactic form to the semantic preservation properties of the classical preservation theorems
also hold in this context.
Recall that
    \begin{itemize}
        \item $\Sigma_1^+$ denotes the set of \emph{existential-positive} sentences $\exists \bar x \phi (\bar x)$ where $\phi (\bar x)$ is quantifier- and negation-free,
        \item $\Sigma_1$ denotes the set of \emph{existential} sentences $\exists \bar x \phi (\bar x)$ where $\phi (\bar x)$ is quantifier-free, and
        \item $\Pi_1$ is the set of \emph{universal} sentences $\forall \bar x \phi (\bar x)$ where $\phi (\bar x)$ is quantifier-free.
    \end{itemize}

At this point, it becomes clear why our definition of homomorphism involves summation over the preimages of tuples. This accounts for the possibility that multiple tuples in the left-hand side interpretation $\pi_A$ are collapsed onto the same tuple in $\pi_B$. Our definition ensures that existential-positive sentences are indeed preserved under homomorphisms. For example, consider the sentence $\exists x \exists y E xy$. Now take for example $\pi_A$ to be a bipartite graph with edge weights from $\Semi$, and $\pi_B$ an $\Semi$-weighted graph that is just a single edge. 
Since every bipartite graph maps homomorphically to a single edge, $\pi_A$ should map homomorphically to $\pi_B$. But we have to insist that the value of the single edge in $\pi_B$ is at least the sum of its preimages, or else, we would not have $\pi_A \llb \exists x \exists y E xy \rrb \leq \pi_B \llb \exists x \exists y E xy \rrb$. Moreover, in defining homomorphisms we use the natural order rather than insisting on equalities, so that edges are preserved while non-edges need not be.

\begin{lemma} \label{lem:closure-under-and-or}
    If $\psi$ is a positive Boolean combination of (positive) existential sentences, there is some $\phi \in \Sigma_1$ ($\phi \in \Sigma_1^+$) such that $\psi \equiv_\Semi \phi$ for all additively idempotent semirings $\Semi$. If $\psi$ is a positive Boolean combination of universal sentences, there is some $\phi \in \Pi_1$ such that $\psi \equiv_\calL \phi$ for all lattice semirings $\calL$.
\end{lemma}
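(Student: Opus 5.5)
We proceed by structural induction on the positive Boolean combination. It suffices to show that the syntactic classes are closed, up to $\Semi$-equivalence, under binary $\vee$ and $\wedge$. We first rename bound variables apart: if $\phi_1 = \exists \bar x\, \alpha(\bar x)$ and $\phi_2 = \exists \bar y\, \beta(\bar y)$ with disjoint $\bar x, \bar y$, we claim
\[ \phi_1 \wedge \phi_2 \equiv_\Semi \exists \bar x \exists \bar y\, (\alpha(\bar x)\wedge \beta(\bar y)) \quad\text{and}\quad \phi_1 \vee \phi_2 \equiv_\Semi \exists \bar x \exists \bar y\, (\alpha(\bar x)\vee \beta(\bar y)). \]
Here $\alpha,\beta$ are quantifier-free, and also negation-free in the positive case. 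So both right-hand sides lie in $\Sigma_1$, respectively $\Sigma_1^+$. If the tuples $\bar x, \bar y$ have different lengths, nothing changes, since we quantify over all of them.

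For the conjunction we compute $\pi\ext{\phi_1\wedge\phi_2} = (\sum_{\bar a}\pi\ext{\alpha(\bar a)})\cdot(\sum_{\bar b}\pi\ext{\beta(\bar b)}) = \sum_{\bar a,\bar b}\pi\ext{\alpha(\bar a)}\pi\ext{\beta(\bar b)}$. This uses only distributivity, extended to infinitary sums as tacitly assumed. For the disjunction,
\[\sum_{\bar a,\bar b}\bigl(\pi\ext{\alpha(\bar a)}+\pi\ext{\beta(\bar b)}\bigr) = \sum_{\bar a}\sum_{\bar b}\pi\ext{\alpha(\bar a)} + \sum_{\bar b}\sum_{\bar a}\pi\ext{\beta(\bar b)}.\]
Each inner sum repeats a constant term $|B|^{|\bar y|}$ times, respectively $|A|^{|\bar x|}$ times. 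By additive idempotence (including idempotence of infinitary sums of a constant, which holds since sums are suprema) these collapse to $\pi\ext{\phi_1}+\pi\ext{\phi_2}$. This is the only place idempotence is needed; the domain is nonempty, so the repeated sums are nonempty. The degenerate case of a quantifier-free (variable-free) sentence is absorbed by allowing empty prefixes.

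For the universal case, let $\phi_1=\forall\bar x\,\alpha$ and $\phi_2=\forall\bar y\,\beta$. Conjunction is dual to disjunction above. We have $\prod_{\bar a,\bar b}\pi\ext{\alpha(\bar a)}\pi\ext{\beta(\bar b)} = \prod_{\bar a}\pi\ext{\alpha(\bar a)}^{\kappa}\cdot\prod_{\bar b}\pi\ext{\beta(\bar b)}^{\lambda}$, which needs multiplicative idempotence (including of infinite products of a constant, i.e. infima in a lattice). This is exactly why lattice semirings are required. For the disjunction, $\forall\bar x\forall\bar y(\alpha\vee\beta)$ evaluates to $\prod_{\bar a,\bar b}(\pi\ext{\alpha(\bar a)}+\pi\ext{\beta(\bar b)})$. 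We want this to equal $\prod_{\bar a}\pi\ext{\alpha(\bar a)}+\prod_{\bar b}\pi\ext{\beta(\bar b)}$, i.e.
\[\Inf_{\bar a,\bar b}(s_{\bar a}\join t_{\bar b}) = \Inf_{\bar a}s_{\bar a}\join \Inf_{\bar b}t_{\bar b}.\]

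This last identity is the main obstacle for infinite domains. For finite domains it is just repeated finite distributivity of $\join$ over $\meet$ in a distributive lattice. For infinite domains we invoke complete distributivity of the lattices underlying lattice semirings, in the form of the infinite distributive law $x \join \Inf_i y_i = \Inf_i (x\join y_i)$, which is a special case of complete distributivity. Applying it twice (first in $\bar b$ for fixed $\bar a$, then in $\bar a$) yields the identity. The direction $\geq$ holds in any lattice by monotonicity, so only $\leq$ genuinely uses the infinitary law.

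Finally, the induction closes because each step yields a sentence again in prenex form with a quantifier-free (and negation-free where appropriate) matrix.
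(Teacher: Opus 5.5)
Your proof is correct and follows the paper's own argument: rename bound variables apart and pull the quantifiers out of each binary connective. In the existential case you use additive idempotence for $\vee$ and distributivity for $\wedge$; in the universal case you use multiplicative idempotence for $\wedge$ and, for $\vee$, the dual distributive law $\Inf_{\bar a} s_{\bar a} \join \Inf_{\bar b} t_{\bar b} = \Inf_{\bar a,\bar b}(s_{\bar a}\join t_{\bar b})$, which you correctly derive from complete distributivity. Your universal equivalences are in fact stated more accurately than the paper's display, which writes $\exists\bar x\forall\bar y$ and $\exists\bar y\,\theta^*$ where $\forall\bar x\forall\bar y$ and $\forall\bar y\,\theta^*$ are meant.
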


\begin{proof}
    Let $\Semi$ be a semiring and $\exists \bar x \theta (\bar x), \exists \bar y \theta^* (\bar y) \in \FO_A (\tau)$ where $\bar x$ and $\bar y$ are disjoint tuples of variables.
    By (1) commutativity, associativity, and idempotence of addition, and (2) distributivity, we have the logical equivalences
	$
	\exists \bar x \theta (\bar x) \circ \exists \bar y \theta^* (\bar y) \equiv_\Semi \exists \bar x \exists \bar y (\theta (\bar x) \circ  \theta^* (\bar y))
	$
    for $\circ \in \{\vee, \wedge\}$
    in any semiring $\Semi$ since for any $\Semi$-interpretation $\pi$ of universe $A$,
    \begin{align*}
	\pi \llb \exists \bar x \theta (\bar x) \vee \exists \bar y \theta^* (\bar y) \rrb &= \sum_{\bar a \subseteq A} \pi \llb \theta (\bar a) \rrb + \sum_{\bar a \subseteq A} \pi \llb \theta^* (\bar a) \rrb \\
    &\overset{(1)}= \sum_{\bar a \subseteq A}\sum_{\bar b \subseteq A} \pi \llb \theta (\bar a) \rrb + \pi \llb \theta^* (\bar b) \rrb
    = \pi \llb \exists \bar x \exists \bar y (\theta (\bar x) \vee  \theta^* (\bar y)) \rrb \text{ and } \\
    \pi \llb \exists \bar x \theta (\bar x) \wedge \exists \bar y \theta^* (\bar y) \rrb &= \sum_{\bar a \subseteq A} \pi \llb \theta (\bar a) \rrb \cdot \sum_{\bar a \subseteq A} \pi \llb \theta^* (\bar a) \rrb \\
    &\overset{(2)}= \sum_{\bar a \subseteq A}\sum_{\bar b \subseteq A} \pi \llb \theta (\bar a) \rrb \cdot \pi \llb \theta^* (\bar b) \rrb
    = \pi \llb \exists \bar x \exists \bar y (\theta (\bar x) \wedge  \theta^* (\bar y)) \rrb.
	\end{align*}
    Hence, the first claim follows by induction. 
    In order to prove the second statement in an analogous way, we do not just need commutativity, associativity, and idempotence of multiplication, but also the dual distributivity law $\prod_{i \in I} s_i + \prod_{j \in J} t_j = \prod_{i \in I} \prod_{j \in J} (s_i +t_j)$, where $(s_i)_{i \in I}, (t_j)_{j \in J} \subseteq \Semi$, which only holds in lattice semirings.
    In any lattice semiring $\calL$, however, we have the logical equivalences
    \[
	\forall \bar x \theta (\bar x) \vee \forall \bar y \theta^* (\bar y) \equiv_\calL \exists \bar x \forall \bar y (\theta (\bar x) \vee  \theta^* (\bar y)) \;\; \text{ and } \;\; \forall \bar x \theta (\bar x) \wedge \exists \bar y \theta^* (\bar y) \equiv_\calL \forall \bar x \forall \bar y (\theta (\bar x) \wedge  \theta^* (\bar y)),
	\]
    which inductively imply the second claim.
\end{proof}

By monotonicity of the semiring operations and the fact that addition is increasing with respect to the natural order, the implication from syntax to semantics follows readily by induction for $\Sigma_1^+$ and $\Sigma_1$. We only need to take some care in the case $\Pi_1$, where we need multiplication to be decreasing. This is true exactly for absorptive semirings.

\begin{lemma} \label{lem:syntax-to-semantics}
\begin{enumerate}[(1)]
\item    $\Sigma_1^+$-sentences are preserved under homomorphisms in every semiring.
\item $\Sigma_1$-sentences are preserved under extensions in every semiring.
\item $\Pi_1$-sentences are preserved under subinterpretations in every absorptive semiring.
\end{enumerate}
\end{lemma}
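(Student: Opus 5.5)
All three parts rest on monotonicity of $+$ and $\cdot$ with respect to the natural order, including their infinitary versions. In each case we write the sentence in prenex form $Q\bar x\,\phi(\bar x)$ with $\phi$ quantifier-free in negation normal form. We then compare the value of $\phi$ at corresponding tuples, and finally compare the sums or products over the respective universes.

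\textbf{Part (1).} Let $\psi = \exists \bar x\,\phi(\bar x)$ with $\phi$ built from atoms and equalities by $\vee,\wedge$. Fix a homomorphism $g\colon \pi_A\to\pi_B$; write $\bar b=g(\bar a)$ and $a_i,a_j$ for components of $\bar a$. First we show by induction on $\phi$ that $\pi_A\ext{\phi(\bar a)}\le \pi_B\ext{\phi(g(\bar a))}$ for every tuple $\bar a$. For an atom $R\bar a$ this follows from the homomorphism condition, since $\pi_A(R\bar a)\le \sum_{g(\bar a')=g(\bar a)}\pi_A(R\bar a')$ and a summand is below the sum in the natural order. For equalities $a_i=a_j$ the value $1$ on the left forces $g(a_i)=g(a_j)$, so the value on the right is $1$ too. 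The cases $\vee,\wedge$ follow from monotonicity.

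The main obstacle is the existential sum, because $g$ may collapse many tuples onto one and monotonicity alone would only give $\pi_B\ext\phi$ times a multiplicity. I would handle this by grouping:
\[
\pi_A\ext\psi=\sum_{\bar b\in g(A)}\ \sum_{\bar a: g(\bar a)=\bar b}\pi_A\ext{\phi(\bar a)},
\]
and proving the stronger inductive claim $\sum_{g(\bar a)=\bar b}\pi_A\ext{\phi(\bar a)}\le \pi_B\ext{\phi(\bar b)}$ for positive quantifier-free $\phi$. For atoms this is exactly the homomorphism condition. For $\vee$ it follows by splitting the sum. The $\wedge$ case is the delicate one. There I would first bring $\phi$ into disjunctive normal form, so that each disjunct is a conjunction $\alpha_1\wedge\dots\wedge\alpha_k$ of atoms, and use $\sum_{\bar a}\prod_i s_i(\bar a)\le \prod_i\sum_{\bar a} s_i(\bar a)$. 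That inequality holds because the right side expands to a sum containing all diagonal terms. Each factor on the right is then bounded by the atom case, but after restricting to the variables of $\alpha_i$ the fibres become larger, and the homomorphism condition refers exactly to the fibre of the atom's own tuple, so this works. Equalities restrict the fibres and only help. Summing over $\bar b$ and then adding the nonnegative terms for $\bar b\notin g(A)$ gives $\pi_A\ext\psi\le\pi_B\ext\psi$.

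\textbf{Part (2).} Let $\pi_A\subseteq\pi_B$. For $\bar a$ from $A$, literals and equalities receive the same values in both interpretations, so $\pi_A\ext{\phi(\bar a)}=\pi_B\ext{\phi(\bar a)}$ for quantifier-free $\phi$. Then $\pi_B\ext{\exists\bar x\phi}$ is the sum over tuples from $A$ plus a further sum over the remaining tuples of $B$. Since $s\le s+t$ in naturally ordered semirings, the claim follows.

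\textbf{Part (3).} Dually, $\pi_B\ext{\forall\bar x\phi}=\pi_A\ext{\forall\bar x\phi}\cdot t$ for some (possibly infinitary) product $t$. In absorptive semirings multiplication is decreasing, so $s\cdot t\le s$, which gives $\pi_B\ext{\forall \bar x\phi}\le\pi_A\ext{\forall\bar x\phi}$.
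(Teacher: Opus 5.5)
Your parts (2) and (3) are correct and are exactly the paper's argument. The sum (product) over $B^k$ splits into the part over $A^k$ and a remainder; addition is increasing, and multiplication is decreasing in absorptive semirings.

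Part (1), however, has a genuine gap, and it cannot be repaired in the stated generality. Your strengthened claim $\sum_{g(\bar a)=\bar b}\pi_A\llb\phi(\bar a)\rrb\le\pi_B\llb\phi(\bar b)\rrb$ already fails in the base case when an atom does not mention every variable of $\bar x$. For $\phi(x,y)=Rx$, summing $\pi_A(Ra)$ over the fibre $g^{-1}(b)\times g^{-1}(b')$ gives $|g^{-1}(b')|$ copies of $\sum_{a\in g^{-1}(b)}\pi_A(Ra)$, but the homomorphism condition bounds only one copy. The same problem sits in your $\wedge$-step. Summing $s_i$ over the full fibre of $\bar b$ is not a sub-sum of the sum over the fibre of the atom's own tuple; it is a multiple of it. What does work is to compare with the product of the restricted fibre sums. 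That needs $\bar a\mapsto(\bar a|_{\alpha_1},\dots,\bar a|_{\alpha_k})$ to be injective, i.e.\ every variable of the disjunct must occur in a relational atom. Equalities do not ``only help'' either: $y=y$ contributes $1$ for every element of the fibre.

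This is not just a flaw in the proof. In $\Nat$, take $\psi=\exists x\exists y\,Rx$. Let $\pi_A$ have universe $\{a_1,a_2\}$ with $\pi_A(Ra_1)=1$, $\pi_A(Ra_2)=0$, $\pi_A(\neg Ra_2)=1$. Let $\pi_B$ have universe $\{b\}$ with $\pi_B(Rb)=1$, and let $g$ be constant. Then $g$ is a homomorphism ($1+0\le 1$), but $\pi_A\llb\psi\rrb=2>1=\pi_B\llb\psi\rrb$. Even $\exists y\,(y=y)$, which counts the universe, is a counterexample. So (1) is false for $\Nat$ and $\Ninf$; the paper's one-line ``monotonicity plus induction'' justification tacitly relies on idempotence here.

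What is true, and all the paper actually uses (its preservation theorems concern lattice semirings), is (1) for additively idempotent semirings. There your first paragraph already suffices. The homomorphism condition reduces to $\pi_A(R\bar a)\le\pi_B(Rg(\bar a))$, and your induction gives $\pi_A\llb\phi(\bar a)\rrb\le\pi_B\llb\phi(g(\bar a))\rrb\le\pi_B\llb\psi\rrb$ for every $\bar a$. Since sums are suprema, $\pi_A\llb\psi\rrb\le\pi_B\llb\psi\rrb$ follows, with no fibre bookkeeping needed.

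Your fibre idea is the right tool beyond idempotence, but only for sentences in which every quantified variable of every DNF disjunct occurs in a relational atom. For those, the injectivity above makes the left-hand sum a sub-sum of the expanded product of the restricted fibre sums, and the argument goes through in any naturally ordered semiring.
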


\section{Failure of Extension Preservation in General Semirings} \label{sec:counterexamples}
While the easy direction of the classical preservation theorems from syntax to semantics survives in semiring semantics, there are cases where the converse direction fails. 
We present two examples where sentences or sets of sentences
are preserved under extensions but cannot be written without the use of universal quantifiers. 
The key observation underlying both counterexamples is that a formula of the form $\forall y \phi$, where $y$ does not appear in $\phi$, is logically equivalent to $\phi$ in the multiplicative idempotent case, but not in general. For example, this no longer holds when we evaluate in the natural numbers. In order to ensure well-defined semantics also on infinite universes, we expand the natural semiring $(\N,+,\,\cdot\,,0,1)$ to $\N\cup\{\infty\}$ and first consider the semiring $\Ninf$.
The universal quantifier in a formula $\forall y \phi$ has the effect of just raising the value of the inner formula $\phi$ to the $n$-th power if $n$ is the size of the universe, and otherwise to the infinite power.
However, one can see that expressing powers that grow with the size of the universe is not possible in $\Ninf$ without the use of a universal quantifier.

\begin{restatable}{lemma}{counterexampleNat}
    \label{lem:counterexampleNat}
    The sentence $ \psi = \exists x \forall y Rx$ is preserved under extensions in $\Nat^\infty$ but not $\Nat^\infty$-equivalent to a $\Sigma_1$-sentence.
\end{restatable}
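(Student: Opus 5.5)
The plan is to compute $\pi\llb\psi\rrb$ explicitly and then argue separately for each half. For an $\Nat^\infty$-interpretation $\pi$ with universe $A$, the variable $y$ does not occur in $Rx$, so
\[
\pi\llb\psi\rrb=\sum_{a\in A}\pi(Ra)^{|A|}.
\]
Here $s^{|A|}$ means the $|A|$-fold product, or the infinitary product if $A$ is infinite. In the infinite case $s^{\infty}$ equals $0$, $1$ or $\infty$ according to whether $s=0$, $s=1$ or $s\geq 2$.

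\textbf{Preservation under extensions.} Let $\pi_A\subseteq\pi_B$. Every $a\in A$ keeps its value $\pi_B(Ra)=\pi_A(Ra)$, but the exponent grows from $|A|$ to $|B|\geq|A|$. For $s\in\Nat^\infty$ the map $n\mapsto s^n$ is non-decreasing on $\Nat\cup\{\infty\}$: it is constant for $s\in\{0,1\}$ and increasing for $s\geq 2$, and $s^\infty$ dominates all finite powers. So every summand indexed by $a\in A$ weakly increases. The additional summands for $b\in B\setminus A$ are $\geq 0$. Monotonicity of (infinitary) addition then gives $\pi_A\llb\psi\rrb\leq\pi_B\llb\psi\rrb$.

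\textbf{Non-equivalence to $\Sigma_1$.} Suppose towards a contradiction that $\psi\equiv_{\Nat^\infty}\phi=\exists x_1\dots\exists x_k\,\theta(\bar x)$ with $\theta$ quantifier-free, in negation normal form. I would use the finite model-defining interpretations $\pi_n$ on universes of size $n$ given as follows:
\begin{itemize}
\item $\pi_n(Ra)=2$ and $\pi_n(\neg Ra)=0$ for all $a$;
\item every other relation symbol of $\tau$ receives $1$ on positive literals and $0$ on negative ones.
\end{itemize}
Then $\pi_n\llb\psi\rrb=n\cdot 2^n$.

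On the other side, $\theta(\bar a)$ is a fixed polynomial expression, built from $+$ and $\cdot$, in the values of the literals and (in)equalities occurring in it. Each of these values lies in $\{0,1,2\}$. Hence $\pi_n\llb\theta(\bar a)\rrb\leq C$ for a constant $C$ that depends only on $\theta$ and not on $n$ or $\bar a$; for instance, $C=2^{m}$ where $m$ bounds the number of atoms of $\theta$, is enough. Summing over the $n^k$ tuples yields
\[
\pi_n\llb\phi\rrb\leq C\,n^k,
\]
which is smaller than $n2^n$ for large $n$. This contradicts equivalence.

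\textbf{Main obstacle.} The only delicate point is making the bound on $\theta$ uniform. It must be independent of the equality type of $\bar a$ and of $n$, which holds because the value of each literal is fixed by the construction. Equality atoms contribute only $0$ or $1$, so they do not affect the bound.
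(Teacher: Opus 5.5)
Your proof is correct. The preservation half is the same computation as in the paper.

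For non-equivalence you take a somewhat different route. The paper evaluates in the provenance semiring $\N[\{x\}]$ with $\pi_n(Ra_i)=x$. It notes that $\pi_n\llb\phi\rrb$ is a polynomial whose \emph{degree} is bounded by the length of $\phi$, independently of $n$, whereas $\pi_n\llb\psi\rrb=n\cdot x^n$. It then transfers the discrepancy to $\Nat^\infty$ via an evaluation homomorphism $h_m\colon x\mapsto m$ and the Fundamental Property. You instead fix the concrete value $2$ in $\Nat^\infty$ and compare growth in the universe size, $n\cdot 2^n$ against $C n^k$.

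Your version is more elementary and more explicit. It needs no provenance semiring and no homomorphism transfer, and it gives concrete separating interpretations, whereas the paper leaves implicit why a suitable $m$ exists. The price is that you need a uniform numeric bound on $\theta$. Note that $C=2^m$ works when $m$ counts \emph{occurrences} of atoms, by induction using $a+b\le ab$ for $a,b\ge 2$. It does not work if $m$ counts distinct atoms: for example, $Rx\vee Rx\vee Rx$ evaluates to $6$.

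The paper's degree argument, in turn, isolates a value-independent structural invariant. Existential quantification does not raise the degree, while universal quantification multiplies it by the universe size.
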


\begin{proof}
    For $\Nat^\infty$-interpretations $\pi_A \subseteq \pi_B$ we have that
    \begin{align*}
    \hspace{-5mm}
        \pi_A \llb \psi \rrb = \sum_{a \in A} \pi_A (R a)^{|A|}  \leq \sum_{a \in A} \pi_A (R a)^{|B|} + \sum_{b \in B \setminus A} \pi_B (Rb)^{|B|}
        =\sum_{b \in B} \pi_B (R b)^{|B|} = \pi_B \llb \psi \rrb
    \end{align*}
    because addition and exponentiation are increasing. Hence, $\psi$ is preserved under extensions in $\bbN^\infty$. It remains to show that $\psi$ is not $\Nat^\infty$-equivalent to a $\Sigma_1$-sentence. For $n \in \omega$, let $\pi_n$ be the $\Nat[\{x\}]$-interpretation with universe $A_n \coloneqq \{a_i \mid i \in [n]\}$ and valuations $\pi_n (Ra_i) =x$ and $\pi_n (\lnot Ra_i) =0$ for all $a_i \in A_n$. For $\phi \in \Sigma_1$, the valuation $\pi_n \llb \phi \rrb$ is a polynomial whose degree is bounded by the length of $\phi$ (and does not depend on $n$). However, $\pi_n \llb \psi \rrb = n \cdot x^n$. So there must be some $n,m \in \omega$ such that for the homomorphism $h_m \colon \Nat[\{x\}] \to \Nat^\infty$ induced by $h_m (x) = m$ we have that $(h_m \circ \pi_n) \llb \psi \rrb = h_m (\pi_n \llb \psi \rrb) > h_m (\pi_n \llb \phi \rrb) = (h_m \circ \pi_n) \llb \phi \rrb$. Hence, $\psi \not\equiv_{\Nat^\infty} \phi$.
\end{proof}

Our next counterexamples are the Viterbi semiring $\Vit = ([0,1]_\R,\max,\cdot,0,1)$ and the semiring of doubt $\Doubt = ([0,1]_\R, \min, \oplus, 1, 0)$ with $s \oplus t = \min(s+t, 1)$. Of course these results 
translate to the tropical semiring $\Trop$ and to the {\L}ukasiewicz semiring $\Lukas$, since $\Trop$ is isomorphic to $\Vit$, and $\Lukas$ is isomorphic to $\Doubt$.
Note that, unlike for $\Ninf$, multiplication is now decreasing rather than increasing, causing extension preservation of the sentence $\exists x \forall y Rx$ to fail in these semirings. Consider for example the $\Vit$-interpretation $\pi_A$ of universe $\{a\}$ with $\pi_A(Ra)=1/2$ and its extension $\pi_B$ over the universe $\{a, b\}$ with $\pi_B(Rb)=1/2$, where we have $\pi_A \llb \psi \rrb = 1/2 > 1/4 = \pi_B \llb \psi \rrb$. 
On \emph{infinite} $\Vit$-interpretations, however, the universal quantifier $\forall y$ raises the value of $Rx$ already to its infinite power (i.e. the infimum of the finite powers), and the maximum over these values is preserved under extensions.
Thus, we construct an axiom system which forces its models to be infinite and establish that extension preservation fails for a \emph{set} of sentences, or for semiring interpretations over infinite universes. 
The key argument for proving that no set of existential sentences has the same semantics is that infinite suprema and powers do not commute in $\Vit$. The argument for $\Doubt$ is analogous.

\begin{restatable}{lemma}{counterexampleV}
\label{lem:counterexampleViterbiInfinite}
For $\Semi \in \{ \Vit, \Doubt \}$, 
    the set $\Phi \coloneqq \{ \exists x_1 \dots \exists x_n \bigwedge_{1 \leq i < j \leq n} x_i \neq x_j \mid n \in \omega\} \cup \{\exists x \forall y Rx \}$ is preserved under extensions in $\Semi$, but is not $\Semi$-equivalent to a set of $\Sigma_1$-sentences.
\end{restatable}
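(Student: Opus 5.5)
We first compute $\pi\llb\Phi\rrb$ explicitly in $\Vit$, then use it to check preservation and to refute $\Sigma_1$-equivalence. The case of $\Doubt$ then follows by the isomorphism with $\Lukas$.

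\textbf{Computing $\pi\llb\Phi\rrb$.} Let $\pi$ be a model-defining $\Vit$-interpretation with universe $A$. The cardinality axiom for $n$ contains only (in)equality literals, so it evaluates to $1$ if $|A|\geq n$ and to $0$ otherwise. Hence $\pi\llb\Phi\rrb=0$ whenever $A$ is finite.

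If $A$ is infinite, then $\pi\llb\forall y Rx\rrb$ at $a$ is the infinite product $\pi(Ra)^{|A|}$, the infimum of the finite powers. This equals $1$ if $\pi(Ra)=1$ and $0$ if $\pi(Ra)<1$.

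So $\pi\llb\Phi\rrb\in\{0,1\}$, and $\pi\llb\Phi\rrb=1$ holds exactly when $A$ is infinite and $\pi(Ra)=1$ for some $a\in A$.

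\textbf{Preservation under extensions.} Let $\pi_A\subseteq\pi_B$. If $\pi_A\llb\Phi\rrb=0$ there is nothing to show. Otherwise $A$ is infinite and some $a\in A$ has $\pi_A(Ra)=1$. Then $B\supseteq A$ is infinite and $\pi_B(Ra)=\pi_A(Ra)=1$, so $\pi_B\llb\Phi\rrb=1$.

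\textbf{No equivalent set of $\Sigma_1$-sentences.} Suppose $\Psi\subseteq\Sigma_1$ with $\Psi\equiv_\Vit\Phi$. We compare two interpretations on the universe $\omega=\{a_1,a_2,\dots\}$:
\begin{itemize}
\item $\pi_1$ with $\pi_1(Ra_n)=1$ and $\pi_1(\lnot Ra_n)=0$ for all $n$;
\item $\pi'$ with $\pi'(Ra_n)=1-1/(n+1)$ and $\pi'(\lnot Ra_n)=0$ for all $n$.
\end{itemize}
By the computation above, $\pi_1\llb\Phi\rrb=1$. Since no value of $R$ under $\pi'$ equals $1$, we get $\pi'\llb\Phi\rrb=0$.

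Because an infinite product in $\Vit$ is the infimum of the finite subproducts, $\pi_1\llb\Psi\rrb=1$ forces $\pi_1\llb\psi\rrb=1$ for every $\psi\in\Psi$. We now show $\pi'\llb\psi\rrb=1$ for each such $\psi$, which gives $\pi'\llb\Psi\rrb=1\neq0$, a contradiction.

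Write $\psi=\exists\bar x\,\theta(\bar x)$ with $\theta$ quantifier-free, and bring $\theta$ into disjunctive normal form. This is semantically harmless in $\Vit$: it only uses distributivity, and idempotence of $\max$ handles duplicated disjuncts. Since $\pi_1\llb\psi\rrb=1$ and all sums are suprema of values in $[0,1]$, some monomial $\mu$ and some tuple $\bar a$ satisfy $\pi_1\llb\mu(\bar a)\rrb=1$. In fact, in $\pi_1$ every monomial value lies in $\{0,1\}$, so a supremum equal to $1$ is attained.

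Consequently $\mu$ contains no negative $R$-literal, because these all evaluate to $0$ in $\pi_1$. So $\mu$ consists of at most $k\le|\psi|$ positive $R$-literals together with (in)equalities that $\bar a$ satisfies.

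For any $m$, map $\bar a$ injectively into $\{a_n: n\geq m\}$. This preserves the equality type, so the (in)equalities of $\mu$ still evaluate to $1$. The resulting tuple $\bar b$ gets $\pi'\llb\mu(\bar b)\rrb\geq(1-1/(m+1))^k$. Letting $m\to\infty$, the supremum defining $\pi'\llb\psi\rrb$ is $1$.

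\textbf{The case of $\Doubt$.} The argument is identical after translating along the isomorphisms ($\Doubt\cong\Lukas$). In $\Doubt$, the semiring $0$ is $1$ and the semiring $1$ is $0$. The infinite $\oplus$-power of any positive value is $1$, the semiring zero. Take $\pi'(Ra_n)=1/n$, which tends to the unit $0$. The bound $k\cdot(1/m)\to0$ then replaces $(1-1/(m+1))^k\to1$.

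\textbf{Main obstacle.} The delicate point is the uniform bound on the number of $R$-literals in a witnessing monomial. This bound is why we use one interpretation $\pi'$ whose values approach $1$ across different elements, rather than one constant value $s<1$. With a constant $s$, an infinite product of the values $s^{k_\psi}$ could collapse to $0$. With $\pi'$, every single $\psi$ evaluates to exactly $1$, so no collapse can occur.
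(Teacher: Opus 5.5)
Your proof is correct and follows essentially the paper's argument. It combines $R$-values that approach the semiring one without reaching it with the fact that a single $\Sigma_1$-sentence involves only boundedly many literals, whose witnesses can be moved to far-out elements. The one variation is that you compare against the all-$1$ interpretation on the same universe rather than the paper's one-point extension with $\pi_B(Rb)=1$; this gives you an exactly witnessing monomial instead of the paper's $\epsilon$-approximation.

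Two cosmetic points. First, the isomorphism $\Doubt\cong\Lukas$ transfers nothing from $\Vit$ by itself; what handles $\Doubt$ is your direct re-run of the argument. Second, with $\pi'(Ra_n)=1/n$ the value at $a_1$ is the semiring zero of $\Doubt$, so $\pi'$ is not model-defining there unless you shift to $1/(n+1)$.
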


\begin{proof}
Consider first the case $\Semi = \Vit$.
    Let $\pi_A \subseteq \pi_B$ be $\Vit$-interpretations. If $A$ is finite, we have that $\pi_A \llb \Phi \rrb = 0 \leq \pi_B \llb \Phi \rrb$. So let $A$ (and thus $B$, as $\pi_A \subseteq \pi_B$) be infinite. Then 
    $$\pi_A \llb \Phi \rrb = \bigsqcup_{a \in A} \prod_{a' \in A} \pi_A (Ra) = \bigsqcup_{a \in A} \pi_A (Ra)^\infty = \bigsqcup_{a \in A} \pi_B (Ra)^\infty \leq \bigsqcup_{b \in B} \pi_B (Rb)^\infty = \pi_B \llb \Phi \rrb.$$ 
    Hence, $\Phi$ is preserved under extensions in $\Vit$. 

    Fix some chain $(r_i)_{i \in \omega} \subseteq [0,1)_{\mathbb{R}}$ such that $\bigsqcup_{i \in \omega} r_i = 1$. Let $\pi_A$ be the $\Vit$-interpretation over universe $A=\{a_i \mid i \in \omega\}$ where $\pi_A (Ra_i) = r_i$, and let $\pi_B$ be the extension of $\pi_A$ with universe $B \coloneqq A \uplus \{b\}$ with $\pi_B (Rb) =1$. We have that $\pi_A \llb \Phi \rrb = 0 \neq 1 = \pi_B \llb \Phi \rrb$. We show that $\Phi$ cannot be $\Vit$-equivalent to a set of $\Sigma_1$-sentences by proving that for every $\phi= \exists x_1 \dots \exists x_k \psi (x_1, \dots, x_k) \in \Sigma_1$ with $\pi_B \llb \phi \rrb =1$, we also have $\pi_A \llb \phi \rrb =1$. So fix such a $\phi \in \Sigma_1$. Let
    $\epsilon > 0$ be arbitrary and consider a tuple $\bar b \in B^k$.  
    
    Since $\pi_B \llb \psi (\bar b) \rrb$ is a product of the valuations of literals $R (\bar b)$ and $\lnot R (\bar b)$, we can replace each occurrence of the unique element $b \in B \setminus A$ in $\bar b$ by some sufficiently large $a_i \in A$ such that for the resulting tuple $\bar a$ we get $\pi_A \llb \psi (\bar a) \rrb \geq \pi_B \llb \psi (\bar b) \rrb - \epsilon$.
    Since $\epsilon > 0$ and $\bar{b} \in B^k$ were arbitrary, this shows that 
    \[
    \pi_A\llb \phi \rrb = \bigsqcup_{\bar a \in A^k} \pi_A\llb \psi (\bar a)  \rrb = \bigsqcup_{\bar b \in B^k} \pi_B\llb \psi (\bar b)  \rrb = 1.
    \]
    This finishes the case that $\Semi = \Vit$. If $\Semi = \Doubt$, then the reasoning is analogous, with the difference that in $\Doubt$, the natural order is reversed: To be clear, let $\leq_{\R}$ denote the standard order on the interval $[0,1]_{\R}$, and let $\leq_{\Doubt}$ denote the order on $\Doubt$ induced by the semiring addition operation, which is $\min$ in this case. Then $a\leq_{\R}b$ if and only if $b \leq_{\Doubt} a$. 
    Let $\pi_A \subseteq \pi_B$ be $\Doubt$-interpretations. We can again assume that both $A$ and $B$ are infinite. Let $t_{>0} \colon \Doubt \to \Doubt$ be the function that maps $0$ to $0$ and every other element of $\Doubt$ to $1$.
    Also in $\Doubt$, $\Phi$ is preserved under extensions:  
    $$\pi_A \llb \Phi \rrb = \Inf_{a \in A} \bigoplus_{a' \in A} \pi_A (Ra)  = \Inf_{a \in A} t_{>0}(\pi_A (Ra)) \leq_{\Doubt} \Inf_{a \in B} t_{>0}(\pi_B (Ra)).$$
    The inequality is due to the fact that $B \supseteq A$, so the right-hand side can only be less than the left-hand side with respect to $\leq_{\R}$. 
    To show that $\Phi$ is not $\Doubt$-equivalent to a set of $\Sigma_1$-sentences, we use a dual construction to the previous one for $\pi_A$ and $\pi_B$. That is, we fix a chain $(r_i)_{i \in \omega} \subseteq (0,1]_{\mathbb{R}}$ such that $\Inf_{i \in \omega} r_i = 0$. Let $\pi_A$ be the $\Vit$-interpretation over universe $A=\{a_i \mid i \in \omega\}$ where $\pi_A (Ra_i) = r_i$, and let $\pi_B$ be the extension of $\pi_A$ by an additional element $b$ with $\pi_B (Rb) = 0$. We have that $\pi_A \llb \Phi \rrb = 1 \neq 0 = \pi_B \llb \Phi \rrb$. 
    It remains to argue that for every $\phi= \exists x_1 \dots \exists x_k \psi (x_1, \dots, x_k) \in \Sigma_1$, it is the case that $\pi_B \llb \phi \rrb = 0$ also implies $\pi_A \llb \phi \rrb = 0$. This follows analogously as before by considering an arbitrary $\epsilon > 0$ and $\bar b \in B^k$, and observing that by replacing each occurrence of the special element $b$ in $\bar b$ with a sufficiently small $a_i$, we obtain a tuple $\bar a$ such that $\pi_A \llb \psi (\bar a) \rrb \leq_{\R} \pi_B \llb \psi (\bar b) \rrb+\epsilon$. Hence, taking the infimum over all $\bar a \in A^k$, we conclude that $\pi_A \llb \phi \rrb = 0$ and so, $\Phi$ cannot be $\Doubt$-equivalent to a set of $\Sigma_1$-sentences.
    \qedhere
\end{proof}

\section{Preservation Theorems in Lattice Semirings} \label{sec:lattices}

Whether we consider single sentences or sets of them, the extension preservation theorem does not generalise to all semirings. 
This raises the question in which semirings it does survive.
We show that both variants hold for the important and fairly general class of \emph{lattice semirings}, induced by a completely distributive complete lattice (i.e. a partial order closed under suprema and infima in which infima distributive over suprema), with supremum and infimum as semiring operations. Also the homomorphism and subinterpretation preservation theorems generalise to the lattice setting.

\begin{theorem}
\label{thm:latticePreservationMain}
    Let $\calL$ be a lattice semiring.
    A sentence $\psi \in \FO$ is preserved under homomorphisms/extensions/subinterpretations in $\calL$ if, and only if, there is some $\phi \in \Sigma_1^+$/ $\phi \in \Sigma_1$/ $\phi \in \Pi_1$ such that $\psi \equiv_\calL \phi$.
\end{theorem}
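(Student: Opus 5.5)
The easy direction ("if") is \cref{lem:syntax-to-semantics}: lattice semirings are absorptive. The work is the converse. Following the approach announced in the introduction, the argument has two layers. First, prove the three preservation theorems for the three-element min-max semiring $\Semi_3=(\{0,\epsilon,1\},\max,\min,0,1)$. Then lift them to an arbitrary lattice semiring $\calL$ via the reduction from \cite{BrinkeDawGraMrkNaa26}, which says that $\calL$-entailment between first-order sentences (in negation normal form) is decided by $\Semi_3$-entailment. Concretely, I will use: $\phi\models_\calL\psi$ iff $\phi\models_{\Semi_3}\psi$, for every lattice semiring $\calL$ with at least three elements; the two-element case is the classical Boolean theorem. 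A sentence preserved in $\calL$ should also be preserved in $\Semi_3$. The reason is that $\Semi_3$ embeds into $\calL$ as a sublattice $\{0,c,1\}$ for any $0<c<1$. An $\Semi_3$-interpretation pair related by extension or homomorphism therefore transfers to such a pair over $\calL$ with identical values.

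For $\Semi_3$ I mimic the classical compactness proof. Fix $\psi$ preserved under extensions. The key observation is that, in a min-max semiring, $\pi\ext\psi\ge t$ for a threshold $t\in\{\epsilon,1\}$ is a first-order property of a classical two-sorted encoding. Encode each literal value by predicates $R^{\ge\epsilon}$ and $R^{\ge 1}$; then $\pi\ext\psi\ge t$ iff a translated classical sentence $\psi^{\ge t}$ holds. This works because max and min commute with thresholds. Under this translation, the $\Sigma_1$ forms correspond exactly, and subinterpretations correspond to substructures of the encoding. The encoding must also satisfy the constraint that model-defining interpretations give exactly one of $R\bar a,\lnot R\bar a$ a nonzero value. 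This constraint is universal, hence preserved under substructures.

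Apply the classical Łoś–Tarski theorem relative to this universal theory $T$. Each $\psi^{\ge t}$ is then $T$-equivalent to an existential sentence $\chi_t$ over the threshold predicates. Translate $\chi_t$ back into an existential $\Semi_3$-formula $\phi_t$ that evaluates to at least $t$ exactly when $\chi_t$ holds, and to $0$ otherwise. Finally set $\phi=\phi_1\vee(\phi_\epsilon\wedge\theta)$, where $\theta$ is a fixed existential device that caps the value at $\epsilon$. \cref{lem:closure-under-and-or} makes $\phi$ a $\Sigma_1$-sentence.

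The main obstacle is exactly this back-translation, since $\Semi_3$ formulas cannot freely express thresholds. A negative atom $\lnot R^{\ge 1}\bar a$ means "value $\le\epsilon$", which is not directly a literal. In a model-defining interpretation, $\pi(R\bar a)<1$ holds either when $\pi(R\bar a)=\epsilon$ or when $\pi(\lnot R\bar a)\neq 0$. So I would argue that, up to $T$-equivalence, every threshold literal can be rewritten as a positive combination of plain literals whose $\Semi_3$ value is $\ge t$ iff the threshold holds. If this fails, I would instead redo the amalgamation/compactness argument directly in $\Semi_3$. There the diagram of a finite subinterpretation is expressed by an existential sentence whose value equals the minimum of its literal values. $\phi$ is then obtained as the (compactness-finite) disjunction of such diagrams entailed by $\psi$, and preservation under extensions gives $\psi\models_{\Semi_3}\phi$ through an elementary-extension argument.

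The homomorphism case goes the same way, using positive diagrams and the classical homomorphism preservation theorem. Homomorphisms in idempotent semirings are exactly value-increasing maps on atoms, which matches homomorphisms of the threshold encoding. The subinterpretation case is dual and yields $\Pi_1$, with \cref{lem:closure-under-and-or} used for the universal combinations. Once the $\Semi_3$ theorem is in hand, the lift is direct. We have $\psi\equiv_{\Semi_3}\phi$ with $\phi$ of the right form, and the reduction gives $\psi\equiv_\calL\phi$.
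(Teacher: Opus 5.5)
Your outer layer matches the paper: the easy direction via \cref{lem:syntax-to-semantics}, the transfer of preservation to $\Semi_3$ through the sublattice $\{0,c,1\}\subseteq\calL$, and lifting $\Semi_3$-equivalence back to $\calL$. The threshold encoding plus relativised {\L}o\'s--Tarski also correctly gives existential $\chi_t$ with $\chi_t\equiv_T\psi^{\ge t}$. The gap is the back-translation you flag yourself, and neither of your proposed fixes works.

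The translation only goes one way. A $\Sigma_1$-sentence becomes an existential sentence that is \emph{positive} in the predicates of a \emph{single} level. {\L}o\'s--Tarski gives you no control over the shape of $\chi_t$, which may contain atoms such as $\lnot R^{\ge 1}\bar x$. These cannot be rewritten literal by literal, because $\Semi_3$-semantics is monotone in the literal values. Take $\pi(R\bar a)=\epsilon$, $\pi(\lnot R\bar a)=0$, and let $\pi'$ differ from $\pi$ only by $\pi'(R\bar a)=1$. Then $\pi\llb\theta\rrb\le\pi'\llb\theta\rrb$ for every $\theta$, so no $\theta$ has value $1$ exactly when $\pi(R\bar a)\le\epsilon$.

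The capping device $\theta$ cannot exist either. By \cref{fundamental-property} applied to $h_{\geq\epsilon}$ (which maps $\epsilon\mapsto 1$), $\pi\llb\theta\rrb=\epsilon$ forces $(h_{\geq\epsilon}\circ\pi)\llb\theta\rrb=1$. So a sentence that never exceeds $\epsilon$ is constantly $0$, and your $\phi$ collapses to $\phi_1$. The same argument shows that a $\{0,1\}$-valued $\phi_1$ depends only on the support of $\pi$. Hence your requirement ``and to $0$ otherwise'' already fails for $\psi=\exists x\,Rx$.

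The missing idea is positivity, together with the observation that the $\epsilon$-level is irrelevant. By \cref{cor:equivalence-from-S3-to-any-lattice}(3), it suffices to find $\phi\in\Sigma_1$ with $\pi\llb\phi\rrb=1\iff\pi\llb\psi\rrb=1$. Your route can be repaired along these lines:
\begin{itemize}
\item $\psi^{\ge 1}$ uses only $R^{\ge 1},N^{\ge 1}$, and only positively.
\item The sole constraint on these predicates is disjointness, and every such reduct lifts to a $\Semi_3$-interpretation by giving value $\epsilon$ to all remaining atoms.
\item Hence $\psi^{\ge1}$ is preserved under extensions and, by positivity, under injective homomorphisms of these reducts.
\item A relativised monomorphism-preservation theorem then gives an equivalent existential-positive sentence with inequalities, which translates back literally.
\end{itemize}

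The paper does essentially this inside $\Semi_3$. \cref{lem:existential-amalgamation} only yields an injection $f$ preserving $1$-valued literals, not an embedding. The proof of \cref{thm:extensionPreservationMainDirection} repairs this by replacing $\pi_B$ with $\pi_B^*$, which copies the undetermined values from $\pi_C$ and satisfies $\pi_B^*\ge h_{\geq 1}\circ\pi_B$, so $\pi_B^*\llb\Phi\rrb=1$ by monotonicity. Your fallback omits exactly this step. A diagram sentence whose value is a minimum of literal values certifies at best which literals are $1$-valued, not a subinterpretation.

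The homomorphism case has a further mismatch. Homomorphisms of your encoding must also preserve the predicates encoding negative literals, which $\Semi_3$-homomorphisms need not do. So the classical theorem may return sentences using negative literals, which are not in $\Sigma_1^+$.
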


The classical proofs of most model-theoretic preservation theorems
rely on compactness and amalgamation. Compactness has recently been studied in the context of semiring semantics in \cite{BrinkeDawGraMrkNaa26} and positive results have been established, in particular, for the lattice case.
However, by itself, compactness alone does not suffice to
extend the proofs of preservation theorems to lattice semirings.

The extension preservation theorem under Boolean semantics, for example, is proved by showing that a set $\Phi \subseteq \FO$ which is preserved under extensions is logically equivalent to the set $\Phi_{\exists} \coloneqq \{\psi \in \Sigma_1 \mid \Phi \models \psi \}$ of $\Sigma_1$-consequences of $\Phi$. By applying compactness, this yields a single equivalent $\Sigma_1$-sentence in the case where $\Phi$ is a singleton. While it is an easy observation that $\Phi \models \Phi_\exists$, a more involved model-theoretic argument is needed to show the converse entailment $\Phi_\exists \models \Phi$. One proves that for every model $\AA \models \Phi_\exists$ there is a suitable $\BB \models \Phi$ that can be \emph{amalgamated} into a structure $\mathfrak C \succeq \mathfrak A$ that is also an extension of $\BB$. This yields $\mathfrak C \models \Phi$ by extension preservation of $\Phi$. Since $\mathfrak C$ is an elementary extension of $\mathfrak A$, this implies $\AA \models \Phi$. To ensure that amalgamation is possible, $\BB$ has to be chosen such that every $\Sigma_1$-sentence that holds in $\BB$ also holds in $\AA$, or, in other words, such that every finite substructure of $\BB$ is a substructure of $\AA$, too.

This relationship to shared finite subinterpretations no longer holds if we move from the Boolean semiring to lattices. It even fails for min-max semirings over $4$ elements where finite interpretations cannot be axiomatised up to isomorphism \cite{GraedelMrk21}. 
The intuitive reason for this is that the semiring elements do not occur in the logic's syntax.
The $\Sigma_1$-sentence $\psi = \exists x (Px \wedge \lnot Qx)$, for example, describes a substructure of size $1$ in the Boolean case, while its evaluation in a lattice generally does not reflect the existence of a particular subinterpretation. Since existential quantifiers are evaluated to a supremum, it might not even be possible to attribute the valuation of $\psi$ to a single instantiation of $x$. 
Even if we fix some instantiation $a$, by commutativity, the evaluation of $Pa \wedge \lnot Qa$ does not give insights into the contribution of the single literals.
So, in a sense, $\Sigma_1$-sentences seem weaker in the lattice context, and it is a priori unclear whether $\Phi_\exists$ still entails $\Phi$ in the presence of extension preservation.
This mismatch between $\Sigma_1$-sentences and the notions of subinterpretations and extensions in semiring semantics is what makes it a challenge to show that the extension preservation theorem still holds. We achieve this by combining arguments used in the classical proof with new methods specifically developed for semiring semantics.

	\subsection{Proof Methods from Semiring Semantics} 
	
	The proofs of the three preservation theorems for \emph{all} lattice semirings rely on a reduction to the case where we evaluate in \emph{one fixed} such semiring. Here we make use of the three-element lattice $\Semi_3$, where an element denoted $\epsilon$ is sandwiched in between $0$ and $1$, that is, $\Semi_3 = (\{0,\epsilon,1\}, \max, \min, 0, 1)$.

	\paragraph*{Reduction to $\Semi_3$}
	As $\Semi_3$ embeds into any lattice semiring except for the Boolean, it can be thought of as the simplest lattice semiring beyond the Boolean.
	As a consequence, both preservation properties and logical equivalences in $\Semi_3$ are a special case of the same property in any lattice $\calL \not\cong \Bool$. In order to also transfer counterexamples to these properties from lattice semirings $\calL \not\cong \Bool$ back to $\Semi_3$, we apply a decomposition method from \cite{BrinkeDawGraMrkNaa26}. Such decompositions are based on semiring homomorphisms, and in the case where they are \emph{infinitary}, i.e. also commute with the infinitary semiring operations, this inductively extends to the semiring semantics of any first-order formula.
	
	\begin{lemma}[Fundamental Property] \label{fundamental-property}
		Let $\Semi, \calT$ be (infinitary) semirings, $\pi \colon \Lit_A(\tau) \to \Semi$ be an (infinite) $\Semi$-interpretation, and $h \colon \Semi \to \calT$ be an (infinitary) semiring homomorphism. Then, $(h \circ \pi)$ is a $\Tt$-interpretation and it holds that $(h \circ \pi) \llb \Phi( \bar a)\rrb = h(\pi \llb \Phi (\bar a) \rrb )$ for all $\Phi (\bar a) \subseteq \FO_A (\tau)$.
	\end{lemma}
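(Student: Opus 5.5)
The plan is a structural induction on formulae in negation normal form, applied to each instantiated formula $\phi(\bar a) \in \FO_A(\tau)$. The case of sets $\Phi(\bar a)$ then follows, because $\pi\llb\Phi(\bar a)\rrb$ is defined as an (infinitary) product and $h$ commutes with such products.

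First we check that $h\circ\pi$ is a $\Tt$-interpretation. This is immediate, since it is a function $\Lit_A(\tau)\to\Tt$. If model-definingness is required, we use the standing assumption that $h$ preserves the zero pattern, or note that the statement only asserts the valuation identity.

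The base cases are as follows.
\begin{itemize}
\item For a literal $L$, $(h\circ\pi)\llb L\rrb = h(\pi(L)) = h(\pi\llb L\rrb)$ by definition.
\item For (in)equality atoms, the values are $0$ or $1$ in both semirings, and $h(0)=0$ and $h(1)=1$ because $h$ is a semiring homomorphism.
\end{itemize}

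For the inductive steps:
\begin{itemize}
\item Disjunctions follow from $h(s+t)=h(s)+h(t)$.
\item Conjunctions follow from $h(st)=h(s)h(t)$.
\item For $\exists x\,\psi(x)$ we get $(h\circ\pi)\llb\exists x\psi\rrb=\sum_{a\in A}(h\circ\pi)\llb\psi(a)\rrb=\sum_{a\in A}h(\pi\llb\psi(a)\rrb)=h\bigl(\sum_{a\in A}\pi\llb\psi(a)\rrb\bigr)$, using the induction hypothesis and then additivity of $h$.
\item The universal quantifier is handled dually, using products.
\end{itemize}
When $A$ is finite, finitary additivity and multiplicativity suffice. When $A$ is infinite, this is exactly where we need $h$ to be infinitary, i.e. to commute with the infinitary sum and product operations.

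The only point needing care, and the main potential obstacle, is the infinite case. We must make sure that the infinitary operations used in the semantics, over index sets of size $|A|$ and over the set $\Phi$, are the ones $h$ is assumed to commute with. This is precisely the hypothesis that $h$ is an infinitary homomorphism. With this hypothesis in place the induction goes through uniformly, and the set case follows by one more application of commutation with infinite products.
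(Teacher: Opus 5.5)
Your proposal is correct and matches the paper's approach: the paper gives no separate proof, only remarking that once $h$ commutes with the (infinitary) semiring operations, agreement on literals extends inductively to all formulae, which is exactly your structural induction, including the final infinitary product for the set $\Phi$. One small correction: drop the hedge about a ``standing assumption'' that $h$ preserves the zero pattern---the paper makes no such assumption, the lemma only claims that $h\circ\pi$ is a function on literals, and the paper deliberately applies it with $h_{\geq 1}$, for which $h_{\geq 1}\circ\pi$ need not be model-defining.
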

	
	In the special case where $\calT$ is a min-max semiring, we can drop the condition that $h$ must commute with the infinitary operations and still obtain a weaker relationship between $h(\pi \llb \Phi \rrb)$ and $(h \circ \pi) \llb \Phi \rrb$.
	
	\begin{lemma}[\cite{BrinkeDawGraMrkNaa26}]
		\label{thm-weak-fun} 
		Let $h \colon \Semi \to \mathcal{T}$ be a homomorphism (which does not necessarily respect the infinitary operations) into a min-max semiring $\mathcal{T}$, and let $t \in \mathcal{T}$. 
		\begin{enumerate}[(1)]
			\item If $h (\pi \llb \phi (\bar a) \rrb ) \geq t$ and if for all $X \subseteq \Semi$ with $\pi \llb \phi (\bar a) \rrb = \Sup X$ there is some $x \in X$ such that $h (x) \geq t$, then also $(h \circ \pi) \llb \phi (\bar a) \rrb \geq t$.
			\item If $h (\pi \llb \phi (\bar a) \rrb ) \leq t$ and if for all $X \subseteq \Semi$ with $\pi \llb \phi (\bar a) \rrb = \Inf X$ there is some $x \in X$ such that $h (x) \leq t$, then also $(h \circ \pi) \llb \phi (\bar a) \rrb \leq t$.
		\end{enumerate}
	\end{lemma}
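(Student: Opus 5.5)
We prove the statement by induction on the structure of $\phi$ in negation normal form, treating parts (1) and (2) together. The two parts are order duals of each other, so it suffices to prove (1) carefully and then indicate the symmetric changes for (2). Throughout we use the following facts: $h$ is a homomorphism of the finitary operations, so it is monotone and preserves finite sums and products; and $\mathcal{T}$ is a min-max semiring, so $+$ is $\max$ and $\cdot$ is $\min$ on a total order. The induction should be run on the hypotheses of the lemma themselves, i.e.\ for every subformula instance, so we only draw conclusions at instances where the hypotheses hold. More precisely, we show by simultaneous induction, for all $t$, that statements (1) and (2) hold for every $\phi(\bar a)$.

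\textbf{Literals and equalities.} Here $(h\circ\pi)\llb L\rrb = h(\pi(L))$ holds exactly, so both claims are immediate.

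\textbf{Disjunction, part (1).} Suppose $\pi\llb\phi\vee\theta\rrb = s_1 + s_2$ and $h(s_1+s_2)\ge t$. Since $h(s_1+s_2) = \max(h(s_1),h(s_2))$, at least one $h(s_i)\ge t$. We need the witness hypothesis for that subformula, and this is where a stronger invariant is needed. The natural choice is to strengthen the induction to: for every $\phi(\bar a)$ with $\pi\llb\phi(\bar a)\rrb = s$, if $h(s)\ge t$ and $s$ is \enquote{$t$-inaccessible from below} (every $X$ with $\Sup X = s$ contains some $x$ with $h(x)\ge t$), then $(h\circ\pi)\llb\phi\rrb\ge t$.

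With this invariant, the disjunction and existential cases work as follows. Write $\pi\llb\exists x\psi\rrb = \Sup_a s_a$ where $s_a = \pi\llb\psi(a)\rrb$. Take $X=\{s_a\}$; the hypothesis yields some $a$ with $h(s_a)\ge t$. Any decomposition $s_a = \Sup Y$ gives a decomposition of $s$ as $\Sup\big(Y\cup\{s_b : b\ne a\}\big)$, hence some element $y$ with $h(y)\ge t$. If $y\in Y$ we are done. If instead $y = s_b$ for some $b$, we switch the witness to $b$. This requires care: we should choose the witness $a$ at the outset to be one for which inaccessibility holds, and argue by contradiction. Suppose every $s_a$ with $h(s_a)\ge t$ has a bad decomposition $Y_a$. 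Replace each such $s_a$ by $Y_a$ and keep the other $s_b$, which satisfy $h(s_b)<t$. This gives a set whose supremum is still $s$ but which contains no good element, contradicting the hypothesis on $s$. Then the induction hypothesis gives $(h\circ\pi)\llb\psi(a)\rrb\ge t$, and hence the $\max$ over all $a$ is $\ge t$.

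\textbf{Conjunction and universal quantifier, part (1).} Here $s=\Inf_a s_a$, and $h(s)\ge t$ together with monotonicity gives $h(s_a)\ge t$ for all $a$. Each $s_a\ge s$ must also be inaccessible. I expect this to be the main obstacle. The argument I would try is: given $s_a=\Sup Y$, use complete distributivity (assuming $\Semi$ is a completely distributive lattice, as is the case in its intended uses) to write $s = \Inf_b s_b = \Sup_{y\in Y} \big(y \meet \Inf_{b\neq a} s_b\big)$. The hypothesis on $s$ then gives some $y$ with $h\big(y\meet \Inf_{b\ne a} s_b\big)\ge t$, and monotonicity yields $h(y)\ge t$. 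The induction hypothesis then gives $(h\circ\pi)\llb\psi(a)\rrb\ge t$ for every $a$, and their $\min$ is $\ge t$. If $\Semi$ is not completely distributive, I would instead restrict the lemma's intended applications accordingly.

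\textbf{Part (2).} This is the exact order dual. Swap $\Sup$ with $\Inf$, $\ge$ with $\le$, and the roles of $\vee,\exists$ with $\wedge,\forall$, then reuse the same replacement argument and the dual distributive law.
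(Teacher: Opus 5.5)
Your proof is correct. The paper does not prove this lemma; it imports it from \cite{BrinkeDawGraMrkNaa26}, so there is no in-text proof to compare against.

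A structural induction of your kind is the natural argument. Its two key moves already appear in the paper's proof of \cref{lem:counterexample-pres-SThree}, applied there to the infinitary conjunction $\Phi$:
\begin{itemize}
\item replacing every candidate witness by a ``bad'' decomposition to contradict the accessibility hypothesis, which is your $\vee/\exists$ step for (1) and your $\wedge/\forall$ step for (2);
\item pushing the hypothesis through a meet via $s = s\sqcap\Sup Y=\Sup_{y\in Y}(s\sqcap y)$, which is your $\wedge/\forall$ step for (1).
\end{itemize}

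Some small points for the write-up:
\begin{itemize}
\item Your ``strengthened invariant'' is verbatim statement (1), so nothing is actually strengthened. The real content is that the accessibility hypothesis propagates to some immediate subformula instance for $\vee,\exists$, and to every one for $\wedge,\forall$. You do establish this.
\item The exploratory ``switch the witness to $b$'' attempt should be dropped in favour of the contradiction argument that follows it.
\item In the $\wedge/\forall$ step you can write $s=s\sqcap s_a$ directly, using $s\le s_a$, instead of splitting off $\Inf_{b\neq a}s_b$. Dually, in the $\vee/\exists$ step of (2) you can write $s=s\sqcup s_a$.
\item Your complete-distributivity assumption costs nothing here. The paper's lattice semirings are completely distributive by definition, and the lemma is only applied to lattice semirings.
\item Only two infinite distributive laws are used: $x\sqcap\Sup Y=\Sup_{y\in Y}(x\sqcap y)$ for (1) and $x\sqcup\Inf Y=\Inf_{y\in Y}(x\sqcup y)$ for (2).
\end{itemize}
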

	
	\begin{lemma}[\cite{BrinkeDawGraMrkNaa26}] \label{lem:ex-weakly-sep-horm}
		Let $\calL$ be a lattice semiring without divisors of $0$.
		For each $s \not\leq t$, there is a homomorphism $h \colon \calL \to \Semi_3$ that \emph{weakly $\leq$-separates $s$ from $t$}, which means that
		\begin{enumerate}[(1)]
			\item $h^{-1} (0) =\{0\}$;
			\item $h(s) > h(t)$;
			\item if $s = \Sup X$, then $h(x)=h (s)$ for some $x \in X$;
			\item if $t = \Inf X$, then $h(x)=h(t)$ for some $x \in X$.
		\end{enumerate}
	\end{lemma}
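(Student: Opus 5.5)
The plan is to define $h$ explicitly through a prime ideal $P$ of the lattice $\calL$:
\[
h(x)=\begin{cases}0 & x=0,\\ \epsilon & x\neq 0,\ x\in P,\\ 1 & x\notin P.\end{cases}
\]
Condition (1) then holds by definition. I first check that $h$ is a semiring homomorphism whenever $P$ is a proper prime ideal of $(S,\le)$ with $0\in P$.
\begin{itemize}
\item We have $h(0)=0$, and $h(1)=1$ because $P$ is proper.
\item For joins: $x\sqcup y\in P$ iff $x\in P$ and $y\in P$, since $P$ is an ideal. Also $x\sqcup y=0$ iff $x=y=0$. Together these give $h(x\sqcup y)=\max(h(x),h(y))$.
\item For meets: $x\sqcap y\in P$ iff $x\in P$ or $y\in P$, since $P$ is a down-set and prime. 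Also $x\sqcap y=0$ iff $x=0$ or $y=0$, because $\calL$ has no divisors of $0$. Together these give $h(x\sqcap y)=\min(h(x),h(y))$.
\end{itemize}
So it remains to choose $P$ so that (2)--(4) hold. The construction uses the complete distributivity of $\calL$ twice, via Raney-type characterisations.

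\emph{Choosing the upper witness $z$ and the lower witness $y$.} Write $z\ll s$ (``way-way below'') if every $X$ with $\Sup X\ge s$ contains some $x\ge z$. Dually, write $t\lhd y$ if every $X$ with $\Inf X\le t$ contains some $x\le y$. In a completely distributive complete lattice every element is the join of the elements way-way below it, and dually $t=\Inf\{y\mid t\lhd y\}$.
\begin{itemize}
\item Since $s\not\le t$, the dual representation gives some $y$ with $t\lhd y$ and $s\not\le y$.
\item Applying $s=\Sup\{z\mid z\ll s\}$, some $z\ll s$ satisfies $z\not\le y$.
\end{itemize}
Then $\mathord{\uparrow} z$ is a filter and $\mathord{\downarrow} y$ is an ideal, and they are disjoint. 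By the prime ideal theorem for distributive lattices there is a prime ideal $P$ with $y\in P$ and $z\notin P$. Since $0\le y$ we get $0\in P$, and since $z\le 1$ we get $1\notin P$.

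\emph{Verifying (2)--(4).}
\begin{itemize}
\item (2): $s\ge z$ and $P$ is a down-set, so $s\notin P$ and $h(s)=1$. Also $t\le y$ gives $t\in P$, so $h(t)\le\epsilon<1$.
\item (3): if $s=\Sup X$, then $z\ll s$ yields some $x\in X$ with $x\ge z$. Hence $x\notin P$ and $h(x)=1=h(s)$.
\item (4): if $t=\Inf X$, then $t\lhd y$ yields some $x\in X$ with $x\le y$, so $x\in P$. If $t\neq 0$, then $x\ge t$ gives $x\neq 0$, so $h(x)=\epsilon=h(t)$.
\end{itemize}

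I expect the main obstacle to be the degenerate case $t=0$ in (4). There one needs some $x\in X$ with $x=0$ whenever $\Inf X=0$. This does not follow from the binary absence of zero divisors; for instance it fails in $[0,1]$ with $X=\{1/n\mid n\ge 1\}$. I would therefore read ``without divisors of $0$'' in the infinitary sense, namely that $\Inf X=0$ implies $0\in X$, as befits infinitary semirings. Under that reading (4) is immediate, and otherwise this case must be excluded or handled separately where the lemma is applied.

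The other delicate point is citing the correct form of Raney's theorem, including its dual for $\lhd$, for completely distributive complete lattices. The remaining steps are routine.
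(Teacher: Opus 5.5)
Your proof is correct. Note that the paper does not prove this lemma itself; it imports it from \cite{BrinkeDawGraMrkNaa26}, so there is no in-paper argument to compare with, and your argument has to stand on its own. It does.

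The map $h$ built from a proper prime ideal $P\ni 0$ is a bounded lattice homomorphism. Binary absence of zero divisors is exactly what makes the meet clause work. Choosing $P$ to separate $\mathord{\uparrow}z$ from $\mathord{\downarrow}y$ makes the conditions at $s$ and $t$ robust under arbitrary suprema and infima, where $z$ is totally below $s$, $y$ is totally above $t$, and $z\not\le y$. The point you flag as delicate is standard:
\begin{itemize}
\item the representation $s=\Sup\{z\mid z\ll s\}$ follows in two lines from the complete distributive law applied to $s=\Inf_{\Sup X\ge s}\Sup X$, since each $\Inf_X f(X)$ with $f(X)\in X$ is totally below $s$;
\item the dual representation of $t$ follows from Raney's theorem that complete distributivity is self-dual.
\end{itemize}

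Your diagnosis of the case $t=0$ is also right, and it is a defect of the statement, not of your proof. Under the binary reading of ``no divisors of $0$'', no $h$ whatsoever satisfies (1) and (4) for $\Fuzzy$ with $s=1$, $t=0$ and $X=\{1/n\mid n\ge 1\}$, because (1) forces $h(1/n)\neq 0=h(0)$.

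This does not damage the paper. The lemma is only applied, in \cref{prop:preservation-in-all-lattices} and \cref{lem:entailment-from-S3-to-any-lattice}, to the lattice $\calL^*$ obtained by adjoining a fresh bottom $0^*$. In $\calL^*$, $\Inf X=0^*$ does force $0^*\in X$, and $\calL^*$ is still completely distributive, so your argument applies verbatim. The $0^*$ construction thus supplies exactly the infinitary property you assume. That property is genuinely needed in the proof of \cref{lem:counterexample-pres-SThree}, where (4) is applied with $t=\pi_B\llb\Phi\rrb$, which may be the bottom element.
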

	
	\begin{lemma} \label{lem:counterexample-pres-SThree}
		Let $\calL$ be a lattice semiring without divisors of $0$, $\pi_A, \pi_B$ be $\calL$-interpretations, and $\Phi, \Psi \subseteq \FO$. Let further $s \not\leq t \in \calL$ and $h \colon \calL \to \Semi_3$ weakly $\leq$-separate $s$ from $t$.
		\begin{enumerate}[(1)]
			\item If $\pi_A \llb \Phi \rrb=s$ and $\pi_B \llb \Phi \rrb = t$, then $(h\circ \pi_A) \llb \Phi \rrb > (h \circ \pi_B) \llb \Phi \rrb$.
			\item If $\pi_A \llb \Phi \rrb=s$ and $\pi_A \llb \Psi \rrb = t$, then $(h\circ \pi_A) \llb \Phi \rrb > (h \circ \pi_A) \llb \Psi \rrb$.
		\end{enumerate}
	\end{lemma}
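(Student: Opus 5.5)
The plan is to derive both items from the weak fundamental property (\cref{thm-weak-fun}), using conditions (1)--(4) of the weak $\leq$-separation from \cref{lem:ex-weakly-sep-horm}. Write $\pi_A\llb\Phi\rrb = \Inf_{\phi\in\Phi}\pi_A\llb\phi\rrb$, and in general regard a set of sentences as the infinite conjunction whose value is the infimum of the values of its members. Note first that, because $h^{-1}(0)=\{0\}$, the interpretations $h\circ\pi_A$ and $h\circ\pi_B$ are again model-defining $\Semi_3$-interpretations, so every expression in the statement makes sense. Since $\Semi_3$ is a min-max semiring, \cref{thm-weak-fun} applies with $\calT=\Semi_3$.

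\emph{Lower bound on the $s$-side.} We want $(h\circ\pi_A)\llb\Phi\rrb\geq h(s)$. By condition (3), any representation $s=\Sup X$ contains some $x\in X$ with $h(x)=h(s)$. Hence the hypothesis of \cref{thm-weak-fun}(1) holds with $t:=h(s)$, and we obtain $(h\circ\pi_A)\llb\phi\rrb\geq h(s)$ whenever $\pi_A\llb\phi\rrb=s$.

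The difficulty is that $\Phi$ is a set, so its value is an infimum of the values $\pi_A\llb\phi\rrb\geq s$, and these need not equal $s$. I would handle it pointwise. For each $\phi\in\Phi$ we have $\pi_A\llb\phi\rrb\geq s$, hence $h(\pi_A\llb\phi\rrb)\geq h(s)$. Moreover, if $\pi_A\llb\phi\rrb=\Sup X$, then $s = s\meet\Sup X=\Sup_{x\in X}(s\meet x)$ by distributivity, so (3) gives some $x$ with $h(s\meet x)=h(s)$, and therefore $h(x)\geq h(s)$. Thus \cref{thm-weak-fun}(1) applies to each $\phi$ with bound $h(s)$. Since $\Semi_3$ is finite, the infimum over $\phi$ is attained, and $(h\circ\pi_A)\llb\Phi\rrb\geq h(s)$.

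\emph{Upper bound on the $t$-side.} We want $(h\circ\pi_B)\llb\Phi\rrb\leq h(t)$ in case (1), and $(h\circ\pi_A)\llb\Psi\rrb\leq h(t)$ in case (2). Treat $\Phi$ (respectively $\Psi$) as a single infinitary conjunction with value $t$. If $t=\Inf X$, condition (4) yields some $x\in X$ with $h(x)=h(t)\leq h(t)$, so \cref{thm-weak-fun}(2) gives the bound. This step relies on \cref{thm-weak-fun} holding for sets of formulae, read as infinite conjunctions; that extension is immediate from its inductive proof, whose top-level case is exactly this infimum.

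Combining the two bounds with $h(s)>h(t)$ from condition (2) gives both strict inequalities. The main obstacle is the pointwise argument on the $s$-side: condition (3) only speaks about suprema equal to $s$ itself, and the distributivity trick above is what transfers it to the larger values $\pi_A\llb\phi\rrb$ of the individual members of $\Phi$.
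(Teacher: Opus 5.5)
Your plan is right, and your $s$-side argument is exactly the paper's: monotonicity of $h$ gives $h(\pi_A\llb\phi\rrb)\geq h(s)$ for every $\phi\in\Phi$, and writing $s=\Sup_{x\in X}(s\meet x)$ transfers condition (3) to every representation $\pi_A\llb\phi\rrb=\Sup X$. Hence \cref{thm-weak-fun}(1) applies formula by formula.

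On the $t$-side, however, you only assert that \cref{thm-weak-fun}(2) extends to $\Phi$ read as an infinitary conjunction, appealing to the structure of a proof you have not seen. The extension is true, but it is not a formality. It is precisely the half of the lemma that the paper proves by hand, using the following selection argument. Let $\Phi_0=\{\phi\in\Phi\mid h(\pi_B\llb\phi\rrb)=h(t)\}$. Suppose every $\phi\in\Phi_0$ violated the hypothesis of \cref{thm-weak-fun}(2), i.e.\ had a representation $\pi_B\llb\phi\rrb=\Inf X_\phi$ with $h(x)>h(t)$ for all $x\in X_\phi$ (using linearity of $\Semi_3$). Then $t=\Inf\bigl(\bigcup_{\phi\in\Phi_0}X_\phi\cup\{\pi_B\llb\phi\rrb\mid\phi\in\Phi\setminus\Phi_0\}\bigr)$ would be an infimum of elements whose $h$-values all exceed $h(t)$. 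For the second part this holds because $\pi_B\llb\phi\rrb\geq t$ and $\phi\notin\Phi_0$. This contradicts (4). Hence some $\phi\in\Phi_0$ satisfies the hypotheses of \cref{thm-weak-fun}(2), and $(h\circ\pi_B)\llb\Phi\rrb\leq(h\circ\pi_B)\llb\phi\rrb\leq h(t)$.

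Note that picking an arbitrary $\phi$ with $h(\pi_B\llb\phi\rrb)=h(t)$, which exists by (4), would not suffice. Condition (4) constrains inf-representations of $t$, not of $\pi_B\llb\phi\rrb$. For instance, in $\Fuzzy$ take $h$ mapping $(0,1/2]$ to $\epsilon$ and $(1/2,1]$ to $1$. Then $t=1/4$ satisfies (4). But if $\Phi$ also contains $\forall x\,Rx$ with $\pi_B\llb\forall x\,Rx\rrb=\Inf_{k\geq 2}(1/2+1/k)=1/2$, we get $h(1/2)=\epsilon=h(t)$ while $(h\circ\pi_B)\llb\forall x\,Rx\rrb=1>h(t)$.

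With the selection argument written out, your proof coincides with the paper's. Your treatment of the two sides is also asymmetric. If you grant the extension of \cref{thm-weak-fun} to sets, the $s$-side follows in one line from (3). Your distributivity computation there is exactly the inductive step for infinitary conjunctions, just as the selection above is the step you skipped on the $t$-side.
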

	
	\begin{proof}
		We only prove (1). The proof of (2) is analogous.
		Because $h$ weakly $\leq$-separates $s$ from $t$, it holds that $h(\pi_A \llb \Phi \rrb) \not\leq h(\pi_B \llb \Phi \rrb)$. Further, $h$ is a homomorphism, so preserves the lattice order, implying $h(\pi_A \llb \phi \rrb) \geq h(\bigsqcap_{\phi \in \Phi} \pi_A \llb \phi \rrb) = h(\pi_A \llb \Phi \rrb) = h(s)$ for each $\phi \in \Phi$. Suppose that there were $\phi \in \Phi$ and $X \subseteq \calL$ such that $\pi_A \llb \phi \rrb = \bigsqcup X$ and $h(x) < h(s)$ for each $x \in X$. Then $\pi_A \llb \Phi \rrb = \bigsqcup X \sqcap \pi_A \llb \Phi \rrb =\bigsqcup \{x \sqcap \pi_A \llb \Phi \rrb \mid x \in X\}$ where $h (x \sqcap \pi_A \llb \Phi \rrb) \leq h( x) < h(s) = h (\pi_A \llb \Phi \rrb)$, a contradiction. Hence, we can apply \cref{thm-weak-fun} to each $\phi \in \Phi$ (we choose $t \coloneqq h(\pi_A \llb \Phi \rrb)$) and obtain $(h \circ \pi_A) \llb \phi \rrb \geq h(\pi_A \llb \Phi \rrb)$. Overall we obtain $(h \circ \pi_A) \llb \Phi \rrb \geq h(\pi_A \llb \Phi \rrb)$. It remains to prove that $(h \circ \pi_B) \llb \Phi \rrb \leq h(\pi_B \llb \Phi \rrb)$. It holds that $\pi_B \llb \Phi \rrb = \bigsqcap_{\phi \in \Phi} \pi_B \llb \phi \rrb$, which means that $h(\pi_B \llb \phi \rrb) = h(\pi_B \llb \Phi \rrb)$ for some $\phi \in \Phi$ as $h$ is weakly $\leq$-separating $\pi_A \llb \Phi \rrb$ from $\pi_B \llb \Phi \rrb$. Let $\Phi_0 \coloneqq \{\phi \in \Phi \mid h(\pi_B \llb \phi \rrb) = h (\pi_B \llb \Phi \rrb) \} \neq \varnothing$. Suppose that for each $\phi \in \Phi_0$ there is some $X_\phi$ such that $\pi_B \llb \phi \rrb = \bigsqcap X_\phi$ and $h(x) > h(\pi_B \llb \phi \rrb)$ for each $x \in X$. Then $\pi_B \llb \Phi \rrb = \bigsqcap \bigcup_{\phi \in \Phi_0} X_\phi \cup \{\pi_B \llb \phi \rrb \mid \phi \in \Phi \setminus \Phi_0\}$, so $h$ could not weakly $\leq$-separate $\pi_A \llb \Phi \rrb$ from $\pi_B \llb \Phi \rrb$. Hence, there is some $\phi \in \Phi$ to which we can apply \cref{thm-weak-fun} for $t \coloneqq h (\pi_B \llb \Phi \rrb)$ and we get $(h \circ \pi_B ) \llb \Phi \rrb \leq (h \circ \pi_B ) \llb \phi \rrb \leq h(\pi_B \llb \Phi \rrb)$. Overall, we have shown that $(h\circ \pi_A) \llb \Phi \rrb \geq h (\pi_A \llb \Phi \rrb) > h (\pi_B \llb \Phi \rrb) \geq (h \circ \pi_B) \llb \Phi \rrb$.
	\end{proof}
	
	\begin{proposition} \label{prop:preservation-in-all-lattices}
		Let $\calL, \calL' \not\cong \Bool$ be lattice semirings and $\Phi \subseteq \FO$ be a set of sentences. 
		If $\Phi$ is preserved under homomorphisms/extensions/subinterpretations in $\calL$, then $\Phi$ must also be preserved under homomorphisms/extensions/subinterpretations in $\calL'$.
	\end{proposition}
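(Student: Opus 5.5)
Since $\Semi_3$ embeds into every lattice semiring other than $\Bool$, it suffices to prove two directions. First, preservation in $\calL$ implies preservation in $\Semi_3$. Second, preservation in $\Semi_3$ implies preservation in $\calL'$. The statement then follows by going through $\Semi_3$.

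\emph{From $\calL$ to $\Semi_3$.} Fix an embedding $e\colon \Semi_3 \to \calL$, sending $\epsilon$ to some element strictly between $0$ and $1$. Since $\Semi_3$ is finite, $e$ trivially commutes with all (infinitary) suprema and infima, because these reduce to finite ones. By the Fundamental Property (\cref{fundamental-property}), $(e\circ\pi)\llb\Phi\rrb = e(\pi\llb\Phi\rrb)$ for every $\Semi_3$-interpretation $\pi$. Composing with $e$ turns a homomorphism, extension or subinterpretation between $\Semi_3$-interpretations into the same relation between the $\calL$-interpretations, because $e$ is an order embedding and preserves the model-defining property. Preservation in $\calL$ therefore yields the inequality between the $e$-images, and injectivity plus order reflection of $e$ gives it back in $\Semi_3$.

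\emph{From $\Semi_3$ to $\calL'$.} Argue by contraposition. Suppose $\pi_A,\pi_B$ are $\calL'$-interpretations in the relevant relation with $s\coloneqq\pi_A\llb\Phi\rrb \not\le t\coloneqq\pi_B\llb\Phi\rrb$; the subinterpretation case has the roles swapped. We want a separating $h$ from \cref{lem:ex-weakly-sep-horm}, but that lemma requires no divisors of $0$.

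I expect this to be the main obstacle. I would first try to reduce to a zero-divisor-free quotient. Pick a prime filter/ideal-type decomposition, as in \cite{BrinkeDawGraMrkNaa26}, whose homomorphism still separates $s$ from $t$ and respects the needed infinitary conditions. Alternatively, I would check whether that paper's lemma already applies in the general case after such a decomposition. Granting $h$, \cref{lem:counterexample-pres-SThree}(1) gives $(h\circ\pi_A)\llb\Phi\rrb > (h\circ\pi_B)\llb\Phi\rrb$.

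It remains to check that $h\circ\pi_A$ and $h\circ\pi_B$ are model-defining $\Semi_3$-interpretations in the same relation. Model-definingness follows from $h^{-1}(0)=\{0\}$. Extensions and subinterpretations are preserved since the literal values are equal. For homomorphisms, monotonicity of $h$ keeps $\pi_A(R\bar a)\le\pi_B(Rg\bar a)$; addition is idempotent, so the sum condition reduces to this pointwise one. This contradicts preservation in $\Semi_3$.
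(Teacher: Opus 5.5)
Your first half (passing from $\calL$ to $\Semi_3$ via an embedding $e$ and the Fundamental Property) is correct and matches the paper. So is your final check that $h\circ\pi_A$ and $h\circ\pi_B$ remain model-defining and stay in the same relation. The gap is the step you flag and then simply grant: you never obtain a weakly separating $h$ when $\calL'$ has divisors of $0$.

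The route you suggest cannot work. Passing first to a quotient or prime-filter decomposition of $\calL'$ and then into $\Semi_3$ still yields a homomorphism $h\colon\calL'\to\Semi_3$. Because $\Semi_3$ is a chain, $h^{-1}(0)=\{0\}$ would force $\calL'$ to have no zero divisors: $a\sqcap b=0$ gives $h(a)\sqcap h(b)=0$, hence $h(a)=0$ or $h(b)=0$. Now take a counterexample that uses such values $a,b$ (e.g.\ in the four-element Boolean algebra). Any literal $L$ with $\pi_A(L)=a$ has $\pi_A(\neg L)=0$, so if $h(a)=0$, then $h\circ\pi_A$ sends both $L$ and $\neg L$ to $0$ and is no longer model-defining. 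No choice of $h$ on $\calL'$ itself avoids this; you have to change the interpretations.

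The missing idea in the paper is to go up rather than down. Adjoin a fresh bottom element $0^*$ below all of $\calL'$. This gives a lattice semiring $\calL^*$ with zero $0^*$ and no divisors of $0^*$, since meets of elements of $\calL'$ stay in $\calL'$. Lift $\pi_A,\pi_B$ to $\pi_A^*,\pi_B^*$ by replacing every value $0$ with $0^*$. These are model-defining $\calL^*$-interpretations, still related by the same extension or homomorphism. The map $h^*\colon\calL^*\to\calL'$ that fixes $\calL'$ and sends $0^*\mapsto 0$ is an infinitary homomorphism with $h^*\circ\pi_A^*=\pi_A$ and $h^*\circ\pi_B^*=\pi_B$. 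By the Fundamental Property, $\pi_A^*\llb\Phi\rrb\le\pi_B^*\llb\Phi\rrb$ would imply $\pi_A\llb\Phi\rrb\le\pi_B\llb\Phi\rrb$, so the violation persists in $\calL^*$. Now \cref{lem:ex-weakly-sep-horm} applies to $\calL^*$, and the rest of your argument goes through with $\calL^*$ in place of $\calL'$.
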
 
	
	\begin{proof}
		We only treat extension preservation, homomorphism and subinterpretation preservation are proved analogously.
		First, we argue that extension preservation in $\calL$ implies extension preservation in $\Semi_3$.
		Since $\calL$ must contain at least three elements, there is an embedding $e \colon \Semi_3 \subseteq \calL$. For $\Semi_3$-interpretations $\pi_A \subseteq \pi_B$, we have $e \circ \pi_A \subseteq e \circ \pi_B$. By the Fundamental Property (\cref{fundamental-property}), extension preservation in $\calL$ implies
		$
		(e \circ \pi_A) \llb \Phi \rrb = e(\pi_A \llb \Phi \rrb ) \leq e(\pi_B \llb \Phi \rrb )= (e \circ \pi_B) \llb \Phi \rrb.
		$
		Hence, $\Phi$ must be preserved under extensions in $\Semi_3$, too. 
		Now suppose that $\Phi$ was not extension preserved in $\calL'$, so that there are $\calL'$-interpretations $\pi_A \subseteq\pi_B$ such that $\pi_A \llb \Phi \rrb \not\leq \pi_B \llb \Phi \rrb$. Now extend $\calL'$ by an additional element $0^*$ such that $0^* \leq s$ for all $s \in \calL'$. Let $\pi_A^*, \pi_B^*$ be $\calL^*$-interpretations where we set $\pi_A^* (L) = 0^*$ whenever $\pi_A (L)=0$ and $\pi_A^* (L)= \pi_A (L)$ otherwise. $\pi_B^*$ arises from $\pi_B$ in the same way. Now let $h^* \colon \calL^* \to \calL'$ map every element to itself except for $0^*$, which is mapped to $0$. We have $\pi_A = h^* \circ \pi_A^*$ and $\pi_B = h^* \circ \pi_B^*$, so we cannot have that $\pi_A^* \llb \Phi \rrb \leq \pi_B^* \llb \Phi \rrb$ as this would imply $\pi_A \llb \Phi \rrb  = (h^* \circ \pi_A^*) \llb \Phi \rrb = h^* (\pi_A^* \llb \Phi \rrb) \leq h^* (\pi_B^* \llb \Phi \rrb) = (h^* \circ \pi_B^*) \llb \Phi \rrb = \pi_B \llb \Phi \rrb$. Now $\calL^*$ does not have divisors of $0^*$, which means that we can apply \cref{lem:ex-weakly-sep-horm} and obtain a homomorphism $h \colon \calL^* \to \Semi_3$ weakly $\leq$-separating $\pi_A^* \llb \Phi \rrb $ from $ \pi_B^* \llb \Phi \rrb$. By \cref{lem:counterexample-pres-SThree}, $(h \circ \pi_A^*) \llb \Phi \rrb > (h \circ \pi_B^*) \llb \Phi \rrb$, so $\Phi$ could not be preserved under extensions in $\Semi_3$, a contradiction. Hence, $\Phi$ must be extension preserved in $\calL'$.
	\end{proof}
	
	Note that this reduction only works for $\Semi_3$, but not for the Boolean semiring itself. While the sentence $\psi = \forall x (Rx \vee \lnot Rx)$, for example, as a tautology, is clearly extension preserved in the classical sense, this no longer holds as soon as the lattice contains a third truth value $0 < s < 1$. Extending an $\calL$-interpretation over a single element $a$ whose valuation with respect to $R$ is $1$ by a second element $b$ such that $Rb$ is evaluated to $s$ decreases the valuation of $\psi$.
	Applying the method from \cite{BrinkeDawGraMrkNaa26} another time yields that not just the preservation properties but also all logical equivalences are invariant under the specific lattice one considers.
	
	\begin{restatable}{proposition}{reductionToSThree} \label{lem:entailment-from-S3-to-any-lattice}
		For all $\Phi, \Psi \subseteq \FO$, the following are equivalent:
		\begin{enumerate}[(1)]
			\item $\Phi \models_\calL \Psi$ for all lattice semirings $\calL \not\cong \Bool$;
			\item $\Phi \models_\calL \Psi$ for some lattice semiring $\calL \not\cong \Bool$;
			\item $\pi \llb \Psi \rrb = 1$ for all $\Semi_3$-interpretations $\pi$ with $\pi \llb \Phi \rrb = 1$. 
		\end{enumerate}
	\end{restatable}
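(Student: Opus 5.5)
We prove the cycle of implications (1) $\Rightarrow$ (2) $\Rightarrow$ (3) $\Rightarrow$ (1). The first implication is trivial, since $\Semi_3$ is itself a lattice semiring not isomorphic to $\Bool$.

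For (2) $\Rightarrow$ (3), fix a lattice semiring $\calL \not\cong \Bool$ with $\Phi \models_\calL \Psi$. Since $\calL$ has at least three elements, there is an embedding $e \colon \Semi_3 \to \calL$ that also respects the (finitary and infinitary) lattice operations on the image. Given an $\Semi_3$-interpretation $\pi$ with $\pi\llb\Phi\rrb = 1$, the composition $e\circ\pi$ is a model-defining $\calL$-interpretation. By the Fundamental Property (\cref{fundamental-property}) we get $(e\circ\pi)\llb\Phi\rrb = e(1) = 1$. The entailment then gives $(e\circ\pi)\llb\Psi\rrb = 1$, that is, $e(\pi\llb\Psi\rrb)=1$, and injectivity yields $\pi\llb\Psi\rrb = 1$.

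The main work is (3) $\Rightarrow$ (1). We argue by contraposition, following the proof of \cref{prop:preservation-in-all-lattices}. Suppose $\calL\not\cong\Bool$ and $\pi$ is an $\calL$-interpretation with $s \coloneqq \pi\llb\Phi\rrb \not\leq t \coloneqq \pi\llb\Psi\rrb$.

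\begin{enumerate}[(i)]
\item We first remove zero divisors. Adjoin a new bottom $0^*$ to get $\calL^*$, and replace every $0$-valued literal by $0^*$ to obtain $\pi^*$. The map $h^*\colon \calL^*\to\calL$ collapsing $0^*$ to $0$ is an infinitary homomorphism, so by the Fundamental Property $\pi^*\llb\Phi\rrb\not\leq\pi^*\llb\Psi\rrb$. Otherwise, applying $h^*$ would give $s\leq t$.
\item Since $\calL^*$ has no divisors of $0^*$, \cref{lem:ex-weakly-sep-horm} provides a homomorphism $h\colon\calL^*\to\Semi_3$ weakly $\leq$-separating $s^* \coloneqq \pi^*\llb\Phi\rrb$ from $t^* \coloneqq \pi^*\llb\Psi\rrb$. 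Because $h^{-1}(0)=\{0^*\}$, the interpretation $h\circ\pi^*$ is again model-defining.
\item By \cref{lem:counterexample-pres-SThree}(2), $(h\circ\pi^*)\llb\Phi\rrb > (h\circ\pi^*)\llb\Psi\rrb$ in $\Semi_3$.
\end{enumerate}

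This produces an $\Semi_3$-interpretation $\rho$ with $\rho\llb\Phi\rrb > \rho\llb\Psi\rrb$, but not yet one with $\rho\llb\Phi\rrb = 1$ and $\rho\llb\Psi\rrb\neq 1$. If $\rho\llb\Phi\rrb = 1$ we are done. The hard case is $\rho\llb\Phi\rrb=\epsilon$ and $\rho\llb\Psi\rrb=0$.

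To handle this case, post-compose with the $\Semi_3$-endomorphism $g$ given by $0\mapsto 0$ and $\epsilon,1\mapsto 1$. This map is a lattice homomorphism of the finite chain, hence commutes with all (infinitary) suprema and infima, since these are attained in a finite chain. It preserves model-definingness because $g^{-1}(0)=\{0\}$. By the Fundamental Property, $(g\circ\rho)\llb\Phi\rrb=1$ while $(g\circ\rho)\llb\Psi\rrb = 0 \neq 1$. This contradicts (3), so $\Phi\models_\calL\Psi$.

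The only delicate points are checking that each composed interpretation remains model-defining and verifying the hypotheses of \cref{lem:counterexample-pres-SThree}. Both are supplied by the construction in (i) and by property (1) of weak separation.
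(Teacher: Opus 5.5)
Your proof is correct and uses the same ingredients as the paper: the $0^*$-adjunction, a weakly $\leq$-separating homomorphism into $\Semi_3$ combined with \cref{lem:counterexample-pres-SThree}, the embedding of $\Semi_3$ into $\calL$, and the collapse $\epsilon \mapsto 1$ (the paper's $h_{\geq\epsilon}$). The only difference is how the implications are arranged. The paper runs (1)$\Rightarrow$(3)$\Rightarrow$(2)$\Rightarrow$(1): it uses $h_{\geq\epsilon}$ to upgrade (3) to full $\Semi_3$-entailment, and uses the embedding inside the separation argument. You instead use the embedding for (2)$\Rightarrow$(3) and fold the $h_{\geq\epsilon}$ step into (3)$\Rightarrow$(1).
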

	
	\begin{proof}
		The implication $(1) \Rightarrow (3)$ is clear. We prove $(3) \Rightarrow (2)$ and $(2) \Rightarrow (1)$, both via contraposition. First consider $(2) \Rightarrow (1)$, so let $\calL, \calL' \not\cong \Bool$ be such that $\Phi \models_\calL \Psi$. We want to prove that $\Phi \models_{\calL'} \Psi$. Suppose the contrary. Then there was some $\calL'$-interpretation $\pi$ such that $\pi \llb \Phi \rrb \not\leq \pi \llb \Psi \rrb$. Transform $\pi$ into an $\calL^*$-interpretation $\pi^*$ such that $s:=\pi^* \llb \Phi \rrb \not\leq \pi^* \llb \Psi \rrb=:t$ in the same way as in \cref{prop:preservation-in-all-lattices}. Because $\calL^*$ does not have $0$-divisors, we can fix a homomorphism weakly $\leq$-separating $s$ from $t$ according to \cref{lem:ex-weakly-sep-horm} and obtain $(h \circ \pi) \llb \Phi \rrb > (h \circ \pi) \llb \Psi \rrb$, so we would have that $\Phi \not\models_{\Semi_3} \Psi$. But $\Semi_3$ embeds into $\calL$, a contradiction (the argument is the same as in the proof of \cref{prop:preservation-in-all-lattices}). 
		Now we prove $(3) \Rightarrow (2)$ and choose $\calL = \Semi_3$ in (2). We only have to show that $\pi \llb \Phi \rrb = \epsilon$ implies $\pi \llb \Psi \rrb \geq \epsilon$ for all $\Semi_3$-interpretations $\pi$. To this end, consider the homomorphism $h_{\geq \epsilon} \colon \Semi_3 \to \Semi_3$ with $h_{\geq \epsilon} (s) = 1$ if $s \geq \epsilon$ and $h_{\geq \epsilon} (0)=0$. By the fundamental property (\cref{fundamental-property}), we have that $(h_{\geq \epsilon} \circ \pi) \llb \Phi \rrb =h_{\geq \epsilon}(\pi \llb \Phi \rrb)=1$, implying $(h_{\geq \epsilon} \circ \pi) \llb \Psi \rrb =1$ by assumption, and hence $\pi \llb \Psi \rrb \geq \epsilon$.
	\end{proof}
	
	\begin{corollary} \label{cor:equivalence-from-S3-to-any-lattice}
		For all $\Phi, \Psi \subseteq \FO$, the following are equivalent:
		\begin{enumerate}[(1)]
			\item $\Phi \equiv_\calL \Psi$ for all lattice semirings $\calL \not\cong \Bool$;
			\item $\Phi \equiv_\calL \Psi$ for some lattice semiring $\calL \not\cong \Bool$;
			\item $\pi \llb \Phi \rrb = 1$ if, and only if, $\pi \llb \Psi \rrb = 1$ for all $\Semi_3$-interpretations $\pi$. 
		\end{enumerate}
	\end{corollary}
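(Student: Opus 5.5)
The plan is to derive the corollary from \cref{lem:entailment-from-S3-to-any-lattice} by splitting each equivalence into two entailments. Recall that $\Phi \equiv_\calL \Psi$ holds exactly when both $\Phi \models_\calL \Psi$ and $\Psi \models_\calL \Phi$ hold. Indeed, $\pi\ext{\Phi} = \pi\ext{\Psi}$ for all $\pi$ is the same as having both inequalities for all $\pi$, since the natural order of a lattice semiring is antisymmetric.

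For $(1) \Rightarrow (2)$ we use that there exists some lattice semiring $\calL \not\cong \Bool$, for instance $\Semi_3$. For $(2) \Rightarrow (1)$, assume $\Phi \equiv_\calL \Psi$ for some $\calL \not\cong \Bool$. Then $\Phi \models_\calL \Psi$ and $\Psi \models_\calL \Phi$. Applying the implication $(2)\Rightarrow(1)$ of \cref{lem:entailment-from-S3-to-any-lattice} to each of these entailments separately yields both entailments in every lattice semiring $\calL' \not\cong \Bool$. Hence $\Phi \equiv_{\calL'} \Psi$ for every such $\calL'$.

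For $(1) \Leftrightarrow (3)$ we again go through entailments. Condition (1) is equivalent to the conjunction of \cref{lem:entailment-from-S3-to-any-lattice}(1) for the pair $(\Phi,\Psi)$ and for the pair $(\Psi,\Phi)$. By that proposition, this conjunction is equivalent to its item (3) holding for both pairs. For $(\Phi,\Psi)$ that item says $\pi\ext{\Phi}=1 \Rightarrow \pi\ext{\Psi}=1$ for every $\Semi_3$-interpretation $\pi$, and for $(\Psi,\Phi)$ it gives the converse implication. Together these two statements are exactly the biconditional in (3) of the corollary.

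There is no real obstacle. All the nontrivial work, namely the decomposition via weakly separating homomorphisms and the $h_{\geq\epsilon}$ trick that lifts ``value $1$'' statements to full entailment in $\Semi_3$, is already contained in \cref{lem:entailment-from-S3-to-any-lattice}. The only points needing care are the antisymmetry remark and the fact that the proposition quantifies over arbitrary sets $\Phi,\Psi$, so it may be applied with the roles of $\Phi$ and $\Psi$ swapped.
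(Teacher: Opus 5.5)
Your proof is correct and follows the paper's intended argument: the paper gives this as an immediate corollary of \cref{lem:entailment-from-S3-to-any-lattice}, obtained by splitting $\Phi \equiv_\calL \Psi$ into $\Phi \models_\calL \Psi$ and $\Psi \models_\calL \Phi$ and applying the proposition to each. Your two points of care, antisymmetry of the natural order and applying the proposition with $\Phi$ and $\Psi$ swapped, are exactly the details needed.
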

	
	Similar to the preservation properties, observe that there are logical equivalences that hold in the Boolean, not in any other lattice. The reason for this is that some Boolean tautologies can also only evaluate to $1$ in a lattice as we interpret (in)equalities by their Boolean truth value, while others may take values different from $1$.
	As an example, we have $\exists x (x =x) \not\equiv_{\Semi_3} \exists x (Rx \vee \lnot Rx)$.

	Combining \cref{prop:preservation-in-all-lattices} and \cref{cor:equivalence-from-S3-to-any-lattice} justifies that is suffices to show that one lattice semiring $\calL \not\cong \Bool$ satisfies one of the preservation theorems in order to prove it for all lattice semirings $\calL' \not\cong \Bool$: If $\psi$ has some preservation property w.r.t $\calL'$, say, it is extension preserved in $\calL'$, then it must also be extension preserved in $\calL$, and the $\Sigma_1$-sentence $\calL$-equivalent to $\psi$ must also be $\calL'$-equivalent to $\psi$. We will prove that the extension, subinterpretation, and homomorphism preservation theorems generalise to $\Semi_3$. 
    An essential part of the proof will be the third statement of \cref{cor:equivalence-from-S3-to-any-lattice}, which intuitively says that the $\Semi_3$-semantics of a sentence is uniquely determined by its $1$-valuations. This is not to be confused with its classical Boolean semantics, because it is still evaluated on $\Semi_3$-interpretations that may also annotate literals with $\epsilon$. However, this insight will yield some relationship between $\Sigma_1$ and finite subinterpretations. Consider the sentence $\psi = \exists x(Px \wedge \lnot Qx)$ again. While a valuation of some arbitrary element $s$ does not generally ensure the existence of a particular subinterpretation, a valuation of $1$ does: because $1$ is maximal in any lattice semiring, $\psi$ is evaluated to $1$ by a $\Semi_3$-interpretation if, and only if, it contains the Boolean substructure classically described by $\psi$.

	\paragraph*{Compactness}
	
	The semiring setting admits different notions of compactness. Restricting ourselves to the case $\Semi_3$ allows us to apply two different variants, one of them being based on \emph{$\Semi$-axioms} $(\psi, T)$, where $\psi$ is a sentence and $T$ a set of designated truth values (a modified version of the notion used in \cite{Mrkonjic25}), and another one based on entailment (studied in \cite{BrinkeDawGraMrkNaa26}). 
	While we use the first variant to prove an amalgamation theorem and justify its application, the second variant is important to infer a single equivalent $\Sigma_1$-sentence starting from a single extension preserved sentence, just as in the Boolean proof.

	\begin{Definition}
		An \emph{$\Semi$-axiom} (over a universe $A$) is a pair $(\psi, T)$ where $\psi \in \FO_A(\tau)$ and $T \subseteq \Semi$. An $\Semi$-interpretation $\pi$ satisfies $(\psi, T)$ if $\pi \llb \psi \rrb \in T$. A set of $\Semi$-axioms $\Delta$ is called an \emph{axiomatic system}, and it is satisfied by $\pi$ if $\pi$ satisfies each $(\psi, T) \in \Delta$. An axiomatic system $\Delta$ is satisfiable if there is a model-defining interpretation that satisfies it.
	\end{Definition}
		
		\begin{theorem}[\cite{Mrkonjic25}]
			\label{thm:compactnessFiniteAbsorptive}
			Let $\Semi$ be finite and absorptive, and $\Delta$ be an unsatisfiable set of $\Semi$-axioms. Then there is already a finite unsatisfiable $\Delta_0 \subseteq \Delta$.
		\end{theorem}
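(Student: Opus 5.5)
The plan is to reduce \cref{thm:compactnessFiniteAbsorptive} to classical compactness for first-order logic by coding $\Semi$-interpretations as ordinary two-sorted (or expanded one-sorted) structures. Since $\Semi$ is finite, introduce for every relation symbol $R\in\tau$ and every pair $(s,s')\in S^2$ a fresh relation symbol $R_{s,s'}$ of the same arity. The intended meaning of $R_{s,s'}\bar a$ is $\pi(R\bar a)=s$ and $\pi(\neg R\bar a)=s'$. Write down a first-order theory $\Gamma$ saying that for every tuple exactly one of the $R_{s,s'}$ holds, and only pairs with exactly one of $s,s'$ equal to $0$ are allowed (model-definingness). Models of $\Gamma$ over a universe $A$ then correspond bijectively to model-defining $\Semi$-interpretations over $A$. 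Elements of $A$ occurring in axioms are handled as new constants.

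Next, for every formula $\phi(\bar x)$ in negation normal form and every $s\in S$, construct a classical formula $\phi^{=s}(\bar x)$ expressing $\pi\llb\phi(\bar a)\rrb=s$, by induction on $\phi$.
\begin{itemize}
\item For literals, $(R\bar x)^{=s}$ is $\bigvee_{s'}R_{s,s'}\bar x$, and negated literals are handled analogously.
\item For equalities, the translation is $x=y$ or $\bot$ according to whether $s=1$ or not.
\item For $\vee,\wedge$, take the finite disjunction over pairs $(s_1,s_2)$ with $s_1+s_2=s$ (resp.\ $s_1s_2=s$).
\end{itemize}

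The quantifier step is the point where finiteness and absorptivity are needed. Absorptive semirings are additively idempotent, so in a finite one $\sum_{a\in A}$ is the supremum of the finitely many values attained. Thus $\pi\llb\exists x\phi\rrb=s$ iff the set $V$ of values attained by $\phi(a,\cdot)$ has join $s$, and that is expressible as $\bigvee_{V\subseteq S,\ \bigvee V=s}\bigl(\bigwedge_{v\in V}\exists x\,\phi^{=v}\wedge\forall x\bigvee_{v\in V}\phi^{=v}\bigr)$.

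For $\forall$, I need the (infinitary) product to depend only on the set of attained values. I will argue that in a finite absorptive semiring every element $v$ with $v^n$ stabilising satisfies $v^{k}=v^{k+1}$ for some uniform $k$, since multiplication is decreasing and $S$ is finite. The product over $A$ then depends only on which values occur and whether each occurs at least $k$ times. This is expressible with counting quantifiers $\exists^{\ge k}$, which are first-order, and I take the assumed infinitary product to agree with this stabilised value. I expect this step to be the main obstacle: it needs care with the precise infinitary product convention of \cite{BrinkeGraMrkNaa24}. What I would try first is a finite-multiplicity function $\mu\colon S\to\{0,\dots,k\}$ recording capped counts, and showing that the product is a function of $\mu$.

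Finally, translate each axiom $(\psi,T)$ into $\bigvee_{s\in T}\psi^{=s}$. Then $\Delta$ is satisfiable iff $\Gamma\cup\{\bigvee_{s\in T}\psi^{=s}:(\psi,T)\in\Delta\}$ has a classical model. If $\Delta$ is unsatisfiable, classical compactness yields a finite unsatisfiable subset. It involves only finitely many translated axioms together with part of $\Gamma$, and the corresponding finite $\Delta_0\subseteq\Delta$ is unsatisfiable.
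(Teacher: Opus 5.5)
The paper does not prove this theorem; it quotes it from \cite{Mrkonjic25} and only remarks that non-singleton designated sets $T$ are handled ``completely analogously''. So there is no in-paper argument to compare against. Your reduction to classical compactness is correct and self-contained. The ingredients are the right ones:
coding model-defining interpretations by the relations $R_{s,s'}$;
building formulas $\phi^{=s}$ that partition the possible values;
observing that in a finite absorptive semiring the value of $\exists$ depends only on the set of attained values, while the value of $\forall$ depends only on the counts $n_v=|\{a:\pi\llb\phi(a)\rrb=v\}|$ capped at $k=|S|$ (since $v\geq v^2\geq\cdots$ must stabilise).
A side benefit of your route is that arbitrary $T$ costs nothing, because $(\psi,T)$ simply becomes the finite disjunction $\bigvee_{s\in T}\psi^{=s}$.

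There are two things to tidy up. First, the equality clause is wrong as written. $(x=y)^{=0}$ must be $x\neq y$ (dually for inequality atoms), with $\bot$ only for $s\notin\{0,1\}$. Otherwise no $(x=y)^{=s}$ holds at distinct arguments, and the invariant that exactly one $\phi^{=s}$ holds breaks at the next connective.

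Second, the step you single out as the main obstacle is settled by the convention the paper uses throughout. Infinite sums are suprema and infinite products are infima of finite subproducts; compare $v^\infty$ in the Viterbi counterexample. Because multiplication is decreasing, the subproducts over finite $F\subseteq A$ decrease and become constant once $F$ contains $\min(n_v,k)$ elements of each value class. Hence the product equals $\prod_v v^{\min(n_v,k)}$ even when some $n_v$ is infinite. Likewise the infinite sum is the join of the attained values.

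You are right that this has to be pinned down, since associativity and commutativity alone would not suffice. For instance, declaring every product with infinitely many factors $\neq 1$ to be $0$ is associative and commutative. But it makes $\forall$ sensitive to an infinite count, which no first-order formula expresses, and compactness then indeed fails.
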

		It should be noted that \cite{Mrkonjic25} proves \cref{thm:compactnessFiniteAbsorptive} only for $\Semi$-axioms $(\psi, T)$ where $T$ is a singleton, but the case where $T \subseteq \Semi$ is any finite set is completely analogous.
		
		\begin{theorem}[\cite{BrinkeDawGraMrkNaa26}]
			\label{thm:compactnessLattice}
			Let $\calL$ be a lattice semiring. If $\Phi \models_\calL \psi$, then there is some finite $\Phi_0 \subseteq \Phi$ such that $\Phi_0 \models_{\calL} \psi$.
		\end{theorem}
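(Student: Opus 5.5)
The plan is to reduce the statement to the three-element semiring $\Semi_3$, where entailment can be recast as the unsatisfiability of an axiomatic system, and then to apply \cref{thm:compactnessFiniteAbsorptive}.

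First I dispose of the case $\calL \cong \Bool$. There $\models_\calL$ is classical entailment over model-defining interpretations, which are just classical structures, so the statement is the classical compactness theorem.

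Now let $\calL \not\cong \Bool$ and assume $\Phi \models_\calL \psi$.
\begin{enumerate}[(1)]
\item By \cref{lem:entailment-from-S3-to-any-lattice} (direction (2)$\Rightarrow$(3)), every $\Semi_3$-interpretation $\pi$ with $\pi\llb\Phi\rrb = 1$ also has $\pi\llb\psi\rrb = 1$.
\item Suppose, for contradiction, that no finite $\Phi_0 \subseteq \Phi$ satisfies $\Phi_0 \models_\calL \psi$. Applying the same proposition to the pair $(\Phi_0, \psi)$ (direction (3)$\Rightarrow$(2), then (2)$\Rightarrow$(1), in contrapositive form), condition (3) fails for $\Phi_0$. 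So for every finite $\Phi_0$ there is a model-defining $\Semi_3$-interpretation $\pi_{\Phi_0}$ with $\pi_{\Phi_0}\llb\Phi_0\rrb = 1$ and $\pi_{\Phi_0}\llb\psi\rrb \in \{0,\epsilon\}$.
\item Consider the axiomatic system
\[
\Delta \coloneqq \{(\phi,\{1\}) \mid \phi\in\Phi\} \cup \{(\psi,\{0,\epsilon\})\}.
\]
Every finite $\Delta_0 \subseteq \Delta$ mentions only finitely many $\phi\in\Phi$, say those in $\Phi_0$. It is satisfied by $\pi_{\Phi_0}$, because $\pi_{\Phi_0}\llb\Phi_0\rrb = 1$ forces each factor of the finite minimum to equal $1$.
\item Since $\Semi_3$ is finite and absorptive, \cref{thm:compactnessFiniteAbsorptive} yields a single model-defining $\Semi_3$-interpretation $\pi$ satisfying all of $\Delta$. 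This $\pi$ may have an infinite universe, which is exactly why we need compactness for axiomatic systems rather than a finite-model argument.
\item Then $\pi\llb\phi\rrb = 1$ for all $\phi\in\Phi$, so $\pi\llb\Phi\rrb = \Inf_{\phi\in\Phi}\pi\llb\phi\rrb = 1$, while $\pi\llb\psi\rrb \neq 1$. This contradicts step (1).
\end{enumerate}
Hence some finite $\Phi_0\subseteq\Phi$ satisfies condition (3) of \cref{lem:entailment-from-S3-to-any-lattice}, and applying that proposition once more gives $\Phi_0 \models_\calL \psi$.

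The main obstacle is the transfer between $\calL$ and $\Semi_3$. The entailment for infinite $\Phi$ involves infinitary infima, and the reduction must not lose information there. Here this is fully absorbed by \cref{lem:entailment-from-S3-to-any-lattice}, which is stated for arbitrary sets $\Phi,\Psi$; its proof relies on the weakly separating homomorphisms of \cref{lem:ex-weakly-sep-horm} and on \cref{lem:counterexample-pres-SThree}.

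The remaining point to check is that condition (3), which only concerns the value $1$, is faithfully expressed by the axioms $(\phi,\{1\})$. This works because the infimum over $\Phi$ equals $1$ exactly when every member evaluates to $1$. If instead I tried to encode $\pi\llb\Phi\rrb \geq \epsilon$, the infinite infimum could drop to $0$ even though each finite part stays at $\epsilon$. Restricting to $1$-valuations is therefore essential.
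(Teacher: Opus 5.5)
Your argument is correct. The paper gives no proof of Theorem~\ref{thm:compactnessLattice}; it simply imports the result from \cite{BrinkeDawGraMrkNaa26}, so there is no in-paper proof to compare against.

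What you do is derive the theorem from two results the paper does state, Proposition~\ref{lem:entailment-from-S3-to-any-lattice} and Theorem~\ref{thm:compactnessFiniteAbsorptive}. You use exactly the encoding the paper itself uses in Lemma~\ref{lem-interpretation-for-amalgamation}: axioms $(\phi,\{1\})$ for $\phi\in\Phi$ set against $(\psi,\{0,\epsilon\})$. The only difference is that there the $\Psi$-side is made finite, while here it is the $\Phi$-side. Your separate treatment of $\calL\cong\Bool$ by classical compactness is needed, because the proposition excludes $\Bool$.

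The derivation is non-circular. The paper's proof of the proposition uses only the $\calL^*$ construction, weakly separating homomorphisms (Lemmas~\ref{lem:ex-weakly-sep-horm} and~\ref{lem:counterexample-pres-SThree}), the embedding of $\Semi_3$ into $\calL$ and the fundamental property. It never uses compactness for entailment.

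What this route does not buy you is a more elementary proof than the citation. The substantive step, transferring a failure of $\calL$-entailment with infinite $\Phi$ down to $\Semi_3$, is carried entirely by the proposition. Its key ingredients (Lemmas~\ref{lem:ex-weakly-sep-horm} and~\ref{thm-weak-fun}) come from the same cited paper.

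You also apply Theorem~\ref{thm:compactnessFiniteAbsorptive} with the non-singleton set $\{0,\epsilon\}$, and the paper only asserts that this case is analogous to the singleton one. To avoid relying on that, run the argument once with $(\psi,\{0\})$ and once with $(\psi,\{\epsilon\})$. Each system is unsatisfiable by your step (1), and the union of the two finite subsets of $\Phi$ they yield satisfies condition (3).

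One side remark in your last paragraph is wrong, though it does not affect the proof. In $\Semi_3$ an infimum of values in $\{\epsilon,1\}$ can never drop to $0$, because $\{\epsilon,1\}$ is closed under arbitrary infima. So axioms $(\phi,\{\epsilon,1\})$ would express $\pi\llb\Phi\rrb\geq\epsilon$ faithfully. The actual reason to work with the value $1$ is that condition (3) of Proposition~\ref{lem:entailment-from-S3-to-any-lattice} is phrased that way. The $\epsilon$-level is already handled inside that proposition via $h_{\geq\epsilon}$.
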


\subsection{Homomorphism Preservation}

The first preservation theorem we prove is for homomorphisms.
We first consider the case $\Semi_3$ and generalise it to arbitrary lattice semirings $\calL$.
More precisely, we prove that every set of sentences $\Phi$ which is preserved under homomorphisms in $\Semi_3$ is $\Semi_3$-equivalent to some $\Psi \subseteq \Sigma_1^+$. 
As in Boolean semantics, we show that $\Phi \equiv_{\Semi_3} \Phi_{\exists^+}$, where $\Phi_{\exists^+} \coloneqq \{\psi \in \Sigma_1^+ \mid \Phi \models_{\Semi_3} \psi\}$. 
It holds that $\Phi \models_{\Semi_3} \Phi_{\exists^+}$ because for any model-defining $\Semi_3$-interpretation $\pi$, $\pi \llb \Phi \rrb$ is a lower bound of $\{\pi \llb \psi \rrb \mid \psi \in \Phi_{\exists^+}\}$, which implies $\pi \llb \Phi \rrb \leq \bigsqcap \{\pi \llb \psi \rrb \mid \psi \in \Phi_{\exists^+}\} = \pi \llb \Phi_{\exists^+} \rrb$. 
It remains to show that if $\Phi$ is preserved under homomorphisms, the converse entailment $\Phi_{\exists^+} \models_{\Semi_3} \Phi$ holds as well. To this end, it suffices to consider valuations of $1$ only, as justified by the \emph{fundamental property}, which states that semiring semantics commutes with homomorphisms, and the following lemma.

\begin{lemma} \label{lem-interpretation-for-amalgamation}
    Let $\pi_A$ be an $\Semi_3$-interpretation such that $\pi_A \llb \psi \rrb = 1$ for all $\psi \in \Phi_{\exists^+}$. Then there is an $\Semi_3$-interpretation $\pi_B$ such that
    \begin{enumerate}[(1)]
        \item $\pi_B \llb \Phi \rrb =1$, and
        \item $\pi_B \llb \psi \rrb \leq \epsilon$ for all $\psi \in \Sigma_1^+$ with $\pi_A \llb \psi \rrb \leq \epsilon$.
    \end{enumerate}
\end{lemma}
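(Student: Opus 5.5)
The plan is to obtain $\pi_B$ via the compactness theorem for $\Semi$-axioms (\cref{thm:compactnessFiniteAbsorptive}), applied to $\Semi_3$, which is finite and absorptive. I will consider the axiomatic system
\[
\Delta \coloneqq \{(\phi, \{1\}) \mid \phi \in \Phi\} \cup \{(\psi, \{0,\epsilon\}) \mid \psi \in \Sigma_1^+,\ \pi_A \llb \psi \rrb \leq \epsilon\}.
\]
Any model-defining $\Semi_3$-interpretation satisfying $\Delta$ gives $\pi_B \llb \Phi \rrb = \prod_{\phi\in\Phi}\pi_B\llb\phi\rrb = 1$, which is (1), and it satisfies (2) by construction. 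So it suffices to show that $\Delta$ is satisfiable. By \cref{thm:compactnessFiniteAbsorptive}, this reduces to showing that every finite $\Delta_0 \subseteq \Delta$ is satisfiable.

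Fix a finite $\Delta_0$. It contains finitely many $\phi_1,\dots,\phi_m \in \Phi$ and finitely many $\psi_1,\dots,\psi_k \in \Sigma_1^+$, each with $\pi_A\llb\psi_j\rrb \leq \epsilon$. I argue by contradiction and suppose $\Delta_0$ is unsatisfiable. Then every $\Semi_3$-interpretation $\pi$ with $\pi\llb\phi_i\rrb = 1$ for all $i$ has $\pi\llb \psi_j\rrb = 1$ for some $j$. Since sums in $\Semi_3$ are maxima, this gives $\pi \llb \psi_1 \vee \dots \vee \psi_k \rrb = 1$ whenever $\pi\llb\Phi\rrb = 1$.

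Next I upgrade this ``$1$-valuation'' implication to a genuine entailment $\Phi \models_{\Semi_3} \psi_1\vee\dots\vee\psi_k$, using the $(3)\Rightarrow(2)$ direction of \cref{lem:entailment-from-S3-to-any-lattice}. For the case $k=0$ the disjunction is empty, and the argument shows that no $\pi$ has $\pi\llb\Phi\rrb=1$; the same proposition then gives $\pi\llb\Phi\rrb=0$ for all $\pi$, and I can take the disjunct $\exists x(x\neq x)\in\Sigma_1^+$, which evaluates to $0$. By \cref{lem:closure-under-and-or}, the disjunction is $\Semi_3$-equivalent to a single $\chi \in \Sigma_1^+$. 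Hence $\chi \in \Phi_{\exists^+}$, and by the hypothesis on $\pi_A$ we get $\pi_A\llb\chi\rrb = 1$. But then
\[
\pi_A\llb\chi\rrb = \max_j \pi_A\llb\psi_j\rrb \leq \epsilon,
\]
which is a contradiction. So $\Delta_0$ is satisfiable, and by compactness so is $\Delta$.

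The main subtlety I expect is the compactness step itself. I must check that the variant of \cref{thm:compactnessFiniteAbsorptive} with non-singleton target sets $T=\{0,\epsilon\}$ applies, which the paper asserts is analogous. I must also check that satisfying interpretations can be taken model-defining, which is built into the definition of satisfiability. A minor point is that \cref{lem:entailment-from-S3-to-any-lattice} is stated for sets, so it applies with $\Psi = \{\psi_1 \vee\dots\vee\psi_k\}$. Only $\Phi$-formulas appearing in $\Delta_0$ are used in the contradiction argument, but I pass to all of $\Phi$, which is harmless because requiring $\pi\llb\Phi\rrb=1$ only strengthens the hypothesis.
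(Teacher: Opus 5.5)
Your proposal is correct and follows the paper's proof essentially step for step: the same axiomatic system $\{(\phi,\{1\})\}\cup\{(\psi,\{0,\epsilon\})\}$, compactness for finite absorptive semirings, the $(3)\Rightarrow(2)$ direction of the reduction to $\Semi_3$ to turn the $1$-valuation implication into $\Phi\models_{\Semi_3}\bigvee\Psi_0$, closure of $\Sigma_1^+$ under disjunction, and the contradiction with $\pi_A\llb\bigvee\Psi_0\rrb\leq\epsilon$. You treat the empty case $k=0$ explicitly, which the paper does not; note that $\exists x(x\neq x)$ contains a negated equality, so under the paper's negation-free definition of $\Sigma_1^+$ you should use $\bot$ (the empty disjunction) instead — and some such always-false sentence genuinely has to belong to $\Sigma_1^+$ for the lemma to hold when $\Phi$ has no $1$-valued model.
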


\begin{proof}
    Let $\Psi \coloneqq \{\psi \in \Sigma_1^+ \mid \pi_A \llb \psi \rrb \leq \epsilon\}$ and
    suppose towards a contradiction that the set $\{(\phi, \{1\}) \mid \phi \in \Phi \} \cup \{ (\psi , \{0, \epsilon\}) \mid \psi \in \Psi\}$ was unsatisfiable. By compactness (\cref{thm:compactnessFiniteAbsorptive}), there must already be a finite unsatisfiable subset. Let $\Psi_0 \subseteq \Psi$ be such that $\{(\phi, {1}) \mid \phi \in \Phi \} \cup \{(\psi, \{0, \epsilon\}) \mid \psi \in \Psi_0\}$ is unsatisfiable. That means that every model-defining $\Semi_3$-interpretation $\pi$ with $\pi \llb \Phi \rrb =1$ also satisfies $\pi \llb \bigvee \Psi_0 \rrb =1$. By \cref{lem:entailment-from-S3-to-any-lattice}, this implies $\Phi \models_{\Semi_3} \bigvee \Psi_0$ and since $\bigvee \Psi_0 \equiv_{\Semi_3} \psi$ for some $\psi \in \Sigma_1^+$ (by \cref{lem:closure-under-and-or}), $\bigvee \Psi_0$ is logically equivalent to a formula in $\Phi_{\exists^+}$. By assumption, $\pi_A \llb \bigvee \Psi_0 \rrb = 1$, but we defined $\Psi$ such that $\pi_A \llb \psi \rrb \leq \epsilon$ for all $\psi \in \Psi$, a contradiction. 
\end{proof}

\begin{lemma}[Homomorphism Amalgamation] \label{homomorphism-amalgamation}
    Let $\pi_A, \pi_B$ be $\Semi_3$-interpretations such that $\pi_B \llb \psi \rrb \leq \epsilon$ for all $\psi \in \Sigma_1^+$ with $\pi_A \llb \psi \rrb \leq \epsilon$. Then there is some $\pi_C \succeq \pi_A$ and a mapping $f \colon B \to C$ such that $\pi_C (L f (\bar b)) = 1$ for all $L \bar b \in \operatorname{Atoms}_B (\tau)$ with $\pi_B (L \bar b) = 1$.
\end{lemma}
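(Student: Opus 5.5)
We follow the classical elementary-diagram argument: we build an axiomatic system over the universe $A \uplus B$ (taking fresh constants for the elements of $B$) and obtain $\pi_C$ from \cref{thm:compactnessFiniteAbsorptive}. The system $\Delta$ consists of two parts. The first is the \emph{elementary diagram} of $\pi_A$: all axioms $(\phi(\bar a), \{\pi_A\llb \phi(\bar a)\rrb\})$ for $\phi(\bar a) \in \FO_A(\tau)$. The second is the \emph{positive $1$-diagram} of $\pi_B$: all axioms $(R\bar b, \{1\})$ for atoms $R\bar b \in \operatorname{Atoms}_B(\tau)$ with $\pi_B(R\bar b)=1$. The elements of $B$ need not be distinct in a model, since we only ask for a mapping $f$ and not an embedding. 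Accordingly, we allow model-defining interpretations over a universe $C$ together with an assignment of the names $b\in B$ to elements of $C$, which defines $f$. Formally, one can replace the constants by existentially quantified variables in the finite subsets.

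Suppose $\pi_C$ satisfies $\Delta$. Restricting to the interpretation of the $A$-names, the elementary diagram forces $a \mapsto a$ to be injective, because the axioms $(a\neq a', \{1\})$ are included. It also forces every $\phi(\bar a)$ to receive the same value as in $\pi_A$, so after renaming we get $\pi_C \succeq \pi_A$. The second part of $\Delta$ gives $\pi_C(Rf(\bar b))=1$. It therefore remains to show that every finite $\Delta_0 \subseteq \Delta$ is satisfiable.

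A finite $\Delta_0$ mentions finitely many atoms $R_1\bar b_1,\dots,R_k\bar b_k$ of $\pi_B$ with value $1$. Let $\bar y$ list the elements of $B$ that occur in them, and set $\psi \coloneqq \exists \bar y \bigwedge_i R_i\bar y_i \in \Sigma_1^+$. Then $\pi_B\llb\psi\rrb = 1$. By the contrapositive of the hypothesis, $\pi_A\llb\psi\rrb \not\leq \epsilon$, so $\pi_A\llb\psi\rrb = 1$. Since $\Semi_3$ is finite, the supremum is attained, so there is a tuple $\bar a'$ with $\pi_A(R_i\bar a'_i)=1$ for all $i$. Interpreting each $y_j$ as $a'_j$ in $\pi_A$ itself satisfies the atom part of $\Delta_0$. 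The diagram part of $\Delta_0$ is satisfied by $\pi_A$ trivially. Hence $\pi_A$ with this naming of the $B$-constants satisfies $\Delta_0$.

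The main obstacle is formal: \cref{thm:compactnessFiniteAbsorptive} is stated for axioms over a fixed universe, whereas here we need new constants for $B$ that may be identified with one another and with elements of $A$. I would handle this by working over the vocabulary expanded by constants, or equivalently unary predicates $P_b$ with axioms stating that $P_b$ has exactly one $1$-valued element, and then reading off $f$. A second point needs care: using equality atoms in the diagram must be compatible with the elementary-extension requirement. The axioms $(a\neq a',\{1\})$ already force distinct $A$-names to be interpreted by distinct elements, so $\pi_C$ really contains a copy of $\pi_A$ as an elementary subinterpretation.
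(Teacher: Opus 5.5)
Your proposal is correct and follows the paper's proof. Both take the elementary diagram of $\pi_A$ together with the $1$-valued atomic diagram of $\pi_B$ and apply \cref{thm:compactnessFiniteAbsorptive}. A finite fragment becomes an existential-positive sentence that is $1$ in $\pi_B$, hence $1$ in $\pi_A$, and its witnesses interpret the $B$-constants. You argue directly instead of by contradiction, and you are more explicit than the paper about treating the elements of $B$ as new, not necessarily distinct, constants.
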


\begin{proof}
    Suppose that such a $\pi_C$ does not exist, i.e., that $\{(\phi, \{\pi_A \llb \phi \rrb\}) \mid \phi \in \FO_A(\tau)\} \cup \{(\psi, \{1\}) \mid \psi \in \Psi\}$, where $\Psi \coloneqq \{L \bar b \in \operatorname{Atoms}_B (\tau) \mid \pi_B (L \bar b) = 1\}$, is unsatisfiable. Let $\Psi_0 \subseteq \Psi$ be finite such that $\{(\phi, \{\pi_A \llb \phi \rrb\}) \mid \phi \in \FO\} \cup \{(\psi, \{1\}) \mid \psi \in \Psi_0\}$ is unsatisfiable; this exists by \cref{thm:compactnessFiniteAbsorptive}. Consider the set $\Psi_0^*$ that is obtained from $\Psi_0$ by replacing each constant $b_i \in B$ with a variable $x_i$. Since $\Psi_0$ is finite, only a fixed number $k$ of variables is needed. 
    Suppose that $\pi_A \llb \exists x_1 \dots \exists x_k \bigwedge \Psi_0^* \rrb =1$. Then there would be $a_1, \dots, a_k \in A$ such that $\pi_A (L \bar a) =1$ for all $L(x_1, \dots, x_k) \in \Psi_0^*$, and the expansion of $\pi_A$ where we interpret each constant $b_i \in B$ occurring in $\Psi_0$ with $a_i$ would be a model of $\{(\phi, \{\pi_A \llb \phi \rrb\}) \mid \phi \in \FO\} \cup \{(\psi, \{1\}) \mid \psi \in \Psi_0\}$.
    This is a contradiction, so we must have that $\pi_A \llb \exists x_1 \dots \exists x_k \bigwedge \Psi_0^* \rrb \leq \epsilon$.
    By assumption, this implies $\pi_B \llb \exists x_1 \dots \exists x_k \bigwedge \Psi_0^* \rrb \leq \epsilon$, but this contradicts the definition of~$\Psi$.
\end{proof}

\begin{proposition}
\label{thm:homomorphismPreservationMainDirection}
    If $\Phi$ is preserved under homomorphisms in $\Semi_3$, then $\Phi_{\exists^+} \models_{\Semi_3} \Phi$.
\end{proposition}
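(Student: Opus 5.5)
By \cref{lem:entailment-from-S3-to-any-lattice} (with $\calL = \Semi_3$), it suffices to show that every model-defining $\Semi_3$-interpretation $\pi_A$ with $\pi_A \llb \Phi_{\exists^+} \rrb = 1$ also satisfies $\pi_A \llb \Phi \rrb = 1$. So fix such a $\pi_A$. Then $\pi_A \llb \psi \rrb = 1$ for every $\psi \in \Phi_{\exists^+}$, and \cref{lem-interpretation-for-amalgamation} gives an $\Semi_3$-interpretation $\pi_B$ with $\pi_B \llb \Phi \rrb = 1$. This $\pi_B$ also satisfies $\pi_B \llb \psi \rrb \leq \epsilon$ for every $\psi \in \Sigma_1^+$ with $\pi_A \llb \psi \rrb \leq \epsilon$.

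That second property is exactly the hypothesis of \cref{homomorphism-amalgamation}. Applying it yields $\pi_C \succeq \pi_A$ and a map $f \colon B \to C$ with $\pi_C(R f(\bar b)) = 1$ whenever $\pi_B(R\bar b) = 1$.

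The key step is to turn $f$ into a homomorphism. We do not know $\pi_B(R\bar b) \leq \pi_C(Rf(\bar b))$ for atoms with $\pi_B(R\bar b) = \epsilon$, so $f$ need not be a homomorphism $\pi_B \to \pi_C$. The fix is to pass through the Boolean collapse. Let $h_{1} \colon \Semi_3 \to \Semi_3$ map $1 \mapsto 1$ and $\epsilon, 0 \mapsto 0$. This is a lattice (hence semiring) homomorphism, since it preserves $\max$, $\min$, $0$ and $1$. Because $\Semi_3$ is finite, it commutes with all infinitary operations here.

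Set $\pi_B' := h_1 \circ \pi_B$. This is an $\Semi_3$-interpretation, though not necessarily model-defining, since some literal and its negation may both become $0$. If model-definedness matters, we instead modify $\pi_B$ only on positive atoms: set $\pi_B'(R\bar b) = 0$ when $\pi_B(R\bar b) = \epsilon$, and put $\pi_B'(\neg R\bar b) = 1$ there. Either way, $f$ is a homomorphism $\pi_B' \to \pi_C$, because positive atoms in $\pi_B'$ take values $0$ or $1$, and value $1$ is preserved by $f$.

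We must still check $\pi_B' \llb \Phi \rrb = 1$. With $\pi_B' = h_1 \circ \pi_B$, the Fundamental Property (\cref{fundamental-property}) gives $\pi_B' \llb \Phi \rrb = h_1(\pi_B \llb \Phi \rrb) = h_1(1) = 1$. This is the main obstacle: the composite version may fail to be model-defining, while the modified version breaks the direct use of the Fundamental Property. To handle it, I would try monotonicity of negation-normal-form evaluation in the literal values. Raising negated literals from $0$ to $1$ can only increase $\pi \llb \Phi \rrb$, so the modified interpretation dominates $h_1\circ\pi_B$ literalwise and still evaluates $\Phi$ to $1$.

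Homomorphism preservation of $\Phi$ in $\Semi_3$ then gives $\pi_C \llb \Phi \rrb \geq \pi_B' \llb \Phi \rrb = 1$. Since $\pi_C \succeq \pi_A$, each sentence of $\Phi$ has the same value in $\pi_A$ and $\pi_C$, so $\pi_A \llb \Phi \rrb = \pi_C \llb \Phi \rrb = 1$. By \cref{lem:entailment-from-S3-to-any-lattice}, this gives $\Phi_{\exists^+} \models_{\Semi_3} \Phi$.
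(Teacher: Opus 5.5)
Your proof is correct and follows the paper's argument. It uses the same $\pi_B$, $\pi_C$ and $f$, the collapse $h_{\geq 1}$ with the Fundamental Property, and literalwise monotonicity to pass to a model-defining interpretation on which $f$ is a homomorphism. The only difference is the repair: the paper fills the undetermined literals of $\pi_B$ by copying values from $\pi_C$ along $f$, whereas you send $(\epsilon,0)$-atoms to $(0,1)$ and keep $(0,\epsilon)$-atoms. Your version works equally well, though note it also raises some negated literals from $0$ to $\epsilon$ relative to $h_{\geq 1}\circ\pi_B$, not only to $1$, which monotonicity still covers.
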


\begin{proof}
    Let $\pi_A \llb \Phi_{\exists^+} \rrb = 1$, i.e., $\pi_A \llb \psi \rrb = 1$ for all $\psi \in \Phi_{\exists^+}$. By \cref{lem:entailment-from-S3-to-any-lattice}, it suffices to prove that $\pi_A \llb \Phi \rrb =1$. Let $\pi_B$ be as in \cref{lem-interpretation-for-amalgamation} and $\pi_C$ be the corresponding amalgamated $\Semi_3$-interpretation according to \cref{homomorphism-amalgamation}. We know that $\pi_B \llb \Phi \rrb =1$ and want to argue that $\pi_C \llb \Phi \rrb =1$ holds as well. However, there need not be a homomorphism from $\pi_B$ to $\pi_C$ (only a mapping that preserves valuations of $1$), which is why we take a detour via another $\Semi_3$-interpretation $\pi_B^*$ (over the same universe $B$) that still evaluates $\Phi$ to $1$ but admits a homomorphism into $\pi_C$.

    Consider the homomorphism $h_{\geq 1} \colon \Semi_3 \to \Semi_3$ where $h_{\geq 1} (1) = 1$ and $h_{\geq 1}(0)=h_{\geq 1}(\epsilon) = 0$. By construction, $\pi_B \llb \Phi \rrb = 1$, which implies $(h_{\geq 1} \circ \pi_B) \llb \Phi \rrb = h_{\geq 1} (\pi_B \llb \Phi\rrb) = 1$. Since valuations of $\epsilon$ do not occur in $(h_{\geq 1} \circ \pi_B)$, the mapping $f$ we get from \cref{homomorphism-amalgamation} must be a homomorphism from $(h_{\geq 1} \circ \pi_B)$ to $\pi_C$. However, $(h_{\geq 1} \circ \pi_B)$ does not have to be model-defining and we might have that both a literal and its complement are evaluated to $0$. Hence, we construct yet another $\Semi_3$-interpretation $\pi_B^*$, also on the universe $B$, by inserting the missing valuations (we simply copy them from $\pi_C$), which ensures that $f$ is still a homomorphism into $\pi_C$. 
    
    More precisely, $\pi_B^*$ is defined as follows: For $L \bar b \in \Lit_B (\tau)$ we set $\pi_B^* (L \bar b) \coloneqq \pi_B (L \bar b)$ if $\pi_B (L \bar b) = 1$ or $\pi_B (\lnot L \bar b)=1$.\footnote{$\lnot L \bar b$ refers to the dual literal, i.e. we identify $\lnot \lnot L \bar b$ with $L \bar b$} Otherwise, we set $\pi_B^* (L \bar b) \coloneqq \pi_C (L f (\bar b))$. Note that $\pi_B^*$ is model-defining; in the first case this is inherited from $\pi_B$, and in the second case this is inherited from $\pi_C$. Moreover, $f$ is a homomorphism from $\pi_B^*$ to $\pi_C$: Consider $L \bar b \in \operatorname{Atoms}_B(\tau)$. We only need to consider the case where $\pi_B (L \bar b) = 1$ or $\pi_B (\neg L \bar b) = 1$; the rest is by definition of $\pi_B^*$. So suppose that $\pi_B^* (L \bar b)=\pi_B (L \bar b) = 1$. Then $\pi_C (L f (\bar b)) = 1$ follows because $f$ preserves $1$-valuations of atoms in $\pi_B$. If $\pi_B (\lnot L \bar b) = 1$, then $\pi_B^* (L \bar b) = \pi_B (L \bar b) = 0$, and $\pi_C (L f(\bar b))$ cannot be smaller than this. So $f$ is indeed a homomorphism from $\pi_B^*$ to $\pi_C$.
 
    Now note that $\pi_B^*(L \bar b) \geq (h_{\geq 1} \circ \pi_B) (L \bar b)$ for all $L \bar b \in \Lit_B(\tau)$. By monotonicity of $\max$ and $\min$, this inductively yields $\pi_B^*\llb \Phi \rrb \geq (h_{\geq 1} \circ \pi_B) \llb \Phi \rrb = h_{\geq 1} (\pi_B \llb \Phi \rrb) =1$. Since $\Phi$ is preserved under homomorphisms by assumption, we have that $\pi_C \llb \Phi \rrb \geq \pi_B^* \llb \Phi \rrb = 1$. With $\pi_C \succeq \pi_A$, it follows that $\pi_A \llb \Phi \rrb = \pi_C \llb \Phi \rrb =1$. Hence, we can overall conclude that $\Phi_{\exists^+} \models_{\Semi_3} \Phi$.
\end{proof}

Using compactness via entailment, we can extract a single $\Sigma_1$-sentence from $\Phi_\exists$ in the case where $\Phi$ is a singleton, and generalise the homomorphism preservation theorem from $\Semi_3$ to arbitrary lattice semirings $\calL$.

\begin{theorem} \label{cor:hom-preservation}
    Let $\calL$ be a lattice semiring.
    If $\Phi \in \FO$ is preserved under homomorphisms in $\calL$, then $\Phi \equiv_{\calL} \Psi$ for some $\Psi \subseteq \Sigma_1^+$. In the special case where $\Phi = \{\phi \}$ is a singleton, there is a single $\Sigma_1^+$-sentence $\calL$-equivalent to $\calL$.
\end{theorem}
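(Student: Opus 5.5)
Since the Boolean case $\calL \cong \Bool$ is the classical homomorphism preservation theorem (for sets, via the classical compactness and amalgamation argument), I would assume $\calL \not\cong \Bool$ and reduce everything to $\Semi_3$. By \cref{prop:preservation-in-all-lattices}, if $\Phi$ is preserved under homomorphisms in $\calL$, then it is also preserved under homomorphisms in $\Semi_3$; this uses the embedding $e\colon \Semi_3 \subseteq \calL$ together with the Fundamental Property. I then set $\Psi \coloneqq \Phi_{\exists^+} = \{\psi \in \Sigma_1^+ \mid \Phi \models_{\Semi_3} \psi\}$.

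The first step is to show $\Phi \equiv_{\Semi_3} \Phi_{\exists^+}$. The entailment $\Phi \models_{\Semi_3} \Phi_{\exists^+}$ was already observed, since $\pi\ext{\Phi}$ is a lower bound of the values of the consequences. The converse $\Phi_{\exists^+} \models_{\Semi_3} \Phi$ is exactly \cref{thm:homomorphismPreservationMainDirection}. Together these give $\Semi_3$-equivalence. Applying \cref{cor:equivalence-from-S3-to-any-lattice} ((2) $\Rightarrow$ (1)) then lifts this to $\Phi \equiv_\calL \Phi_{\exists^+}$ for every lattice semiring $\calL \not\cong \Bool$, and in particular for the given one. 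This settles the statement for sets.

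For a singleton $\Phi = \{\phi\}$, I would extract a single sentence using entailment compactness. We have $\Phi_{\exists^+} \models_{\Semi_3} \phi$, so \cref{thm:compactnessLattice} with $\calL = \Semi_3$ yields a finite $\Psi_0 \subseteq \Phi_{\exists^+}$ with $\Psi_0 \models_{\Semi_3} \phi$. Each element of $\Psi_0$ is a consequence of $\phi$, so $\phi \models_{\Semi_3} \bigwedge \Psi_0$, because $\pi\ext{\phi}$ is a lower bound of all $\pi\ext{\psi}$ with $\psi \in \Psi_0$. Hence $\phi \equiv_{\Semi_3} \bigwedge \Psi_0$, noting that $\pi\ext{\Psi_0} = \pi\ext{\bigwedge\Psi_0}$ since the set is finite. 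By \cref{lem:closure-under-and-or}, the conjunction $\bigwedge \Psi_0$ is $\Semi_3$-equivalent (indeed equivalent in every additively idempotent semiring) to a single sentence $\psi^* \in \Sigma_1^+$. Finally, \cref{cor:equivalence-from-S3-to-any-lattice} transfers $\phi \equiv_{\Semi_3} \psi^*$ to $\calL$.

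There is one edge case: $\Psi_0$ may be empty, meaning $\phi$ is $\Semi_3$-valid in the sense that $\pi\ext\phi = 1$ always. In that case I would take $\psi^* = \exists x\, (x = x)$, which is positive and always evaluates to $1$.

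I do not expect a serious obstacle, because the substantive work is in \cref{thm:homomorphismPreservationMainDirection}. The points needing care are bookkeeping. First, the transfer results require $\calL \not\cong \Bool$, so the Boolean case must be treated separately by the classical theorem. Second, one must check that $\equiv$ for a finite set coincides with $\equiv$ for the conjunction. Third, compactness must be applied in the entailment form and not the satisfiability form.
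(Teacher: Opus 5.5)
Your proposal is correct and follows essentially the same route as the paper's proof. You transfer preservation to $\Semi_3$, obtain $\Phi \equiv_{\Semi_3} \Phi_{\exists^+}$ via \cref{thm:homomorphismPreservationMainDirection}, lift this with \cref{cor:equivalence-from-S3-to-any-lattice}, and for a singleton combine entailment compactness (\cref{thm:compactnessLattice}) with \cref{lem:closure-under-and-or}. You also make several points explicit that the paper leaves implicit:
\begin{itemize}
\item the separate treatment of $\calL \cong \Bool$ by the classical theorem, which is needed since the reduction to $\Semi_3$ genuinely fails there;
\item the converse entailment $\phi \models_{\Semi_3} \bigwedge \Psi_0$;
\item the case where $\Psi_0$ is empty.
\end{itemize}
Finally, you correctly cite \cref{prop:preservation-in-all-lattices} for the transfer step, where the paper's proof cites \cref{lem:entailment-from-S3-to-any-lattice}.
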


\begin{proof}
    Let $\Phi \subseteq \FO$ be preserved under homomorphisms in $\calL$. By \cref{lem:entailment-from-S3-to-any-lattice}, $\Phi$ must be preserved under homomorphisms in $\Semi_3$.
    By \cref{thm:homomorphismPreservationMainDirection}, we know that $\Phi_{\exists^+} \models_{\Semi_3} \Phi$, and thus $\Phi_{\exists^+} \equiv_{\Semi_3} \Phi$. With \cref{lem:entailment-from-S3-to-any-lattice} this yields $\Phi_{\exists^+} \equiv_{\calL} \Phi$.
    In the case where $\Phi = \{\phi\}$ only contains a single sentence, we can apply compactness via entailment (\cref{thm:compactnessLattice}), and obtain a finite $\Phi_0 \subseteq \Phi_{\exists^+}$ such that $\Phi_0 \models_{\Semi_3} \phi$. Since $\bigwedge \Phi_0$ is a conjunction of $\Sigma_1^+$-formulae, it is $\Semi_3$-equivalent to a $\Sigma_1^+$-formula by \cref{lem:closure-under-and-or}.
\end{proof}

\subsection{The \L os-Tarski Theorem} 

We now move on from homomorphism preservation to preservation under extensions and substructures. That is, we prove a generalisation of the classical \L os-Tarski Theorem for Boolean semirings to lattice semirings. While in the Boolean setting, extension preservation and preservation under substructures are exactly dual to each other (and are thus covered by a single proof), this is no longer the case in semiring semantics. For a sentence preserved under subinterpretations in $\Semi_3$, it is a priori not clear if its negation is preserved under extensions, as in the Boolean case. Therefore, we have to prove both preservation theorems individually.  

As before, we first consider the case $\Semi_3$, and later generalise the theorems to arbitrary lattice semirings. For a set of sentences $\Phi$, let $\Phi_\exists \coloneqq \{\psi \in \Sigma_1 \mid \Phi \models_{\Semi_3} \psi\}$ and $\Phi_\forall \coloneqq \{\psi \in \Pi_1 \mid \Phi \models_{\Semi_3} \psi\}$.

\paragraph*{Extension Preservation} The extension preservation theorem can be proved analogously to the homomorphism preservation theorem by replacing atoms with literals and making sure that the mapping we get from amalgamation is injective.

\begin{lemma} \label{lem:int-for-existential-amalgamation}
    Let $\pi_A$ be an $\Semi_3$-interpretation such that $\pi_A \llb \psi \rrb = 1$ for all $\psi \in \Phi_{\exists}$. Then there is an $\Semi_3$-interpretation $\pi_B$ such that
    \begin{enumerate}[(1)]
        \item $\pi_B \llb \Phi \rrb =1$, and
        \item $\pi_B \llb \psi \rrb \leq \epsilon$ for all $\psi \in \Sigma_1$ with $\pi_A \llb \psi \rrb \leq \epsilon$.
    \end{enumerate}
\end{lemma}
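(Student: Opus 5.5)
The argument copies the proof of \cref{lem-interpretation-for-amalgamation} verbatim, with $\Sigma_1^+$ replaced by $\Sigma_1$ and $\Phi_{\exists^+}$ by $\Phi_\exists$. Let $\Psi \coloneqq \{\psi \in \Sigma_1 \mid \pi_A \llb \psi \rrb \leq \epsilon\}$. It suffices to show that the axiomatic system
\[
\Delta \coloneqq \{(\phi,\{1\}) \mid \phi \in \Phi\} \cup \{(\psi,\{0,\epsilon\}) \mid \psi \in \Psi\}
\]
is satisfiable. Any model-defining $\Semi_3$-interpretation $\pi_B$ satisfying $\Delta$ then has $\pi_B \llb \phi \rrb = 1$ for all $\phi \in \Phi$. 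Hence $\pi_B \llb \Phi \rrb = 1$, since in $\Semi_3$ the infimum of a set of $1$'s is $1$. Moreover, $\pi_B \llb \psi \rrb \leq \epsilon$ for every $\psi \in \Psi$, which gives both (1) and (2).

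Suppose towards a contradiction that $\Delta$ is unsatisfiable. Since $\Semi_3$ is finite and absorptive, \cref{thm:compactnessFiniteAbsorptive} yields a finite unsatisfiable subsystem. Enlarging it by all axioms for $\Phi$ keeps it unsatisfiable, so there is a finite $\Psi_0 \subseteq \Psi$ such that every model-defining $\pi$ with $\pi \llb \Phi \rrb = 1$ has $\pi \llb \psi \rrb = 1$ for some $\psi \in \Psi_0$. In $\Semi_3$, with $\max$ as addition, this means $\pi \llb \bigvee \Psi_0 \rrb = 1$. Note that $\Psi_0 \neq \varnothing$: otherwise $\Phi$ would have no $1$-model at all. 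In that case $\Phi$ entails every sentence by \cref{lem:entailment-from-S3-to-any-lattice}, including a $\Sigma_1$-sentence that is never $1$, such as $\exists x (x \neq x)$. That sentence lies in $\Phi_\exists$ but takes value $0$ under $\pi_A$, contradicting the hypothesis. One can also simply treat the empty disjunction as this sentence.

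By \cref{lem:entailment-from-S3-to-any-lattice}, $(3) \Rightarrow (2)$ with $\calL = \Semi_3$, we obtain $\Phi \models_{\Semi_3} \bigvee \Psi_0$. By \cref{lem:closure-under-and-or}, since $\Semi_3$ is additively idempotent, $\bigvee \Psi_0 \equiv_{\Semi_3} \psi^*$ for some $\psi^* \in \Sigma_1$. Hence $\psi^* \in \Phi_\exists$, and by hypothesis $\pi_A \llb \psi^* \rrb = 1$. On the other hand, $\pi_A \llb \bigvee \Psi_0 \rrb = \max_{\psi \in \Psi_0} \pi_A \llb \psi \rrb \leq \epsilon$ by the definition of $\Psi$. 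This contradicts the equivalence $\bigvee \Psi_0 \equiv_{\Semi_3} \psi^*$.

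No step is a real obstacle. The only points needing care are these. First, the compactness theorem is applied to axioms whose designated value sets $T$ are not singletons, which the remark after \cref{thm:compactnessFiniteAbsorptive} covers. Second, we must check that the translation from ``every $1$-model of $\Phi$ gives $\bigvee\Psi_0$ value $1$'' to genuine $\Semi_3$-entailment is exactly \cref{lem:entailment-from-S3-to-any-lattice}(3).
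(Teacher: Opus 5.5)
Your proof is correct and follows the paper's route: the paper proves this lemma by repeating the argument for \cref{lem-interpretation-for-amalgamation} with $\Psi$ replaced by $\{\psi \in \Sigma_1 \mid \pi_A \llb \psi \rrb \leq \epsilon\}$. That argument uses compactness for $\Semi_3$-axioms, then \cref{lem:entailment-from-S3-to-any-lattice} to turn the statement about $1$-valuations into $\Semi_3$-entailment, and then closure of $\Sigma_1$ under disjunction. Your explicit handling of the case $\Psi_0 = \varnothing$ is a detail the paper leaves implicit.
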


\begin{proof}
    Analogous to \cref{lem-interpretation-for-amalgamation}. We only have to replace $\Psi$ with $\{\psi \in \Sigma_1 \mid \pi_A \llb \psi \rrb \leq \epsilon\}$.
\end{proof}

\begin{lemma}[Existential Amalgamation] \label{lem:existential-amalgamation}
    Let $\pi_A, \pi_B$ be $\Semi_3$-interpretations such that $\pi_B \llb \psi \rrb \leq \epsilon$ for all $\psi \in \Sigma_1$ with $\pi_A \llb \psi \rrb \leq \epsilon$. Then there is some $\pi_C \succeq \pi_A$ and an injective mapping $f \colon B \to C$ such that $\pi_C (L f (\bar b)) = 1$ for all $L \bar b \in \operatorname{Lit}_B (\tau)$ with $\pi_B (L \bar b) = 1$.
\end{lemma}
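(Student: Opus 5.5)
We follow the proof of \cref{homomorphism-amalgamation} and argue by contradiction using compactness for $\Semi_3$-axioms (\cref{thm:compactnessFiniteAbsorptive}). We enlarge the vocabulary by fresh constants, one for each element of $B$; to keep the argument within the relational framework we identify each $b \in B$ with a new element and evaluate formulae over the universe extended by these names. Consider the axiomatic system
\[
\Delta \coloneqq \{(\phi, \{\pi_A \llb \phi \rrb\}) \mid \phi \in \FO_A(\tau)\} \cup \{(L\bar b, \{1\}) \mid L \bar b \in \Lit_B(\tau),\ \pi_B(L\bar b) = 1\} \cup \{(b \neq b', \{1\}) \mid b \neq b' \in B\}.
\]
The inequality axioms encode injectivity of the interpretation $f$ of the constants. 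The first part forces the $A$-part of any model to be an elementary extension of $\pi_A$, since the elementary diagram fixes every value. Thus a model of $\Delta$ yields $\pi_C \succeq \pi_A$ together with an injective $f \colon B \to C$ preserving all $1$-valued literals of $\pi_B$.

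Suppose $\Delta$ were unsatisfiable. By compactness there is a finite unsatisfiable subset. Enlarging it if necessary, we may assume it contains the full diagram part of $\pi_A$, together with finitely many $1$-literals $\Psi_0$ and inequalities among the constants $b_1,\dots,b_k$ occurring in $\Psi_0$ (we add all pairwise inequalities among them). Replacing $b_i$ by variables $x_i$ gives the existential sentence
\[
\chi \coloneqq \exists x_1 \dots \exists x_k \Bigl(\bigwedge_{i<j} x_i \neq x_j \wedge \bigwedge \Psi_0^*\Bigr) \in \Sigma_1,
\]
where the literals in $\Psi_0^*$ may now be negative. If $\pi_A \llb \chi \rrb = 1$, then, since $\Semi_3$ is a min-max semiring and the supremum is attained, there are pairwise distinct $a_1,\dots,a_k \in A$ with every conjunct evaluating to $1$. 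Interpreting $b_i$ as $a_i$ then expands $\pi_A$ itself to a model of the finite subset, which is a contradiction. Hence $\pi_A \llb \chi \rrb \leq \epsilon$, and the hypothesis gives $\pi_B \llb \chi \rrb \leq \epsilon$. But instantiating $x_i$ by the distinct elements $b_i$ makes every conjunct evaluate to $1$ in $\pi_B$: the inequalities hold by distinctness, and the literals lie in $\Psi_0$. So $\pi_B \llb \chi \rrb = 1$, a contradiction.

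The main point needing care is that the diagram axioms genuinely yield an elementary extension of $\pi_A$ within a model-defining interpretation, and that the compactness theorem applies to axioms with constants. This is handled exactly as in \cref{homomorphism-amalgamation}, treating the constants as additional universe elements. The only new ingredients compared to the homomorphism case are the inclusion of negative literals, which is harmless since $\Sigma_1$ allows them, and the inequalities, which make $f$ injective. Since equality is evaluated in a Boolean way, these inequality conjuncts contribute only the values $0$ or $1$ and do not disturb the $1$-versus-$\leq\epsilon$ dichotomy.
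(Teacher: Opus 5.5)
Your proposal is correct and follows the paper's proof. The paper also applies compactness for $\Semi_3$-axioms to the elementary diagram of $\pi_A$, the $1$-valued literals of $\pi_B$ and the inequalities $b \neq b'$, and reaches a contradiction via the $\Sigma_1$-sentence obtained by replacing the $B$-constants by variables, exactly as in \cref{homomorphism-amalgamation}. One bookkeeping point: let $b_1,\dots,b_k$ range over all $B$-constants in the finite unsatisfiable subset, including those appearing only in inequality axioms, so that your enlarged set actually contains that subset.
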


\begin{proof}
    Analogously to \cref{homomorphism-amalgamation}, one proves the satisfiability of $(\phi, \{\pi_A \llb \phi \rrb\}) \mid \phi \in \FO_A(\tau)\} \cup \{(\psi, \{1\}) \mid \psi \in \Psi\}$ where $\Psi \coloneqq \{L \bar b \in \operatorname{Lit}_B (\tau) \mid \pi_B (L \bar b) = 1\}\cup \{(b \neq b', \{1\}) \mid b \neq b' \in B'\} $.
\end{proof}

\begin{theorem}
\label{thm:extensionPreservationMainDirection}
    If $\Phi$ is preserved under extensions in $\Semi_3$, then $\Phi_{\exists} \models_{\Semi_3} \Phi$.
\end{theorem}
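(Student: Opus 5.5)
We follow the proof of \cref{thm:homomorphismPreservationMainDirection} closely. By \cref{lem:entailment-from-S3-to-any-lattice} (equivalence of (2) and (3) with $\calL=\Semi_3$), it suffices to show the following: every $\Semi_3$-interpretation $\pi_A$ with $\pi_A \llb \Phi_\exists \rrb = 1$ satisfies $\pi_A \llb \Phi \rrb = 1$.

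So fix such a $\pi_A$. First apply \cref{lem:int-for-existential-amalgamation} to obtain $\pi_B$ with $\pi_B \llb \Phi \rrb = 1$ and with every $\Sigma_1$-sentence of value $\leq \epsilon$ in $\pi_A$ also of value $\leq \epsilon$ in $\pi_B$. Then \cref{lem:existential-amalgamation} yields $\pi_C \succeq \pi_A$ and an injective $f \colon B \to C$ such that $f$ preserves all literals valued $1$ in $\pi_B$.

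The remaining task is to replace $\pi_B$ by an interpretation $\pi_B^*$ on $B$ with two properties:
\begin{itemize}
\item $f$ is an embedding $\pi_B^* \subseteq \pi_C$, i.e.\ $\pi_B^*(L\bar b) = \pi_C(Lf(\bar b))$ for all literals;
\item $\pi_B^* \llb \Phi \rrb = 1$.
\end{itemize}
The natural choice is simply $\pi_B^*(L\bar b) \coloneqq \pi_C(L f(\bar b))$, i.e.\ the pullback of $\pi_C$ along $f$. This is model-defining because $\pi_C$ is, and $f$ is injective, so after renaming it is a genuine subinterpretation of $\pi_C$.

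To see $\pi_B^*\llb\Phi\rrb = 1$, compare it with $h_{\geq 1} \circ \pi_B$. Whenever $(h_{\geq 1}\circ\pi_B)(L\bar b) = 1$, we have $\pi_B(L\bar b)=1$, hence $\pi_C(Lf(\bar b))=1$ by the amalgamation property, so $\pi_B^*(L\bar b) = 1$. Therefore $\pi_B^* \geq h_{\geq 1}\circ \pi_B$ literalwise. Semiring semantics over $\max,\min$ on negation normal form formulae is monotone in the literal values; this includes infinitary suprema and infima, and equalities are fixed. Together with \cref{fundamental-property} this gives
\[
\pi_B^*\llb\Phi\rrb \geq h_{\geq 1}(\pi_B\llb\Phi\rrb) = 1 .
\]

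Extension preservation of $\Phi$ in $\Semi_3$ now gives $\pi_C\llb\Phi\rrb \geq \pi_B^*\llb\Phi\rrb = 1$. Since $\pi_C \succeq \pi_A$, elementarity transfers each $\phi\in\Phi$ and hence the infimum, so $\pi_A\llb\Phi\rrb = \pi_C\llb\Phi\rrb = 1$, as required.

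The main subtlety lies in the amalgamation step, which is only sketched in the paper. There one must check that the finite unsatisfiable fragment, with constants replaced by distinct variables and inequalities $x_i\neq x_j$ added, is a $\Sigma_1$-sentence $\exists \bar x\bigwedge\Psi_0^*$. If it had value $1$ in $\pi_A$, its witnesses would give a model, a contradiction; so its value in $\pi_A$ is $\leq\epsilon$. By hypothesis its value in $\pi_B$ is then also $\leq\epsilon$, contradicting that $\pi_B$ witnesses it with value $1$ via distinct elements of $B$. Given that lemma as stated, the argument above involves only routine monotonicity checks. Here the pullback construction is in fact simpler than the patching used for homomorphisms, because injectivity of $f$ allows us to copy all of $\pi_C$ back to $B$.
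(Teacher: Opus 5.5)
Your proof is correct and follows the paper's argument: the two amalgamation lemmas, the comparison $\pi_B^* \geq h_{\geq 1}\circ\pi_B$, extension preservation, and then elementarity of $\pi_C \succeq \pi_A$. Your pullback $\pi_B^*(L\bar b)=\pi_C(Lf(\bar b))$ is exactly the interpretation the paper defines by case distinction, because amalgamation and model-definingness force agreement on every $L\bar b$ with $\pi_B(L\bar b)=1$ or $\pi_B(\lnot L\bar b)=1$; defining it directly just makes the embedding check immediate.
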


\begin{proof}
    Let $\pi_A \llb \Phi_{\exists} \rrb = 1$, i.e., $\pi_A \llb \psi \rrb = 1$ for all $\psi \in \Phi_{\exists}$. We want to prove that $\pi_A \llb \Phi \rrb =1$ to infer $\Phi_{\exists} \models_{\Semi_3} \Phi$.
    We apply \cref{lem:int-for-existential-amalgamation} and \cref{lem:existential-amalgamation} to obtain $\pi_B, \pi_C$, and $f$, and construct $\pi_B^*$ as in \cref{thm:homomorphismPreservationMainDirection}: For $L \bar b \in \Lit_B (\tau)$ we set $\pi_B^* (L \bar b) \coloneqq \pi_B (L \bar b)$ if $\pi_B (L \bar b) = 1$ or $\pi_B (\lnot L \bar b)=1$. Otherwise, we set $\pi_B^* (L \bar b) \coloneqq \pi_C (L f (\bar b))$.
    
    It remains to show that $f$ embeds $\pi_B^*$ into $\pi_C$. By construction, $f$ is injective.
    So let $L \bar b \in \Lit_B (\tau)$. By definition of $\pi_B^*$, we can restrict ourselves to the case $\pi_B (L \bar b) = 1$ or $\pi_B (\lnot L \bar b) = 1$. First let $\pi_B^* (L \bar b)=\pi_B (L \bar b) = 1$. We have $\pi_C (L f (\bar b)) = 1$ follows because $f$ preserves valuations of $1$ inside $\pi_B$. If $\pi_B (\lnot L \bar b) = 1$, we must have that $\pi_C (\lnot L f(\bar b)) = 1$ by construction of $f$, i.e. $\pi_C (L f(\bar b)) = 0$, and $\pi_B^* (L \bar b) = \pi_B (L \bar b) = 0 = \pi_C (L f (\bar b))$.
 
    We have $\pi_B^*(L \bar b) \geq (h_{\geq 1} \circ \pi_B) (L \bar b)$ for all $L \in \Lit_B(\tau)$ and, by monotonicity, $\pi_B^*\llb \Phi \rrb \geq (h_{\geq 1} \circ \pi_B) \llb \Phi \rrb = h_{\geq 1} (\pi_B \llb \Phi \rrb) =1$. Since $\Phi$ is preserved under extensions by assumption, we have that $\pi_C \llb \Phi \rrb \geq \pi_B^* \llb \Phi \rrb = 1$. With $\pi_C \succeq \pi_A$, it follows that $\pi_A \llb \Phi \rrb = \pi_C \llb \Phi \rrb =1$.
\end{proof}

\begin{theorem} \label{thm:extension-preservation-lattice}
    Let $\calL$ be a lattice semiring.
    If $\Phi \in \FO$ is preserved under extensions in $\calL$, then $\Phi \equiv_{\calL} \Psi$ for some $\Psi \subseteq \Sigma_1$. In the special case where $\Phi = \{\phi \}$ is a singleton, there is a single $\Sigma_1$-sentence $\calL$-equivalent to $\calL$.
\end{theorem}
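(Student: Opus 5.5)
The argument follows the proof of \cref{cor:hom-preservation}, with \cref{thm:extensionPreservationMainDirection} taking the place of \cref{thm:homomorphismPreservationMainDirection}. We first dispose of the Boolean case $\calL \cong \Bool$, which is the classical {\L}o\'s-Tarski theorem: for Boolean semantics, $\Bool$-equivalence is ordinary logical equivalence, so nothing new is needed.

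So assume $\calL \not\cong \Bool$ and let $\Phi$ be preserved under extensions in $\calL$. By \cref{prop:preservation-in-all-lattices}, $\Phi$ is also preserved under extensions in $\Semi_3$. Then \cref{thm:extensionPreservationMainDirection} gives $\Phi_\exists \models_{\Semi_3} \Phi$.

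Conversely, $\Phi \models_{\Semi_3} \Phi_\exists$ holds trivially. For every $\Semi_3$-interpretation $\pi$ and every $\psi \in \Phi_\exists$ we have $\pi\llb\Phi\rrb \leq \pi\llb\psi\rrb$, so $\pi\llb \Phi\rrb$ is below the infimum $\pi\llb\Phi_\exists\rrb$. Hence $\Phi \equiv_{\Semi_3} \Phi_\exists$. By \cref{cor:equivalence-from-S3-to-any-lattice}, since $\Semi_3$ is a lattice semiring not isomorphic to $\Bool$, this transfers to $\Phi \equiv_\calL \Phi_\exists$. We take $\Psi \coloneqq \Phi_\exists \subseteq \Sigma_1$.

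For a singleton $\Phi = \{\phi\}$, we have $\Phi_\exists \models_{\Semi_3} \phi$. Compactness via entailment (\cref{thm:compactnessLattice}, applied to the lattice semiring $\Semi_3$) yields a finite $\Phi_0 \subseteq \Phi_\exists$ with $\Phi_0 \models_{\Semi_3} \phi$. We also have $\phi \models_{\Semi_3} \bigwedge\Phi_0$, because each member of $\Phi_0$ is an $\Semi_3$-consequence of $\phi$ and $\pi\llb \bigwedge \Phi_0\rrb$ is a finite minimum. Therefore $\phi \equiv_{\Semi_3} \bigwedge\Phi_0$.

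By \cref{lem:closure-under-and-or}, the finite conjunction of $\Sigma_1$-sentences $\bigwedge\Phi_0$ is equivalent, in every additively idempotent semiring and in particular in $\Semi_3$, to a single $\Sigma_1$-sentence $\psi$ obtained by prenexing with renamed variables. Applying \cref{cor:equivalence-from-S3-to-any-lattice} once more gives $\phi \equiv_\calL \psi$.

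Essentially all the work sits in \cref{thm:extensionPreservationMainDirection}, so the present statement is mostly bookkeeping. The one point to check is that the hypothesis is correctly transferred from $\calL$ to $\Semi_3$: this uses \cref{prop:preservation-in-all-lattices}, and not merely the entailment reduction, and it needs $\calL \not\cong \Bool$, which is why the Boolean case has to be handled separately.
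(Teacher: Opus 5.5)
Your proposal is correct and follows the paper's own route. The paper proves this statement as ``analogous to \cref{cor:hom-preservation}'': transfer the preservation hypothesis to $\Semi_3$, apply \cref{thm:extensionPreservationMainDirection}, lift $\Phi\equiv_{\Semi_3}\Phi_\exists$ to $\calL$, and handle singletons with compactness via entailment together with \cref{lem:closure-under-and-or}. Your version is slightly more careful than the paper's in three places: you cite \cref{prop:preservation-in-all-lattices} (rather than the entailment reduction) to transfer preservation, you state the converse entailment $\phi\models_{\Semi_3}\bigwedge\Phi_0$ explicitly, and you treat $\calL\cong\Bool$ separately via the classical {\L}o\'s-Tarski theorem, a case the paper's reduction silently excludes.
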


\begin{proof}
 Analogous to \cref{cor:hom-preservation}.
\end{proof}

\paragraph*{Subinterpretation Preservation}

\begin{lemma} \label{lem-interpretation-for-uni-amalgamation}
    Let $\pi_A \llb \psi \rrb = 1$ for all $\psi \in \Phi_{\forall}$. Then there is an $\Semi_3$-interpretation $\pi_B$ such that
    \begin{enumerate}[(1)]
        \item $\pi_B \llb \Phi \rrb =1$, and
        \item $\pi_B \llb \psi \rrb \leq \epsilon$ for all $\psi \in \Pi_1$ with $\pi_A \llb \psi \rrb \leq \epsilon$.
    \end{enumerate}
\end{lemma}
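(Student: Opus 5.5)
We follow the proof of \cref{lem-interpretation-for-amalgamation}, with $\Sigma_1^+$ replaced by $\Pi_1$. Set $\Psi \coloneqq \{\psi \in \Pi_1 \mid \pi_A \llb \psi \rrb \leq \epsilon\}$ and consider the axiomatic system
\[
\Delta \coloneqq \{(\phi, \{1\}) \mid \phi \in \Phi\} \cup \{(\psi, \{0,\epsilon\}) \mid \psi \in \Psi\}.
\]
Any model-defining $\Semi_3$-interpretation $\pi_B$ satisfying $\Delta$ has properties (1) and (2). It therefore suffices to show that $\Delta$ is satisfiable.

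Suppose $\Delta$ is unsatisfiable. Since $\Semi_3$ is finite and absorptive, \cref{thm:compactnessFiniteAbsorptive} gives a finite $\Psi_0 \subseteq \Psi$ such that $\{(\phi,\{1\}) \mid \phi \in \Phi\} \cup \{(\psi,\{0,\epsilon\}) \mid \psi \in \Psi_0\}$ is already unsatisfiable. We may keep all of $\Phi$, because enlarging a set of axioms preserves unsatisfiability. Hence every model-defining $\Semi_3$-interpretation $\pi$ with $\pi\llb \Phi \rrb = 1$ satisfies $\pi\llb \psi \rrb = 1$ for some $\psi \in \Psi_0$. As $\Semi_3$ is totally ordered and disjunction is evaluated by $\max$, this gives $\pi \llb \bigvee \Psi_0 \rrb = 1$. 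The implication $(3) \Rightarrow (2)$ of \cref{lem:entailment-from-S3-to-any-lattice} then yields $\Phi \models_{\Semi_3} \bigvee \Psi_0$.

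The key step, and the only real difference from the existential case, is to turn $\bigvee \Psi_0$ into a single $\Pi_1$-sentence. A finite disjunction of universal sentences is not syntactically universal. However, \cref{lem:closure-under-and-or} states that positive Boolean combinations of universal sentences are $\calL$-equivalent to a $\Pi_1$-sentence in every lattice semiring. This uses the dual distributivity law $\prod_i s_i + \prod_j t_j = \prod_{i,j}(s_i+t_j)$, which holds in $\Semi_3$. So we obtain $\psi^* \in \Pi_1$ with $\psi^* \equiv_{\Semi_3} \bigvee \Psi_0$. It follows that $\Phi \models_{\Semi_3} \psi^*$, and hence $\psi^* \in \Phi_\forall$.

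Now the hypothesis on $\pi_A$ gives $\pi_A \llb \psi^* \rrb = 1$, so $\pi_A \llb \bigvee \Psi_0 \rrb = 1$. On the other hand, every $\psi \in \Psi_0 \subseteq \Psi$ has $\pi_A \llb \psi \rrb \leq \epsilon$, so the maximum $\pi_A \llb \bigvee \Psi_0 \rrb$ is at most $\epsilon$. This is a contradiction. Therefore $\Delta$ is satisfiable, and a model-defining interpretation $\pi_B$ satisfying it is the required $\Semi_3$-interpretation.

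One edge case needs attention: $\Psi_0 = \varnothing$. Then the empty disjunction is the constant $0$, and it must be rendered as a $\Pi_1$-sentence. A suitable choice is $\forall x\,(x \neq x)$, which evaluates to $0$ on every nonempty universe. With this convention the same argument applies, since it shows that $\Phi$ is unsatisfiable at value $1$; the requirement $\pi_A\llb \psi^* \rrb = 1$ then already gives the contradiction.
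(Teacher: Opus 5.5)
Your proposal is correct and follows the paper's proof. The paper simply reruns the argument of \cref{lem-interpretation-for-amalgamation} (compactness for $\Semi_3$-axioms, then \cref{lem:entailment-from-S3-to-any-lattice}, then the $\Pi_1$ part of \cref{lem:closure-under-and-or} to replace $\bigvee\Psi_0$ by a member of $\Phi_\forall$), with $\Psi$ taken to be the $\Pi_1$-sentences of value at most $\epsilon$; your explicit handling of the case $\Psi_0=\varnothing$ is a detail the paper leaves implicit.
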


\begin{proof}
Analogous to \cref{lem-interpretation-for-amalgamation} if one considers $\Psi \coloneqq \{\psi \in \Pi_1 \mid \pi_A \llb \psi \rrb \leq \epsilon\}$.
\end{proof}

\begin{lemma}[Universal Amalgamation] \label{universal-amalgamation}
    Let $\pi_A, \pi_B$ be $\Semi_3$-interpretations such that $\pi_B \llb \psi \rrb \leq \epsilon$ for all $\psi \in \Pi_1$ with $\pi_A \llb \psi \rrb \leq \epsilon$. Then there is some $\pi_C \succeq \pi_B$ and an injective mapping $f \colon A \to C$ such that $\pi_C (L f (\bar a)) \leq \epsilon$ for all $L \bar a \in \Lit_A (\tau)$ with $\pi_A (L \bar a) \leq \epsilon$.
\end{lemma}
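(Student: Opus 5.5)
Following the proof of \cref{homomorphism-amalgamation} (and \cref{lem:existential-amalgamation}), with the roles of $\pi_A$ and $\pi_B$ exchanged and the designated value set changed from $\{1\}$ to $\{0,\epsilon\}$, I will use compactness for $\Semi_3$-axioms (\cref{thm:compactnessFiniteAbsorptive}). Fix fresh constants for the elements of $A$, kept disjoint from $B$. The desired $\pi_C$ exists if the following axiomatic system over universe $B \cup A$ is satisfiable:
\[
\Delta \coloneqq \{(\phi, \{\pi_B \llb \phi \rrb\}) \mid \phi \in \FO_B(\tau)\} \cup \{(L\bar a, \{0,\epsilon\}) \mid L \bar a \in \Lit_A(\tau),\ \pi_A(L\bar a) \leq \epsilon\} \cup \{(a \neq a', \{1\}) \mid a \neq a' \in A\}.
\]
A model of $\Delta$, with constants interpreted in its universe $C$, gives an elementary extension $\pi_C \succeq \pi_B$, since its elementary diagram is copied. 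It also gives the map $f$ sending each constant $a$ to its interpretation; this map is injective by the inequality axioms, and it meets the required bound on literals. As in the earlier proofs, I would treat constants as naming elements. I would also fix that the axioms are read over $B$ plus the new elements, so that the universe of $\pi_C$ contains $B$.

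Suppose $\Delta$ is unsatisfiable. Then some finite part is unsatisfiable: a finite set of literals $L_1\bar a_1,\dots,L_m\bar a_m$ with $\pi_A(L_i\bar a_i)\le\epsilon$, together with the inequalities among the finitely many constants $a_1,\dots,a_k$ involved. The diagram part may be kept whole, since only finitely many non-diagram axioms need isolating. Replace the constants by variables $x_1,\dots,x_k$ and consider the universal sentence
\[
\chi \coloneqq \forall x_1 \dots \forall x_k \Bigl(\bigvee_{i<j} x_i = x_j \vee \bigvee_{i=1}^m \lnot L_i(\bar x_i)\Bigr) \in \Pi_1,
\]
where $\lnot L$ denotes the dual literal. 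This is the key step. I claim $\pi_B\llb\chi\rrb = 1$. If instead $\pi_B\llb\chi\rrb\le\epsilon$, then, since $\Semi_3$ is finite and the infimum is attained, some tuple $\bar b$ of pairwise distinct elements has every $\pi_B(\lnot L_i \bar b_i) \le \epsilon$. Model-definingness then gives $\pi_B(L_i\bar b_i) \neq 0$ for each $i$. That alone does not give $\pi_B(L_i\bar b_i)\le\epsilon$, which is what the axioms require. This is the main obstacle, and the choice of formula has to be adjusted to fix it.

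To fix it, I would use $\chi' \coloneqq \forall \bar x(\bigvee_{i<j}x_i=x_j \vee \bigvee_i L_i(\bar x_i))$ directly. Evaluated in $\pi_A$ at $\bar a$, every disjunct is $\le\epsilon$: the equalities are $0$ because the $a_i$ are distinct, and $\pi_A(L_i\bar a_i)\le\epsilon$ by assumption. Hence $\pi_A\llb\chi'\rrb\le\epsilon$. The hypothesis of the lemma then yields $\pi_B\llb\chi'\rrb\le\epsilon$. Again the infimum is attained, so there is a tuple $\bar b$ with $\max(\ldots)\le\epsilon$. The equality disjuncts must therefore be $0$, so the $b_i$ are pairwise distinct, and every $\pi_B(L_i\bar b_i)\le\epsilon$. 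Interpreting each $a_i$ as $b_i$ inside $\pi_B$ itself (so $\pi_C=\pi_B$, with the other constants placed arbitrarily where injectivity is not at stake) satisfies the finite subset, a contradiction. Hence $\Delta$ is satisfiable and the lemma follows. The only remaining care is the bookkeeping: choosing which constants enter the finite subset, and making sure the elementary diagram of $\pi_B$ is respected, exactly as in \cref{homomorphism-amalgamation}.
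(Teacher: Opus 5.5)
Your final argument, after you discard the dual-literal sentence $\chi$, is correct and essentially matches the paper's proof: the same axiomatic system (elementary diagram of $\pi_B$, literal axioms with value set $\{0,\epsilon\}$, inequality axioms), the same compactness step, and the same universal sentence $\chi'$. The only difference is that you apply the hypothesis directly to $\chi'$ rather than contrapositively. Your $\chi'$, with disjuncts $x_i = x_j$, is the right formula; the paper's displayed $\bigvee x_i \neq x_j$ is a typo, since only the equality version forces the witnessing tuple in $B$ to be pairwise distinct.
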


\begin{proof}
    Suppose that such a $\pi_C$ does not exist, i.e. that 
    $$\{(\phi, \{\pi_B \llb \phi \rrb\}) \mid \phi \in \FO_B(\tau)\} \cup \{(\psi, \{0, \epsilon\}) \mid \psi \in \Psi\},$$
    where $\Psi \coloneqq \{L \bar a \in \Lit_A (\tau) \mid \pi_A (L \bar a) \leq \epsilon\} \cup \{(a \neq a', \{1\}) \mid a \neq a' \in A\}$, is unsatisfiable. Invoking \cref{thm:compactnessFiniteAbsorptive}, let $\Psi_0 \subseteq \Psi$ be finite such that 
    $\Delta \coloneqq \{(\phi, \{\pi_B \llb \phi \rrb\}) \mid \phi \in \FO_B(\tau)\} \cup \{(\psi, \{0, \epsilon\}) \mid \psi \in \Psi_0\}$
    is unsatisfiable. 
    Consider the set $\Psi_0^*$ that is obtained by replacing in $\Psi_0$ each constant $a_i \in A$ with a variable $x_i$. We can enumerate the variables $x_1, \dots x_k$, for some $k \in \bbN$, because $\Psi_0$ is finite.
    Then we must have that $\pi_B \llb \theta \rrb = 1$, where $\theta \coloneqq \forall x_1 \dots \forall x_k (\bigvee_{1 \leq i < j \leq k} x_i \neq x_j \vee \bigvee \Psi_0^*)$, since otherwise there would be pairwise distinct $b_1, \dots, b_k \in B$ such that $\pi_B (L \bar b) \leq \epsilon$ for all $L(x_1, \dots, x_k) \in \Psi_0^*$ and the expansion of $\pi_B$ where we interpret each constant $a_i \in A$ occurring in $\Psi_0$ with $b_i$ would be a model of $\Delta$. By the assumption of the lemma, $\pi_B \llb \theta \rrb = 1$ implies $\pi_A \llb \theta \rrb = 1$. But this contradicts the definition of $\Psi$.
\end{proof}

\begin{theorem} \label{thm:SubinterpretationPreservationMainDirection}
    If $\Phi$ is preserved under subinterpretations in $\Semi_3$, then $\Phi_{\forall} \models_{\Semi_3} \Phi$.
\end{theorem}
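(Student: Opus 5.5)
The plan is to follow the pattern of \cref{thm:homomorphismPreservationMainDirection} and \cref{thm:extensionPreservationMainDirection}, but with the roles of $\pi_A$ and $\pi_B$ exchanged. The amalgam will now be an elementary extension of $\pi_B$, and the interpretation that receives the value $1$ is a subinterpretation of that amalgam. Fix a model-defining $\Semi_3$-interpretation $\pi_A$ with $\pi_A \llb \psi \rrb = 1$ for all $\psi \in \Phi_\forall$. By \cref{lem:entailment-from-S3-to-any-lattice} it suffices to show $\pi_A \llb \Phi \rrb = 1$.

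First apply \cref{lem-interpretation-for-uni-amalgamation} to obtain $\pi_B$ with $\pi_B \llb \Phi \rrb = 1$ such that every $\Pi_1$-sentence evaluated to at most $\epsilon$ by $\pi_A$ is also evaluated to at most $\epsilon$ by $\pi_B$. This is exactly the hypothesis of \cref{universal-amalgamation}, which yields $\pi_C \succeq \pi_B$ and an injective $f \colon A \to C$ such that $\pi_C(L f(\bar a)) \leq \epsilon$ whenever $\pi_A(L\bar a) \leq \epsilon$. Since $\pi_C$ is an elementary extension of $\pi_B$, we get $\pi_C \llb \Phi \rrb = \pi_B \llb \Phi \rrb = 1$.

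Next let $\pi_D$ be the restriction of $\pi_C$ to the universe $f(A)$. This is a model-defining subinterpretation of $\pi_C$, so preservation under subinterpretations gives $\pi_D \llb \Phi \rrb \geq \pi_C \llb \Phi \rrb = 1$. Transport $\pi_D$ along the bijection $f \colon A \to f(A)$ to an interpretation $\pi_D'$ on $A$, setting $\pi_D'(L\bar a) \coloneqq \pi_C(L f(\bar a))$. Semiring semantics of sentences is invariant under such isomorphic copies, so $\pi_D' \llb \Phi \rrb = 1$.

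The amalgamation property, read contrapositively, says: $\pi_D'(L\bar a) = 1$ implies $\pi_A(L\bar a) = 1$. Hence $(h_{\geq 1} \circ \pi_D')(L\bar a) \leq \pi_A(L\bar a)$ for every literal $L\bar a \in \Lit_A(\tau)$. By monotonicity of $\max$ and $\min$ (an induction over negation normal form, which does not require $h_{\geq 1} \circ \pi_D'$ to be model-defining) together with the Fundamental Property (\cref{fundamental-property}), we obtain
\[
\pi_A \llb \Phi \rrb \geq (h_{\geq 1} \circ \pi_D') \llb \Phi \rrb = h_{\geq 1}(\pi_D' \llb \Phi \rrb) = 1 .
\]

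The main point needing care is the direction of the comparison. Unlike in the extension case, we do not need to repair $h_{\geq 1} \circ \pi_D'$ into a model-defining interpretation, because preservation is applied to the genuine subinterpretation $\pi_D \subseteq \pi_C$, and the final step is a pure monotonicity argument. I therefore expect the only subtle checks to be two. First, the restriction $\pi_D$ is indeed a subinterpretation, in particular model-defining, which is inherited from $\pi_C$. Second, the injectivity of $f$, guaranteed by the inequality axioms in \cref{universal-amalgamation}, makes $\pi_D'$ an isomorphic copy of $\pi_D$, so that the equality atoms are evaluated consistently.
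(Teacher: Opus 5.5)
Your proof is correct. It uses the same ingredients as the paper's: \cref{lem-interpretation-for-uni-amalgamation}, \cref{universal-amalgamation}, elementarity of $\pi_C \succeq \pi_B$, the homomorphism $h_{\geq 1}$, and monotonicity. The difference is that you perform the last two steps in the opposite order.

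The paper first repairs the amalgam. It builds $\pi_C^*$ on all of $C$ by keeping every literal on which $\pi_C$ or its dual is $1$, copying $\pi_A$ via $f^{-1}$ on the remaining literals over $f(A)$, and setting the rest to $1$ (positive) or $0$ (negative). It then obtains $\pi_C^* \llb \Phi \rrb = 1$ from the pointwise comparison with $h_{\geq 1} \circ \pi_C$. Finally it checks that the copy $\pi_{f(A)}$ of $\pi_A$ is a subinterpretation of $\pi_C^*$ and applies preservation to that inclusion.

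You instead apply preservation directly to the plain restriction of $\pi_C$ to $f(A)$. Only afterwards do you make the pointwise comparison, on $A$ and against $\pi_A$ itself, using the contrapositive of the amalgamation condition. This saves the four-case definition and the checks that $\pi_C^*$ is model-defining and contains $\pi_{f(A)}$.

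The paper's order buys uniformity with the repair constructions in \cref{thm:homomorphismPreservationMainDirection} and \cref{thm:extensionPreservationMainDirection}. In the homomorphism case some repair is genuinely needed: $f$ only transports $1$-valued positive atoms, so the pullback of $\pi_C$ need not dominate $h_{\geq 1} \circ \pi_B$ on negative literals.
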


\begin{proof}
     Let $\pi_A \llb \Phi_{\forall} \rrb = 1$, i.e. $\pi_A \llb \psi \rrb = 1$ for all $\psi \in \Phi_{\forall}$. By \cref{lem:entailment-from-S3-to-any-lattice}, it suffices to prove that $\pi_A \llb \Phi \rrb =1$. Let $\pi_B$ be as in \cref{lem-interpretation-for-uni-amalgamation} and $\pi_C$ be the corresponding amalgamated $\Semi_3$-interpretation according to \cref{universal-amalgamation}. We know that $\pi_B \llb \Phi \rrb =1$ and due to $\pi_B \preceq \pi_C$, this implies $\pi_C \llb \Phi \rrb =1$. We want to use the fact that $\Phi$ is preserved under subinterpretations to infer $\pi_A \llb \Phi \rrb =1$. However, the mapping $f$ we obtain from \cref{universal-amalgamation} need not be an embedding. Therefore, we manipulate $\pi_C$ to obtain an $\Semi_3$-interpretation $\pi_C^*$ with the same universe, which still has the property that $\pi_C^* \llb \Phi \rrb =1$, but into which $f$ embeds $\pi_A$. We construct $\pi_C^*$ as follows:
     \[
     \pi_C^* (L \bar c)=\left\{\begin{array}{ll} \pi_C (L \bar c) & \text{ if } \pi_C (L \bar c)=1 \text{ or } \pi_C (\lnot L \bar c)=1 \\
         \pi_A (L f^{-1}(\bar c)) & \text{ if } 1 \not\in \{\pi_C (L \bar c), \pi_C (\lnot L \bar c)\} \text{ and } \bar c \in f(A) \\
         1 & \text{ otherwise, if $L$ is positive} \\
         0 & \text{ otherwise, if $L$ is negative}
        \end{array}\right. .
  \]

  \begin{claim}
      $\pi_C^* \llb \Phi \rrb = 1$.
  \end{claim}
  \begin{claimproof}
        Let $h_{\geq 1} \colon \Semi_3 \to \Semi_3$ with $h_{\geq 1} (1)=1$ and $h_{\geq 1} (\epsilon)=h_{\geq 1} (0)=0$. By the fundamental property (\cref{fundamental-property}), $(h_{\geq 1} \circ \pi_C) \llb \Phi \rrb = h_{\geq 1} (\pi_C \llb \Phi \rrb) = h_{\geq 1} (1) = 1$. By definition of $\pi_C^*$, we have $\pi_C^* (L \bar c) \geq (h_{\geq 1} \circ \pi_C) (L \bar c)$ for all $L \in \Lit_C(\tau)$. Using monotonicity of $\max$ and $\min$, this inductively extends to $\pi_C^* \llb \Phi \rrb \geq (h_{\geq 1} \circ \pi_C) \llb \Phi \rrb =1$.
  \end{claimproof}

    Let $\pi_{f(A)}$ denote the $\Semi$-interpretation on universe $f(A)$ with $\pi_{f(A)}(L\bar{c}) = \pi_A(Lf^{-1}(\bar c))$ for each literal $L \bar c$.

    \begin{claim}
    $\pi_{f(A)} \subseteq \pi_C^*$.
  \end{claim}
  \begin{claimproof}
        We only have to consider literals $L \bar a$ in $\pi_A$ with $\pi_C (L f(\bar a)) =1$ or $\pi_C (\lnot L f(\bar a)) =1$ (all remaining cases are by definition of $\pi_C^*$). Suppose that $\pi_C^* (L f(\bar a))=\pi_C (L f(\bar a)) =1$. By construction of $f$, this implies $\pi_A (L \bar a) =1$. If otherwise $\pi_C^* (\lnot L  f(\bar a))=\pi_C (\lnot L f(\bar a)) =1$, then $\pi_A (\lnot L \bar a) =1$, i.e. $\pi_A (L \bar a)= 0 = \pi_C^* (L f(\bar a))$.
  \end{claimproof}

  Because $\Phi$ is preserved under subinterpretations, this yields $\pi_A \llb \Phi \rrb =  \pi_{f(A)} \llb \Phi \rrb \geq \pi_C^* \llb \Phi \rrb = 1$, and overall $\Phi_\forall \models_{\Semi_3} \Phi$.
\end{proof}

\begin{theorem} \label{thm:subinterpretation-preservation-lattice}
    Let $\calL$ be a lattice semiring.
    If $\Phi \in \FO$ is preserved under subinterpretations in $\calL$, then $\Phi \equiv_{\calL} \Psi$ for some $\Psi \subseteq \Pi_1$. In the special case where $\Phi = \{\phi \}$ is a singleton, there must be a single $\Pi_1$-sentence $\calL$-equivalent to $\calL$.
\end{theorem}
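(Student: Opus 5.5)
The proof follows the template of \cref{cor:hom-preservation}, with \cref{thm:SubinterpretationPreservationMainDirection} in place of \cref{thm:homomorphismPreservationMainDirection} and $\Phi_\forall$ in place of $\Phi_{\exists^+}$. First, we treat the degenerate case $\calL \cong \Bool$ separately by the classical {\L}o\'s-Tarski theorem, since subinterpretations and $\Bool$-equivalence coincide with the classical notions there. So assume $\calL \not\cong \Bool$. By \cref{prop:preservation-in-all-lattices}, $\Phi$ is then also preserved under subinterpretations in $\Semi_3$.

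Next we establish $\Phi \equiv_{\Semi_3} \Phi_\forall$. The direction $\Phi \models_{\Semi_3} \Phi_\forall$ is immediate: for every interpretation $\pi$, the value $\pi\llb\Phi\rrb$ is a lower bound of all $\pi\llb\psi\rrb$ with $\psi \in \Phi_\forall$, hence it lies below their infimum $\pi\llb\Phi_\forall\rrb$. The converse $\Phi_\forall \models_{\Semi_3} \Phi$ is exactly \cref{thm:SubinterpretationPreservationMainDirection}. Then \cref{cor:equivalence-from-S3-to-any-lattice}, direction (2)$\Rightarrow$(1), transfers the equivalence to every lattice semiring $\calL \not\cong \Bool$. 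So $\Psi \coloneqq \Phi_\forall \subseteq \Pi_1$ satisfies $\Phi \equiv_\calL \Psi$.

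For a singleton $\Phi = \{\phi\}$ we have $\Phi_\forall \models_{\Semi_3} \phi$. Compactness via entailment (\cref{thm:compactnessLattice}, applied to $\Semi_3$) yields a finite $\Phi_0 \subseteq \Phi_\forall$ with $\Phi_0 \models_{\Semi_3} \phi$. Since $\phi \models_{\Semi_3} \bigwedge\Phi_0$ holds trivially, we get $\phi \equiv_{\Semi_3} \bigwedge \Phi_0$. By the second part of \cref{lem:closure-under-and-or}, $\bigwedge\Phi_0$ is $\Semi_3$-equivalent to a single $\Pi_1$-sentence, because $\Semi_3$ is a lattice semiring and the dual distributivity law is available there. \Cref{cor:equivalence-from-S3-to-any-lattice} again lifts this equivalence to $\calL$.

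The only delicate point is this last step. Converting a finite conjunction of universal sentences into one universal sentence needs the lattice-specific part of \cref{lem:closure-under-and-or}, and not just idempotence of addition as in the existential cases. Beyond checking that this lemma applies to conjunctions, which only requires renaming bound variables apart and pulling the universal quantifiers out, the argument is routine once \cref{thm:SubinterpretationPreservationMainDirection} is in hand.
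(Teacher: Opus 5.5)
Your proposal is correct and is essentially the paper's own argument; the paper simply says the proof is ``analogous to \cref{cor:hom-preservation}''. The steps match: transfer preservation to $\Semi_3$, obtain $\Phi \equiv_{\Semi_3} \Phi_\forall$ from \cref{thm:SubinterpretationPreservationMainDirection}, lift via \cref{cor:equivalence-from-S3-to-any-lattice}, and for a singleton combine \cref{thm:compactnessLattice} with \cref{lem:closure-under-and-or}. Your separate treatment of $\calL \cong \Bool$ via the classical {\L}o\'s--Tarski theorem, and your use of \cref{prop:preservation-in-all-lattices} for the transfer step, are slightly more careful than the paper's write-up. One small nit: merging a conjunction of universal sentences needs idempotence of multiplication, not the dual distributive law, though \cref{lem:closure-under-and-or} covers it either way.
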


\begin{proof}
 Analogous to \cref{cor:hom-preservation} and \cref{thm:extension-preservation-lattice}.
\end{proof}

\section{Extension Preservation in the Finite}\label{section:finiteExtPresIdea}

The classical \L os--Tarski theorem is known to fail in finite model theory, when both preservation and equivalence are restricted to finite structures. 
However, unlike for other model-theoretic properties, this does not immediately imply the failure of the theorem for other semirings, as sentences that are preserved under extensions in the Boolean sense may not be preserved in other semirings as well. 
Therefore, we study the question of whether there are semirings in which the extension preservation theorem remains true when restricted to finite universes.
It turns out that in semiring semantics, the situation is much more diverse than in the Boolean world. In some cases, e.g. if we consider the natural semiring, we can easily adapt the counterexample used to disprove the extension preservation theorem in the general setting.

\begin{theorem}
	There is a sentence $\psi$ which is extension preserved on finite $\Nat$-interpretations, but, on finite $\Nat$-interpretations, it is not equivalent to any $\phi \in \Sigma_1$.
\end{theorem}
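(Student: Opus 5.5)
The natural candidate is the sentence from \cref{lem:counterexampleNat}, namely $\psi = \exists x \forall y\, Rx$, now evaluated on finite $\Nat$-interpretations. On a finite universe $A$ we have $\pi_A \llb \psi \rrb = \sum_{a \in A} \pi_A(Ra)^{|A|}$. Extension preservation then follows by the same computation as in \cref{lem:counterexampleNat}. For $\pi_A \subseteq \pi_B$ with $B$ finite, each summand $\pi_A(Ra)^{|A|}$ is at most $\pi_B(Ra)^{|B|}$, because $\pi_B(Ra)=\pi_A(Ra)$ and $n \mapsto s^n$ is nondecreasing for $s \geq 1$. The case $s=0$ contributes $0$ on both sides. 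The new elements of $B \setminus A$ only add nonnegative summands. This gives $\pi_A \llb \psi \rrb \leq \pi_B \llb \psi \rrb$, and no infinite sums or $\infty$ are needed since everything is finite.

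For the non-equivalence I reuse the polynomial-degree argument, with all interpretations kept finite. Suppose towards a contradiction that $\phi = \exists x_1 \dots \exists x_k\, \theta(\bar x) \in \Sigma_1$ is equivalent to $\psi$ on finite $\Nat$-interpretations, with $\theta$ quantifier-free. Consider the finite $\Nat[\{x\}]$-interpretations $\pi_n$ from that proof: universe of size $n$, $\pi_n(Ra_i)=x$, $\pi_n(\lnot Ra_i)=0$. Then $\pi_n \llb \psi \rrb = n x^n$. On the other hand, $\pi_n\llb \phi \rrb$ is a sum over $n^k$ tuples of polynomials of degree at most $|\theta|$, hence of degree at most $|\theta|$. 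Fix $n > |\theta|$ and set $p \coloneqq \pi_n\llb \phi\rrb$, so $p$ has degree at most $|\theta| < n$.

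Since $n x^n - p$ has positive leading coefficient, there is $m \in \Nat$ with $n m^n > p(m)$. Let $h_m \colon \Nat[\{x\}] \to \Nat$ be the evaluation homomorphism $x \mapsto m$. This is an ordinary finitary semiring homomorphism, so the Fundamental Property (\cref{fundamental-property}) applies to the finite interpretation $h_m \circ \pi_n$. It yields $(h_m\circ\pi_n)\llb\psi\rrb = n m^n > p(m) = (h_m\circ\pi_n)\llb\phi\rrb$. The interpretation $h_m \circ \pi_n$ is model-defining as long as $m \geq 1$, which we may assume. So it is a finite $\Nat$-interpretation separating $\psi$ from $\phi$, a contradiction.

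The main point to check carefully is the degree bound on $\pi_n\llb\phi\rrb$. In a quantifier-free $\theta$ in negation normal form, each instantiated subformula evaluates to a polynomial in $x$: literals give $x$, $0$, or Boolean (in)equality values $0/1$, disjunction is $+$ and conjunction is $\cdot$. The degree therefore grows at most additively through conjunctions, bounded by the number of literal occurrences in $\theta$, and independently of $n$. The outer sum over tuples does not raise the degree. Everything else is routine.
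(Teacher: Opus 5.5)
Your proposal is correct and is exactly the adaptation the paper has in mind. It reuses the sentence $\exists x \forall y\, Rx$ and the degree argument from the proof of \cref{lem:counterexampleNat}, working directly in $\Nat$ on finite universes with the finitary evaluation homomorphism $x \mapsto m$, $m \geq 1$; if $\phi$ mentions relation symbols other than $R$, annotate their positive literals with $1$ and their negative ones with $0$, which leaves your degree bound intact.
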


For other semirings, this is different. Despite the failure in the Boolean, the extension preservation theorem holds in the finite for the Viterbi and the \L ukasiewicz semiring (and thus the tropical semiring and the semiring of doubt), and every lattice semiring.
In this section, we discuss the key concepts and main proof ideas, and refer to \cref{sec:extension-preservation-finite-proof} and \cref{sec:FinExtPresLattices} for a formal definition of all notions and proofs.

\subsection{The Viterbi and \L ukasiewicz Semiring}

Recall the counterexample from \cref{sec:counterexamples}. It relied on the fact that on infinite $\Vit$-interpretations, the sentence $\psi =\exists x  \forall y Rx$ is preserved under extensions.
If we restrict ourselves to finite $\Vit$-interpretations, this is no longer true as the universal quantifier generates a power to the size of the universe. 
Consider for example the $\Vit$-interpretation $\pi_1$ of universe $\{a\}$ with $\pi_1(Ra)=1/2$ and its extension $\pi_2$ over the universe $\{a, b\}$ with $\pi_2(Rb)=1/2$, where we have $\pi_1 \llb \psi \rrb = 1/2 > 1/4 = \pi_2 \llb \psi \rrb$. The reason why $\psi$ is not extension-preserved is that multiplication is strictly decreasing in $\Vit \setminus \{0,1\}$. 

We use this property to show that universal quantifiers always make it impossible for a sentence to be extension-preserved in the finite, except when: Either the universal quantifier is \emph{trivial} and always evaluates to $1$ (such as in $\forall x (x=x)$) and we can replace it with $\top$, or it does not actually contribute anything to the semantics of the formulae. As an example for the latter case, consider the sentence $\exists x Rx \vee \forall x Rx$. While addition is increasing, multiplication is decreasing in $\Vit$, so we have $\pi \llb \exists x Rx \rrb \geq \pi \llb \forall x Rx \rrb$ for all $\Vit$-interpretations $\pi$, and we can replace the universal subformula with $\bot$. 
Based on this idea, we can show that the extension preservation theorem holds not just for $\Vit$ in the finite, but also for $\Trop \cong \Vit$ and $\Lukas \cong \Doubt$.

\begin{restatable}{theorem}{extensionPresFinite}
\label{thm:viterbiFinExtPres}
	Let $\Semi \in \{\Vit, \Trop, \Lukas, \Doubt\}$. A sentence $\psi \in \FO$ is preserved under extensions on finite $\Semi$-interpretations if, and only if, there is some $\phi \in \Sigma_1$ which, on finite $\Semi$-interpretations, is equivalent to $\psi$.
\end{restatable}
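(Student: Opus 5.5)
The easy direction is \cref{lem:syntax-to-semantics}(2), since $\Sigma_1$-sentences are preserved under all extensions, and in particular under finite ones. For the converse, I would first reduce to two cases. Since $\Trop\cong\Vit$ and $\Doubt\cong\Lukas$, and semiring isomorphisms commute with the semantics by \cref{fundamental-property}, it suffices to treat $\Vit$ and $\Lukas$. I would present the argument for $\Vit$ and then indicate the changes for $\Lukas$, whose multiplication $\odot$ is likewise decreasing, and strictly so away from $0$ and $1$.

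Fix $\psi$ in negation normal form, preserved under extensions on finite $\Vit$-interpretations. I would define an explicit candidate by relativising $\psi$ to a bounded set of witnesses. For $j\in\N$ and variables $x_1,\dots,x_j$, let $\psi^{(j)}(x_1,\dots,x_j)$ be the quantifier-free formula obtained from $\psi$ by replacing each subformula $\exists y\,\theta$ by $\bigvee_{i\le j}\theta(x_i)$ and each $\forall y\,\theta$ by $\bigwedge_{i\le j}\theta(x_i)$. Then set
\[
\phi_k \coloneqq \bigvee_{j\le k}\exists x_1\dots\exists x_j\Bigl(\bigwedge_{i<i'\le j}x_i\neq x_{i'}\wedge\psi^{(j)}\Bigr),
\]
which is $\Sigma_1$ by \cref{lem:closure-under-and-or}. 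The distinctness conjuncts matter here, because in $\Vit$ repeating a conjunct squares its value. For a finite $\pi_A$ and pairwise distinct $\bar a$, the inner formula evaluates to $\pi_{\{\bar a\}}\llb\psi\rrb$, the value of $\psi$ on the induced subinterpretation. Hence $\pi_A\llb\phi_k\rrb$ is the maximum of $\psi$ over subinterpretations of $\pi_A$ with at most $k$ elements. Extension preservation immediately gives $\pi_A\llb\phi_k\rrb\le\pi_A\llb\psi\rrb$ for every $k$.

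The whole difficulty is the reverse inequality. I need a bound $k$, depending only on $\psi$, such that on every finite $\pi_A$ the value of $\psi$ is already attained on some subinterpretation of size at most $k$. My approach would be to view $\pi_A\llb\psi\rrb$ as the maximum over evaluation strategies (monomials): the verifier picks one witness per existential quantifier, and all branches are taken at universal quantifiers. The key step, which I expect to be the main obstacle and where the paper's polynomial lemma (\cref{lem-relationship-between-polynomials}) should enter, is to show that an optimal strategy touches only boundedly many elements. I would argue by contradiction. Suppose an optimal strategy genuinely depends on unboundedly many elements through some universal quantifier, in the sense that it multiplies in a factor different from $1$ along many universal branches. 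Then I would build an extension $\pi_B\supseteq\pi_A$ by adding many fresh copies of elements, copying literal values so that the same strategy shapes exist. By comparing the strategy polynomials of $\pi_A$ and $\pi_B$, the universal quantifier now multiplies in additional factors. Since multiplication is strictly decreasing on $(0,1)$, every strategy of $\pi_B$ would have value strictly below the optimum of $\pi_A$, contradicting $\pi_A\llb\psi\rrb\le\pi_B\llb\psi\rrb$.

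Consequently, in every optimal strategy each universal quantifier must contribute only factors equal to $1$ outside a bounded witness set, apart from subformulas dominated by an existential alternative, as in the example $\exists x Rx\vee\forall xRx$. This means that restricting to the witness set, of size bounded in terms of the quantifier structure of $\psi$, does not lower the value, giving $\pi_A\llb\psi\rrb\le\pi_A\llb\phi_k\rrb$. For $\Lukas$ the same scheme should work. The extra care is that $\odot$ can collapse to $0$ and that the strict decrease is only for arguments in $(0,1)$, so the counting in the strategy comparison must use the truncated sum $\max(s+t-1,0)$ instead of powers.
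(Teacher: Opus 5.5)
\textbf{There is a genuine gap: the uniform bound $k$ is the whole content of the theorem, and your sketch for it does not go through.}

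The reduction itself is fine. Your $\phi_k$ is the paper's $\bigvee_{i\le k}\psi_i$ (\cref{lem-subint-size-n} and \cref{lem-subint-size-at-most-n}), and a uniform $k$ with $\pi_A\llb\psi\rrb=\pi_A\llb\phi_k\rrb$ on all finite $\pi_A$ is equivalent to the theorem. The problems are in how you try to get $k$:
\begin{itemize}
\item \emph{Wrong direction.} You go \emph{up} to an extension $\pi_B$ by copies and assert that \emph{every} strategy of $\pi_B$ is strictly below $\pi_A\llb\psi\rrb$. Nothing supports this. In $\pi_B$ the verifier gains options: existential quantifiers may pick copies, and literals mixing old and new elements need values, which can create better strategies. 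Strategies that avoid the offending universal node lose nothing, e.g.\ the left disjunct of $\exists x Rx\vee\forall x Rx$.
\item \emph{Wrong hypothesis.} You assume one optimal strategy has many non-$1$ universal factors. That does not rule out another optimal strategy that is existential. What failure of your bound actually yields is large $\pi_A$ for which \emph{no} optimal strategy is existential. An existential strategy touches fewer than $2^{|\psi|+1}$ elements, so its value is realised on a bounded subinterpretation; this is the hypothesis to exploit.
\item \emph{Strictness.} It breaks whenever the extra universal factors equal $1$: trivial universal subformulae such as $\forall y\,(y=y)$, and literals carrying the value $1$.
\item \emph{Moving strategies between universe sizes.} You must relabel the discarded element wherever it is an existential witness, keeping instantiations along each path pairwise distinct. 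This only works for strategies whose literals avoid the discarded elements, which copying destroys.
\item \emph{Final claim.} That in every optimal strategy universal quantifiers contribute non-$1$ factors only on a bounded witness set is neither proved nor true as stated: if $\pi\llb\psi\rrb=0$, every strategy is optimal.
\end{itemize}

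The paper derives the contradiction \emph{downwards}, so only a single strategy has to be controlled.
\begin{itemize}
\item It replaces trivial universal subformulae by $\top$; non-triviality is eventually stable (\cref{lem:non-trivial-eventually-never-1}).
\item It takes a large $\pi$ with no existential optimal strategy and extends it by $r+1$ fresh elements, $r=\qr(\psi)$, whose atoms get a tiny value $v<\pi\llb\psi\rrb$ rather than copied values. Optimal strategies of the extension then avoid these elements in literals but remain non-existential (\cref{lem:r-el-dont-occur-in-counterex-redundancy}).
\item It perturbs $1$-valuations to some $s<1$ so that optimal strategies of the perturbed interpretation stay optimal for the original (\cref{lem:one-does-not-occur-in-pi}).
\item The Strategy Translation Lemma (\cref{lem-relationship-between-polynomials}) maps one optimal non-existential strategy to a strategy on the subinterpretation with one element fewer. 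This drops a universal branch of value $<1$ and so strictly increases the value, contradicting preservation (\cref{lem:counterexample-if-irredundant}, \cref{thm:only-redundant-subform}).
\end{itemize}
This gives existential optimal strategies on all large interpretations. The paper then replaces non-trivial universal subformulae by $\bot$ (\cref{lem:replace-redundant-subform}) and handles small universes with \cref{lem:large-universes-suffice}. Your bound $k$ would also follow from the same fact.
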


The implication from syntax to semantics is covered by \cref{lem:syntax-to-semantics}. It remains to prove the converse. Consider the case $\Semi=\Vit$.
The main steps of the proof are as follows.

\begin{enumerate}[(1)]
    \item Reduce the problem of equivalence to a $\Sigma_1$-sentence to sufficiently large $\Vit$-interpretations.
    \item Track the contribution of subformulae to the semantics of a formula by means of so-called evaluation strategies -- these are equivalent to strategies of the Verifier in the classical model checking game for FO.
    \item Show that for extension preserved sentences and large $\Vit$-interpretations, evaluation strategies that use a subformula $\forall y \phi (\bar a, y)$ are always dominated by ones that avoid universal quantifiers. 
    \item Remove all strategies using a subformula $\forall y \phi (\bar a, y)$ and rewrite the formula without universal quantifiers.
\end{enumerate}

\noindent\textbf{Equivalence on large $\Vit$-interpretations.} Given an extension-preserved sentence $\psi$, our goal is to construct an equivalent existential sentence. The first step is to argue that it suffices to find an existential sentence $\phi$ which is equivalent on $\Vit$-interpretations that have at least a certain cardinality. By replacing each universal quantifier of $\psi$ with an $n$-ary conjunction, we can always find an existential sentence that is equivalent to $\phi$ on $\Vit$-interpretations of fixed size $n$. Using extension preservation of $\psi$ and the fact that addition is the maximum with respect to the natural order,\footnote{Note that this is true for $\Trop$ and $\Doubt$ too, where the natural order is the reverse order of $\R$.} we can combine such formulae with $\phi$ to obtain an existential sentence that is equivalent to $\psi$ on \emph{all} finite $\Vit$-interpretations.\\

\noindent\textbf{Strategies and their valuation.} The semiring semantics of a sentence can be defined via the notion of \emph{evaluation strategies}. We make this formal later on. Intuitively, an evaluation strategy $\Tt$ for a sentence $\psi$ and an $\Semi$-interpretation $\pi$ is given by specifying for each subformula $\phi_1 \lor \phi_2$ of $\psi$, whether we would like to evaluate $\phi_1$ or $\phi_2$, and for each instantiated subformula $\exists x \phi(x, \bar a)$, an instantiation of $x$ with an element of the universe. 
Then the value of the strategy $\pi \llb \mathcal{T}\rrb$ is the product over all $\pi \llb L \bar a \rrb$ such that $L \bar a$ is an instantiated literal in $\psi$ that can be reached in the classical FO \emph{model-checking game} in a play consistent with the (Verifier) strategy $\Tt$. Essentially, these are all literals in subformulae of the disjuncts chosen by $\Tt$ with all variable instantiations consistent with the existential ones chosen by $\Tt$. 
\begin{theorem}[Sum-of-Strategies \cite{GraedelTan24}] \label{lem:sum-of-proof-trees-1}
	For every semiring $\Semi$ and every $\Semi$-interpretation $\pi$ and $\psi \in \FO$ we have
	$\pi \llb \psi \rrb = \sum \{ \pi \llb \mathcal{T}\rrb  \mid \Tt \text{ is an evaluation strategy for $\psi$ and $\pi$}\}$.
\end{theorem}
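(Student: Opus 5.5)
We prove the Sum-of-Strategies theorem by structural induction on the negation normal form sentence $\psi$, after first generalising it to formulae $\phi(\bar a) \in \FO_A(\tau)$ instantiated with elements of the universe. For such an instantiated formula, an evaluation strategy is defined recursively: for a literal or (in)equality there is exactly one (trivial) strategy, whose value is $\pi\llb L\bar a\rrb$ (with equalities evaluated to their Boolean value); for $\phi_1 \vee \phi_2$ a strategy is a choice $i\in\{1,2\}$ together with a strategy for $\phi_i$; for $\phi_1\wedge\phi_2$ it is a pair of strategies, one for each conjunct; for $\exists x\,\phi(x,\bar a)$ it is a witness $a\in A$ together with a strategy for $\phi(a,\bar a)$; and for $\forall x\,\phi(x,\bar a)$ it is a family $(\Tt_a)_{a\in A}$ of strategies for the instances $\phi(a,\bar a)$. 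The value $\pi\llb\Tt\rrb$ is the product of the values of the literals reachable in plays consistent with $\Tt$. With multiplicities counted (each occurrence of a literal in a reachable position of the game tree contributes one factor), this value satisfies $\pi\llb\Tt\rrb=\pi\llb\Tt_i\rrb$ in the disjunction case, $\pi\llb\Tt_1\rrb\cdot\pi\llb\Tt_2\rrb$ in the conjunction case, $\pi\llb\Tt'\rrb$ in the existential case, and $\prod_{a}\pi\llb\Tt_a\rrb$ in the universal case. I would first check that this recursive description agrees with the informal game-based one in the paper.

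The induction itself proceeds as follows. The base case is immediate because there is a single strategy. For a disjunction, the set of strategies is the disjoint union of the strategies for $\phi_1$ and for $\phi_2$, so by associativity of (possibly infinitary) sums $\pi\llb\phi_1\vee\phi_2\rrb=\sum_{\Tt_1}\pi\llb\Tt_1\rrb+\sum_{\Tt_2}\pi\llb\Tt_2\rrb$ is the sum over all strategies. The existential case is analogous, using a disjoint union indexed by $a\in A$ and the partition-associativity of infinitary sums. For a conjunction, the strategies are pairs, and the claim is exactly distributivity: $(\sum_{\Tt_1}\pi\llb\Tt_1\rrb)(\sum_{\Tt_2}\pi\llb\Tt_2\rrb)=\sum_{(\Tt_1,\Tt_2)}\pi\llb\Tt_1\rrb\pi\llb\Tt_2\rrb$.

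The main obstacle is the universal case, which requires the generalised distributive law
\[
\prod_{a\in A}\sum_{\Tt\in S_a} s_{\Tt}=\sum_{(\Tt_a)_a\in\prod_a S_a}\prod_{a\in A}s_{\Tt_a}.
\]
For finite $A$ this follows from binary distributivity by induction on $|A|$, and each $S_a$ is finite in that case. For infinite universes it is not a consequence of the finitary axioms, so here I would invoke the standing assumption that the infinitary operations satisfy the required associativity, commutativity and full distributivity, as in the framework of \cite{BrinkeGraMrkNaa24}. This assumption holds for the concrete semirings considered, for example in completely distributive lattices. The result for sentences is then the special case with empty $\bar a$.
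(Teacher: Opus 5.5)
Your proof is correct and is the standard argument behind the cited result; the paper itself gives no proof. You do structural induction over instantiated NNF formulae on the tree-unravelled game: sums are regrouped over disjoint unions for $\vee$ and $\exists$, (iterated) distributivity handles $\wedge$ and $\forall$, and both literal occurrences and strategies are counted with multiplicity, which is exactly what non-idempotent semirings such as $\mathbb{N}$ require.

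The paper only applies the theorem to finite universes $[n]$, so your finitary argument covers all its uses. You are right that the infinite case needs complete distributivity as an extra hypothesis, not implied by the finitary axioms.
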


Because addition is the maximum operation in $\Vit$, by the above theorem, the valuation $\pi \llb \psi \rrb$ must coincide with the maximum valuation of any evaluation strategy. We refer to such a strategy as \emph{optimal} for $\pi$ and $\psi$. We call a strategy \emph{existential} if it does not use any subformula $\forall y \phi (\bar a, y)$. The central claim we prove is that extension preservation implies that for any sufficiently large $\Vit$-interpretation $\pi$ there must always be an existential strategy which is optimal for $\pi$ and $\psi$.
\\

\noindent\textbf{Existence of existential optimal strategies.} To prove this claim, we consider the contraposition. Starting from a strategy $\Tt$ that is optimal for a $\Vit$-interpretation $\pi$ which uses a subformula $\forall y \phi (\bar a, y)$, we aim to disprove extension preservation of $\psi$.
Essentially, we make use of the fact that multiplication is strictly decreasing on $\Vit \setminus \{0,1\}$ to construct a strategy $\Tt^*$ for a subinterpretation $\pi^*$ of $\pi$ such that $\pi^*\llb \Tt^* \rrb > \pi \llb \Tt \rrb$. By \cref{lem:sum-of-proof-trees-1} and the optimality of $\Tt$, this shows that $\psi$ is not extension-preserved because $\pi^* \subseteq \pi$, but $\pi^*\llb \psi \rrb > \pi \llb \psi \rrb$. Let $A$ denote the universe of $\pi$ and $A^* \subset A$ the universe of $\pi^*$.
The strategy $\Tt^*$ is constructed from $\Tt$ by restricting it to instantiations of $y$ with elements of $A^*$, but there are some technical intricacies to this: The strategy $\Tt$ may choose to instantiate existentially quantified variables with elements in $A \setminus A^*$, so we have to show that we can find alternative choices in $A^*$ without changing the valuation of the strategy too much. We refer to this construction of $\Tt^*$ from $\Tt$ as \emph{strategy translation}, and this is the key new technique that we develop for this proof.\\

 \noindent\textbf{Omitting universal quantifiers.} The final argument is to show that, given that we now know an optimal existential strategy for $\psi$ exists for every $\pi$, we can syntactically remove universal subformulae from $\psi$. More precisely, we replace every subformula $\forall y \phi (\bar a, y)$ with $\top$ if it always evaluates to $1$, and otherwise with $\bot$.

\subsection{Lattice Semirings}

 The central ingredient to the proof of \cref{thm:viterbiFinExtPres} is the existence of existential optimal strategies in the presence of finite extension preservation. Note that this argument really relied on the specific algebraic properties of the Viterbi and \L ukasiewicz semiring and does not apply to the Boolean. In fact, it does not hold for any lattice semiring $\calL$, in which multiplication is idempotent. It is not just that the existence of any optimal strategy is not guaranteed in the case where $\calL$ is not linearly ordered, even for min-max semirings it may be that all optimal strategies are non-existential. The sentence $\psi = \forall y \exists z Rz \equiv_\calL \exists z Rz$, for example, is finite extension preserved in $\calL$. However, it clearly does not admit \emph{any} existential strategy, so, in particular, not an optimal one. 

 Surprisingly, by combining the techniques from \cref{sec:lattices} and \cref{section:finiteExtPresIdea}, we can prove that the finite extension preservation theorem also holds for every lattice semiring apart from the Boolean.

 \begin{restatable}{theorem}{FiniteExtPresLattices} \label{thm:extension-preservation-lattice}
    Let $\calL \not\cong \Bool$ be a lattice semiring.
	A sentence $\psi \in \FO$ is finite extension preserved in $\calL$ if, and only if, on finite $\calL$-interpretations, it is equivalent to an existential sentence.
\end{restatable}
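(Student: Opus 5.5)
The plan is to reduce to $\Semi_3$ and then remove universal quantifiers by \emph{padding} interpretations with fresh ``$\epsilon$-elements''. The backward direction is \cref{lem:syntax-to-semantics}, so only the forward direction needs work.

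\textbf{Step 1: reduction to $\Semi_3$.} First I would check that the reductions of \cref{sec:lattices} respect finiteness of the universe. The embedding $e\colon\Semi_3\subseteq\calL$, the adjoined element $0^*$, and the weakly separating homomorphisms of \cref{lem:ex-weakly-sep-horm} all act pointwise on literal values, so they do not change the universe. Moreover, on finite interpretations only finitary sums and products occur, so the Fundamental Property (\cref{fundamental-property}) applies directly and \cref{thm-weak-fun} is not needed.

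Rerunning the proofs of \cref{prop:preservation-in-all-lattices} and \cref{lem:entailment-from-S3-to-any-lattice} verbatim then gives two facts:
\begin{itemize}
\item finite extension preservation in $\calL$ is equivalent to finite extension preservation in $\Semi_3$;
\item two sentences are equivalent on finite $\calL$-interpretations iff for every finite $\Semi_3$-interpretation $\pi$ they take the value $1$ simultaneously.
\end{itemize}
For the second fact, the value $\epsilon$ is handled by $h_{\geq\epsilon}$ exactly as in \cref{lem:entailment-from-S3-to-any-lattice}. Hence it suffices to find $\phi\in\Sigma_1$ such that, for all finite $\Semi_3$-interpretations $\pi$, we have $\pi\llb\psi\rrb=1 \Leftrightarrow \pi\llb\phi\rrb=1$.

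\textbf{Step 2: the padding construction.} Let $q$ be the number of quantifiers of $\psi$. For a finite $\pi_A$, let $\pi_A^{+}$ be the extension by $q$ fresh elements in which every literal involving a fresh element and some relation symbol has its positive form valued $\epsilon$ and its negative form valued $0$. (Which polarity gets $\epsilon$ can be chosen, or both variants can be used.)

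Extension preservation gives $\pi_A\llb\psi\rrb\le\pi_A^+\llb\psi\rrb$. The point of padding is that in $\pi_A^+$ a subformula $\forall y\,\theta(\bar c,y)$ has value $1$ only if $\theta(\bar c,b)$ has value $1$ for a fresh $b$. That forces $\theta(\bar c,b)$ to take value $1$ using only literals over $\bar c$, (in)equalities, and nested quantifiers, since the fresh literals are never $1$.

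By induction on $\psi$, this lets me define an existential sentence $\psi^*$. It is obtained by instantiating each universally quantified variable with a symbolic fresh element, i.e.\ a variable constrained to be distinct from everything else and such that no relational literal containing it counts as $1$. Existential quantifiers range over $A$ plus fresh elements, which is realisable with at most $q$ fresh elements. The induction should give
\[
\pi_A^+\llb\psi\rrb=1 \iff \pi_A\llb\psi^*\rrb=1,
\]
and hence $\pi_A\llb\psi\rrb=1\Rightarrow\pi_A\llb\psi^*\rrb=1$.

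\textbf{Step 3: the converse, which I expect to be the main obstacle.} We need $\pi_A\llb\psi^*\rrb=1\Rightarrow\pi_A\llb\psi\rrb=1$. Preservation only transports value $1$ \emph{upwards}, from $\pi_A$ to $\pi_A^+$, whereas here we must come down.

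My first attempt is to exploit the fact that $\psi^*$ is existential. A $1$-witness for $\psi^*$ in $\pi_A$ mentions finitely many elements $\bar a$ and finitely many $1$-literals $\Lambda$ among them. Take the minimal subinterpretation $\pi_0\subseteq\pi_A$ on $\bar a$, and pad it differently: give every literal not in $\Lambda$ the value $\epsilon$ or $0$, and add fresh $\epsilon$-elements. The aim is to show that in this ``generic'' interpretation $\psi$ already evaluates to $1$, by re-running the induction of Step 2 in the reverse direction. One then embeds this generic interpretation into $\pi_A$.

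This does not work naively, because $\pi_A$ is not an extension of the padded structure. So I would instead compare $\pi_A$ and $\pi_A^+$ through a common extension. Let $\pi_A^{++}$ extend $\pi_A^+$ by further copies of elements of $A$, and use the fact that universally quantified subformulas in $\pi_A$ are dominated by those in any extension. If this fails, the fallback is the strategy-based route of \cref{section:finiteExtPresIdea}. Use \cref{lem:sum-of-proof-trees-1} for $\Semi_3$, which is linearly ordered, so optimal strategies exist. Then show that every $1$-valued universal branch can be replaced by its fresh-element instance, using the strategy translation of \cref{lem-relationship-between-polynomials} to move existential choices from padded elements back into $A$. Finally, combine $\psi^*$ with the finitely many size-specific existential equivalents for small universes, as in step~(1) of the Viterbi proof.
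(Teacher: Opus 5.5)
Your Step~1 is sound and is also what the paper does: it applies \cref{prop:preservation-in-all-lattices} and \cref{cor:equivalence-from-S3-to-any-lattice} to finite interpretations, only with $\Fuzzy$ in place of $\Semi_3$. Since you only track the value $1$, padding with $\epsilon$ in $\Semi_3$ plausibly replaces the density of $\Fuzzy$ that the paper uses.

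The genuine gap is in Steps~2--3, and it is worse than Step~3 being left open: $\psi^*$ is the wrong sentence. Replacing $\forall y\,\theta(\bar c,y)$ by a single fresh instance discards every constraint coming from the instances $y\in A$. So at best $\pi_A\llb\psi\rrb=1\Rightarrow\pi_A\llb\psi^*\rrb=1$ survives, while the claimed biconditional with $\pi_A^+$ and the converse you need in Step~3 are false.

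Take $\psi=\forall y\,\exists z\,(z\neq y\wedge Pz)$. It is finite extension preserved in every lattice semiring: an old $y$ only gains witnesses, and a new $y$ sees all old $z$. Its existential equivalent is $\exists z_1\exists z_2\,(z_1\neq z_2\wedge Pz_1\wedge Pz_2)$. Your $\psi^*$, however, amounts at value $1$ to $\exists z\,Pz$. On $A=\{a_1,a_2\}$ with $\pi_A(Pa_1)=1$ and $\pi_A(Pa_2)=0$ you get $\pi_A\llb\psi^*\rrb=1$ but $\pi_A\llb\psi\rrb=0$, and also $\pi_A^+\llb\psi\rrb\leq\epsilon$ (look at $y=a_1$). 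So no argument can close Step~3 for this $\psi^*$. This rules out the common-extension idea, and also your fallback, which aims at the same replacement of a universal branch by its fresh instance.

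The missing idea is that a universal quantifier in an extension preserved sentence can still contribute through boundedly many \emph{special} instances. Here that is $y=a_1$, which is the witness of the generic instance and therefore needs a witness of its own. Consequently no fixed syntactic translation can be read off a single padded structure.

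The paper argues on strategies instead. After padding with small values, preservation forces an optimal strategy of the extension that avoids all fresh literals and is therefore almost existential. \cref{lem:translation-almost-existential} transports this strategy back to the original universe with support inclusion. It relies on the normal form of \cref{lem:translation-almost-existential-1}: at each universal node, all but $c_m$ substrategies are copies of one another up to the value of $y$, and witness sets are bounded. That is exactly the information your single fresh instance throws away. The paper needs this separate translation, rather than \cref{lem-relationship-between-polynomials}, because every old element may already occur as a witness somewhere in the strategy.

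The resulting optimal almost existential strategies (\cref{cor:strategies-relying-on-forall-dont-matter}) justify eliminating universal quantifiers innermost first (\cref{lem:remove-universal-subform}). Small universes are then patched as in \cref{lem:large-universes-suffice}.
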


To prove \cref{thm:extension-preservation-lattice}, we reuse the reduction method from \cref{sec:lattices} and restrict ourselves to the case where we evaluate in the \emph{fuzzy semiring} $\Fuzzy=([0,1], \max, \min, 0, 1)$. 
The structure of the proof is similar to the one from \cref{section:finiteExtPresIdea}: The first two steps are just the same. Due to its linear order, the notion of optimal strategies also applies to $\Fuzzy$. The existence of existential optimal strategies in step 3, however, has to be replaced by a weaker statement. Intuitively, the reason why finite extension preservation in $\Fuzzy$ might still hold in the presence of universal quantifiers is that strategies might use subformulae $\forall y \phi (\bar x, y)$, however, without actually relying on a literal that contains the universally quantified variable~$y$. In the example above, it is easy to rewrite $\psi$ as an existential sentence: Because no strategy uses a literal containing $y$, we can simply delete the universal quantifier. A more complicated case occurs when strategies use a literal that contains $y$ for some instantiations, but not for all of them, which might happen if we consider $\psi = \forall y (\exists z Rz \vee \exists z (Rz \wedge Qy))$, for example.

Therefore, we consider \emph{almost existential} strategies, where whenever a subformula $\forall y \phi (\bar a,y)$ is used, at least one substrategy for some $\phi (\bar a, b)$ \emph{does not} use any relational literal that contains $b$, and
the main claim we prove is that finite extension preservation in $\Fuzzy$ guarantees the existence of almost existential strategy optimal for sufficiently large $\Fuzzy$-interpretations.
The proof requires a strategy translation lemma similar to the one for $\Vit$ and $\Lukas$. 
Knowing just the existence of almost, but not necessarily fully, existential optimal strategies, we cannot replace all universal subformulae by $\bot$ right away as done in the Viterbi and \L ukasiewicz case. Instead, in step 4 of the proof, we will first manipulate a given extension preserved sentence by basic logical equivalences to make sure that universal subformulae only occur at places where they do not admit almost existential strategies. This will allow us in the end to argue that such universal subformulae must be redundant and can be removed from the formula without changing its semantics. In the example above, we first transform $\psi = {\color{blue}\forall y (}\exists z Rz \vee \exists z (Rz \wedge Qy){\color{blue})}$ into $\exists z Rz \vee {\color{blue}\forall y} \exists z (Rz \wedge Qy)$, which is $\Fuzzy$-equivalent by continuity, and then argue that $\psi \equiv_\Fuzzy^\omega \exists z Rz \vee \bot \equiv_\Fuzzy  \exists z Rz$ as $\forall y \exists z (Rz \wedge Qy))$ can never be used in an almost existential strategy.

\section{Proof of the Extension Preservation Theorem in the Finite for the Viterbi and \L ukasiewicz Semiring} \label{sec:extension-preservation-finite-proof}

In this section, we provide the proof details of \cref{thm:viterbiFinExtPres}. Recall that we aim to show the following: Given a sentence $\psi \in \FO$ that is preserved under extensions on \emph{finite} $\Semi$-interpretations, for $\Semi \in \{ \Vit, \Trop, \Lukas, \Doubt \}$, we can eliminate all universal quantifiers in it. That is, $\psi$ is equivalent to a $\Sigma_1$-sentence on finite $\Semi$-interpretations.

\subsection{Reducing the Problem to Large Universes}

First, we want to justify that it suffices to construct a $\Sigma_1$-formula that is equivalent to $\psi$ on sufficiently large $\Semi$-interpretations, as we can always hardcode the semantics of universal quantifiers on interpretations of a fixed size.

\begin{Definition}
	For a semiring $\Semi$ and sentences $\phi, \theta$, we write $\phi \equiv^n_{\Semi} \theta$ if $\pi \llb \phi \rrb = \pi \llb \theta \rrb$ for all $\Semi$-interpretations with universe of size $n$. Similarly, we write $\phi \equiv^{\leq n}_{\Semi} \theta$ (and $\phi \equiv^{\geq n}_{\Semi} \theta$) is the above is satisfied by each \emph{finite} $\Semi$-interpretation with universe of size at most (at least) $n$. If $\pi \llb \phi \rrb = \pi \llb \theta \rrb$ for all \emph{finite} $\Semi$-interpretations, we write $\phi \equiv^\omega_{\Semi} \theta$.
\end{Definition}

For a sentence $\psi \in \FO(\tau)$ and $n \geq 1$, let $\bar x=(x_1,\dots,x_n)$ be a tuple of new variables, not occurring in $\psi$, and let $\psi_n \coloneqq \exists x_1 \dots \exists x_n (\bigwedge_{1 \leq i < j\leq n} x_i \neq x_j \wedge \psi_n^* (\bar x))$ where $\psi_n^*(\bar x)$ is defined inductively by replacing each subformula $\E y\phi$ by $\bigvee_{i \in [n]} (\phi[y / x_i])_n^*(\bar x)$ 
and each subformula $\A y\phi$ by $\bigwedge_{i \in [n]} (\phi[y / x_i])_n^*(\bar x)$.

\begin{lemma} \label{lem-subint-size-n}
	Let $\Semi$ be a semiring, $\pi$ be an $\Semi$-interpretation and $\psi \in \FO$. Then $\pi \llb \psi_n \rrb = \sum \{\pi^* \llb \psi \rrb \mid \pi^* \subseteq \pi \text{ of size } n\}$. In particular, $\psi \equiv_\Semi^n \psi_n$.
\end{lemma}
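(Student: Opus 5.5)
The plan is to compute $\pi\llb\psi_n\rrb$ directly from the definition of semiring semantics and to identify the value of $\psi_n^*$ under each instantiation with the value of $\psi$ in a subinterpretation. Let $A$ be the universe of $\pi$. Unfolding the $n$ outer existential quantifiers gives
\[
\pi\llb\psi_n\rrb=\sum_{\bar a\in A^n}\pi\Bigl\llb\bigwedge_{i<j}a_i\neq a_j\Bigr\rrb\cdot\pi\llb\psi_n^*(\bar a)\rrb .
\]
Inequalities are interpreted by their Boolean truth values. So the first factor is $1$ when $\bar a$ is injective and $0$ otherwise, and only injective tuples $\bar a$ contribute.

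\textbf{Key step.} Fix an injective $\bar a$, let $A^*=\{a_1,\dots,a_n\}$, and let $\pi^*$ be the restriction of $\pi$ to the literals over $A^*$; this is the subinterpretation of size $n$ induced by $A^*$. I will show $\pi\llb\psi_n^*(\bar a)\rrb=\pi^*\llb\psi\rrb$. To do this, I prove by induction on the structure of subformulae $\phi(\bar z)$ of $\psi$ that for every assignment $\bar c$ of the free variables with values in $A^*$ one has $\pi\llb\phi_n^*(\bar a)[\bar z/\bar c]\rrb=\pi^*\llb\phi(\bar c)\rrb$.
\begin{itemize}
\item Literals: only elements of $A^*$ occur, and $\pi$ and $\pi^*$ agree on them.
\item Equalities and inequalities: both sides take the same Boolean value.
\item $\vee$ and $\wedge$: these are translated homomorphically, so the claim follows from the induction hypothesis.
\item $\exists y\phi$: its translation is $\bigvee_{i\in[n]}(\phi[y/x_i])^*_n$. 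Under $x_i\mapsto a_i$ this evaluates to $\sum_{i\in[n]}\pi^*\llb\phi(\bar c,a_i)\rrb=\sum_{c\in A^*}\pi^*\llb\phi(\bar c,c)\rrb$. Injectivity of $\bar a$ makes the index map $i\mapsto a_i$ a bijection onto $A^*$, so no element is counted twice.
\item $\forall y\phi$: the same argument with $\prod$ in place of $\sum$.
\end{itemize}
Substitution is harmless because the $x_i$ are fresh and do not occur in $\psi$.

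\textbf{Grouping.} This gives $\pi\llb\psi_n\rrb=\sum_{\bar a\text{ injective}}\pi^*_{\{\bar a\}}\llb\psi\rrb$. Each subinterpretation $\pi^*\subseteq\pi$ of size $n$ arises from exactly the $n!$ enumerations of its universe. I will therefore read the sum in the lemma as the sum over subinterpretations of size $n$ together with an enumeration of their universe, which is exactly the sum over injective tuples. Under this reading the identity is literally what was derived. In every semiring with idempotent addition, which covers all semirings to which this section applies ($\Vit,\Trop,\Lukas,\Doubt$ and lattices), the $n!$ repetitions collapse, and the sum becomes the plain sum over the set of subinterpretations.

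\textbf{Consequence and main obstacle.} For the ``in particular'' claim, if $|A|=n$ then the only subinterpretation of size $n$ is $\pi$ itself, so $\pi\llb\psi_n\rrb=\pi\llb\psi\rrb$ and hence $\psi\equiv^n_\Semi\psi_n$. The main obstacle is precisely this multiplicity issue. In a non-idempotent semiring such as $\N$, the sum over enumerations yields $n!\cdot\pi\llb\psi\rrb$, so the identity and $\psi\equiv^n_\Semi\psi_n$ hold as stated only for the semirings with idempotent addition considered here. I would flag this explicitly when writing the proof. Apart from this point, the only care needed is the bookkeeping in the induction: each quantifier translation must range over exactly $A^*$, which is where injectivity enters.
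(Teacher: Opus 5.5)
Your proof is correct and is essentially the paper's argument: an induction on the translation showing $\pi\llb\psi_n^*(\bar a)\rrb=\pi^*\llb\psi\rrb$ for the subinterpretation $\pi^*$ induced by an injective tuple $\bar a$, followed by summing over all such tuples. Your multiplicity caveat is also right, and the paper's one-line proof does not address it. Each size-$n$ subinterpretation arises from $n!$ enumerations, so the stated identity and $\psi\equiv^n_\Semi\psi_n$ require additive idempotence; in $\N$, for instance, one gets $n!\cdot\pi\llb\psi\rrb$ when $|A|=n$. This does no harm to the paper, since the lemma is only used for additively idempotent semirings.
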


\begin{proof}
	Let $\pi$ be a $\Semi$-interpretation and $\bar a$ be a tuple of distinct elements of its universe. By induction, it follows 
	that $\pi \llb \psi^*_n (\bar a) \rrb = \pi^* \llb \psi \rrb$ where $\pi^*$ is the subinterpretation of $\pi$ which is induced by~$\bar a$.
\end{proof}

\begin{lemma} \label{lem-subint-size-at-most-n}
	Let $\Semi$ be additively idempotent and $\psi \in \FO(\tau)$ be preserved under extensions in $\Semi$. Then $\psi \equiv^{\leq n}_\Semi \bigvee_{1 \leq i \leq n} \psi_i$.
\end{lemma}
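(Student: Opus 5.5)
Fix a finite $\Semi$-interpretation $\pi$ with universe $A$ of size $m \leq n$. We compute $\pi\llb \bigvee_{1\le i\le n}\psi_i\rrb = \sum_{i=1}^n \pi\llb\psi_i\rrb$ termwise using \cref{lem-subint-size-n}, and then show that the resulting sum of subinterpretation values equals $\pi\llb\psi\rrb$. By \cref{lem-subint-size-n},
\[
\pi\llb \textstyle\bigvee_{1\le i\le n}\psi_i\rrb \;=\; \sum_{i=1}^{n} \sum\{\pi^*\llb\psi\rrb \mid \pi^*\subseteq\pi \text{ of size } i\}.
\]
For $i>m$ there is no subinterpretation of size $i$. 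The inner sum is then empty and equals $0$; this matches $\pi\llb\psi_i\rrb=0$, since the conjunct $\bigwedge_{j<k} x_j\neq x_k$ cannot be satisfied by distinct elements. So the expression is the sum of $\pi^*\llb\psi\rrb$ over all (induced) subinterpretations $\pi^*\subseteq\pi$ with nonempty universe.

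The two inequalities then follow from idempotence and extension preservation. Since $\pi$ itself is a subinterpretation of size $m\le n$, the term $\pi\llb\psi\rrb$ occurs in the sum. Because the sum is a supremum in the natural order, this gives $\pi\llb\psi\rrb \le \pi\llb\bigvee_i\psi_i\rrb$. Conversely, every summand satisfies $\pi^*\llb\psi\rrb\le\pi\llb\psi\rrb$ because $\psi$ is preserved under extensions and $\pi^*\subseteq\pi$. Since addition is idempotent and monotone, the whole finite sum is then bounded by $\pi\llb\psi\rrb + \dots + \pi\llb\psi\rrb = \pi\llb\psi\rrb$. Antisymmetry of the natural order yields equality. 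This holds for every finite $\pi$ of size at most $n$, so $\psi\equiv^{\le n}_\Semi\bigvee_{1\le i\le n}\psi_i$.

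The only delicate point is bookkeeping in \cref{lem-subint-size-n}. Different tuples $\bar a$ of distinct elements inducing the same subuniverse contribute the same value $\pi^*\llb\psi\rrb$ several times. By idempotence these repetitions collapse, so the sum over tuples equals the sum over subinterpretations, and the argument above is unaffected. I expect no real obstacle here: the step that needs care is checking that the empty-sum convention for $i>m$ agrees with the actual semantics of $\psi_i$, which it does because the distinctness conjunct makes every instantiation evaluate to $0$.
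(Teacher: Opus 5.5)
Your proof is correct and is essentially the paper's argument: \cref{lem-subint-size-n} turns $\pi\llb\bigvee_{i\le n}\psi_i\rrb$ into the supremum of $\pi^*\llb\psi\rrb$ over all subinterpretations $\pi^*\subseteq\pi$, and extension preservation shows this supremum is attained at $\pi$ itself. Your extra bookkeeping makes explicit two points the paper's one-line proof leaves implicit: the terms with $i>m$ vanish, and repeated orderings of the same subuniverse collapse by idempotence.
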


\begin{proof}
	We know that in additively idempotent semirings, addition is the same as supremum.
	Let $\pi$ be a $\Semi$-interpretation whose universe has size at most $n$. By \cref{lem-subint-size-n}, we have
	$
	\pi \llb \bigvee_{i \leq n} \psi_i \rrb = \bigsqcup \{\pi^* \llb \psi \rrb \mid \pi^* \subseteq \pi\} \overset{*}= \pi \llb \psi \rrb,
	$
	where $(*)$ is due to preservation under extensions.
\end{proof}

\begin{lemma} \label{lem:large-universes-suffice}
	Let $\Semi$ be additively idempotent.
	If $\psi$ is preserved under extensions in $\Semi$ and there is some $\phi \in \Sigma_1$ and $n \in \omega$ such that $\phi \equiv^{\geq n}_\Semi \psi$, then there must be some $\phi^* \in \Sigma_1$ such that $\phi^* \equiv^\omega_\Semi \psi$.
\end{lemma}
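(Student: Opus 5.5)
The natural candidate is
\[
\phi^* \coloneqq \bigvee_{1 \leq i \leq n} \psi_i \;\vee\; \phi ,
\]
where the $\psi_i$ are the sentences from \cref{lem-subint-size-n}. First I check that $\phi^*$ is (equivalent to) a $\Sigma_1$-sentence. Each $\psi_i$ has the form $\exists x_1 \dots \exists x_i (\bigwedge_{j<k} x_j \neq x_k \wedge \psi_i^*(\bar x))$. The formula $\psi_i^*$ is quantifier-free, because every quantifier has been replaced by a finite disjunction or conjunction. Hence $\psi_i \in \Sigma_1$. By \cref{lem:closure-under-and-or}, which applies because $\Semi$ is additively idempotent, the finite disjunction of $\Sigma_1$-sentences is $\Semi$-equivalent to a single $\Sigma_1$-sentence, and I take that sentence as $\phi^*$.

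Now let $\pi$ be a finite $\Semi$-interpretation with universe $A$. In an additively idempotent semiring addition is the supremum, so $\pi\llb \phi^* \rrb = \pi\llb \bigvee_{i\le n}\psi_i\rrb \join \pi\llb \phi\rrb$. By \cref{lem-subint-size-n},
\[
\pi\llb \textstyle\bigvee_{i \leq n}\psi_i\rrb = \Sup\{\pi^*\llb\psi\rrb \mid \pi^* \subseteq \pi,\ |{\rm universe}(\pi^*)| \leq n\}.
\]
Extension preservation gives $\pi^*\llb\psi\rrb \leq \pi\llb\psi\rrb$ for every such $\pi^*$. Hence this supremum is at most $\pi\llb\psi\rrb$.

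I then split into two cases according to $|A|$.

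\textbf{Case $|A| \geq n$.} Here $\pi\llb\phi\rrb = \pi\llb\psi\rrb$ by the assumption $\phi \equiv^{\geq n}_\Semi \psi$. Combined with the bound above, $\pi\llb\phi^*\rrb = \pi\llb\psi\rrb$.

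\textbf{Case $|A| < n$.} By \cref{lem-subint-size-at-most-n}, $\pi\llb\bigvee_{i\le n}\psi_i\rrb = \pi\llb\psi\rrb$. It remains to show $\pi\llb\phi\rrb \leq \pi\llb\psi\rrb$, and this is the step that needs care. Choose any finite extension $\pi' \supseteq \pi$ whose universe has size at least $n$. Such an extension exists: add fresh elements and assign the new literals model-defining values, for instance $0$ to positive literals and $1$ to negative ones. Since $\phi$ is existential, \cref{lem:syntax-to-semantics} gives $\pi\llb\phi\rrb \leq \pi'\llb\phi\rrb$. On the larger interpretation, $\pi'\llb\phi\rrb = \pi'\llb\psi\rrb$ by the assumption $\phi \equiv^{\geq n}_\Semi \psi$. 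This comparison only moves in the upward direction, so it does not by itself relate $\pi'\llb\psi\rrb$ back to $\pi\llb\psi\rrb$.

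To close the case, I would replace $\phi$ by $\phi \wedge \theta_n$, where
\[
\theta_n \coloneqq \exists x_1\dots\exists x_n \bigwedge_{j<k} x_j\neq x_k .
\]
The sentence $\theta_n$ evaluates to $1$ on universes of size at least $n$ and to $0$ otherwise, since equality literals take Boolean values. Therefore $\phi \wedge \theta_n \equiv^{\geq n}_\Semi \psi$ still holds, and $\phi\wedge\theta_n$ evaluates to $0$ on all smaller universes. By \cref{lem:closure-under-and-or} it is again $\Semi$-equivalent to a $\Sigma_1$-sentence. With $\phi$ replaced by $\phi\wedge\theta_n$ in the definition of $\phi^*$, the case $|A| < n$ reduces exactly to \cref{lem-subint-size-at-most-n}, and the case $|A| \geq n$ is unchanged.

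So $\phi^* \equiv^\omega_\Semi \psi$. The only genuine subtlety is bounding $\phi$ on small universes, and the $\theta_n$ guard removes it.
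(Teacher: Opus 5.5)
Your proposal is correct and essentially the paper's proof. Your final sentence $(\phi \wedge \theta_n) \vee \bigvee_{i \leq n} \psi_i$ coincides with the paper's $\exists x_1 \dots \exists x_n (\bigwedge_{i<j} x_i \neq x_j \wedge \phi) \vee \bigvee_{i \leq n} \psi_i$ up to moving $\phi$ outside the quantifier block (harmless under idempotent addition), and your two-case argument via \cref{lem-subint-size-n}, \cref{lem-subint-size-at-most-n}, extension preservation and \cref{lem:closure-under-and-or} is the same as the paper's, so the unguarded first attempt can simply be dropped.
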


\begin{proof}
	We claim that $\psi \equiv \exists x_1 \dots \exists x_n (\bigwedge_{1 \leq i < j \leq n} x_i \neq x_j \wedge \phi) \vee \bigvee_{1 \leq i \leq n} \psi_i =: \theta$. Let $\pi$ be an $\Semi$-interpretation of cardinality less than $n$. Then $\pi \llb \theta \rrb = \pi \llb \bigvee_{1 \leq i \leq n} \psi_i \rrb \overset{*}= \pi \llb \psi \rrb$, where $(*)$ is by \cref{lem-subint-size-at-most-n}. If $\pi$ is of size at least $n$, then 
	\[
	\pi \llb \theta \rrb = \pi \llb \phi \rrb \sqcup \pi \llb \bigvee_{1 \leq i \leq n} \psi_i \rrb \overset{(1)}{=} \bigsqcup \{\pi \llb \psi \rrb \} \cup \{\pi^* \llb \psi \rrb \mid \pi^* \subseteq \pi \text{ of size at most } n \} \overset{(2)}{=} \pi \llb \psi \rrb,
	\]
	where $(1)$ is by \cref{lem-subint-size-n} and $(2)$ follows from preservation under extensions.
	Since $\theta$ does not contain a universal quantifier, we can apply \cref{lem:closure-under-and-or} and obtain $\theta \equiv_\Semi^\omega \phi^*$ for some $\phi^* \in \Sigma_1$.
\end{proof}

Hence, our goal is to construct for a given sentence $\psi$ which is preserved under extensions on finite $\Semi$-interpretations some $\phi \in \Sigma_1$ such that $\psi \equiv^{\geq n}_\Semi \phi$ for some $n \in \omega$.

\subsection{Tracking Redundancies via Evaluation Strategies}

In order to avoid having to consider different equality types of instantiations, we use the variant $\FO^{\neq}$ of $\FO$ where quantifiers only range over instantiations that do not occur in the instantiations of the free variables. More precisely, formulae from $\FO^{\neq}$ only contain relational atoms, but no equality atoms, and rather than the usual quantifiers, they may contain quantifiers $\exists^{\neq}, \forall^{\neq}$ with the semantics
\[
\pi \llb \exneq y \phi (\bar a, y)\rrb \coloneqq \sum_{b \in A\setminus \bar a} \pi \llb \phi (\bar a, b) \rrb \;\; \text{ and } \;\; \pi \llb \foraneq y \phi (\bar a, y)\rrb \coloneqq \prod_{b \in A\setminus \bar a} \pi \llb \phi (\bar a, b) \rrb.
\]
This way, we only have to consider tuples of instantiations that are pairwise distinct. Throughout this section we implicitly assume this when we write $\bar a \subseteq A$. By $\Sigma_1^{\neq}$ we denote the existential fragment of $\FO^{\neq}$.  According to the following lemma, $\FO$ and $\FO^{\neq}$ (as well as $\Sigma_1$ and $\Sigma_1^{\neq}$) have the same expressive power, which is why we can assume that all formulae considered in the following are in $\FO^{\neq}$.

\begin{lemma} \label{lem:FOneq}
	For every $\phi \in \FO$ ($\phi \in \Sigma_1$) there is some $\phi^* \in \FO^{\neq}$ ($\phi^* \in \Sigma_1^{\neq}$)  such that $\phi \equiv \phi^*$ and vice versa. 
\end{lemma}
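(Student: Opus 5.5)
Both directions go by structural induction on formulae, with a translation defined on formulae with free variables. The invariant is semantic equivalence under all instantiations of the free variables. Equality atoms $x=y$ are evaluated by their Boolean truth value, so they can be pushed into case distinctions over equality types of the free variables.

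\emph{From $\FO$ to $\FO^{\neq}$.} For $\phi(\bar x)$ in negation normal form and each partition (equality type) $e$ of $\bar x$, I will construct $\phi^e(\bar x)\in\FO^{\neq}$ such that $\pi\llb\phi(\bar a)\rrb=\pi\llb\phi^e(\bar a)\rrb$ whenever $\bar a$ has equality type $e$. It suffices to work with the representatives of the classes of $e$. The cases are as follows.
\begin{itemize}
\item Literals $R\bar x$ are kept, with variables replaced by their class representatives.
\item Equality and inequality literals become $\top$ or $\bot$ according to $e$. Here $\top$ and $\bot$ are realised as empty conjunction and disjunction, or as $\exneq y(\cdot)$-free constants. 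If the language lacks them, I would simulate them by a trivially true or false relational construction, or note that an empty $\bigwedge$ and $\bigvee$ are allowed.
\item Connectives commute with the translation.
\item For $\exists y\,\phi(\bar x,y)$, the value splits as the sum over $b\notin\bar a$ plus the sum over $b=a_i$. This gives the translation $\exneq y\,\phi^{e'}(\bar x,y)\vee\bigvee_i \phi^{e_i}(\bar x)$, where $e'$ extends $e$ with $y$ new and $e_i$ identifies $y$ with $x_i$, substituting $x_i$ for $y$.
\item For $\forall y$ the translation is dually $\foraneq y\,\phi^{e'}\wedge\bigwedge_i\phi^{e_i}$.
\end{itemize}
Correctness follows from associativity and commutativity of (infinitary) sum and product, since $A$ is partitioned into $A\setminus\bar a$ and the elements of $\bar a$. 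For a sentence $\phi$ the equality type is trivial, so $\phi^*$ is just $\phi^{\emptyset}$. The existential quantifiers of $\phi$ are mapped only to $\exneq$, disjunctions and substitutions, so $\Sigma_1$ is preserved. This is seen most easily by applying the induction to the quantifier-free matrix under the prefix.

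\emph{From $\FO^{\neq}$ to $\FO$.} Translate $\exneq y\,\phi$ as $\exists y(\bigwedge_i y\neq x_i\wedge\phi)$, where $\bar x$ are the free variables in scope. Then $\pi\llb a_i\neq a_i\rrb=0$ kills the excluded summands, using $0\cdot s=0$ and $0$ being neutral for $+$. Dually, translate $\foraneq y\,\phi$ as $\forall y(\bigvee_i y=x_i\vee\phi)$. This needs $1+s=1$ for the excluded factors, which holds in absorptive semirings.

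In non-absorptive semirings the universal direction is the main obstacle. I expect the lemma to be intended for the absorptive semirings of this section, so I would state that restriction explicitly. The existential translation, and hence $\Sigma_1^{\neq}\to\Sigma_1$, works in every semiring. The only remaining subtlety is keeping track of which variables are free, so that the side conditions refer exactly to the instantiated free variables. This is routine bookkeeping, and there is no real obstacle beyond it.
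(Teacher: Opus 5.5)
Your proposal is correct and follows the paper's route. You split each quantifier into an $\exneq$/$\foraneq$ part plus the substitutions $y \mapsto x_i$, and guard with (in)equalities for the converse; your equality-type bookkeeping just makes explicit what the paper's rule $(x_i = x_j)^* \coloneqq \bot$ tacitly assumes, namely that distinct variables in scope denote distinct elements and that substitution precedes translation. The absorption requirement you flag for $\foraneq$ is genuine for $\forall y\,(\bigvee_i y = x_i \vee \phi)$, presumably what the paper's ``analogously'' means and harmless since the lemma is only applied to absorptive semirings, but it disappears if you also guard the body as $\forall y\,\bigl(\bigvee_i y = x_i \vee (\bigwedge_i y \neq x_i \wedge \phi)\bigr)$: for pairwise distinct $\bar a$ each factor is then exactly $1$ or $\pi\llb \phi(\bar a, b)\rrb$ in every semiring.
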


\begin{proof}
	Inductively translate $\phi \in \FO$ into $\phi^* \in  \FO^{\neq}$ as follows.
	\begin{itemize}
		\item $(x_i = x_i)^* \coloneqq \top$ and $(x_i = x_j) \coloneqq \bot$ for $i \neq j$. Analogously for inequalities.
		\item $(L \bar x)^* \coloneqq L\bar x$ for relational atoms $L \bar x$.
		\item $(\psi \circ \theta)^* \coloneqq \psi^* \circ \theta^*$ for $\circ \in \{\vee, \wedge\}$.
		\item $(\exists y \psi (x_1, \dots, x_n, y))^* \coloneqq \bigvee_{i \in [n]} \psi^* (x_1, \dots, x_n, x_i) \vee \exneq y \psi^* (x_1, \dots, x_n, y)$. Analogously for universal quantifiers.
	\end{itemize}
	Since neither of the translation steps introduces new universal quantifiers, we have $\phi \in \Sigma_1$ if, and only if, $\phi^* \in \Sigma_1^{\neq}$.
	Conversely, we can translate $\phi \in \FO^{\neq}$ into  $\phi^* \in \FO$ according to:
	\begin{itemize}
		\item $(L \bar x)^* \coloneqq L\bar x$ for relational atoms $L \bar x$.
		\item $(\psi \circ \theta)^* \coloneqq \psi^* \circ \theta^*$ for $\circ \in \{\vee, \wedge\}$.
		\item $(\exists^{\neq} y \psi (x_1, \dots, x_n, y))^* \coloneqq \exists y (\bigwedge_{i \in [n]} y \neq x_i \wedge \psi^* (x_1, \dots, x_n, y))$. Analogously for universal quantifiers.
	\end{itemize}
	As before we have $\phi \in \Sigma_1^{\neq}$ if, and only if, $\phi^* \in \Sigma_1$.
\end{proof}

To formally define what it means for a subformula $\phi (\bar x)$ not to contribute to the semantics of a sentence $\psi$, we make use of evaluation strategies in the model checking game and track whether or not they use $\phi (\bar x)$.
The \emph{model checking game graph for $\psi$ and $n \geq \qr(\psi)$} is a labelled tree $C_n(\psi) = (V, E, \lambda)$ where each node $v$ is labelled by a subformula $\lambda (v) = \phi (\bar a)$ of $\psi$ which is instantiated with elements from $[n]$. We inductively define $C_n(\phi (\bar a))$ for $\phi (\bar a) \in \FO^{\neq}_{[n]} (\tau)$ and $n \geq \qr(\psi)$ in negation normal form as follows. In any case, the root of $C_n(\phi (\bar a))$ is labelled $\phi (\bar a)$. If $\phi (\bar a)$ is a literal or a Boolean constant, $C_n(\phi (\bar a))$ consists of a single node. For $\phi (\bar a) = \phi_0 (\bar a) \star \phi_1 (\bar a)$ where $\star \in \{\vee, \wedge\}$, in $C_n (\varphi (\bar a))$ we append $C_n (\phi_i (\bar a))$ for $i \in \{0,1\}$ as subtrees to the root. Finally, if $\phi (\bar a)$ has the form $Q x \vartheta (\bar a, x)$ where $Q \in \{\exists, \forall\}$ we now append the trees $C_n (\vartheta (\bar a, b))$ for each $b \in [n] \setminus \bar a$ to the root.

\begin{Definition}
	An \emph{evaluation strategy} (or just \emph{strategy})
    $\mathcal{T}=(V, E, \lambda)$ for $\psi$ and $n \geq \qr(\psi)$ is a subtree of $C_n (\psi)$ where each node
    labelled with $\phi_0 (\bar a) \vee \phi_1 (\bar a)$ or $\exists x \vartheta (\bar a, x)$ has exactly one successor and each node labelled with $\phi_0 (\bar a) \wedge \phi_1 (\bar a)$ or $\forall x \vartheta (\bar a, x)$ contains all successors in $C_n (\psi)$. We write $\Tt \in C_n (\psi)$  and define the valuation $\pi \llb \mathcal{T}\rrb$ in an $\Semi$-interpretation $\pi$ of universe $[n]$ as the products of all valuations $\pi (L(\bar a))$ of the leaves in $\mathcal{T}$. 
\end{Definition}

In every linearly ordered semiring $\Semi$ such as $\Vit$ and $\Lukas$, a $\Semi$-interpretation induces a preorder $\preceq_{\pi}$ on $C_n (\psi)$, where we put $\Tt \preceq_\pi \Tt'$ whenever $\pi \llb \Tt \rrb \leq \pi \llb \Tt' \rrb$. Such a preorder corresponds to a linear order of equivalence classes. 
According to the following theorem, the evaluation of a formula in a semiring where addition corresponds to the maximum always coincides with the valuation of any maximal strategy with respect to this order. We refer to such strategies as \emph{optimal}.

\begin{theorem}[Sum-of-Strategies \cite{GraedelTan24}] \label{lem:sum-of-proof-trees}
	For every $\Semi$-interpretation $\pi$ with universe $[n]$ and every instantiated formula $\psi (\bar a) \in \FO^{\neq}_{[n]}$ it holds that
	$\pi \llb \psi \rrb = \sum \{ \pi \llb \mathcal{T}\rrb  \mid \Tt \in C_n (\psi)\}$.
\end{theorem}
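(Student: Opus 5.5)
We prove the claim by structural induction on the instantiated formula $\psi(\bar a) \in \FO^{\neq}_{[n]}$ in negation normal form. We show simultaneously, for every subformula, that $\pi \llb \phi(\bar a) \rrb = \sum \{\pi \llb \Tt \rrb \mid \Tt \in C_n(\phi(\bar a))\}$. Here strategies for a subformula are defined exactly as in the definition above, but relative to the subtree $C_n(\phi(\bar a))$. Two facts drive every inductive step. The first is a structural bijection: the strategies of a compound formula correspond exactly to tuples, or choices, of strategies for the immediate subformulae. The second is that the leaf multiset of a strategy is the disjoint union of the leaf multisets of its substrategies. For the second fact, leaves are counted as nodes of the tree, so repeated literals contribute repeatedly, which matches the semantics. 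Everything is finite because the universe is $[n]$, so only finite sums and products occur and ordinary commutativity, associativity and distributivity suffice.

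\emph{Base case.} If $\phi(\bar a)$ is a literal $L\bar a$, or a constant $\top$/$\bot$ (valued $1$/$0$), then $C_n(\phi(\bar a))$ is a single node. There is exactly one strategy, and its value is $\pi(L\bar a)$, respectively $1$ or $0$, which is the semantics.

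\emph{Disjunction and existential quantifier.} For $\phi_0 \vee \phi_1$, a strategy is determined by choosing one $i \in \{0,1\}$ together with a strategy $\Tt_i \in C_n(\phi_i)$. The map $\Tt \mapsto (i,\Tt_i)$ is a bijection onto $\{0\}\times C_n(\phi_0) \,\uplus\, \{1\}\times C_n(\phi_1)$, and the leaves of $\Tt$ are exactly those of $\Tt_i$. The sum over strategies therefore splits as $\sum_{\Tt_0} \pi\llb \Tt_0\rrb + \sum_{\Tt_1}\pi\llb\Tt_1\rrb$, which equals $\pi\llb\phi_0\rrb + \pi\llb\phi_1\rrb$ by induction. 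The case $\exneq y\,\vartheta(\bar a,y)$ is identical: one successor $b \in [n]\setminus\bar a$ is chosen, and the sum splits over $b$.

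\emph{Conjunction and universal quantifier.} For $\phi_0 \wedge \phi_1$, a strategy must keep both children. It therefore corresponds bijectively to a pair $(\Tt_0,\Tt_1) \in C_n(\phi_0)\times C_n(\phi_1)$, and its leaves are the disjoint union of the leaves of $\Tt_0$ and $\Tt_1$, so $\pi\llb\Tt\rrb = \pi\llb\Tt_0\rrb\cdot\pi\llb\Tt_1\rrb$. Distributivity then gives
\[
\pi\llb\phi_0\rrb\cdot\pi\llb\phi_1\rrb=\Bigl(\sum_{\Tt_0}\pi\llb\Tt_0\rrb\Bigr)\Bigl(\sum_{\Tt_1}\pi\llb\Tt_1\rrb\Bigr)=\sum_{(\Tt_0,\Tt_1)}\pi\llb\Tt_0\rrb\,\pi\llb\Tt_1\rrb .
\]
For $\foraneq y\,\vartheta(\bar a,y)$, a strategy corresponds to a family $(\Tt_b)_{b\in[n]\setminus\bar a}$ with $\Tt_b\in C_n(\vartheta(\bar a,b))$, and its value is $\prod_b\pi\llb\Tt_b\rrb$. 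Since the index set is finite, the general finite distributive law $\prod_b\sum_{\Tt_b}\pi\llb\Tt_b\rrb=\sum_{(\Tt_b)_b}\prod_b\pi\llb\Tt_b\rrb$ yields the claim.

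The only step needing care is the bijection, together with the multiplicativity of valuations, in the conjunctive cases. This requires checking that the subtree of a strategy below each retained child is itself a strategy for that child. Conversely, any choice of child strategies assembles into a strategy, because the strategy conditions are purely local at each node. With that verified, the argument above is complete; no infinitary or absorption assumptions are needed, so the statement holds for every commutative semiring.
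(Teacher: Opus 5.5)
Your proof is correct and is essentially the standard argument: the paper gives no proof of its own here and cites the result from Gr\"adel and Tannen, where it is established by this same structural induction. The induction uses the local bijection between strategies and choices or tuples of substrategies, together with finite distributivity. You also read the right-hand side correctly as a sum indexed by strategies, not over the set of their values; in a non-idempotent semiring such as $\Nat$, collapsing equal values would break the identity.
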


\begin{Definition}
	Let $\Tt \in C_n (\psi)$ be a strategy. We say that $\Tt$ is \emph{optimal} for an $\Semi$-interpretation~$\pi$ if $\pi \llb \psi \rrb = \pi \llb \Tt \rrb$.
	A subformula $\phi(\bar x)$ of $\psi$ is \emph{used} in $\Tt$ if $\Tt$ contains a node $v$ where $\lambda (v) = \phi(\bar a)$ for some $\bar a \subseteq [n]$ (recall that the entries of such a tuple must be pairwise distinct). If $\Tt$ does not use any subformula $\foraneq y \phi (\bar x,y)$, it is referred to as \emph{existential}. Otherwise, we call it \emph{non-existential}.
\end{Definition}

Our goal is to prove that, in the presence of finite extension preservation, non-existential strategies will always be dominated by existential ones.
The only exception we will have to exclude is when a subformula $\foraneq y \phi (\bar x, y)$ is \emph{trivial} in the following sense.

\begin{Definition}
	\label{def:trivial}
    Let $\Semi$ be a semiring and $n \in \Nat$.
	A formula $\phi (\bar x)$ is \emph{trivial for $n$} (w.r.t. $\Semi$) if for all $\Semi$-interpretations $\pi$ of universe $A$ with exactly $n$ elements and all $\bar a \subseteq A$, it holds that $\pi \llb \foraneq y \phi(\bar a, y) \rrb = 1$.
    We say that $\phi(\bar x)$ is \emph{trivial} (w.r.t. $\Semi$) if there is some $n_0$ such that $\phi(\bar x)$ is trivial for each $n \geq n_0$.
\end{Definition}

For example, the sentence $\foraneq x \exneq y (\top \vee Rx) \equiv^{\geq 2}_{\Semi} \top$ is trivial, but not trivial for $n=1$, while $\exneq x (Rx \vee \lnot Rx)$ is not trivial for any $n$.
By definition, trivial subformulae $\foraneq y \phi (\bar x, y)$ of a sentence $\psi$ can be replaced by $\top$ without changing its semantics on sufficiently large $\Semi$-interpretations. 
Now the idea is that if we replace a subformula $\foraneq y \phi (\bar x, y)$ in $\psi$ with $\bot$, there is a one-to-one correspondence between the strategies for $\psi$ that do not use $\foraneq y \phi (\bar x, y)$ and the $\bot$-free strategies of the resulting sentence $\vartheta$. 
So if we are sure that for any $\Semi$-interpretation, there is at least one optimal strategy which does not use $\foraneq y \phi (\bar x, y)$, this replacement will not change the semantics of $\psi$.
Hence, the main claim we will prove in \cref{thm:only-redundant-subform} is that once occurrences of trivial subformulae $\foraneq y \phi (\bar x, y)$ are eliminated, all remaining non-trivial subformulae $\foraneq y \phi (\bar x, y)$ no longer contribute anything to the semantics of a finite extension preserved sentence.
Formally, we say such universal subformulae are \emph{redundant}:

\begin{Definition}[Redundancy]
	\label{def:redundancy}
	Let $\Semi$ be a semiring and $\psi \in \FO^{\neq}$ be a sentence. We say that \emph{$\forall$ is redundant in $\psi$} (w.r.t $\Semi$) if there is some $n_0$ such that for every $\Semi$-interpretation $\pi$ of size $n \geq n_0$, there is an existential strategy that is optimal for $\pi$ and $\psi$. An $\Semi$-interpretation $\pi$ \emph{disproves redundancy of $\forall$ in $\psi$} (or $\pi$ is a \emph{counterexample to redundancy of $\forall$ in $\psi$}) if every strategy optimal for $\pi$ and $\psi$ is non-existential.
\end{Definition}

As an example, $\forall$ is redundant in both $\foraneq x Rx \vee \exneq x Rx$ and $\foraneq x Rx \vee \exneq x \exneq y Rx$, while it is not in $\foraneq x Rx \wedge \exneq x Rx$.
Note that if $\forall$ is not redundant in $\psi$, then there are arbitrarily large counterexamples to redundancies of $\forall$ in $\psi$.  We omit the semiring $\Semi$ in these definitions whenever it is clear from the context. 

When we want to reason about trivial formulae and redundancies, we do not need to track the semantics of a formula on just one fixed $\Semi$-interpretation, but argue about all of them at the same time. This is why we evaluate strategies not just on $\Semi$-interpretations but make use of interpretations over semirings of polynomials.
The variables we use to annotate the literals serve as placeholders and due to universality, valuations in such polynomial semirings represent the valuations in all $\Semi$-interpretations.
Because we only consider absorptive semirings in this section, we make use of absorptive polynomials for this purpose, and to reflect the fact that for complementary literals, one of them has value $0$ in model-defining interpretations, we take a quotient based on a duality between variables. See \cite{GraedelTan24} for more details.

\begin{Definition}
	For each $n \in \Nat$, let $X_n \coloneqq X_n^+ \cup X_n^-$ where $X_n^+ \coloneqq \{x_\alpha \mid \alpha \in \operatorname{Atoms}_{[n]} (\tau) \}$ and $X_n^- \coloneqq \{x_{\lnot\alpha} \mid \alpha \in \operatorname{Atoms}_{[n]} (\tau) \}$. 
    The free absorptive semiring $\Sorb(X_n)$ consists of $0, 1$, and all antichains of monomials with respect to the absorption
order $\succeq$, which is also the natural order of $\Sorb(X_n)$. A monomial $m_1$ \emph{absorbs} $m_2$, denoted $m_1 \succeq m_2$, if it has smaller exponents, i.e. $m_2 = m \cdot m_1$ for some monomial $m$. 
    The semiring $\Sorb (X_n^+, X_n^-)$ is the quotient of $\Sorb (X_n^+ \cup X_n^-)$ based on the congruence generated by $x_\alpha \cdot x_{\lnot \alpha} = 0$ for $\alpha \in \operatorname{Atoms}_{[n]}$.
\end{Definition}

We say that a variable assignment $f \colon X_n \to \Semi$ is \emph{consistent} if $f (x_\alpha)\cdot f  (x_{\lnot \alpha})= 0$ for all $ \alpha \in \operatorname{Atoms}_{[n]} (\tau)$. It is \emph{model-defining} if for all $\alpha \in \operatorname{Atoms}_{[n]}$ we have $f(x)=0$ for exactly one $x \in \{x_\alpha, x_{\lnot \alpha}\}$.  Note that model-defining assignments are in particular consistent, while the converse only holds if $\Semi$ does not contain divisors of $0$, so fails for the semirings $\Lukas$ and $\Doubt$, for example.

\begin{proposition}[Universal Property \cite{GraedelTan24}]
	For any absorptive semiring $\Semi$ and consistent $f \colon X_n \to \Semi$ there exists a unique semiring homomorphism $h\colon \Sorb (X_n^+, X_n^-) \to \Semi$ such that $h(x)= f (x)$ for all $x \in X_n$.
\end{proposition}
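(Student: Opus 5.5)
The plan is to build $h$ in two stages: first as a homomorphism on the free absorptive semiring $\Sorb(X_n)$, and then by passing to the quotient $\Sorb(X_n^+, X_n^-)$.

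For the first stage I would use the universal property of $\Sorb(X_n)$ as the free absorptive commutative semiring over $X_n$ (proved in \cite{GraedelTan24}). The candidate is the evaluation map. A monomial $m=\prod_x x^{e_x}$ is sent to $\prod_x f(x)^{e_x}$, and an antichain $P$ of monomials is sent to $\sum_{m\in P} f(m)$; we also set $h(0)=0$ and $h(1)=1$.

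The first thing to check is that $h$ respects the operations of $\Sorb(X_n)$. Addition there is union followed by reduction to the $\succeq$-maximal monomials. Multiplication is pairwise multiplication of monomials, again followed by reduction. Applying the plain polynomial evaluation to the unreduced sums gives a homomorphism. Reduction does not change the value, because whenever $m_1\succeq m_2$ we have $m_2=m\cdot m_1$. By absorption in $\Semi$,
\[
f(m_1)+f(m_2)=f(m_1)+f(m_1)f(m)=f(m_1).
\]
Idempotence of addition in $\Semi$ handles monomials that appear twice. Uniqueness is immediate: every element of $\Sorb(X_n)$ is a finite sum of finite products of variables, so any homomorphism extending $f$ must agree with this evaluation.

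For the second stage, the quotient $\Sorb(X_n^+,X_n^-)$ is taken modulo the congruence $\sim$ generated by the relations $x_\alpha x_{\lnot\alpha}=0$. Consistency of $f$ gives $h(x_\alpha x_{\lnot\alpha})=f(x_\alpha)f(x_{\lnot\alpha})=0=h(0)$. So every generating pair of $\sim$ lies in the kernel congruence $\{(p,q)\mid h(p)=h(q)\}$. This kernel is itself a congruence, since $h$ is a homomorphism, and so it contains the generated congruence $\sim$. Hence $h$ factors through the quotient map, which gives a homomorphism $\bar h$ with $\bar h(x)=f(x)$. Uniqueness passes to the quotient because the quotient map is surjective.

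The main obstacle is the first stage: verifying carefully that the reduced (antichain) representation is compatible with evaluation. Concretely, the value of a sum or product must not depend on whether absorbed monomials are discarded before or after evaluating. This is exactly where absorptivity of $\Semi$ is needed. It can fail in a non-absorptive semiring such as $\N$, where discarding $x^2$ in the presence of $x$ would change the value.
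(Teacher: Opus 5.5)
Your proof is correct. Evaluating antichains of monomials in $\Semi$ is well defined and respects both operations precisely because absorption, together with the idempotence it implies, makes discarding absorbed monomials harmless. Consistency of $f$ then puts the generating pairs $x_\alpha x_{\lnot\alpha}\sim 0$ into the kernel congruence, so the map factors uniquely through $\Sorb(X_n^+,X_n^-)$.

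The paper gives no proof of its own and simply cites \cite{GraedelTan24}. Your two-stage route (the universal property of the free absorptive semiring $\Sorb(X_n)$, followed by factoring through the congruence) is the standard argument behind that cited result.
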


For every $n \in \Nat$, we consider the $\Sorb (X_n^+, X_n^-)$-interpretation $\pi_n$ over universe $[n]$ where $\pi_n (\alpha) = x_\alpha$ for $\alpha \in \operatorname{Lit}_{[n]} (\tau)$. Each such $\pi_n$ evaluates a strategy $\Tt \in C_n (\psi)$ to a monomial that describes which atomic facts are used in $\Tt$ and how often.
If we now want to argue that for \emph{all} $\Semi$-interpretations of universe $[n]$ a strategy $\Tt$ yields a valuation which is at most the valuation of a strategy $\Tt^*$, we can compare the monomials $m \coloneqq \pi_n \llb \Tt \rrb$ and $m^* \coloneqq \pi_n \llb \Tt^* \rrb$ and prove that $m \preceq m^*$ (recall that this means that each variable's exponent in $m^*$ is always at most its exponent in $m$).
This is because every $\Semi$-interpretation $\pi$ over universe $[n]$ corresponds to a model-defining variable assignment $f \colon X_n \to \Semi$ and hence a homomorphism $h\colon \Sorb (X_n^+, X_n^-) \to \Semi$. By the fundamental property (\cref{fundamental-property}) we have $(h \circ \pi_n) \llb \Tt \rrb = h(m) \leq h(m^*) = (h \circ \pi_n) \llb \Tt^* \rrb$ as homomorphisms preserve the natural order. Because multiplication is strictly decreasing on $\Semi \setminus \{0,1\}$, we can argue analogously that $\pi \llb \Tt \rrb < \pi \llb \Tt^* \rrb$ for each $\Semi$-interpretation $\pi$ with $1\not\in\img(\pi)$ whenever $m \prec m^*$.

\subsection{Strategy Translation}

Recall that our goal is to show that $\forall$ must be redundant in every finite extension preserved sentence where $\forall$ only occurs non-trivially. To prove this claim, we suppose that $\forall$ was not redundant in a finite extension preserved sentence $\psi$. Starting from a non-existential strategy $\Tt$ optimal for an $\Semi$-interpretation $\pi$,
we aim to infer a contradiction by disproving finite extension preservation of $\psi$.
Let $\Tt$ be a strategy optimal for $\pi$ and $\psi$ and $\foraneq y \phi (\bar x, y)$ be a subformula used in $\Tt$.
Now the idea is to remove some element $a$ from the universe of $\pi$ and construct a strategy $\Tt^*$ for the resulting subinterpretation $\pi^*$ by removing the substrategies from $\Tt$ which instantiate the universally quantified variable $y$ with $a$. Based on the fact that multiplication is strictly decreasing on on $\Semi \setminus \{0,1\}$, we want to argue that this strictly increases the valuation of the strategy, i.e. that $\pi^* \llb \psi \rrb \geq \pi^* \llb \Tt^* \rrb > \pi \llb \Tt \rrb = \pi \llb \psi \rrb$, which would disprove extension preservation as desired.
There are two intricacies to this. Firstly, we have to make sure that we start with an $\Semi$-interpretation $\pi$ that neither evaluates $\psi$ to $0$, nor evaluate the substrategies we remove during the strategy translation to~$1$ in order to actually increase the strategy's valuation. This is what \cref{subsec:One-valuations} will be devoted to. Secondly, $a$ may also occur as a \emph{witness}, that is, instantiation of an existential quantifier in $\Tt$.
In order to obtain a valid strategy for $\pi^*$, we have to relabel such occurrences of $a$ while still making sure that we only increase the valuation of the resulting strategy. 
We argue in the \emph{strategy translation lemma} that such relabelling is possible whenever $a$ as well as sufficiently many elements $b_1, \dots, b_r$ do not occur inside the literals in the leaves of $\Tt$. This is the key new technique we develop for this proof.

\begin{lemma}[Strategy Translation] \label{lem-relationship-between-polynomials}
    Let $\Semi$ be absorptive and $\psi \in \FO^{\neq}$ with $r \coloneqq \qr(\psi)$. For all $n > 2^{|\psi|+1}+ r$ and all strategies $\Tt \in C_{n+r+1} (\psi)$ such that $\supp(\pi_{n+r+1} \llb \Tt \rrb) \subseteq X_n$, there is some $\Tt^* \in C_{n+r} (\psi)$ such that 
     $\pi_{n+r+1} \llb \Tt \rrb \leq \pi_{n+r} \llb \Tt^* \rrb$.

     Further, for every $v \in V$ such that $\lambda (v) = \foraneq y \phi (\bar a, y)$ for some $\phi (\bar a, y) \in \FO_{[n]}^{\neq}$, there is some $w \in vE$ such that 
     $\pi_{n+r+1} \llb \Tt \rrb \leq \pi_{n+r} \llb \Tt^* \rrb \cdot \pi_{n+r+1} \llb \Tt (w) \rrb$ for the substrategy $\Tt (w)$ of $\Tt$ that is rooted at $w$.
\end{lemma}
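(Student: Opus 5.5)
The plan is to obtain $\Tt^*$ from $\Tt$ by a purely combinatorial relabelling that never introduces new leaves. We then compare monomials: since $\Sorb(X_n^+,X_n^-)$ is absorptive, a monomial $m$ with $m = m' \cdot m''$ satisfies $m \preceq m'$. So it suffices to show that the multiset of leaf literals of $\Tt^*$ is a sub-multiset of the leaf literals of $\Tt$. For the second claim we show that the leaves of $\Tt^*$ and those of $\Tt(w)$ together form a sub-multiset of the leaves of $\Tt$.

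\textbf{Setup.} Write $a \coloneqq n+r+1$ for the element to be removed, and $F \coloneqq \{n+1,\dots,n+r\}$ for the remaining \enquote{fresh} elements. The hypothesis $\supp(\pi_{n+r+1}\llb \Tt\rrb) \subseteq X_n$ says that no leaf literal of $\Tt$ mentions $a$ or any element of $F$. Every node of $\Tt$ is labelled by an instantiation with pairwise distinct elements, and along any branch at most $r = \qr(\psi)$ elements are instantiated.

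\textbf{Construction of $\Tt^*$.} We process $\Tt$ top-down and treat two kinds of nodes.
\begin{enumerate}[(i)]
\item At a node labelled $\exneq x\,\vartheta(\bar c,x)$ whose chosen witness is $a$, the tuple $\bar c$ contains at most $r-1$ elements. Hence some $b\in F$ does not occur in $\bar c$. We replace the whole subtree below this node by its image under the transposition $(a\ b)$ of the universe. This is again a subtree of $C_{n+r+1}(\psi)$ that respects the strategy conditions, because the game graph is invariant under permutations of $[n+r+1]$. Its leaf labels are unchanged, since the leaves mention neither $a$ nor $b$.
\item At a node labelled $\foraneq y\,\vartheta(\bar c,y)$, we delete the successor with $y=a$, if it is present.
\end{enumerate}
Doing (i) before recursing ensures that, in the final tree, $a$ occurs only as a universal instantiation, and these occurrences are exactly the ones removed in (ii). The outcome is a strategy in $C_{n+r}(\psi)$. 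Indeed, every universal node then has exactly the successors $[n+r]\setminus\bar c$. Note that after the swap, the branch that originally instantiated $y=b$ now instantiates $y=a$ and is dropped. Every existential or disjunction node keeps exactly one successor. Since we only permute labels (fixing leaves) and delete subtrees, the leaf multiset of $\Tt^*$ is contained in that of $\Tt$. This gives $\pi_{n+r+1}\llb\Tt\rrb \le \pi_{n+r}\llb\Tt^*\rrb$.

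\textbf{Second claim.} Let $v$ be labelled $\foraneq y\,\phi(\bar a',y)$ with $\bar a'\subseteq[n]$. No ancestor of $v$ instantiates $a$, because all instantiations on the path to $v$ appear in $\bar a'\subseteq[n]$. So $v$ is not inside any transposed subtree. In $\Tt$ the successor $w$ of $v$ with $y=a$ exists, and step (ii) deletes the entire subtree $\Tt(w)$. The leaves of $\Tt^*$ therefore come from $\Tt$ minus $\Tt(w)$. Hence $\pi_{n+r+1}\llb\Tt\rrb$ is a multiple of $\pi_{n+r}\llb\Tt^*\rrb\cdot\pi_{n+r+1}\llb\Tt(w)\rrb$, which gives the inequality by absorption.

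\textbf{Main obstacle.} The delicate step is (i). We must check that the transposed subtree is still a legal strategy, i.e.\ that it avoids clashes with instantiations already fixed above it. We must also check that the subsequent deletion in (ii) keeps universal nodes complete relative to the universe $[n+r]$. The freshness of $b$ relative to the current parameters, together with invariance of $C$ under automorphisms of the universe, should handle both. The bound $n>2^{|\psi|+1}+r$ is not used in this plan, which only needs $|F| = r$ to supply an unused $b$. If a finer argument turns out to be needed, for instance if several witnesses $a$ below one another must be sent to distinct fresh elements chosen uniformly, we would use that bound to pick fresh elements from $[n]$ avoiding every element that occurs in the leaves. The leaves of a strategy, read per branch type, involve at most about $2^{|\psi|}$ instantiation patterns.
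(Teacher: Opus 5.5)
Your proposal is correct and follows essentially the paper's route. The paper's top-down relabelling functions $g_v$ replace the witness $n+r+1$ by a fresh element of $\{n+1,\dots,n+r\}$ outside the current parameters, cascade when that element reappears further down, and delete universal successors instantiated with the element currently being eliminated; your subtree transpositions $(a\ b)$ encode exactly this, in somewhat cleaner bookkeeping, and both arguments then conclude from the unchanged leaves over $[n]$ together with absorption.
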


Due to its technical complexity, we delay the proof of the strategy translation lemma and first illustrate how we apply it, formalising the idea described above.

\begin{lemma} \label{lem:counterexample-if-irredundant}
    Let $\Semi$ be absorptive such that $s \cdot t < s$ for all $s,t \in \Semi\setminus \{0,1\}$.
    If there is an $\Semi$-interpretation of universe $[k]$ where $k \geq 2\cdot(2^{|\psi|}+\qr(\psi)+1)$ and a strategy $\Tt=(V,E,\lambda)$ optimal for $\pi$ and $\psi$ such that
    \begin{enumerate}[(1)]
        \item $\pi \llb \psi \rrb = \pi \llb \Tt \rrb \neq 0$,
        \item there is some $v \in V$ such that $\lambda (v)=\foraneq y \phi (\bar a, y)$ for some $\foraneq y \phi (\bar a, y) \in \FO_{[k]}^{\neq}$ such that $\pi \llb \Tt(w) \rrb \neq 1$ for all $w \in vE$, and
        \item $\qr(\psi)+1$ elements do not occur in literals in leaves of $\Tt$,
    \end{enumerate}
    then $\psi$ cannot be finite extension preserved. 
    More precisely, $\pi$ has a subinterpretation violating extension preservation that contains a single element less that $\pi$.
\end{lemma}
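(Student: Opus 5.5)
Without loss of generality, rename the universe of $\pi$ so that it is $[n+r+1]$ with $r \coloneqq \qr(\psi)$ and $n \coloneqq k-r-1$. By the bound $k \geq 2\cdot(2^{|\psi|}+r+1)$ we then have $n > 2^{|\psi|+1}+r$. By hypothesis (3), $r+1$ elements do not occur in any literal at a leaf of $\Tt$. We rename them to be $n+1,\dots,n+r+1$, so that the element to be deleted is $n+r+1$. After this renaming, the monomial $\pi_{n+r+1}\llb \Tt \rrb$ only involves variables $x_\alpha$ with $\alpha$ over $[n]$, that is, $\supp(\pi_{n+r+1}\llb\Tt\rrb)\subseteq X_n$.

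We may also assume that the node $v$ from (2) is labelled by a formula whose free instantiation $\bar a$ lies in $[n]$, as the strategy translation lemma requires. We have to be careful here, because $\bar a$ might contain witnesses among the unused elements. I would handle this by choosing the renaming so that the $r+1$ free elements avoid the instantiation $\bar a$ of the chosen node $v$. If $\bar a$ does contain unused elements, the translation lemma still applies after relabelling, since $\bar a$ has at most $r$ entries and the bound on $k$ leaves room. This is the step where I expect bookkeeping trouble.

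Now apply \cref{lem-relationship-between-polynomials}. This gives $\Tt^*\in C_{n+r}(\psi)$ and some $w\in vE$ with
\[
\pi_{n+r+1}\llb\Tt\rrb \leq \pi_{n+r}\llb\Tt^*\rrb\cdot \pi_{n+r+1}\llb\Tt(w)\rrb .
\]
Let $\pi^*$ be the subinterpretation of $\pi$ on $[n+r]$. Both $\pi$ and $\pi^*$ arise from model-defining assignments $f\colon X_{n+r+1}\to\Semi$ and its restriction, which yield homomorphisms $h$ and $h^*$ from the respective polynomial semirings. By the universal property and \cref{fundamental-property}, applying these homomorphisms preserves the inequality, and $h$ and $h^*$ agree on the variables occurring in $\Tt^*$. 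Hence
\[
\pi\llb\Tt\rrb\leq \pi^*\llb\Tt^*\rrb\cdot \pi\llb\Tt(w)\rrb .
\]

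It remains to obtain strictness. Set $s\coloneqq\pi^*\llb\Tt^*\rrb$ and $t\coloneqq\pi\llb\Tt(w)\rrb$. By (2), $t\neq 1$. We have $s\neq 0$, since $s\cdot t\geq\pi\llb\Tt\rrb\neq 0$ by (1), and this also gives $t\neq0$. If $s=1$, then $\pi\llb\Tt\rrb\le t<1=s$. Otherwise $s,t\in\Semi\setminus\{0,1\}$, and the hypothesis $s\cdot t<s$ gives $\pi\llb\Tt\rrb<s$. In either case, by \cref{lem:sum-of-proof-trees} and the idempotence of addition,
\[
\pi^*\llb\psi\rrb\geq s>\pi\llb\Tt\rrb=\pi\llb\psi\rrb .
\]
Since $\pi^*\subseteq\pi$ and $\pi^*$ has exactly one element less than $\pi$, the pair $\pi^*\subseteq\pi$ violates extension preservation on finite interpretations. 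The main obstacle is arranging the hypotheses of \cref{lem-relationship-between-polynomials}, namely the support condition and the requirement $\bar a\subseteq[n]$, via the renaming.
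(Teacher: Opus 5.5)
Your proposal follows the paper's proof: move the $r+1$ unused elements to $n+1,\dots,n+r+1$, apply the Strategy Translation Lemma, push the inequality through the homomorphism induced by $\pi$, and use that multiplication is strictly decreasing. Your check that $s,t\neq 0$ and your separate treatment of the case $s=1$ fill in details the paper leaves implicit.

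The extra step forcing $\bar a\subseteq[n]$ is not needed. The paper applies the translation lemma to the given node $v$ directly, and the proof of that lemma treats universal nodes with arbitrary $\bar a\subseteq[n+r+1]$. This matters because your renaming remedy would not work in general: hypothesis (3) may supply exactly $r+1$ unused elements, and some of them can occur in $\bar a$.
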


\begin{proof}
    Suppose there was such an $\Semi$-interpretation $\pi$ and a strategy $\Tt$. Let $r \coloneqq \qr(\psi)$ and $n>2^{|\psi|+1}+r$ be such that $k=n+r+1$. Assume without loss of generality that the elements $n+1, \dots, n+r+1$ do not occur in the literals used in $\Tt$. Hence, $\Tt \in C_{n+r+1} (\psi)$ and $\supp (\pi_{n+r+1} \llb \Tt \rrb) \subseteq X_n$, so we can apply the Strategy Translation Lemma to $\Tt$, which yields a strategy $\Tt^* \in C_{n+r} (\psi)$. Now let $\pi^* \subseteq \pi$ be the subinterpretation of $\pi$ induced by $[n+r]$. We claim that $\pi^* \llb \psi \rrb \geq \pi^* \llb \Tt^* \rrb > \pi \llb \Tt \rrb = \pi \llb \psi \rrb$.
    Let $h \colon \Sorb(X_{n+r+1}^+, X_{n+r+1}^-) \to \Semi$ be the homomorphism induced by $\pi$ and 
    $v \in V$ such that $\lambda (v) = \foraneq y \phi (\bar a, y)$ and $\pi \llb \Tt(w) \rrb \neq 1$ for all $w \in vE$, which exists by assumption (3). 
    By the Strategy Translation Lemma, there must be some $w \in vE$ such that $\pi_{n+r+1} \llb \Tt \rrb \leq \pi_{n+r} \llb \Tt^* \rrb \cdot \pi_{n+r} \llb \Tt (w) \rrb$, 
    and since homomorphisms are order-preserving, we get the desired inequality:
    \begin{align*}
        \underbrace{\pi \llb \Tt \rrb}_{\neq 0} = h (\pi_{n+r+1} \llb \Tt \rrb ) &\leq h(\pi_{n+r} \llb \Tt^* \rrb ) \cdot h( \pi_{n+r+1} \llb \Tt (w) \rrb) \\
        &= \pi^* \llb \Tt^* \rrb \cdot \underbrace{\pi \llb \Tt(w) \rrb}_{\neq 1} < \pi^* \llb \Tt^* \rrb. \qedhere
    \end{align*}
\end{proof}

\begin{proof} (of the Strategy Translation Lemma)
        Let $\Tt=(V,E, \lambda) \in C_{n+r+1} (\psi)$ be such that $\supp(\pi_{n+r+1} \llb \Tt \rrb) \subseteq X_n$.
        Although $n+r+1$ does not occur inside any literal used in $\Tt$, it might still occur as a witness, i.e. there may be $(v,w) \in E$ such that $\lambda (v) = \exneq y \phi(\bar a, y)$ and $\lambda (w) =\phi(\bar a, n+r+1)$ for some $\exneq y \phi (\bar a, y) \in \FO_{[n+r+1]}^{\neq}$. So in order to construct a strategy $\Tt^*\in C_{n+r}(\psi)$ it does not suffice to remove substrategies $\Tt(w)$ where $(v,w) \in E$ such that $\lambda (v) = \foraneq y \phi(\bar a, y)$ and $\lambda (w) =\phi(\bar a, n+r+1)$, but all occurrences of $n+r+1$ as a witness have to be relabelled. Recall that in the definition of strategies we require the instantiations that are chosen in a play (i.e. along a path) to be pairwise distinct in order to reflect the semantics of the quantifiers $\exneq, \foraneq$. Thus, we have to find a relabelling that preserves this property.
        Whenever we replace $n+r+1$ with some element $n+i$, we can be sure to not change the valuation of the strategy because, by assumption, $n+i$ does not occur inside any literal used in the strategy either.
        Due to the potential nesting of universal and existential quantifiers, it might be that every such element $n+i$ is a witness in $\Tt$. In this case we are not be able to find a global substitute $n+i$ to replace $n+r+1$ with (this is only possible if $\Tt$ is existential).
		However, if $n+r+1$ occurs as a witness in distinct substrategies of $\Tt$, we do not have to replace it by the same element in order to obtain a strategy $\Tt^* \in C_{n+r} (\psi)$;
        we only need to make sure that (in)equalities along paths are respected.
		Hence, we inductively define a relabelling function $g_v \colon [n+r+1] \to [n+r+1]$ for each vertex $v \in \Tt$ (from the root to the leaves). The idea is as follows: 
        Whenever $n+r+1$ occurs as a witness of a formula $\exneq x \phi(\bar a, n+r+1)$, we replace it with some element $n+i$ which does not occur in $\bar a$ in the remaining substrategy. Because $|\bar a|$ is bounded by $r$, such $n+i$ must always exist.
        It may happen that $n+i$ has already occurred in this substrategy, so we further relabel the nodes in this subtree by proceeding analogously for $n+i$ rather than $n+r+1$.
        
        To define the relabelling functions formally, let $g[i \mapsto j]$ be the function that agrees with $g$ up to $i$, which is mapped to $j$.
		We set $g_{r(\Tt)} = \operatorname{id}_{[n+r+1]}$, where $r(\Tt)$ is the root of $\Tt$, and suppose that $g_v$ is already defined. Now if
		\begin{itemize}
			\item $\lambda (v) = \exneq y \phi (\bar a, y)$, we let $i \coloneqq \min \{g_v(x) \mid g_v(x)\neq x\}\cup \{n+r+1\}$ ($i$ is the element we currently want to eliminate). For the unique $w \in vE$, we set $g_w \coloneqq g_v$ if $\lambda (w) \neq \phi(\bar a, i)$ and $g_w \coloneqq g_v[i \mapsto j]$ where $j \coloneqq \max \{k \in [n+r] \mid k \not\in \bar a \cup \{g_v(x) \mid g_v(x)\neq x\}\}$ otherwise.
			\item in all other cases, we set $g_w \coloneqq g_v$ for all $w \in vE$.
		\end{itemize}
		Using the relabelling functions $g_v$, we construct $\Tt^*=(V^*, E^*, \lambda^*)$ from $\Tt$ by
		\begin{itemize}
			\item removing the subtree rooted at a node labelled $\phi (\bar a, i)$, where $i \coloneqq \min \{g_v(x) \mid g_v(x)\neq x\} \cup \{n+r+1\}$, if its predecessor is labelled $\foraneq y \phi (\bar a, y)$ and
			\item relabelling each node $v$ of $\Tt$ by $g_v(\lambda (v))$.
		\end{itemize}
		To verify that $\Tt^*$ is a strategy for $\psi$ and $n+r$, we only need to take care of the instantiations (because everything else is just copied from $\Tt$). 
		It follows inductively that for each node $v$ of $\Tt$ where $\lambda (v) = \phi (a_1, \dots, a_k)$ the mapping $g_v$ has the following properties:
		\begin{enumerate}[(1)]
			\item $g_v(n+r+1) \in [n+r]$ if $n+r+1$ occurs as a witness along the path from $r(\Tt)$ to $v$, and $g_v = \operatorname{id}_{[n+r+1]}$ otherwise,
			\item $\min \{g_v(x) \mid g_v(x) \neq x\}  > n$,
			\item $g_v$ is injective on $[n+r+1] \setminus \{i\}$ where $i \coloneqq \min \{g_v(x) \mid g_v(x)\neq x\} \cup \{n+r+1\}$ and
			\item $g_v$ is injective on $\{a_1, \dots, a_k\}$.
		\end{enumerate}
		
		Due to (1), ${n+r+1}$ cannot occur in $\lambda^* (v)$ for any node $v$ of $\Tt^*$: Either it is not contained in $\lambda (v)$ and this follows from $g_v = \operatorname{id}_{[n+r+1]}$, 
        or it must occur as an instantiation of an existentially quantified variable in $\lambda(v)$ (otherwise $v$ would not be contained in $\Tt^*$), but then it is relabelled by $g_v (n+r+1) \neq n+r+1$ in $\lambda^*(v)$.
		Further, the successors of a node $v$ with $\lambda^*(v) = \foraneq y \phi (\bar a, y)$ must be labelled $\phi (\bar a, 1), \dots, \phi (\bar a, {n+r})$ because $v$ has $n+r$ successors in $\Tt^*$ and, by (3), $\lambda (w_1) \neq \lambda (w_2)$ holds for $w_1 \neq w_2 \in vE$.
		Together with (4), this implies that $\Tt^*$ is a strategy for $\psi$ in $n+r$.
		
		To prove $\pi_{n+r+1} \llb \Tt \rrb \leq \pi_{n+r} \llb \Tt^* \rrb$, we show that $\pi_{n+r} ( \lambda^* (v) ) = \pi_{n+r+1}(\lambda (v))$ for each leaf $v$ of $\Tt^*$. 
		Since $\supp(\Tt) \subseteq X_n$, $\lambda (v)$ is a literal in $[n]$. Now (2) ensures that $\pi_{n+r} ( \lambda^* (v) ) = \pi_{n+r} ( g_v(\lambda (v)) )= \pi_{n+r+1}(\lambda (v))$.

        Let $v \in V$ be such that $\lambda (v) = \foraneq y \phi (\bar a, y)$ for some $\phi(\bar a, y) \in \FO_{[n+r+1]}^{\neq}$.
		When constructing $\Tt^*$ from $\Tt$, the subtree $\Tt(w)$ is removed from $\Tt$, where $\lambda (v) = \phi (\bar a, i)$ and $i \coloneqq \min \{g_v(x) \mid g_v(x)\neq x\} \cup \{n+r+1\}$. Hence, $\pi_{n+r+1} \llb \Tt \rrb \leq \pi_{n+r} \llb \Tt^* \rrb \cdot \pi_{n+r+1} \llb \Tt (w) \rrb$.
		\qedhere
\end{proof}

So in order to prove that finite extension preservation implies redundancy of $\forall$ via contradiction, we assume that redundancy of $\forall$ is violated in some finite extension preserved $\psi$, and show that this implies the existence of an $\Semi$-interpretation that meets the conditions from \cref{lem:counterexample-if-irredundant}. Just from the failure redundancy of $\forall$ in $\psi$, we only know the existence of arbitrarily large counterexamples to redundancy of $\forall$ in $\psi$. It remains to prove that we can find a sufficiently large such counterexamples to redundancy that additionally satisfy condition (1)-(3) in \cref{lem:counterexample-if-irredundant}. The first step will be to justify in \cref{lem:r-el-dont-occur-in-counterex-redundancy} that condition (1) and (3) can be met due to finite extension preservation and the fact that there is not a unique minimal positive element in $\Semi \in \{\Vit, \Trop, \Lukas, \Doubt\}$. \cref{subsec:One-valuations} will be devoted to condition (2).

\begin{lemma} \label{lem:r-el-dont-occur-in-counterex-redundancy}
    Let $\psi \not\equiv^{\geq m} \bot$ for each $m \in \Nat$ be finite extension preserved in $\Semi$. If $\forall$ is not redundant in $\psi$, then there are arbitrarily large counterexamples $\pi$ to redundancy of $\forall$ in $\psi$ such that $\pi \llb \psi \rrb >0$ at least $\qr(\psi)+1$ elements do not occur inside the literals in the leaves of any optimal strategy.
\end{lemma}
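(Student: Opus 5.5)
The plan is to start from an arbitrary large counterexample $\pi$ to redundancy of $\forall$ in $\psi$ and repair it in two steps. First we ensure $\pi\llb\psi\rrb>0$, then we pad the universe with fresh elements that no optimal strategy can use in its leaves. Both repairs will use finite extension preservation together with the fact that the value of a strategy in $\Semi \in \{\Vit,\Trop,\Lukas,\Doubt\}$ is a finite product of literal values.

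\textbf{Positivity.} Since $\forall$ is not redundant, there are counterexamples $\pi$ of every size $\geq m$. Since $\psi \not\equiv^{\geq m}\bot$ for each $m$, there are also arbitrarily large $\rho$ with $\rho\llb\psi\rrb>0$. If a counterexample has value $0$, then every strategy is optimal for it, including existential ones whenever they exist. So a counterexample with value $0$ is only possible when $\psi$ admits no existential strategy at all.

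I will handle the two situations separately.
\begin{itemize}
\item If $\psi$ admits existential strategies, every counterexample automatically satisfies $\pi\llb\psi\rrb>0$.
\item If it admits none, any large $\rho$ with $\rho\llb\psi\rrb>0$ is a counterexample, because all its optimal strategies are then non-existential.
\end{itemize}
In either case we get arbitrarily large counterexamples $\pi$ with $\pi\llb\psi\rrb>0$.

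\textbf{Fresh elements.} Take such a $\pi$ over universe $A$ and extend it by $r+1$ new elements, where $r=\qr(\psi)$. All literals involving a new element receive values in $(0,1)$; for model-definingness the complementary literal gets value $0$. These values are chosen extremely close to $0$ in the natural order: small reals for $\Vit$, values near $1$ for $\Doubt$, and correspondingly for $\Trop$ and $\Lukas$. Call the result $\pi'$.

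By extension preservation, $\pi'\llb\psi\rrb\geq\pi\llb\psi\rrb>0$. Every strategy over $\pi'$ has value at most the product of its leaf literals, so a strategy that uses a literal containing a new element has value at most that literal's tiny value. This uses absorptivity, i.e.\ that multiplication is decreasing. Since there are only finitely many strategies, we can choose the fresh values small enough that all such strategies are strictly worse than $\pi\llb\psi\rrb$. Hence every optimal strategy for $\pi'$ avoids literals with new elements, and the $r+1$ new elements occur in no leaf of any optimal strategy.

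It remains to check that $\pi'$ is still a counterexample to redundancy. Suppose some existential strategy $\Tt'$ were optimal for $\pi'$. It uses no literal with a new element, but it may use new elements as witnesses. In that case I will apply the relabelling of the Strategy Translation Lemma (\cref{lem-relationship-between-polynomials}), or a simpler direct relabelling, which works because existential strategies have no branching over universal instantiations. The new witnesses are replaced by elements of $A$ that do not occur on the relevant paths. These exist once $|A|$ is large relative to $r$, and the replacement must not change the leaf literals.

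This relabelling is the main obstacle: swapping a witness can introduce a literal over $A$ with a different value. I would therefore try to swap only elements that occur in no leaf, which requires $A$ to contain many elements unused by the leaves of $\Tt'$. That property can itself be secured by first iterating the padding step. Once the swap succeeds, it yields an existential strategy for $\pi$ of value $\geq\pi'\llb\psi\rrb\geq\pi\llb\psi\rrb$, contradicting that $\pi$ is a counterexample. Hence $\pi'$ remains a counterexample, with the required $r+1$ unused elements and positive value.
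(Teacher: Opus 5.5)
\paragraph{Overall structure.} Your plan matches the paper's. You make the counterexample positive, and your case split on whether existential strategies exist is a valid repackaging of the paper's step of replacing $\pi$ by any positive interpretation of the same size. You pad by $r+1$ fresh elements whose positive literals get a value below $\pi\llb\psi\rrb$. You use extension preservation and absorption to show that optimal strategies for $\pi'$ avoid fresh literals. Finally you argue that $\pi'$ is still a counterexample. One small overstatement: non-redundancy gives arbitrarily large counterexamples, not counterexamples of every size $\geq m$. You only use the former, so this is harmless.

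\paragraph{The gap.} It lies in the last step, which you flag yourself as the main obstacle.
\begin{itemize}
\item Your fix, ``first iterating the padding step,'' does not work. Padding only adds elements outside $A$, so it cannot make elements of $A$ unused by $\Tt'$. Any iterated padding would also have to be shown to remain a counterexample by exactly the argument you are trying to complete.
\item Being unused in the leaves is not enough for a swap $b \leftrightarrow a$. If $a$ occurs as a witness elsewhere in $\Tt'$, the swap turns that occurrence into a fresh element, so the result is no longer a strategy over $A$. A one-way replacement $b \mapsto a$ can instead break pairwise distinctness along a path.
\item The Strategy Translation Lemma does not help directly either. It strips only the single element $n+r+1$ and needs $r$ further elements outside the support. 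So it cannot be iterated down to $A$ unless you already know that $A$ has enough unused elements.
\end{itemize}

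\paragraph{The missing idea.} It is a simple counting bound. An existential strategy branches only at conjunction nodes, binarily, and has depth at most $|\psi|$. So $\Tt'$ has fewer than $2^{|\psi|+1}$ nodes and mentions fewer than $2^{|\psi|+1}$ elements. Counterexamples can be taken arbitrarily large, so choose $|A| = n > 2^{|\psi|+1}+r$. Then there are $i_1,\dots,i_{r+1}\in A$ that occur nowhere in $\Tt'$, neither as witnesses nor in literals. Swapping $n+j \leftrightarrow i_j$ is a bijective renaming, so distinctness along every path is preserved. No leaf literal changes, because neither $n+j$ nor $i_j$ occurs in one. This gives an existential strategy for $\pi$ of value $\pi'\llb\psi\rrb \geq \pi\llb\psi\rrb$, which contradicts $\pi$ being a counterexample. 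With this bound the obstacle disappears and no iteration is needed.
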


\begin{proof}
    First note that $\psi \not\equiv^{\geq m}_\Semi \bot$ for each $m \in \Nat$, together with the fact that $\psi$ is finite extension preservation in $\Semi$, implies that there must be some $m_0$ such that $\psi \not\equiv^{m}_\Semi \bot$ for all $m \geq m_0$.
    Fix some $n_0 \in \Nat$. Because $\forall$ is not redundant in $\psi$, there must be a counterexample $\pi$ to redundancy of $\forall$ of universe $[n]$ where $n > (n_0,m_0, 2^{|\psi|+1}+\qr(\psi))$. 
    We first justify that we can assume without loss of generality that $\pi \llb \psi \rrb > 0$: Suppose that $\pi \llb \psi \rrb =0$. Then we must have $\pi \llb \Tt \rrb =0$ for all strategies $\Tt \in C_n (\psi)$, but this means that every such $\Tt$ is optimal for $\pi$ and can thus not be existential. In this case, every $\Semi$-interpretation of universe $[n]$ is a counterexample to redundancy in $\forall$, and we can replace $\pi$ with some arbitrary $\Semi$-interpretation which does not evaluate $\psi$ to $0$. Such an $\Semi$-interpretation must exists by $\psi \not\equiv^{n}_\Semi \bot$ (this is why we insisted on $n\geq m_0$).
    
    Hence, we can fix some $0< v< \pi \llb \psi \rrb$ and define an extension $\pi_{\text{ext}} \supseteq \pi$ of universe $[n+r+1]$ such that $\pi_{\text{ext}} (\alpha) = v$ and $\pi_{\text{ext}} (\lnot \alpha) = 0$ for $\alpha \in \text{Atoms}_{[n+r+1]\setminus [n]}(\tau)$. 
    We claim that $\pi_{\text{ext}}$ has the required properties. By finite extension preservation in $\Semi$ and the fact that $\pi \llb \psi \rrb > 0$, we have $\pi_{\text{ext}} \llb \psi \rrb > 0$.
    Let $\Tt_{\text{ext}}$ be a strategy optimal for $\pi_{\text{ext}}$ and $\psi$. $\Tt_{\text{ext}}$ cannot use any literal $L \in \Lit_{[n+r+1]\setminus [n]} (\tau)$, otherwise extension preservation would be violated due to $\pi \llb \psi \rrb > v \geq \pi_{\text{ext}} \llb \Tt_{\text{ext}} \rrb = \pi_{\text{ext}} \llb \psi \rrb$. It remains to argue that $\Tt_{\text{ext}}$ cannot be existential.
    So suppose it was. We translate $\Tt_{\text{ext}}$ into an existential strategy $\Tt$ with $\pi \llb \Tt \rrb = \pi_{\text{ext}} \llb \Tt_{\text{ext}} \rrb$ to infer a contradiction. Note that although the element $n+1, \dots, n+r+1$ are not used in a relational atom in $\Tt_{\text{ext}}$ they may still occur as the instantiation of an existentially quantified variable, so have to be replaced by elements from $[n]$.
    The depth of $\Tt_{\text{ext}}$ is $|\psi|$ and its branching degree is at most $2$. Hence, $\Tt$ has less than $2^{|\psi|+1} < n-r$ nodes, and there must be elements $i_1, \dots, i_r \in [n]$ which does not occur at all in $\Tt_{\text{ext}}$. Hence, swapping each occurrence of $n+j$ in $\Tt_{\text{ext}}$ with $i_j$ respects all equalities atoms that may occur in $\Tt_{\text{ext}}$. Further, neither $i_j$ nor $n+j$ occurs inside a relational atom in $\Tt_{\text{ext}}$, which ensures $\pi \llb \Tt \rrb = \pi_{\text{ext}} \llb \Tt_{\text{ext}} \rrb$. But, by extension preservation, $\pi \llb \Tt \rrb = \pi_{\text{ext}} \llb \Tt_{\text{ext}} \rrb = \pi_{\text{ext}} \llb \psi \rrb \geq \pi \llb \psi \rrb$, so $\Tt$ would be an existential strategy optimal for $\pi$ and $\psi$, a contradiction.
\end{proof}

Note that although we do not necessarily need a counterexample to redundancy of $\forall$ in $\psi$ to apply \cref{lem:counterexample-if-irredundant}, but only an $\Semi$-interpretation for which \emph{some} strategy is optimal for $\psi$, the existence of sufficiently large counterexamples to redundancy of $\forall$ was necessary to prove \cref{lem:r-el-dont-occur-in-counterex-redundancy}, so to argue that we can find some optimal strategy which does not use $\qr(\psi)+1$ many elements in its literals. Otherwise, we had not been able to exclude the case that every strategy optimal for the extension $\pi_{\text{ext}}$ we construct in \cref{lem:r-el-dont-occur-in-counterex-redundancy} is existential.

\subsection{Excluding $1$-valuations} \label{subsec:One-valuations}

The final argument that is still missing to invoke \cref{lem:counterexample-if-irredundant} and finally infer that finite extension preservation implies redundancy of $\forall$ is that we can make sure to not only remove substrategies that are evaluated to~$1$ during the strategy translation. As mentioned previously, we can assume without loss of generality that every subformula $\foraneq y \phi (\bar x, y)$ in $\psi$ is non-trivial. However, recall that this only means that $\foraneq y \phi (\bar x, y)$ \emph{can} take a valuation different from $1$ on arbitrarily large $\Semi$-interpretations. A priori, it is neither clear whether this can happen for \emph{all} universe sizes, in particular those for which there is a counterexample to redundancy, nor does this immediately ensure that \emph{no} $\Semi$-interpretation of a particular size evaluates $\foraneq y \phi (\bar a, y)$ itself or a strategy for $\foraneq y \phi (\bar a, y)$ for \emph{any} instantiation $\bar a$ to~$1$. We justify in \cref{lem:non-trivial-eventually-never-1} that for $\Semi$-interpretations $\pi$ that have at least a certain cardinality, strategies for non-trivial formulae can actually only be evaluated to~$1$ if $1$ occurs as the valuation of a literal in $\pi$. In \cref{lem:one-does-not-occur-in-pi}, we then argue that we can translate every $\Semi$-interpretation into another one over the same universe, in which $1$-valuations no longer occur, while keeping at least one optimal strategy.

In order to prove \cref{lem:non-trivial-eventually-never-1}, we reuse the strategy translation lemma and apply the following auxiliary lemma, which allows us to express triviality in terms of the polynomial interpretations $\pi_n$ and to prove that triviality never depends on a particular instantiation of the free variables.

\begin{lemma} \label{lem:onlyEqualityTypeMatters}
    Let $\Semi \not\cong \Bool$ be absorptive. For each $n \in \Nat$, and every formula $\phi (\bar x) \in \FO^{\neq}$, the following are equivalent:
    \begin{enumerate}[(1)]
        \item $\phi (\bar x)$ is trivial for $n$,
        \item $\pi_n \llb \phi (\bar a) \rrb = 1$ for all $\bar a \subseteq [n]$, and
        \item $\pi_n \llb \phi (\bar a) \rrb = 1$ for some $\bar a \subseteq [n]$.
    \end{enumerate}
\end{lemma}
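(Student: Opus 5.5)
Throughout, I read $\pi_n \llb \phi(\bar a) \rrb$ in (2) and (3) as the value of the universal formula $\foraneq y\, \phi(\bar a, y)$, which is the formula whose value triviality refers to (\cref{def:trivial}). Write $\theta(\bar x) \coloneqq \foraneq y\, \phi(\bar x, y)$. The argument below uses nothing about the shape of $\theta$, so it works for any formula of $\FO^{\neq}$. The plan is to prove $(1)\Leftrightarrow(2)$ using the universal property of $\Sorb(X_n^+, X_n^-)$ together with the Fundamental Property (\cref{fundamental-property}), and to prove $(3)\Rightarrow(2)$ by a symmetry argument. The implication $(2)\Rightarrow(3)$ is immediate, since $n \geq |\bar x|$ guarantees that some tuple $\bar a \subseteq [n]$ exists.

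\textbf{$(2)\Rightarrow(1)$.} Let $\pi$ be a model-defining $\Semi$-interpretation whose universe has exactly $n$ elements. Renaming the universe along a bijection with $[n]$ does not change any $\FO^{\neq}$-value, so I may assume the universe is $[n]$. The assignment $x_\alpha \mapsto \pi(\alpha)$, for $\alpha \in \Lit_{[n]}(\tau)$, is model-defining and hence consistent. By the universal property it extends to a homomorphism $h \colon \Sorb(X_n^+,X_n^-) \to \Semi$ with $h \circ \pi_n = \pi$. The Fundamental Property then gives
\[
\pi\llb \theta(\bar a)\rrb = h(\pi_n\llb\theta(\bar a)\rrb) = h(1) = 1
\]
for every $\bar a$, which is exactly triviality for $n$.

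\textbf{$(1)\Rightarrow(2)$.} This is the step that needs the hypothesis $\Semi \not\cong \Bool$. Since $\Semi$ is absorptive and not Boolean, I can fix some $s \in \Semi$ with $0 < s < 1$. Consider the assignment that sends each $x_\alpha$ with $\alpha$ positive to $s$ and each $x_{\lnot\alpha}$ to $0$. It is model-defining, so it yields a model-defining interpretation $\pi = h \circ \pi_n$.

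Now suppose $p \coloneqq \pi_n\llb\theta(\bar a)\rrb \neq 1$. In $\Sorb$, the element $1$ is the antichain consisting of the empty monomial, which absorbs every other monomial. So if $p \neq 1$, then $p$ does not contain the empty monomial, and every monomial of $p$ contains at least one variable.

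Under $h$, each such monomial is mapped to a product containing a factor $s$ or a factor $0$. Because multiplication is decreasing in absorptive semirings, its image is at most $s$. Addition is idempotent and monotone, so $h(p) \le s < 1$; this also covers $p = 0$. By the Fundamental Property, $\pi\llb\theta(\bar a)\rrb = h(p) \neq 1$, which contradicts (1). The only point needing care is that the quotient by $x_\alpha x_{\lnot\alpha}=0$ does not turn a nonempty monomial into $1$. This holds because the congruence only identifies such monomials with $0$.

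\textbf{$(3)\Rightarrow(2)$.} Let $\bar a$ and $\bar b$ be tuples of pairwise distinct elements of the same length. Choose a permutation $\sigma$ of $[n]$ with $\sigma(\bar a) = \bar b$. The map $\sigma$ induces a renaming $x_\alpha \mapsto x_{\sigma\alpha}$ of the variables. This renaming respects the duality between $x_\alpha$ and $x_{\lnot\alpha}$, so it extends to a semiring automorphism $\hat\sigma$ of $\Sorb(X_n^+,X_n^-)$ satisfying $\hat\sigma \circ \pi_n = \pi_n \circ \sigma$.

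Semiring semantics is invariant under isomorphisms of the universe, and this includes the distinctness conditions built into $\exneq$ and $\foraneq$. Combining this invariance with the Fundamental Property gives
\[
\pi_n\llb\theta(\bar b)\rrb = \hat\sigma(\pi_n\llb\theta(\bar a)\rrb) = \hat\sigma(1) = 1.
\]

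I expect the main obstacle to be the $(1)\Rightarrow(2)$ step. Concretely, the difficulty is exhibiting a single model-defining interpretation that sends every non-unit absorptive polynomial strictly below $1$; absorptivity together with $\Semi \not\cong \Bool$ is exactly what makes this possible.
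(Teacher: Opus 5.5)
Your proposal is correct and follows the paper's proof essentially step for step: the universal property plus the Fundamental Property for $(2)\Rightarrow(1)$; the interpretation sending positive literals to some $0<s<1$ and negative ones to $0$, together with decreasing multiplication, for $(1)\Rightarrow(2)$; and a permutation-induced homomorphism of $\Sorb(X_n^+,X_n^-)$ for $(3)\Rightarrow(2)$. Your explicit reading of (2) and (3) as statements about $\foraneq y\,\phi(\bar a,y)$ just makes precise a notational mismatch the paper glosses over between the lemma and the definition of triviality, and nothing changes because neither argument depends on the shape of the formula.
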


\begin{proof}
    $(2) \Rightarrow (1)$: Let $\pi$ be a $\Semi$-interpretation of size (assume w.l.o.g. that its universe is $[n]$), $\bar a \subseteq [n]$, and $h \colon \Sorb (X_n^+,X_n^-) \to \Semi$ be the model-defining homomorphism it induces. Then we have $\pi \llb \phi (\bar a ) \rrb = (h \circ \pi_n) \llb \phi (\bar a ) \rrb = h(\pi_n \llb \phi (\bar a ) \rrb) = h(1) = 1$.

    $(1) \Rightarrow (2)$: Suppose that $\pi_n \llb \phi (\bar a) \rrb \neq 1$ for some $\bar a \subseteq [n]$, and fix some $s \in \Semi$ such that $0<s<1$. 
    Let $\pi$ be the $\Semi$-interpretation of universe $[n]$ where we annotate each positive literal with $s$ and each negative literal with $0$. Because multiplication is decreasing, we have $\pi \llb \phi (\bar a) \rrb \in \{0\} \cup \{s^n \mid n \geq 1\}$. In any case, $\pi \llb \phi (\bar a) \rrb < 1$.

    $(3) \Rightarrow (2)$: Suppose that $\pi_n \llb \phi (\bar a) \rrb = 1$ and let $\bar b \subseteq [n]$. Let $k$ be the arity of $\bar a$ and $\bar b$. Fix some permutation $\sigma \colon [n] \to [n]$ such that $\sigma \colon b_i \mapsto a_i$ for each $i \in [k]$. This is possible, because the entries of $\bar a$ and $\bar b$, respectively, are pairwise distinct, so $\bar b \mapsto \bar a$ is an injective mapping.
	Now consider the homomorphism $h_\sigma \colon \Sorb (X_n^+, X_n^-) \to \Sorb (X_n^+, X_n^-)$ induced by $x_L \mapsto x_{\sigma (L)}$ where for each $L \in \Lit_{[n]} (\tau)$, the literal $\sigma(L)\in \Lit_{[n]} (\tau)$ arises from $L$ by substituting each instantiation $a \in [n]$ with $\sigma(a)$. It holds that $\sigma \colon \pi_n \cong h_\sigma \circ \pi_n$ and thus
	\[
	\pi_n \llb \phi (\bar b) \rrb = (h_\sigma \circ \pi_n) \llb \phi (\bar a) \rrb = h_\sigma (\pi_n \llb \phi (\bar a) \rrb = h_\sigma (1) = 1. \qedhere
	\] 
\end{proof}

\begin{lemma} \label{lem:non-trivial-eventually-never-1}
    Let $\Semi$ be absorptive and $\phi (\bar x) \in \FO^{\neq}$.
    \begin{enumerate}[(1)]
        \item If $\phi (\bar x)$ is non-trivial, then there is some $n_0$ such that $\phi (\bar x)$ is non-trivial for each $n_0 \geq n$.
        \item If $\phi (\bar x)$ is non-trivial for $n$, then $\pi \llb \Tt \rrb \neq 1$ for all $\Semi$-interpretations $\pi$ of universe $[n]$ with $1 \not\in\img(\pi)$, all $\bar a \subseteq [n]$ and strategies $\Tt$ for $\phi (\bar a)$.
    \end{enumerate}
\end{lemma}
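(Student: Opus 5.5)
Throughout, I read ``strategies for $\phi(\bar a)$'' in (2) as strategies for the instantiated universal formula $\theta(\bar a) \coloneqq \foraneq y \phi(\bar a, y)$, since triviality is defined via that formula. Both parts go through the polynomial interpretations $\pi_n$ over $\Sorb(X_n^+,X_n^-)$. I would use \cref{lem:onlyEqualityTypeMatters} to rephrase ``(non-)trivial for $n$'' as ``$\pi_n \llb \theta(\bar a) \rrb \neq 1$ for some, equivalently all, $\bar a \subseteq [n]$''. That lemma is stated for $\phi(\bar x)$ itself; I would apply it to $\theta(\bar x)$, and assume $\Semi \not\cong \Bool$ as it requires.

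The key algebraic fact is this. In the absorptive polynomial semiring, a sum of monomials equals $1$ if and only if one of the summands is the empty monomial $1$, because $1$ absorbs every monomial and nothing else absorbs $1$. By the Sum-of-Strategies theorem (\cref{lem:sum-of-proof-trees}), $\pi_n \llb \theta(\bar a) \rrb = 1$ therefore holds if and only if some strategy $\Tt$ for $\theta(\bar a)$ satisfies $\pi_n \llb \Tt \rrb = 1$, i.e.\ all of its leaves are $\top$.

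\textbf{Part (2).} Suppose $\phi$ is non-trivial for $n$. By the above, every strategy $\Tt$ for $\theta(\bar a)$ has $\pi_n \llb \Tt \rrb$ equal to $0$ or to a monomial containing at least one variable $x_L$. Let $\pi$ be an $\Semi$-interpretation on $[n]$ with $1 \notin \img(\pi)$, and let $h$ be the induced homomorphism. By the fundamental property (\cref{fundamental-property}), $\pi \llb \Tt \rrb = h(\pi_n \llb \Tt \rrb)$. This is either $0$, or a product containing a factor $\pi(L) < 1$; since $\Semi$ is absorptive, multiplication is decreasing, so the product is at most $\pi(L) < 1$. Hence $\pi \llb \Tt \rrb \neq 1$.

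\textbf{Part (1).} The statement should read ``for each $n \geq n_0$''. I would show that, above a threshold, triviality propagates downwards: if $\phi$ is trivial for $N = n+r+1$ with $n > 2^{|\theta|+1}+r$ and $r = \qr(\theta)$, then it is trivial for $N-1$. Triviality for $N$ gives, for a fixed $\bar a \subseteq [n]$, a strategy $\Tt \in C_N(\theta(\bar a))$ with $\pi_N \llb \Tt \rrb = 1$. Its support is empty, so trivially contained in $X_n$. The Strategy Translation Lemma (\cref{lem-relationship-between-polynomials}) then yields $\Tt^* \in C_{N-1}(\theta(\bar a))$ with $1 \leq \pi_{N-1} \llb \Tt^* \rrb$, hence $\pi_{N-1} \llb \Tt^* \rrb = 1$. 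Thus $\pi_{N-1} \llb \theta(\bar a) \rrb = 1$, and by \cref{lem:onlyEqualityTypeMatters}, $\phi$ is trivial for $N-1$.

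By contraposition, non-triviality for some $m$ above the threshold implies non-triviality for every $N \geq m$. Since $\phi$ is non-trivial, it is non-trivial for arbitrarily large $m$, so I would take such an $m$ beyond the threshold as $n_0$.

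\textbf{Main obstacle.} The delicate point is applying the Strategy Translation Lemma to an instantiated formula $\theta(\bar a)$ rather than a sentence. I expect to handle this either by noting that its proof never uses closedness, provided the parameters $\bar a$ lie in $[n]$ and are fixed by all relabellings $g_v$ (which property (2) of the $g_v$ guarantees), or by passing to the sentence $\exneq \bar x\, \theta(\bar x)$ and restricting to the branch that picks $\bar a$.
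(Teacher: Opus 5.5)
Your proof is correct and essentially the paper's. For (1), the paper also takes a value-$1$ strategy at size $n$, passes to the sentence $\exneq x_1\dots\exneq x_k\,\phi(\bar x)$ (your second option for the free-variable issue), and pushes the strategy to size $n-1$ with the Strategy Translation Lemma; it only phrases the conclusion as a contradiction with infinitely many sizes where triviality switches on, where you use an upward induction. For (2), your order (Sum-of-Strategies in $\Sorb(X_n^+,X_n^-)$ first, then $h$ applied to each single monomial with decreasing multiplication) is slightly more robust than the paper's, which first asserts $h^{-1}(1)=\{1\}$ for the whole polynomial $\pi_n\llb\phi(\bar a)\rrb$. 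That assertion fails for non-linearly ordered absorptive semirings: in the four-element Boolean algebra with atoms $s,t$, sending $x_{Rc}\mapsto s$ and $x_{Qc}\mapsto t$ gives $h(x_{Rc}+x_{Qc})=1$. This is harmless for the linearly ordered semirings of that section, and the strategy-level conclusion still holds by your argument.
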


\begin{proof}
    \begin{enumerate}[(1)]
    \item Let $\phi (\bar x)$ be non-trivial. This means that for every $n_0 \in \Nat$ there is some $n\geq n_0$ such that $\phi (\bar x)$ is non-trivial for $n$. We claim that there must be some $n_0 \in \Nat$ such that $\phi (\bar x)$ is non-trivial for all $n \geq n_0$. Suppose that this was not true. Then there must be infinitely many $n \in \Nat$ such that $\phi (\bar x)$ is trivial for $n$, but not for $n-1$.
    By \cref{lem:onlyEqualityTypeMatters}, this implies $\pi_n \llb \phi (\bar a) \rrb = 1$ for some $\bar a \subseteq [n]$, and $\pi_{n-1} \llb \phi (\bar b) \rrb \neq 1$ for all $\bar b \subseteq [n-1]$.
	Hence, for $\bar x =(x_1, \dots, x_k)$ and $\theta \coloneqq \exneq x_1 \dots \exneq x_k \phi (\bar x)$ we have $\pi_n \llb \theta \rrb =1$ and $\pi_{n-1} \llb \theta \rrb \neq 1$. Now apply \cref{lem-relationship-between-polynomials} (note that we can choose $n$ arbitrarily large, so that it exceeds the lower bound for $n$ required by \cref{lem-relationship-between-polynomials}) to~$\theta$ and infer a contradiction. By the Sum-of-Strategies Theorem, $\pi_n \llb \theta \rrb =1$ implies that there must be a strategy $\Tt \in C_n (\theta)$ such that $\pi_n \llb \Tt \rrb = 1$. But then there must also be some $\Tt^* \in C_{n-1} (\theta)$ such that $\pi_{n-1} \llb \Tt^* \rrb = 1$, i.e. $\pi_{n-1} \llb \theta \rrb = 1$, a contradiction.
    \item Let $\phi (\bar x)$ be non-trivial for $n$. By \cref{lem:onlyEqualityTypeMatters}, we have $\pi_n \llb \phi (\bar a ) \rrb \neq 1$ for all $\bar a \subseteq [n]$. For any model-defining homomorphism $h_\alpha \colon \Sorb(X_n^+,X_n^-) \to \Semi$ induced by a variable assignment $\alpha \colon X_n \to \Semi \setminus \{1\}$, we have $h_\alpha^{-1}(1)= \{1\}$. Hence, for every $\Semi$-interpretation $\pi$ of universe $[n]$ and $\bar a \subseteq [n]$, we must have $\pi \llb \phi (\bar a) \rrb \neq 1$. By the Sum-of-Strategies Theorem, this implies $\pi \llb \Tt \rrb \neq 1$ for all strategies $\Tt$ for $\phi (\bar a)$ because we have $1+s=1$ for all $s \in \Semi$ due to absorption. \qedhere
    \end{enumerate}
\end{proof}

\begin{lemma} \label{lem:one-does-not-occur-in-pi}
    Let $\Semi \in \{\Vit, \Trop, \Lukas, \Doubt\}$. For every $\Semi$-interpretation $\pi$ 
    with $\pi \llb \psi \rrb \neq 0$ 
    there is a $\Semi$-interpretation $\pi^*$ such that 
    $\pi^* \llb \psi \rrb \neq 0$, 
    $1 \not\in\img (\pi^*)$ and $\preceq_{\pi^*} \subseteq \preceq_{\pi}$. In particular, every strategy optimal for $\pi^*$ and $\psi$ must be optimal for $\pi$ and $\psi$ too.
\end{lemma}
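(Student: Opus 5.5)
Working in $\Vit$ suffices; $\Trop$ is isomorphic to $\Vit$ and $\Doubt$ to $\Lukas$, and for $\Lukas$ the same argument goes through with $\odot$ in place of $\cdot$ (see the end). The plan is to build $\pi^*$ from $\pi$ by applying a strictly monotone map $\theta\colon[0,1]\to[0,1]$ pointwise to the literal values, with $\theta(0)=0$ and $\theta(s)<1$ for all $s$. We also want $\theta$ to reflect the order of strategy values, not just of literal values. Write $\Tt \prec_{\pi} \Tt'$ when $\pi\llb\Tt\rrb<\pi\llb\Tt'\rrb$. We need that $\pi\llb \Tt\rrb < \pi\llb\Tt'\rrb$ implies $\pi^*\llb\Tt\rrb<\pi^*\llb\Tt'\rrb$. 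Then $\Tt \preceq_{\pi^*}\Tt'$ implies $\Tt\preceq_\pi \Tt'$, which gives the inclusion $\preceq_{\pi^*}\subseteq\preceq_\pi$. Also $\pi^*\llb\psi\rrb\neq 0$, because a strategy of positive $\pi$-value uses only nonzero literal values, and these stay nonzero. Finally, since optimal strategies are exactly the $\preceq$-maximal ones, every optimal strategy for $\pi^*$ is maximal for $\preceq_\pi$ and hence optimal for $\pi$.

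For the Viterbi case I would take $\theta(s)=c\cdot s$ with a constant $0<c<1$. Each strategy $\Tt\in C_n(\psi)$ has a fixed number of leaves, say $\ell(\Tt)$, and literals occur in leaves with multiplicity. Therefore $\pi^*\llb\Tt\rrb = c^{\ell(\Tt)}\,\pi\llb\Tt\rrb$. Model-definingness is kept since zeros are preserved, and $1\notin\img(\pi^*)$. The only danger is that the factor $c^{\ell(\Tt)}$ reverses a strict inequality $\pi\llb\Tt\rrb<\pi\llb\Tt'\rrb$. There are only finitely many strategies in $C_n(\psi)$ and the number of leaves is bounded by some $N$. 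So the finitely many nonzero ratios $\pi\llb\Tt\rrb/\pi\llb\Tt'\rrb<1$ have a maximum $q<1$, and choosing $c$ with $c^{N}>q$ makes $c^{\ell(\Tt)-\ell(\Tt')}\,q<1$. If $\pi\llb\Tt\rrb=0$ and $\pi\llb\Tt'\rrb>0$, the inequality clearly stays strict. So strict inequalities are preserved.

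For $\Lukas$ (equivalently $\Doubt$), values compose as $\max(\sum s_i - (\ell-1),0)$. I would use the shift $\theta(s)=\max(s-\delta,0)$ for $s>0$ with $\delta$ small. The main obstacle is here: truncation at $0$ and the shift both interact with the leaf count, and a literal with very small positive value might be sent to $0$. I would take $\delta$ smaller than the least positive literal value in $\pi$ and also smaller than $(1-q)/N$-type gaps. Concretely, a strategy with positive value $1-\sum(1-s_i)$ then gets value exactly that minus $\ell\delta$. Such a $\delta$ preserves all strict inequalities among positive strategy values. It also keeps positive values positive, provided $\delta<\min\{\pi\llb\Tt\rrb>0\}/N$. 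Finally, zero-valued strategies stay at zero because $\theta$ is decreasing-compatible. Since $\psi$ is fixed and the universe is finite, all these minima are over finite sets, so such $c$ and $\delta$ exist.
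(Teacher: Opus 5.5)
Your argument is correct and is essentially the paper's proof: you perturb $\pi$ slightly so that the value $1$ disappears, with the size of the perturbation chosen from the minimal gap between distinct strategy values and the maximal number of leaves of a strategy in the finite set $C_n(\psi)$, so that strict inequalities between strategy values survive and hence $\preceq_{\pi^*}\subseteq\preceq_\pi$. The differences are minor. You rescale (in $\Vit$) or shift (in $\Lukas$) all nonzero literal values, which yields the exact formulas $c^{\ell(\Tt)}\pi\llb\Tt\rrb$ and $\pi\llb\Tt\rrb-\ell(\Tt)\delta$; the paper instead replaces only the $1$-values, works in $\Doubt$ rather than $\Lukas$, and uses one-sided estimates. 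One caution: in $\Lukas$, positivity of $\pi^*\llb\psi\rrb$ rests on your explicit bound $\delta<\min\{\pi\llb\Tt\rrb>0\}/N$, not on literal values staying nonzero, because $\Lukas$ has zero divisors.
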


\begin{proof}
    Let $\pi$ be an $\Semi$-interpretation, say of universe $[n]$.
    Extend $\preceq_{\pi}$ to a preorder $\preceq_{\pi}^*$ on $C_n (\psi) \cup \{\Tt_\bot\}$, where $\Tt_\bot$ is a dummy ``strategy'', which is always evaluated to $0$.
    Because $C_n (\psi) \cup \{\Tt_\bot\}$ is finite, there exist $\delta, e$ such that $\delta \coloneqq \min \{\pi \llb \Tt \rrb - \pi \llb \Tt' \rrb \mid \Tt' \prec_\pi^* \Tt \}$ and $e \coloneqq \max \{\deg(\pi_n \llb \Tt \rrb) \mid \Tt \in C_n (\psi) \cup \{\Tt_\bot\} \}$.

    \begin{itemize}
        \item First consider the case $\Semi= \Vit$. Fix some $s$ such that $ \sqrt[e]{1-\frac{\delta}{\pi \llb \psi \rrb}} < s < 1$ and set $\pi^* (L) = \pi (L)$ if $\pi(L) \neq 1$ and otherwise $\pi^* (L) = s$. Now let $\Tt \succ_{\pi} \Tt'$, i.e. $\pi \llb \Tt \rrb - \pi \llb \Tt' \rrb \geq \delta$. We claim that $\Tt \succ_{\pi^*} \Tt'$. Using $\pi \llb \Tt \rrb \leq \pi \llb \psi \rrb$ $(*)$ by the Sum-Of-Strategy Theorem, we obtain
        \begin{align*}
	       \pi^* \llb \Tt \rrb - \pi^* \llb \Tt' \rrb &\geq \pi^* \llb \Tt \rrb - \pi \llb \Tt' \rrb \\
            &\geq s^e \cdot \pi \llb \Tt \rrb - \pi \llb \Tt' \rrb \\ 
           &> (1-\delta/\pi \llb \psi \rrb) \cdot \pi \llb \Tt \rrb - \pi \llb \Tt' \rrb \\
           &\geq \pi \llb \Tt \rrb - (\delta\cdot\pi \llb \Tt \rrb)/\pi\llb \psi \rrb - \pi \llb \Tt' \rrb \\
           &\overset{(*)}\geq \pi \llb \Tt \rrb - \delta - \pi \llb \Tt' \rrb
           \geq 0.
	   \end{align*}
       Hence, $\Tt \succ_{\pi^*} \Tt'$ and we overall obtain $\preceq_{\pi^*} \subseteq \preceq_{\pi}$.
       \item Now let $\Semi=\Doubt$. Due to $+^\Doubt=\min$, the natural order of $\Doubt$ is inverted, and $0^\Doubt=1$ and $1^\Doubt=0$. We denote by $\leq_{\mathbb{R}}$ the usual order on $[0,1]_\mathbb{R}$. Now fix some $1^\Doubt = 0<_{\mathbb{R}} s<_{\mathbb{R}}\delta/ e$ and, as before, set $\pi^* (L) = \pi (L)$ if $\pi(L) \neq 0 = 1^\Doubt$ and $\pi^* (L) = s$ otherwise. Let $\Tt \succ_{\pi} \Tt'$, i.e. $\pi \llb \Tt \rrb <_{\mathbb{R}} \pi \llb \Tt' \rrb$ and thus  $\pi \llb \Tt' \rrb - \pi \llb \Tt' \rrb \geq_{\mathbb{R}} \delta$. We have
        \begin{align*}
	       \pi^* \llb \Tt' \rrb - \pi^* \llb \Tt \rrb &\geq_{\mathbb{R}} \pi \llb \Tt' \rrb - \pi^* \llb \Tt \rrb \\
            &\overset{(*)}\geq_{\mathbb{R}} \pi \llb \Tt' \rrb - (e \cdot s + \pi \llb \Tt \rrb) \\
            &> \pi \llb \Tt' \rrb - \delta - \pi \llb \Tt \rrb \geq 0
	   \end{align*}
       Note in $(*)$ that $e \cdot s + \pi \llb \Tt \rrb$ cannot exceed $1$ because we take into account the dummy strategy $\Tt_\bot$, which always evaluates to $1=0^\Doubt$, when defining $\delta$. As before, it follows that $\Tt \succ_{\pi^*} \Tt'$, proving $\preceq_{\pi^*} \subseteq \preceq_{\pi}$.
    \end{itemize}
    Now note that we must have $\pi^* \llb \psi \rrb>0$. Otherwise we would have $\pi^* \llb \Tt \rrb = 0$, i.e. $\Tt \preceq_{\pi^*} \Tt_\bot$, for all $\Tt \in C_n (\psi)$ while $\pi \llb \Tt \rrb >0$, i.e. $\Tt \not\preceq_{\pi} \Tt_\bot$, for some $\Tt \in C_n(\psi)$ according to the Sum-Of-Strategies Theorem.
    Let $\Tt$ be a strategy optimal for $\pi^*$ and $\psi$. This means that $\Tt \succeq_{\pi^*} \Tt'$ for all $\Tt' \in C_n (\psi)$
    and thus $\Tt \succeq_{\pi} \Tt'$ for all $\Tt' \in C_n (\psi)$, implying that $\Tt$ must be optimal for $\pi$ and $\psi$ too.
\end{proof}

	\subsection{Existence of Existential Optimal Strategies and Rewriting of the Formula}

    Using the previous lemmas, we can finally put everything together and argue that non-existential strategies can be neglected in the presence of finite extension preservation.

\begin{theorem} \label{thm:only-redundant-subform}
	Let $\Semi \in \{\Vit, \Trop, \Lukas, \Doubt\}$ and $\psi \in \FO^{\neq}$ be such that (1) $\psi \equiv^{\geq m}_{\Semi} \vartheta$ for some finite extension preserved $\vartheta \in \FO^{\neq}$ and $m \in \Nat$, (2) $\psi \not\equiv^{\geq m}_\Semi \bot$ for all $m \in \bbN$ and (3) such that every subformula $\foraneq y \phi (\bar x, y)$ of $\psi$ is non-trivial. Then $\forall$ must be redundant in $\psi$.
\end{theorem}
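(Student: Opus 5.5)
We argue by contradiction, assembling the lemmas of this section. Suppose $\forall$ is not redundant in $\psi$. First we transfer finite extension preservation from $\vartheta$ to $\psi$ on large universes. \Cref{lem:r-el-dont-occur-in-counterex-redundancy} and \cref{lem:counterexample-if-irredundant} only use extension preservation between an interpretation and a subinterpretation (or extension) differing by at most $\qr(\psi)+1$ elements, and all interpretations involved can be taken of size at least $m$. On such sizes $\psi$ and $\vartheta$ agree, so the required inequalities $\pi^*\llb\psi\rrb\le\pi\llb\psi\rrb$ for $\pi^*\subseteq\pi$ both of size $\ge m$ still hold. Hence both lemmas apply to $\psi$ when all universes are large enough. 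Hypothesis (2) is exactly the non-degeneracy assumption of \cref{lem:r-el-dont-occur-in-counterex-redundancy}.

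Next we fix the size threshold. By hypothesis (3) and \cref{lem:non-trivial-eventually-never-1}(1), each of the finitely many universal subformulae $\foraneq y\phi(\bar x,y)$ of $\psi$ is non-trivial for all $n\ge n_1$, for a common bound $n_1$. By \cref{lem:r-el-dont-occur-in-counterex-redundancy}, we choose a counterexample $\pi$ to redundancy with the following properties: its universe $[k]$ satisfies $k\ge\max(m+1,n_1,2(2^{|\psi|}+\qr(\psi)+1))$, it has $\pi\llb\psi\rrb>0$, and at least $\qr(\psi)+1$ elements occur in no literal of any optimal strategy. We then apply \cref{lem:one-does-not-occur-in-pi} to get $\pi^*$ on the same universe with $\pi^*\llb\psi\rrb\neq0$ and $1\notin\img(\pi^*)$. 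Every strategy optimal for $\pi^*$ is optimal for $\pi$, so every optimal strategy $\Tt$ for $\pi^*$ is non-existential and avoids the $\qr(\psi)+1$ elements in its literals. Thus conditions (1) and (3) of \cref{lem:counterexample-if-irredundant} hold for $\pi^*$ and $\Tt$.

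The remaining step is condition (2). Since $\Tt$ is non-existential, it has a node $v$ labelled $\foraneq y\phi(\bar a,y)$. For each successor $w$, the substrategy $\Tt(w)$ is a strategy for $\phi(\bar a,b)$. Because $\phi(\bar x,y)$ belongs to a non-trivial universal subformula and $k\ge n_1$, it is non-trivial for $k$. By \cref{lem:onlyEqualityTypeMatters}, non-triviality of $\foraneq y \phi$ means $\pi_k\llb\foraneq y\phi(\bar a,y)\rrb\neq1$. I expect this to be the delicate point: to invoke \cref{lem:non-trivial-eventually-never-1}(2) we must pass from the universal formula to each individual instantiation $\phi(\bar a,b)$. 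Since the product is $1$ exactly when every factor is $1$ in $\Sorb$, some factor $\pi_k\llb\phi(\bar a,b)\rrb\neq1$. By the permutation argument of \cref{lem:onlyEqualityTypeMatters}(3)$\Rightarrow$(2), applied to the formula $\phi(\bar x,y)$ with the free tuple $(\bar a,b)$, this gives $\pi_k\llb\phi(\bar a,b')\rrb\neq1$ for every $b'$. Then $1\notin\img(\pi^*)$ and the homomorphism argument of \cref{lem:non-trivial-eventually-never-1}(2) yield $\pi^*\llb\Tt(w)\rrb\neq1$ for all $w\in vE$.

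Now \cref{lem:counterexample-if-irredundant} gives a subinterpretation of $\pi^*$ with one element fewer, hence of size at least $m$, on which $\psi$ takes a strictly larger value. This contradicts the extension preservation of $\psi$ on universes of size at least $m$ inherited from $\vartheta$. Therefore $\forall$ is redundant in $\psi$.
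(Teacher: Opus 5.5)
Your proposal is correct and follows essentially the same route as the paper. You assume non-redundancy, take a large counterexample via \cref{lem:r-el-dont-occur-in-counterex-redundancy}, remove $1$-valuations with \cref{lem:one-does-not-occur-in-pi}, verify the three conditions of \cref{lem:counterexample-if-irredundant}, and derive a contradiction with extension preservation on universes of size at least $m$. You are more explicit than the paper in two places: transferring extension preservation from $\vartheta$ to $\psi$ on large universes, and spreading $\pi_k\llb\phi(\bar a,b)\rrb\neq 1$ to every instance $b$ by the permutation argument, which the paper covers only by a brief appeal to the Sum-of-Strategies Theorem.
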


\begin{proof}
    Suppose that $\forall$ was not redundant in $\psi$. Let $n_0 > 2\cdot(2^{|\psi|}+r+1)$ be such that $\psi \equiv^{\geq n_0-1}_{\Semi} \vartheta$ and such that every subformula $\foraneq y \phi (\bar x, y)$ of $\psi$ is non-trivial for all $n \geq n_0$. Such $n_0$ must exist by \cref{lem:non-trivial-eventually-never-1}(1).
    By \cref{lem:r-el-dont-occur-in-counterex-redundancy}, there must be a counterexample $\pi$ to redundancy of $\forall$ of universe $[n]$ where $n \geq n_0$ such that $\pi \llb \psi \rrb > 0$ and such that at least $\qr(\psi)+1$ elements do not occur inside the literals used in any optimal strategy. Transform $\pi$ into an $\Semi$-interpretation $\pi^*$ such that $1 \not\in \img(\pi^*)$ according to \cref{lem:one-does-not-occur-in-pi}. Fix some strategy $\Tt =(V,E, \lambda)$ optimal for $\pi^*$ and $\psi$. We claim that $\pi^*$ and $\Tt$ satisfy all conditions to apply \cref{lem:counterexample-if-irredundant}.
    Since $\pi \llb \psi \rrb > 0$, we also have $\pi^* \llb \psi \rrb > 0$.
    Every strategy optimal for $\pi^*$ and $\psi$ must also be optimal for $\pi$ and $\psi$ and thus non-existential such that at least $\qr(\psi)+1$ elements do not occur inside the literals of the strategy. So, in particular, this must hold for $\Tt$.
    It remains to verify condition (2) from \cref{lem:counterexample-if-irredundant}.
    Let $v \in V$ be such that $\lambda(v) = \foraneq y (\bar a, y)$ for some $\foraneq y \phi (\bar a, y) \in \FO_{[n]}^{\neq}$. We know that $\foraneq y \phi (\bar x, y)$ is non-trivial for $n$ (as we have chosen $n$ to be large enough). By the Sum-Of-Strategies Theorem, $\phi (\bar x, y)$ must be non-trivial too. According to \cref{lem:non-trivial-eventually-never-1}(2), this implies that no strategy for $\forall y (\bar a, y)$, so, in particular, no substrategy $\Tt(w)$ of $\Tt$ rooted at $w \in vE$, can be evaluated to~$1$ by $\pi^*$. Hence, we can apply \cref{lem:counterexample-if-irredundant} and obtain a contradiction to finite extension preservation of $\psi$.
\end{proof}

As we have just shown that in extension-preserved sentences, non-trivial universal subformulae are redundant for all large enough $n$, we can safely replace them with $\bot$ without changing the semantics of the formula:

\begin{theorem} \label{lem:replace-redundant-subform}
    Let $\Semi \in \{\Vit, \Trop, \Lukas, \Doubt\}$ and $\vartheta \in \FO^{\neq}$ be finite extension preserved in $\Semi$ such that $\theta \not\equiv^{\geq n}_\Semi \bot$ for all $n \in \bbN$. Then $\vartheta \equiv^{\geq n}_\Semi \chi$ for some $n \in \Nat$ where $\chi$ arises from $\vartheta$ by substituting each trivial subformula $\foraneq y \phi (\bar x, y)$ by $\top$ and each non-trivial subformula $\foraneq y \phi (\bar x, y)$ by $\bot$.
\end{theorem}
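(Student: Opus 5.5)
The plan is a two-stage rewriting. First I eliminate the trivial universal subformulae, and then I invoke \cref{thm:only-redundant-subform} for the non-trivial ones.

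\textbf{Step 1: trivial subformulae.} Let $\vartheta'$ arise from $\vartheta$ by substituting every trivial subformula $\foraneq y \phi(\bar x,y)$ by $\top$. Since $\vartheta$ has only finitely many subformulae, I pick $n_1$ such that every trivial subformula is trivial for all $n \geq n_1$. On a $\Semi$-interpretation $\pi$ of size $n \geq n_1$, every instantiated occurrence $\foraneq y \phi(\bar a,y)$ then evaluates to $1 = \pi\llb\top\rrb$. By a straightforward induction on the formula, $\vartheta \equiv^{\geq n_1}_\Semi \vartheta'$.

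Two things need checking. First, substituting $\top$ for a universal subformula can change the triviality status of universal subformulae enclosing it. I therefore process the trivial subformulae bottom-up, or note that the substitution does not change any value on large interpretations, so triviality of every remaining subformula is unaffected. Second, the remaining universal subformulae of $\vartheta'$ must be non-trivial. They correspond to non-trivial subformulae of $\vartheta$, and their values agree for $n \geq n_1$, so by definition they remain non-trivial. Also, $\vartheta' \not\equiv^{\geq m}_\Semi \bot$ for every $m$, because $\vartheta$ has this property and the two agree on large universes.

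\textbf{Step 2: non-trivial subformulae.} Now \cref{thm:only-redundant-subform} applies to $\psi \coloneqq \vartheta'$ with the finite extension preserved sentence $\vartheta$ and $m = n_1$. Hence $\forall$ is redundant in $\vartheta'$. Concretely, there is $n_2$ such that every $\pi$ of size $n \geq n_2$ has an existential strategy $\Tt$ that is optimal for $\pi$ and $\vartheta'$.

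Let $\chi$ arise from $\vartheta'$ by replacing each remaining $\foraneq y\phi$ with $\bot$. There is a natural correspondence between the existential strategies of $\vartheta'$ and the $\bot$-free strategies of $\chi$. This correspondence preserves valuations, because an existential strategy never enters the replaced subtrees. The strategies of $\chi$ that reach a $\bot$-leaf have value $0$. By the Sum-of-Strategies Theorem, with addition the maximum in the linear order of $\Semi$, I get
\[
\pi\llb\chi\rrb = \max\{\pi\llb\Tt\rrb \mid \Tt \text{ existential for } \vartheta'\} = \pi\llb\vartheta'\rrb,
\]
where the last equality holds because an existential strategy is optimal. So $\vartheta \equiv^{\geq n}_\Semi \chi$ for $n = \max(n_1,n_2)$.

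\textbf{Main obstacle.} I expect the bookkeeping in Step 1 to be the main obstacle. The issue is nested trivial and non-trivial universal subformulae, together with ensuring that the $\chi$ obtained in one pass coincides with the description in the statement (trivial $\mapsto \top$, non-trivial $\mapsto \bot$). The non-trivial subformulae inside a trivial one disappear anyway once it is replaced by $\top$. For nested trivial subformulae, whichever occurrence is replaced first, the outer one is replaced by $\top$ as well, so the result is unambiguous. For the $\bot$-replacement, the correspondence must match strategies avoiding every $\foraneq$ node, and this is exactly what existential means.
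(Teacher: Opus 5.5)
Your proposal is correct and follows essentially the same route as the paper. You replace the trivial universal subformulae by $\top$ to get a sentence that agrees with $\vartheta$ on large universes, apply \cref{thm:only-redundant-subform} to obtain redundancy of $\forall$, and then conclude via the correspondence between existential strategies and $\bot$-free strategies of $\chi$ together with the Sum-of-Strategies Theorem. Your extra bookkeeping fills in details the paper leaves implicit: that non-triviality and $\not\equiv^{\geq m}_\Semi \bot$ transfer to $\vartheta'$, and that the two-stage substitution yields the $\chi$ described in the statement.
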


\begin{proof}
    By definition, there must be some $m_0 \in \Nat$ such that $\vartheta \equiv_{\Semi}^{\geq m_0} \psi$, where $\psi$ arises from $\vartheta$ by replacing each trivial subformula $\foraneq y \phi (\bar x, y)$ by $\top$. Hence, we can apply \cref{thm:only-redundant-subform} to $\psi$ and infer that $\forall$ is redundant in $\psi$. Let $n_0 \in \Nat$ be such that for every $\Semi$-interpretation $\pi$ of size at least $n_0$ there exists an existential strategy optimal for $\pi$ and $\psi$. Now note that for any existential strategy for $\psi$ there is a strategy for $\chi$ with the same valuation and vice versa unless it uses $\bot$ and evaluates to~$0$. For any $\Semi$-interpretation of universe $[n]$ we have
    \begin{align*}
		\pi \llb \psi \rrb &= \max \{\pi \llb \Tt \rrb \mid \Tt \in C_m(\psi) \} \\
		&= \max \{\pi \llb \Tt \rrb  \mid \Tt \in C_m(\psi) \text{ existential}\} \\
		&= \max \{\pi \llb \Tt \rrb \mid \Tt \in C_m(\chi)\} = \pi \llb \chi \rrb.
	\end{align*}
    So overall we obtain $\vartheta \equiv_{\Semi}^{\geq \max(m_0,n_0)} \chi$.
\end{proof}

Combining \cref{lem:large-universes-suffice}, \cref{lem:FOneq}, and \cref{lem:replace-redundant-subform}, we get the desired result.

\extensionPresFinite*

\section{Proof of the Extension Preservation Theorem in the Finite for Lattice Semirings} \label{sec:FinExtPresLattices}

In this section, we prove that extension preservation theorem holds in the finite not just for the Viterbi and \L ukasiewicz semiring, but also in every non-Boolean lattice semiring. In particular, it holds for the $3$-element min-max semiring $\Semi_3$, which means that adding just a single additional truth value to the Boolean semiring makes finite extension preservation hold again, although it fails in the Boolean setting.
We proceed similarly as in \cref{sec:lattices}, and first prove the theorem for one specific lattice and then generalise it to arbitrary lattice semirings using the same reduction method. 
Unlike in \cref{sec:lattices}, we do not use the lattice $\Semi_3$, but consider the case where we evaluate in the \emph{fuzzy semiring} $\Fuzzy=([0,1], \max, \min, 0, 1)$. 
So our first goal is to prove the following.

\begin{theorem} \label{thm:extension-preservation-fuzzy}
	A sentence $\psi \in \FO^{\neq}$ is preserved under extensions of finite $\Fuzzy$-interpretations if, and only if, on finite $\Fuzzy$-interpretations, it is equivalent to an existential sentence.
\end{theorem}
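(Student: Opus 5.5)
The easy direction is \cref{lem:syntax-to-semantics}(2). For the converse I would follow the four-step scheme of \cref{sec:extension-preservation-finite-proof}, with two changes for $\Fuzzy$: ``existential'' is weakened to ``almost existential'', and the rewriting step needs preprocessing. First, by \cref{lem:large-universes-suffice} (addition in $\Fuzzy$ is $\max$) and \cref{lem:FOneq}, it suffices to find $\phi \in \Sigma_1^{\neq}$ and $n_0$ with $\psi \equiv^{\geq n_0}_\Fuzzy \phi$. I may assume $\psi \not\equiv^{\geq m}_\Fuzzy \bot$ for all $m$, since otherwise $\bot$ works. Since $\Fuzzy$ is linearly ordered, the Sum-of-Strategies Theorem again makes $\pi\llb\psi\rrb$ the maximum over strategies, so optimal strategies exist. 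A strategy is \emph{almost existential} if for every node $v$ labelled $\foraneq y\phi(\bar a,y)$ some successor $w$ has a substrategy $\Tt(w)$ using no relational literal containing the instantiation of $y$.

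The central claim replaces \cref{thm:only-redundant-subform}. If $\psi$ is finite extension preserved in $\Fuzzy$, then for all sufficiently large $\pi$ some optimal strategy is almost existential. I would prove this by contraposition. Take a large counterexample $\pi$, so every optimal strategy $\Tt$ has a $\forall$-node $v$ at which every substrategy $\Tt(w)$ uses a literal containing its $y$-instantiation $b_w$. As in \cref{lem:r-el-dont-occur-in-counterex-redundancy}, I extend by $r+1$ fresh elements with atoms valued at some $0<s<\pi\llb\psi\rrb$. This lets me assume $\pi\llb\psi\rrb>0$ and that $r+1$ elements are unused in literals of optimal strategies. Multiplication is $\min$, so I cannot gain value merely by dropping factors. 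Instead I perturb $\pi$, in the spirit of \cref{lem:one-does-not-occur-in-pi}. All literal values that equal the optimum are lowered to values slightly below it, choosing distinct values indexed by the element they mention. This keeps the order among strictly separated strategies and makes each optimal strategy's value determined by a minimum literal involving some specific element. Removing one element $a$, chosen to be the $y$-instance whose forced literal is the unique bottleneck, and applying the Strategy Translation Lemma \cref{lem-relationship-between-polynomials} yields a strategy on the subinterpretation. Its literals are a subset of the original ones minus those in $\Tt(w_a)$, so its $\min$ is strictly larger. This contradicts extension preservation.

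The rewriting step is the main obstacle. The almost-existential property does not directly allow deleting $\forall$. I would first push universal quantifiers inward using $\Fuzzy$-valid equivalences: distribute $\forall y(\alpha\vee\beta)\equiv \alpha\vee\forall y\beta$ when $y\notin\alpha$, and $\forall y(\alpha\wedge\beta)\equiv\forall y\alpha\wedge\forall y\beta$. I also use that $\foraneq y\alpha\equiv^{\geq n}\alpha$ when $y$ does not occur in $\alpha$, by idempotence. The aim is a normal form in which every remaining $\foraneq y\phi$ has $y$ occurring in every disjunct-path of $\phi$ in a relational literal. In such a formula no strategy at that node can avoid $y$-literals, so almost existential optimal strategies never use it. The central claim then shows that such subformulae are redundant, and replacing them by $\bot$ (trivial ones by $\top$) preserves values on large interpretations, exactly as in \cref{lem:replace-redundant-subform}. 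Each transformation preserves $\equiv_\Fuzzy$, so extension preservation is retained and the central claim applies to the normal form. Making this normal form precise, with quantifier-ordering issues and $\exists$ interleaved, is where I expect the real difficulty. Afterwards \cref{lem:closure-under-and-or} gives a single $\Sigma_1$-sentence, and \cref{lem:large-universes-suffice} finishes.
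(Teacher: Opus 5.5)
Your overall architecture matches the paper: reduce to large universes, show that large $\Fuzzy$-interpretations always admit an almost existential optimal strategy, then rewrite. The gap is in your proof of the central claim.

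The set-up you import from \cref{lem:r-el-dont-occur-in-counterex-redundancy} is self-contradictory in $\Fuzzy$. Suppose $r=\qr(\psi)$ elements occur in no literal of an optimal strategy $\Tt$. At every node labelled $\foraneq y\phi(\bar a,y)$ we have $|\bar a|<r$, so some successor instantiates $y$ by one of these unused elements. Hence $\Tt$ is automatically almost existential. Your extension by fresh elements is therefore never a counterexample, and the perturbation argument has nothing to act on.

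That argument also fails on its own. \Cref{lem-relationship-between-polynomials} needs the deleted element (and $r$ others) to occur in no literal of $\Tt$; this is the hypothesis $\supp(\pi_{n+r+1}\llb\Tt\rrb)\subseteq X_n$. But the $y$-instance $a$ you delete occurs in a literal of $\Tt(w_a)$ by assumption. It may also occur in literals elsewhere in $\Tt$ that cannot be relabelled without changing values. Moreover, a strategy's value in $\Fuzzy$ is the minimum of its literal values, and that minimum may be attained outside every $y$-substrategy, e.g.\ at a literal above the $\forall$-node. Then no deletion raises it. The strict gain that drives the Viterbi proof is simply not available for $\min$.

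The missing idea is to run the observation above in the other direction, as the paper does in \cref{cor:strategies-relying-on-forall-dont-matter}:
\begin{enumerate}
\item Extend a counterexample $\pi$ with $\pi\llb\psi\rrb>0$ by $r$ fresh elements whose atoms get a value $\epsilon<\pi\llb\psi\rrb$.
\item An optimal strategy $\Tt$ for the extension cannot use new literals, else extension preservation fails. Hence $\Tt$ is almost existential.
\item Translate $\Tt$ \emph{down} to an almost existential $\Tt^*\in C_n(\psi)$ with $\supp(\pi_n\llb\Tt^*\rrb)\subseteq\supp(\pi_{n+r}\llb\Tt\rrb)$. By idempotence of $\min$ this gives $\pi\llb\psi\rrb\leq\pi_{\text{ext}}\llb\psi\rrb\leq\pi\llb\Tt^*\rrb<\pi\llb\psi\rrb$, a contradiction.
\end{enumerate}

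The downward translation is the genuinely new technical ingredient. Because $\forall$-nodes branch over the whole universe, every element of $[n]$ may already occur in $\Tt$. So a witness $n+i$ cannot simply be swapped with an unused element as in the existential case. The paper first normalises $\Tt$ in \cref{lem:translation-almost-existential-1}: it copies one $y$-free substrategy, via swaps of instantiations, to all but boundedly many successors of each $\forall$-node. This bounds the number of witnesses by a constant $c_{r+1}$ independent of $n$, and only then can replacements be found (\cref{lem:translation-almost-existential}).

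For the rewriting step, which you leave open, the paper avoids a global normal form and eliminates one innermost $\foraneq y\phi$ at a time:
\begin{enumerate}
\item Write the universal-free $\phi$ as $\exneq\bar z\bigvee_i\theta_i$ (\cref{lem:existential-pnf-dnf}).
\item Pull the $y$-free disjuncts out of the universal quantifier by finite continuity (\cref{lem:remove-subform-not-cont-y}). Now every strategy through the remaining universal subformula relies on $\forall$.
\item Replace that subformula by $\bot$ on large interpretations.
\item Restore $\equiv^\omega_\Fuzzy$ via \cref{lem:large-universes-suffice}, and induct on the number of universal quantifiers.
\end{enumerate}
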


Its linear order allows us to apply the proof framework based optimal strategies from \cref{sec:extension-preservation-finite-proof} to $\Fuzzy$.
Recall from \cref{section:finiteExtPresIdea} that unlike for $\Vit$ and $\Lukas$ sentences such as $\psi = \foraneq y \exneq z Rz$ may be extension preserved in $\Fuzzy$ although they do not admit any existential strategy.
Hence, instead of distinguishing purely existential and non-existential strategies as in \cref{sec:extension-preservation-finite-proof}, we will now consider
\begin{enumerate}[(1)]
	\item strategies where whenever a subformula $\foraneq y \phi (\bar a,y)$ is used, at least one substrategy for some $\phi (\bar a, b)$ \emph{does not} use any literal that contains $b$ from
	\item strategies that use a subformula $\foraneq y \phi (\bar a, y)$ where all substrategies rely on a literal that contains $y$.
\end{enumerate}

The crucial insight will be that, in case (1), we can assume that the optimal strategy uses the universally quantified variable only in a constant number of substrategies of $\foraneq y \phi (\bar a,y)$.
This is why we will refer to strategies from (1) as \emph{almost existential}. In the first part of the proof, we will show that extension preservation in $\Fuzzy$ ensures the existence of, not necessarily existential, but almost existential optimal strategies. In the second part, we will see that universal quantifiers can still be rewritten based on this result.

\subsection{Existence of Almost Existential Optimal Strategies}

We fix some notation first. For a labelled tree $\Tt = (V, E, \lambda)$ with $v \in T$, we refer to the subtree of $\Tt$ rooted at $v$ as $\Tt(v)$. 
For a strategy $\Tt \in C_{n}(\psi)$, let $A_\exists (\Tt)$ denote the set of \emph{witnesses} $i\in [n]$, which occur as an instantiation of an existential quantifier inside $\Tt$, i.e. for which there is some $v \in V$ such that $\lambda (v) = \exneq y \phi (\bar a, y)$ and some $w \in vE$ such that $\lambda (v) = \phi (\bar a, i)$ for some $\phi (\bar a, y) \in \FO^{\neq}_{[n]}$. By $A_{\Lit} (\Tt) \subseteq [n]$, we denote the set of elements that occur inside a literal in the leaves of~$\Tt$. Note that this might not cover the whole set of elements instantiated inside $\Tt$, as some instantiations might not actually appear inside a literal (recall $\foraneq y \exneq x Rx$, for instance).
Finally, for $\psi \in \FO^{\neq}$ let $\qr_\forall (\psi)$ denote the nesting depth of universal quantifiers in $\psi$.

\begin{Definition}
	We say that a strategy $\Tt=(V,E, \lambda) \in C_n (\psi)$ \emph{relies on $\forall$} if there is some $v \in V$ such that
	\begin{enumerate}[(1)]
		\item $\lambda(v) = \foraneq y \phi (\bar a, y)$ for some $\phi(\bar a, y) \in \FO^{\neq}_{[n]}$ and
		\item for each $w \in vE$ with $\lambda (w) = \phi (\bar a, i)$, it holds that $i \in A_{\Lit} (\Tt(w))$.
	\end{enumerate}
	Otherwise, we call $\Tt$ \emph{almost existential}.
\end{Definition}

The goal of this section is to prove that for extension preserved sentences and (sufficiently large) finite $\Fuzzy$-interpretation there will always be an almost existential optimal strategy. The structure of the proof will roughly be as follows. Suppose there is a strategy relying on $\forall$ that \enquote{beats} (i.e. has a strictly greater valuation than) all almost existential strategies on an $\Fuzzy$-interpretation $\pi$. Extend $\pi$ by sufficiently small valuations (this is why we use the dense lattice $\Fuzzy$) to make sure that all strategies optimal for the extension $\pi^*$ are almost existential. Translate one such strategy $\Tt^*$ into an almost existential strategy $\Tt$ for $\pi$ while making sure that its valuation can only be increased. This will either contradict extension preservation or the assumption that there is no almost existential optimal strategy for $\pi$.

The most technical part is the strategy translation. The strategy $\Tt^*$ may use an element $i$ as a witness which occurs only in the universe of $\pi^*$, but not in the universe of $\pi$. To obtain a valid strategy for $\pi$, a different element needs to be chosen. 
However, it might happen (unlike in the existential case) that some substrategies already use each $j$ contained in the universe of $\pi$ (which is a problem because substituting $i$ with $j$ then yields new equalities and might change the valuation of the strategy). So the idea is to first transform $\Tt^*$ into another strategy for $\pi^*$ that contains fewer witnesses. More precisely, the number of witnesses can be bounded by a constant and will no longer depend on the size of the universe. This allows us to find a suitable substitute for $i$ as long as the universe of $\pi^*$ is chosen large enough.

\begin{lemma} \label{lem:swap-instantiation}
	Let $\Tt \in C_n (\phi (\bar a, b))$ be a strategy. For each $c \not\in \bar a$, the tree $\Tt[b \leftrightarrow c]$ (where each occurrence of $b$ is replaced by $c$ and vice versa) is a strategy for $\phi (\bar a, c)$ and $n$.
\end{lemma}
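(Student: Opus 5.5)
The plan is to view the swap $b \leftrightarrow c$ as a permutation $\sigma$ of $[n]$ and to check the strategy conditions node by node.

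Let $\sigma \colon [n] \to [n]$ exchange $b$ and $c$ and fix everything else. For an instantiated formula $\theta(\bar d)$, write $\sigma(\theta(\bar d)) \coloneqq \theta(\sigma(\bar d))$. Then $\Tt[b \leftrightarrow c]$ is the tree $(V, E, \sigma \circ \lambda)$: same nodes and edges, with each label relabelled by $\sigma$. The key step is to show that $\sigma$ induces an isomorphism of labelled game trees $C_n(\phi(\bar a, b)) \cong C_n(\phi(\bar a, c))$, namely the identity on the tree shape with labels mapped by $\sigma$. Note that $\sigma(\bar a, b) = (\bar a, c)$ precisely because $c \notin \bar a$, and $b \notin \bar a$ holds since instantiations are pairwise distinct.

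I would prove the isomorphism by induction on the formula, simultaneously for all instantiations $\bar d$ with pairwise distinct entries: $\sigma$ maps $C_n(\theta(\bar d))$ isomorphically onto $C_n(\theta(\sigma(\bar d)))$.

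1. \emph{Literals and constants.} The tree is a single node, so there is nothing to check.
2. \emph{Connectives $\wedge, \vee$.} This follows directly from the induction hypothesis applied to both children.
3. \emph{Quantifiers $\exneq x\, \vartheta(\bar d, x)$ and $\foraneq x\, \vartheta(\bar d, x)$.} The children are indexed by $e \in [n] \setminus \bar d$. Since $\sigma$ is a bijection, it maps $[n] \setminus \bar d$ bijectively onto $[n] \setminus \sigma(\bar d)$. The child $C_n(\vartheta(\bar d, e))$ is therefore sent to $C_n(\vartheta(\sigma(\bar d), \sigma(e)))$, and these are exactly the children of the image root. Distinctness of the tuples is preserved because $\sigma$ is injective.

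Given the isomorphism, the strategy conditions transfer immediately, since they depend only on the tree shape and the formula types of the labels, and $\sigma$ changes neither. Concretely, $\Tt$ is a subtree of $C_n(\phi(\bar a, b))$, so its image is a subtree of $C_n(\phi(\bar a, c))$. Disjunction and existential nodes keep exactly one successor, and conjunction and universal nodes keep all successors, because the isomorphism maps the full successor set of each node onto the full successor set of its image. Hence $\Tt[b \leftrightarrow c]$ is a strategy for $\phi(\bar a, c)$ and $n$.

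The main point needing care is the quantifier step. There one must use that the universal node's successor set $\{\vartheta(\bar d, e) \mid e \notin \bar d\}$ is carried onto the corresponding set for $\sigma(\bar d)$. This holds because $\sigma$ is a bijection of $[n]$. It would fail for a mere substitution $b \mapsto c$, which is why the swap is needed rather than a one-way replacement.
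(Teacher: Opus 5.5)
Your proof is correct: viewing the swap as a permutation $\sigma$ of $[n]$ that induces an isomorphism $C_n(\phi(\bar a,b)) \cong C_n(\phi(\bar a,c))$ of labelled game trees, and transferring the strategy conditions along it, is exactly the right justification. The paper states this lemma without proof and treats it as evident; your argument is the natural one it leaves implicit, and it mirrors the permutation argument the paper uses in the proof of \cref{lem:onlyEqualityTypeMatters}.
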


\begin{lemma} \label{lem:translation-almost-existential-1}
	For every $\psi \in \FO^{\neq}$ there are constants $(c_m)_{m \in \omega}$ such that for any $m,n \in \omega$ and almost existential $\Tt \in C_{n} (\psi)$
	there is an almost existential $\Tt_m = (V_m, E_m, \lambda_m) \in C_{n} (\psi)$ such that  $\supp(\pi_{n} \llb \Tt_m \rrb) \subseteq \supp(\pi_{n} \llb \Tt \rrb)$ and for any $v \in \Tt_m$ with $\lambda_m (v)=\foraneq y \phi (\bar a, y)$ and $\qr_\forall (\lambda_m (v)) \leq m$,
	\begin{enumerate}[(1)]
		\item $|A_\exists (\Tt_m(w))| \leq c_m$ for each $w \in vE_m$;
		\item$|A_{\Lit} (\Tt_m (w))| \leq c_m$ for each $w \in vE_m$;
		\item at most $c_m$ many substrategies $\Tt_m(w_1), \dots, \Tt_m(w_{c_{m}})$ where $w_i \in vE_m$ contain a literal involving $y$. For all other $w,u \in vE_m \setminus \{w_1, \dots, w_{c_m}\}$, the substrategies $\Tt_m(w)$ and $\Tt_m(u)$ are isomorphic up to the instantiation of $y$.
	\end{enumerate}
\end{lemma}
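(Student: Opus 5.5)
We construct $\Tt_m$ from $\Tt$ by induction on $m$, i.e.\ on the universal nesting depth, and rewrite $\Tt$ bottom-up from the innermost universal nodes. The constants are defined recursively from the syntax of $\psi$. Let $d$ be the number of subformula occurrences of $\psi$ and $r \coloneqq \qr(\psi)$. A strategy subtree containing no universal node is a single path through the conjunctions and disjunctions of $\psi$ (branching only at conjunctions). So for $\qr_\forall = 0$ the tree has at most $2^{|\psi|}$ leaves and at most $|\psi|$ existential nodes, and we may take $c_0 \coloneqq 2^{|\psi|}\cdot r + |\psi|$. Given $c_{m-1}$, we aim for $c_m$ of the form $c_m \coloneqq 2^{|\psi|}\cdot (K_m + r)$, where $K_m$ bounds the number of distinct substrategies kept below one universal node, computed from $c_{m-1}$ as below.

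The key step treats a single node $v$ with $\lambda(v) = \foraneq y \phi(\bar a,y)$ whose children $w$ already satisfy (1)--(3) by induction. The structure is as follows.

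\emph{Step (a): copying a template child.} Since $\Tt$ is almost existential, some child $w_0$ with $\lambda(w_0) = \phi(\bar a, b_0)$ satisfies $b_0 \notin A_{\Lit}(\Tt(w_0))$. For every child $w$ labelled $\phi(\bar a,b)$ whose literals do not involve $b$, we replace $\Tt(w)$ by the copy $\Tt(w_0)[b_0 \leftrightarrow b]$. This copy is a strategy for $\phi(\bar a, b)$ by \cref{lem:swap-instantiation}. A literal in it involves $b$ only if the corresponding literal in $\Tt(w_0)$ involves $b_0$, which does not happen. Its literals involve $b_0$ only if $b$ occurred in the literals of $\Tt(w_0)$. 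That set has size at most $c_{m-1}$, so those few exceptional children are kept or handled separately.

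\emph{Step (b): bounding the children that use $y$.} Children whose literals involve their instance of $y$ cannot simply be swapped. Instead we use the literal alphabet of the original $\Tt$. Each such child $\Tt(w)$ has at most $c_{m-1}$ literal elements and at most $c_{m-1}$ witnesses, so up to renaming of elements outside $\bar a$ it has one of finitely many isomorphism types. For the types that recur, we want to replace all but a bounded number of occurrences by a template. Here lies the difficulty: the relevant monomials cannot be changed, because the support condition $\supp(\pi_n\llb \Tt_m\rrb) \subseteq \supp(\pi_n\llb\Tt\rrb)$ must hold. Replacing a $y$-using child by a swapped copy of $w_0$ is harmless, since every literal of that copy already lies in the support. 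So we intend to replace all $y$-using children except those $b$ lying in the literal set $A_{\Lit}(\Tt(w_0))$ or in $\bar a$. This leaves at most $c_{m-1}+r$ children using $y$, and all other children are isomorphic up to $y$, which gives (3). The new tree is again almost existential, since $w_0$ is retained.

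\emph{Step (c): checking the counts.} The witnesses and literal elements inside one child are bounded by $c_{m-1}$. For a nested universal node above depth $m$, the number of distinct substrategies below it is now bounded by (number of exceptional children) plus one template. Summing over the at most $2^{|\psi|}$ branch points gives the recursive bound, so (1) and (2) follow with the $c_m$ above.

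The main obstacle is making the construction well-defined under nesting. Swapping $b_0 \leftrightarrow b$ inside the copy may clash with witnesses that the copy uses, which must stay distinct from $\bar a$ and from $b$ along paths. I would handle this by choosing $w_0$ after the inner children have been normalised. Then $b$ appears in $\Tt(w_0)$ only in at most $c_{m-1}$ positions, and the clashing children can be treated as exceptions absorbed into the constant. The support inclusion must also be kept throughout: every new literal is a renamed literal of $\Tt(w_0)$ under a transposition that fixes its literal elements, so it already occurs in $\Tt$.
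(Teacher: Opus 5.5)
Your construction matches the paper's proof: process universal nodes from the innermost level outward, and pick a child $w_0$ whose $y$-instance $b_0$ lies outside $A_{\Lit}(\Tt(w_0))$ (almost-existentiality guarantees one exists). Then replace every child indexed by $b \notin A_{\Lit}(\Tt(w_0)) \cup A_\exists(\Tt(w_0))$ with $\Tt(w_0)[b_0 \leftrightarrow b]$; the support is preserved because the transposition fixes every literal of $\Tt(w_0)$. Finally, bound the counts by truncating at universal nodes, which the paper does with $c_{m+1} = 2^{|\psi|+1}((c_m+1)c_m+1)$.

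Two small corrections. First, the number of kept exceptional children is bounded by the level-$m$ bound on $|A_{\Lit}(\Tt(w_0))| + |A_\exists(\Tt(w_0))|$ from your step (c), not by $c_{m-1}+r$. The induction hypothesis only controls children of more deeply nested universal nodes, and $b \in \bar a$ never occurs. Second, excluding the witnesses of $\Tt(w_0)$ is needed for the ``isomorphic up to the instantiation of $y$'' clause of (3), not for validity. Validity is already given by \cref{lem:swap-instantiation}, since a transposition fixing $\bar a$ preserves all distinctness constraints.
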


\begin{proof}
	Inductively define the constants by $c_0=0$ and $c_{m+1} \coloneqq 2^{|\psi|+1} \cdot ((c_{m}+1) \cdot c_m+1)$.
	We prove the claim by induction on $m$.
	For $m=0$, we can choose $\Tt_0 \coloneqq \Tt$ and nothing remains to be shown.
	So suppose the claim is true for $m$. Let $\Tt \in C_n(\psi)$ be almost existential, and fix some $\Tt_m$ that satisfies the required properties with respect to $\Tt$.
	To define $\Tt_{m+1}$ from $\Tt_m= (V_m, T_m, \lambda_m)$, we successively modify the subtrees rooted at nodes $v \in \Tt_m$ such that $\lambda_m (v) = \foraneq y \phi (\bar a, y)$ where $\qr_\forall (\phi (\bar a, y))=m+1$.
	Fix one such node $v$ (the order in which we modify the subtrees will not matter).
	Let further $vE_m = \{w_{i_1} \dots w_{i_k}\}$, for $k \leq n$, be such that $\lambda_m (w_{i_\ell}) = \phi (\bar a, i_\ell)$ for all $\ell \in [k]$.
	Because $\Tt$ is almost existential, there must be some $\ell \leq k$ such that $i_\ell \not\in A_{\Lit}(\Tt_m(w_{i_\ell}))$.
	Now replace each subtree $\Tt_m(w_{i_j})$ where $i_j \not\in A_\exists (\Tt_m (w_{i_\ell})) \cup A_{\Lit}(\Tt_m(w_{i_\ell}))$ by $\Tt_m(w_{i_\ell})[i_j \leftrightarrow i_\ell]$. 
	Denote the resulting tree by $\Tt_m'= (V_m', E_m', \lambda_m')$. By \cref{lem:swap-instantiation}, $\Tt_m'$ is a strategy for $\psi$ and $n$.
	We have $\pi_{n} \llb \Tt_m (w_{i_\ell}) \rrb = \pi_n \llb \Tt_m (w_{i_\ell})[i_j \leftrightarrow i_\ell] \rrb$ because neither $i_\ell$ nor $i_j$ occurs inside a literal in the leaves of $\Tt_m (w_{i_\ell})$.
	This ensures $\supp (\pi_n \llb \Tt_m' \rrb) \subseteq \supp (\pi_n \llb \Tt_m \rrb) \subseteq \supp(\pi_{n} \llb \Tt \rrb)$, where the latter inclusion is by induction.
	It remains to verify (1)--(3). Fix some $w \in vE_m'$.
	\begin{enumerate}[(1)]
		\item We can upper-bound $|A_\exists (\Tt_m'(w))|$ by the number of inner nodes of $\Tt_m'(w)$. If we truncate the tree $\Tt_m'(w)$ at all universal nodes, we get a tree of depth at most $|\psi|$ which has branching degree at most $2$, i.e., which contains less than $2^{|\psi|}$ inner nodes. For each leaf $t$ of this truncated tree, we may have a substrategy $\Tt_m' (t)$ of a formula $\foraneq z \theta (\bar a, z)$. By induction hypothesis, there are only $c_m+1$ distinct subtrees adjacent to $t$, each of which contains at most $c_m$ elements as instantiations of existential quantifiers. Hence, $|A_\exists (\Tt_m'(w))| < 2^{|\psi|} + 2^{|\psi|} \cdot (c_m+1) \cdot c_m \leq c_{m+1}$.
		
		\item  Similarly, the number of leaves in $ \Tt'_m(w)$ provides an upper bound on $A_{\Lit} (\Tt'_m(w))$. The tree obtained from $\Tt_m'(w)$ by truncating at universal nodes has at most $2^{|\psi|}$ leaves. Using the same argument as before, we obtain $|A_{\Lit} (\Tt)| \leq 2^{|\psi|}\cdot (c_m+1) \cdot c_m \leq c_{m+1}$.
		
		\item By construction, only the subtrees $\Tt'_m (w_{i_j})$ for $i_j \in A_\exists (\Tt_m(w_{i_\ell})) \cup A_{\Lit} (\Tt_m(w_{i_\ell}))$ may contain a literal involving $i_j$. Hence, the claim follows from the fact that $|A_\exists (\Tt_m(w_{i_\ell})) \cup A_{\Lit} (\Tt_m(w_{i_\ell}))| < 2^{|\psi|} \cdot ((c_m+1) \cdot c_m+1) + 2^{|\psi|}\cdot (c_m+1) \cdot c_m \leq c_{m+1}$.
	\end{enumerate}
	
	Proceed analogously with all other $v \in \Tt_m$ with $\lambda_m (v) = \foraneq y \phi (\bar a, y)$ and $\qr_\forall (\phi (\bar a, y))=m+1$ to finally obtain $\Tt_{m+1}$.
\end{proof}

\begin{lemma}[Translation of almost existential strategies] \label{lem:translation-almost-existential}
	Let $\psi \in \FO^{\neq}$ and $r \coloneqq \qr(\psi)$.
	There is some $n_0 \in \bbN$ such that for all $n \geq n_0$ and all almost existential strategy $\Tt \in C_{n+r} (\psi)$ with $\supp (\pi_{n+r} \llb \Tt \rrb)\subseteq X_n$, there is some almost existential $\Tt^{*} \in C_{n} (\psi)$ such that $\supp(\pi_n \llb \Tt^{*} \rrb) \subseteq \supp(\pi_{n+r} \llb \Tt \rrb)$.
\end{lemma}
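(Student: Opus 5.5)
First normalise $\Tt$ with \cref{lem:translation-almost-existential-1}. Take $m \coloneqq \qr_\forall(\psi)$ and let $\Tt_m \in C_{n+r}(\psi)$ be the almost existential strategy it provides. Then $\supp(\pi_{n+r}\llb \Tt_m\rrb) \subseteq \supp(\pi_{n+r}\llb\Tt\rrb) \subseteq X_n$, so no literal of $\Tt_m$ mentions an element of $\{n+1,\dots,n+r\}$. Moreover, below every universal node only $c_m$ witnesses and $c_m$ literal elements occur in each child subtree. At most $c_m$ children contain a literal involving $y$, and all other children are copies of each other up to the instantiation of $y$. Together with the bounded branching (degree $\le 2$) at non-universal nodes, this should give a global bound $N$, depending only on $\psi$, on the number of distinct elements used as witnesses or inside literals in the ``essential part'' of $\Tt_m$. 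The essential part consists of the non-isomorphic children plus one representative of each isomorphism class. I would set $n_0 \coloneqq N + r + c$ for a suitable constant $c$.

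Next construct $\Tt^*\in C_n(\psi)$ from $\Tt_m$ top-down, as in the proof of \cref{lem-relationship-between-polynomials}, using path-dependent relabelling maps $g_v$.

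At a universal node $v$ labelled $\foraneq y\phi(\bar a,y)$, with $\bar a$ already relabelled to $g_v(\bar a)\subseteq[n]$, keep only the children instantiating $y$ with elements of $[n]\setminus g_v(\bar a)$. The special children, i.e.\ those whose literals involve $y$, must have $y$-value in $[n]$, because their literals lie in $X_n$. So only generic children are discarded, together with any special children whose $y$-value has been relabelled away. For each $i\in[n]$ whose child is missing after relabelling, insert a copy of a generic child $\Tt_m(u)$ with $y$ swapped to $i$, using \cref{lem:swap-instantiation}. This is legal only if $i$ does not occur in $A_\exists(\Tt_m(u))\cup A_{\Lit}(\Tt_m(u))$. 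If it does occur there, first rename those witnesses inside the copy to fresh elements of $[n]$.

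At an existential node choosing a witness $j>n$, redirect $j$ to some element of $[n]$ that occurs nowhere along the current path and not in any literal of the subtree. Such an element exists because fewer than $N+r$ elements are ever ``in use'' there and $n\ge n_0$.

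Because witnesses never appear in literals unless they are in $[n]$, every leaf literal of $\Tt^*$ is either a leaf literal of $\Tt_m$ or arises from a generic child's literal under a swap of the $y$-instantiation. Generic children contain no literal with $y$, so their literals are unchanged by the swap. Witness relabellings only touch elements that do not appear in literals. Hence $\supp(\pi_n\llb\Tt^*\rrb)\subseteq\supp(\pi_{n+r}\llb\Tt\rrb)$.

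For almost existentiality, observe that each universal node of $\Tt^*$ still has a child whose instantiation of $y$ does not occur in its literals, namely a generic copy. If $\Tt_m$ at $v$ has no generic child (so $n+r\le c_m$), that case is excluded once $n_0>c_m$.

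The main obstacle is the bookkeeping in the universal step. One has to check that the fresh-element choices can be made consistently along every path, keeping pairwise distinctness of instantiations, and that the isomorphic generic copies really supply a child for every $i\in[n]\setminus g_v(\bar a)$ without collisions with the bounded set of witnesses. I expect a counting argument of the form ``at most $N$ elements are blocked on any path'' to handle this, with $n_0$ chosen accordingly.
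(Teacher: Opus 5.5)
Your plan is sound in outline and uses the same key ingredient as the paper. You normalise with \cref{lem:translation-almost-existential-1} so that only boundedly many elements ever act as witnesses. You then move the witnesses from $\{n+1,\dots,n+r\}$ onto unused elements of $[n]$ and prune universal children instantiated outside $[n]$.

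The paper executes this more leanly. It applies \cref{lem:translation-almost-existential-1} to $\foraneq y\,\psi$ instead of $\psi$. Condition (1) at the new root then immediately bounds $|A_\exists|$ of a whole $\psi$-strategy by $c_{r+1}$, so your essential-part analysis is not needed. It then fixes, once and for all, non-witnesses $i_1,\dots,i_r\in[n]$ and applies the single permutation $n+j\leftrightarrow i_j$ to the entire strategy. Finally it deletes every child of a universal node that is instantiated with some $n+j$. A global permutation preserves pairwise distinctness automatically. Since the $i_j$ are not witnesses, they entered labels only as universal instantiations, so every literal mentioning them lands in a deleted subtree. The path-dependent bookkeeping you flag as the main obstacle therefore never arises.

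If you keep your top-down relabelling, three points need repair.
\begin{itemize}
\item The target $k$ for a witness $j>n$ must avoid all witnesses of the subtree, not just the current path and the literals. Otherwise a deeper existential node that picks $k$ repeats $k$ along the path, and you are back to cascading relabellings as in \cref{lem-relationship-between-polynomials}. Taking $k\in[n]\setminus(A_\exists(\Tt_m)\cup g_v(\bar a))$ works. This set is nonempty once $n>|A_\exists(\Tt_m)|+r$; the witness set is globally bounded, since generic copies share their representative's witnesses, and the wrapping trick gives this directly.
\item With that choice, $g_v$ is the identity on $[n]$, hence $\bar a\cap[n]\subseteq g_v(\bar a)$. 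So every child required at a surviving universal node, i.e.\ one with $y$-value in $[n]\setminus g_v(\bar a)$, already exists in $\Tt_m$, and the insertion step is never invoked. That is fortunate, because as you state it, it fails when $i\in A_{\Lit}(\Tt_m(u))$. Renaming witnesses does not remove $i$ from the literals, and swapping $i$ with the $y$-value of $u$ can create literals outside $\supp(\pi_{n+r}\llb\Tt\rrb)$.
\item For almost existentiality you need a generic child to survive the pruning. This requires $n-r>c_m$, not merely that $\Tt_m$ has a generic child at $v$.
\end{itemize}
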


\begin{proof}
    Apply \cref{lem:translation-almost-existential-1} to the sentence $\foraneq y \psi$ and choose $n_0 \coloneqq c_{r+1} + r$. Now let $n \geq n_0$, $\Tt^\psi \in C_{n+r}(\psi)$ be almost existential with $\supp(\pi_{n+r} \llb \Tt^\psi \rrb) \subseteq X_n$, and $\Tt \in C_{n+r} (\foraneq y \psi)$ coincide with $\Tt^\psi$ for every instantiation of $y$. Transform $\Tt$ into $\Tt_{r+1}$ according to \cref{lem:translation-almost-existential-1}. We have $\supp(\pi_{n+r} \llb \Tt_{r+1} \rrb) \subseteq \supp(\pi_{n+r} \llb \Tt \rrb) = \supp(\pi_{n+r} \llb \Tt^\psi \rrb)$, so it suffices to provide $\Tt^* \in C_n(\psi)$ with $\supp(\pi_n \llb \Tt^{*} \rrb) \subseteq \supp(\pi_{n+r} \llb \Tt_{r+1} \rrb)$. We have $|A_\exists (\Tt_{r+1})| \leq c_{r+1} \leq n -r$, so there must be $i_1, \dots, i_r \in [n+r]$ such that $i_j \not\in A_\exists (\Tt_{r+1})$ for $j \leq r$.
    Now fix some substrategy $\Tt_{r+1}^\psi$ for $\psi$ (and an arbitrary instantiation of $y$) and consider the strategy $\Tt^*$, which arises from $\Tt_{r+1}^\psi[n+1 \leftrightarrow i_1, \dots, n+r \leftrightarrow i_r]$ 
    by deleting all subtrees $\Tt(w_j)$ where $w_j$ is successor of a universal node and $\lambda_r (w_j) = \phi (\bar a, n+j)$ for some $\phi(\bar x, y) \in \FO^{\neq}$ and $\bar a \subseteq [n+r]$. Because $i_j \not\in A_\exists (\Tt_{r+1})$, $n+j$ cannot occur in $\Tt^*$ anymore, so indeed, $\Tt^* \in C_n(\psi)$. Further, $\supp(\pi_n \llb \Tt^{*} \rrb) \subseteq \supp(\pi_{n+r} \llb \Tt \rrb)$ because $n+j$ did not occur in any leaf of $\Tt_{r+1}$ as $\supp(\pi_{n+r} \llb \Tt_{r+1} \rrb) \subseteq X_n$, and since $i_j\not\in A_\exists (\Tt_{r+1})$, $i_j$ can only occur in a leaf of $\Tt_{r+1}$ which is on a path containing $v,w$ with $\lambda_r (v)=\foraneq y \phi (\bar a, y)$ and $\lambda_r (w)= \phi (\bar a, i_j)$, so this leaf cannot occur in $\Tt^*$. Hence, neither $n+j$ nor $i_j$ occur in a leaf of $\Tt^*$, which ensures $\supp(\pi_n \llb \Tt^{*} \rrb) \subseteq \supp(\pi_{n+r} \llb \Tt \rrb)$.
\end{proof}

\begin{theorem} \label{cor:strategies-relying-on-forall-dont-matter}
	Let $\psi \in \FO^{\neq}$ be finite extension preserved in $\Fuzzy$ such that $\psi \not\equiv_\Fuzzy^{\geq m} \bot$ for all $m \in \Nat$.
    There exists an $n_0 \in \bbN$ such that for all $n \geq n_0$ and all $\Fuzzy$-interpretations $\pi$ with universe $[n]$, there is an almost existential strategy optimal for $\pi$ and $\psi$.
\end{theorem}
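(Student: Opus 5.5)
We argue by contradiction, following the outline given just before \cref{lem:swap-instantiation}. Let $n_0$ be larger than the threshold of \cref{lem:translation-almost-existential} and than the $m_0$ with $\psi \not\equiv^{m}_\Fuzzy \bot$ for all $m \geq m_0$. This $m_0$ exists by finite extension preservation, exactly as in the proof of \cref{lem:r-el-dont-occur-in-counterex-redundancy}. Suppose that for some $n \geq n_0$ there is an $\Fuzzy$-interpretation $\pi$ with universe $[n]$ such that no almost existential strategy is optimal for $\pi$ and $\psi$. Because $C_n(\psi)$ is finite and $\Fuzzy$ is linearly ordered, by \cref{lem:sum-of-proof-trees} the maximum $s^\ast$ of $\pi\llb\Tt\rrb$ over all almost existential $\Tt \in C_n(\psi)$ satisfies $s^\ast < \pi\llb\psi\rrb$. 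If $\pi\llb\psi\rrb = 0$, then every strategy is optimal, so there is nothing to show once almost existential strategies exist. Otherwise we may replace $\pi$ by any interpretation of size $n$ with nonzero value, as in \cref{lem:r-el-dont-occur-in-counterex-redundancy}; we will arrange such a replacement only if needed.

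Next, using density of $\Fuzzy$, we fix a value $v$ with $s^\ast < v < \pi\llb\psi\rrb$, where $v > 0$. Let $\pi_{\text{ext}} \supseteq \pi$ have universe $[n+r]$ with $r = \qr(\psi)$, and set $\pi_{\text{ext}}(\alpha) = v$ and $\pi_{\text{ext}}(\lnot\alpha) = 0$ for every new atom $\alpha$ involving some element of $[n+r]\setminus[n]$. By finite extension preservation, $\pi_{\text{ext}}\llb\psi\rrb \geq \pi\llb\psi\rrb > v$. Hence every optimal strategy $\Tt_{\text{ext}}$ for $\pi_{\text{ext}}$ uses no literal containing a new element, because in the min-max semiring such a literal would cap its value at $v$. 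Thus $\supp(\pi_{n+r}\llb\Tt_{\text{ext}}\rrb) \subseteq X_n$.

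The key step, and the main obstacle, is to show that some optimal $\Tt_{\text{ext}}$ is almost existential. I would try to start from an optimal $\Tt_{\text{ext}}$ relying on $\forall$ at a node $\foraneq y\phi(\bar a,y)$, and look at its successors $\phi(\bar a, i)$ for new elements $i \in [n+r]\setminus[n]$ not occurring in $\bar a$. Such successors exist, since $|\bar a| < r$ and so not all $r$ new elements can lie in $\bar a$. Since the strategy relies on $\forall$, each such successor contains a literal involving $i$, and that literal evaluates to at most $v$. This contradicts $\pi_{\text{ext}}\llb\Tt_{\text{ext}}\rrb > v$. So no optimal strategy for $\pi_{\text{ext}}$ relies on $\forall$, and every optimal strategy is almost existential. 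This part is clean. The genuinely technical part, translating back to universe $[n]$, is already packaged in \cref{lem:translation-almost-existential}.

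Finally, we apply \cref{lem:translation-almost-existential} to an optimal, almost existential $\Tt_{\text{ext}} \in C_{n+r}(\psi)$ with support in $X_n$. This yields an almost existential $\Tt^\ast \in C_n(\psi)$ with $\supp(\pi_n\llb\Tt^\ast\rrb) \subseteq \supp(\pi_{n+r}\llb\Tt_{\text{ext}}\rrb)$. In $\Fuzzy$, a monomial evaluates to the minimum of the values of its variables, so a smaller support yields a value at least as large. Since $\pi$ and $\pi_{\text{ext}}$ agree on $X_n$, we get
\[ \pi\llb\Tt^\ast\rrb \geq \pi_{\text{ext}}\llb\Tt_{\text{ext}}\rrb = \pi_{\text{ext}}\llb\psi\rrb \geq \pi\llb\psi\rrb > s^\ast. \]
This contradicts the choice of $s^\ast$ as the maximum over almost existential strategies; in particular $\Tt^\ast$ is optimal for $\pi$. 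The one point to double-check is that supports of monomials in $\Sorb(X^+,X^-)$ behave as claimed under the evaluation homomorphism. This holds because multiplication in $\Fuzzy$ is idempotent, so exponents do not matter.
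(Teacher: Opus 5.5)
Your proposal is correct and follows the paper's proof essentially step for step. You extend $\pi$ by $r$ new elements with a small positive value on new atoms, and use finite extension preservation to force every optimal strategy of $\pi_{\text{ext}}$ to have support in $X_n$ (hence, since $|\bar a|<r$, to be almost existential). You then pull it back via \cref{lem:translation-almost-existential} using idempotence of $\min$. Your extra requirement $v > s^\ast$ is harmless but unnecessary, since any $0<v<\pi\llb\psi\rrb$ works, and contradicting the maximality of $s^\ast$ is the same inequality chain the paper phrases as a violation of extension preservation.
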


\begin{proof}
    Because $\psi$ is finite extension preserved in $\Fuzzy$ and $\psi \not\equiv_\Fuzzy^{\geq m} \bot$ for all $m \in \Nat$, there must be some $m_0$ such that $\psi \not\equiv_\Fuzzy^{m} \bot$ for all $m\geq m_0$.
	Let $n_0\geq m_0$ be large enough to apply \cref{lem:translation-almost-existential}.
	
	Now suppose that there was an $\Semi$-interpretation $\pi$ with universe $[n]$ where $n \geq n_0$ such that all strategies optimal for $\pi$ and $\psi$ rely on $\forall$. 
    Without loss of generality, we can assume that $\pi \llb \psi \rrb \neq 0$: In the case $\pi \llb \psi \rrb = 0$ every $\Tt \in C_n (\psi)$ is optimal for $\pi$ and $\psi$, so there cannot be an almost existential $\Tt \in C_n(\psi)$ at all and we can replace $\pi$ with some $\Semi$-interpretation that does not evaluate $\psi$ to~$0$. This is possible as we have chosen $n$ to be large enough to ensure $\psi \not\equiv_\Fuzzy^{m} \bot$.

    Hence, we can fix some $0<\epsilon< h_n(\pi_n \llb \psi \rrb)$ and define an extension $\pi_{\text{ext}}$ of $\pi$, which has universe $[n+r]$. We set $\pi_{\text{ext}} (\alpha) = \epsilon$ and $\pi_{\text{ext}} (\alpha)=0$ for each $\alpha \in \operatorname{Atoms}_{[n+r]\setminus[n]} (\tau)$.
    Now let $\Tt \in C_{n+r} (\psi)$ be optimal for $\pi_{\text{ext}}$.
    If $\supp(\pi_{n+r} \llb \Tt \rrb)\not\subseteq X_n$, we would have a contradiction to finite extension preservation of $\psi$ as $\pi_{\text{ext}} \llb \psi \rrb = \pi_{\text{ext}} \llb \Tt \rrb \leq \epsilon < \pi \llb \psi \rrb$.
	So suppose that $\supp(\pi_{n+r} \llb \Tt \rrb)\subseteq X_n$. In particular, this excludes the case that $\Tt$ relies on $\forall$ because for every node $v$ of $\Tt$ with $\lambda(v) = \foraneq y \phi (\bar a, y)$ there must be $w_i \in vE$ such that $\lambda(w_i) = \foraneq y \phi (\bar a, n+i)$ for some $i \in [r]$ due to $|\bar a|<r$. Hence, $\supp(\pi_{n+r} \llb \Tt \rrb)\subseteq X_n$ implies that $n+i \not\in A_{\Lit} (\Tt(w_i))$, which means that $\Tt$ is almost existential. But then we can apply \cref{lem:translation-almost-existential} and obtain an almost existential $\Tt^{*} \in C_n (\psi)$ such that $\supp (\pi_n \llb \Tt^{*} \rrb) \subseteq \supp (\pi_{n+r} \llb \Tt \rrb)$. Because multiplication is idempotent in $\Fuzzy$, this implies $\pi_{\text{ext}} \llb \psi \rrb = \pi_{\text{ext}}\llb \Tt \rrb \leq \pi \llb \Tt^{*} \rrb < \pi \llb \psi \rrb$, where that last inequality is because there is no almost existential strategy optimal for $\pi$, a contradiction to finite extension preservation of $\psi$.
\end{proof}

\subsection{Rewriting the Formula}

We want to apply \cref{cor:strategies-relying-on-forall-dont-matter} to eliminate subformulae $\foraneq y \phi (\bar x, y)$ of $\psi$ and eventually obtain an $\Fuzzy$-equivalent existential sentence. Unlike in the Viterbi case, where we could prove the existence of existential optimal strategies, causing all universal subformulae to be redundant, an additional argument is needed in the case where all optimal strategies are only almost existential and still use a subformula $\foraneq y \phi (\bar x, y)$. We cope with that by eliminating innermost subformulae $\foraneq y \phi (\bar x, y)$ first, and by translating $\foraneq y \phi (\bar x, y)$, based on a prenex-DNF, into an $\Fuzzy$-equivalent formula where the universal quantifier only admits strategies relying on $\forall$.

\begin{lemma} \label{lem:lattice-finitely-cont}
	Every lattice semiring $\Semi$ is \emph{finitely continuous}, i.e., for any $n \in \omega$ and $(s_i)_{i \leq n}, t \subseteq \Semi$ it holds that $\bigsqcap_{i \leq n} (t \sqcup s_i) = t \sqcup \bigsqcap_{i \leq n} s_i$.
\end{lemma}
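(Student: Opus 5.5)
The identity $\bigsqcap_{i \leq n} (t \sqcup s_i) = t \sqcup \bigsqcap_{i \leq n} s_i$ is the finite dual distributive law. I would derive it from the distributivity of $\sqcap$ over $\sqcup$ that is built into the definition of a lattice semiring, where semiring multiplication $\sqcap$ distributes over semiring addition $\sqcup$. The plan is induction on $n$. For $n=0$ both sides are $t \sqcup s_0$, so there is nothing to prove. For the step, it suffices to establish the binary case
\[
(t \sqcup a) \sqcap (t \sqcup b) = t \sqcup (a \sqcap b)
\]
for all $a,b \in \Semi$. Applying it with $a = \bigsqcap_{i \leq n} s_i$ and $b = s_{n+1}$, after using the induction hypothesis to rewrite $\bigsqcap_{i\le n}(t\sqcup s_i)$ as $t\sqcup a$, closes the induction.

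For the binary case, I would expand the left-hand side using distributivity of $\sqcap$ over $\sqcup$ twice:
\[
(t \sqcup a) \sqcap (t \sqcup b) = (t \sqcap t) \sqcup (t \sqcap b) \sqcup (a \sqcap t) \sqcup (a \sqcap b).
\]
Idempotence of multiplication gives $t \sqcap t = t$. Absorption, which holds in every lattice semiring since it is absorptive, gives $t \sqcup (t \sqcap b) = t$ and $t \sqcup (a \sqcap t) = t$. The expression therefore collapses to $t \sqcup (a \sqcap b)$, as required.

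Alternatively, the order-theoretic argument is equally short. Every bounded distributive lattice satisfies both distributive laws, since each is equivalent to the other in any lattice. Finite continuity is exactly the law $\sqcup$ over $\sqcap$, iterated finitely often.

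I expect no real obstacle. The only point needing care is to use only finite distributivity, which holds in every lattice semiring, and not any infinitary version; this matches the restriction to finite $n$ in the statement.
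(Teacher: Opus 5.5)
Your proposal is correct and matches the paper's proof: the paper also reduces to the binary case by induction and expands $(t \sqcup s_0) \sqcap (t \sqcup s_1)$ by distributivity. It then collapses the extra terms using multiplicative idempotence and absorption, exactly as you do.
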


\begin{proof}
	We only consider the case $n=1$, the rest follows by induction. Distributivity together with multiplicative idempotence and absorption yield
	\begin{align*}
		(t \sqcup s_0) \sqcap (t \sqcup s_1) &= t  \sqcup \underbrace{(t \sqcap s_1)}_{\leq t} \sqcup \underbrace{(s_0 \sqcap t)}_{\leq t} \sqcup (s_0 \sqcap s_1) = t \sqcup (s_0 \sqcap s_1). \qedhere
	\end{align*}
\end{proof}

\begin{lemma} \label{lem:remove-subform-not-cont-y}
	Let $\phi(\bar x) \in \FO^{\neq}$ be a formula that does not contain the variable $y$. Then $\foraneq y (\phi (\bar x) \vee \vartheta(\bar x, y)) \equiv_\Fuzzy^\omega \phi(\bar x) \vee \foraneq y \, \theta(\bar x, y) $.
\end{lemma}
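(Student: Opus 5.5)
We unfold both sides semantically on an arbitrary finite $\Fuzzy$-interpretation $\pi$ with universe $A$, fix an instantiation $\bar a \subseteq A$ of the free variables, and compare values; the result then follows directly from \cref{lem:lattice-finitely-cont}. In $\Fuzzy$, addition is $\max = \sqcup$ and multiplication is $\min = \sqcap$. By the definition of $\foraneq$,
\[
\pi \llb \foraneq y\, (\phi(\bar a) \vee \vartheta(\bar a, y)) \rrb = \bigsqcap_{b \in A \setminus \bar a} \bigl(\pi \llb \phi(\bar a, b)\rrb^{\phantom{|}} \sqcup \pi \llb \vartheta(\bar a, b) \rrb\bigr).
\]
Since $y$ does not occur in $\phi$, we have $\pi \llb \phi(\bar a)\rrb$ in place of $\pi \llb \phi(\bar a, b)\rrb$, and this value is independent of $b$. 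We record this as a small auxiliary fact, proved by a routine induction on $\phi$: substituting a value for a variable that does not occur does not change the valuation. Write $t \coloneqq \pi \llb \phi(\bar a) \rrb$ and $s_b \coloneqq \pi \llb \vartheta(\bar a, b) \rrb$.

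The left-hand side is then $\bigsqcap_{b \in A \setminus \bar a} (t \sqcup s_b)$. Because $\pi$ is finite, the index set $A \setminus \bar a$ is finite, so \cref{lem:lattice-finitely-cont} applies, with $\Fuzzy$ a lattice semiring. It rewrites this as $t \sqcup \bigsqcap_{b} s_b$, which is exactly $\pi \llb \phi(\bar a) \vee \foraneq y\, \vartheta(\bar a, y) \rrb$.

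One edge case needs care: if $A \setminus \bar a$ is empty, the empty infimum equals $1$ on both sides. The left side is $1$, and the right side is $t \sqcup 1 = 1$, so the identity still holds. \cref{lem:lattice-finitely-cont} is stated only for nonempty families indexed by $i \le n$, so this case is treated separately.

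There is no real obstacle here. The only points needing attention are that restricting to finite interpretations is what makes finite continuity sufficient, and the empty-universe case. For infinite universes one would instead need the infinite distributive law, which $\Fuzzy$ in fact has, but the statement only claims $\equiv^\omega_\Fuzzy$.
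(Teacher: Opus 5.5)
Your proof is correct and is essentially the paper's own argument. You unfold $\foraneq y$ as a finite infimum over $A \setminus \bar a$, observe that $\pi \llb \phi(\bar a) \rrb$ does not depend on $b$, and apply the finite continuity of lattice semirings (\cref{lem:lattice-finitely-cont}); your separate handling of $A \setminus \bar a = \varnothing$, which the paper leaves implicit, is also correct.
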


\begin{proof}
	Let $\pi$ be a finite $\Fuzzy$-interpretation of universe $A$ and $\bar a \subseteq A$. We have
	\begin{align*}
		\pi \llb \foraneq y (\phi (\bar a) \vee \vartheta(\bar a, y)) \rrb &= \bigsqcap_{b \not\in \bar a} \bigl( \pi \llb \phi(\bar a) \rrb \operatorname{\sqcup} \pi \llb \theta(\bar a, b) \rrb  \bigr) \\ &\overset{*}= \pi \llb \phi (\bar a) \rrb \sqcup \bigsqcap_{b \not\in \bar a} \pi \llb \theta(\bar a, b) \rrb = \pi \llb \phi(\bar a) \vee \foraneq y \, \theta(\bar a, y) \rrb,
	\end{align*}
	where $(*)$ is by  \cref{lem:lattice-finitely-cont}.
\end{proof}

\begin{lemma}\label{lem:existential-pnf-dnf}
	Let $\Semi$ be a lattice semiring.
	Every formula $\phi (\bar x) \in \FO^{\neq}$ that does not contain a universal quantifier is $\Semi$-equivalent to a formula $\exneq \bar z \bigvee_i \theta_i(\bar x, \bar z)$, where each $\theta_{i} (\bar x, \bar z)$ is a conjunction of literals.
\end{lemma}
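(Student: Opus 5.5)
We prove the statement by structural induction on $\phi(\bar x)$, which contains no universal quantifier and is in negation normal form in $\FO^{\neq}$. We show that every such formula is $\Semi$-equivalent to a formula of the form $\exneq \bar z \bigvee_i \theta_i(\bar x, \bar z)$. Each $\theta_i$ is a conjunction of literals, and we allow the empty conjunction $\top$ and the empty disjunction $\bot$. The tuple $\bar z$ consists of fresh variables, and $\exneq \bar z$ abbreviates $\exneq z_1 \dots \exneq z_k$. Note that by the semantics of $\exneq$, the values chosen for $\bar z$ are pairwise distinct and distinct from $\bar x$. The base cases (literals, $\top$, $\bot$) are immediate with empty $\bar z$ and a single disjunct. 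For a quantifier $\exneq y\, \phi'(\bar x, y)$, the induction hypothesis gives $\phi' \equiv_\Semi \exneq \bar z \bigvee_i \theta_i(\bar x, y, \bar z)$, and hence $\exneq y \exneq \bar z \bigvee_i \theta_i$ already has the desired form.

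The Boolean connectives need more care. Write the two normal forms as $\exneq \bar z \alpha(\bar x, \bar z)$ and $\exneq \bar w \beta(\bar x, \bar w)$, with $\bar z$ and $\bar w$ renamed apart. The naive move to $\exneq \bar z \exneq \bar w (\alpha \circ \beta)$, as in \cref{lem:closure-under-and-or}, is not correct in $\FO^{\neq}$, because $\exneq$ forces the witnesses for $\bar w$ to differ from those for $\bar z$. We handle this with the usual equality-type splitting, in the same way as in the proof of \cref{lem:FOneq}:
\[
\exneq \bar z\, \alpha \wedge \exneq \bar w\, \beta \equiv_\Semi \bigvee_{\sigma} \exneq \bar u\, \bigl(\alpha \wedge \beta\bigr)\sigma,
\]
where $\sigma$ ranges over all partial injections identifying some $w_j$ with some $z_i$. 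The tuple $\bar u$ consists of the remaining distinct variables, and $(\alpha\wedge\beta)\sigma$ is the result of substituting according to $\sigma$.

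To verify this equivalence, expand the left-hand side by distributivity into a sum over pairs of tuples $(\bar c, \bar d)$ of the products $\pi\ext{\alpha(\bar a,\bar c)} \cdot \pi\ext{\beta(\bar a,\bar d)}$. Each pair $(\bar c, \bar d)$ has exactly one overlap pattern $\sigma$, and it occurs as a summand of the $\sigma$-disjunct. Conversely, every summand of the right-hand side arises from such a pair. Addition is idempotent, so the two suprema coincide. The disjunction case is similar and even simpler: $\exneq \bar z\,\alpha \vee \exneq \bar w\,\beta \equiv_\Semi \exneq \bar z\, \alpha \vee \exneq \bar w\, \beta$ can be padded by prefixing dummy quantifiers. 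To avoid the nonexistence problem on small universes, it is cleaner to first pull out both blocks via the same splitting over overlap patterns. This uses idempotence of $+$ and the fact that each disjunct depends only on its own variables.

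Having merged the prefixes, it remains to bring the matrix $\bigvee_\sigma (\alpha\wedge\beta)\sigma$, or $\alpha \vee \beta$, into DNF. Distributivity of $\cdot$ over $+$ turns $(\bigvee_i \theta_i)\wedge(\bigvee_j \eta_j)$ into $\bigvee_{i,j} (\theta_i \wedge \eta_j)$. The disjunction $\bigvee_\sigma \exneq \bar u_\sigma(\dots)$ then has different prefix lengths in different disjuncts, and we unify them with the same overlap-splitting argument. We treat this as the main technical obstacle: the quantifier blocks of different disjuncts must be merged while respecting the distinctness built into $\exneq$, and the argument must not break on universes too small to supply enough distinct witnesses. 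We would handle it by always splitting over overlap patterns, so that no dummy quantifier is ever introduced. In this way only idempotence of addition and distributivity are needed, and both hold in every lattice semiring.
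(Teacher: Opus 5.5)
Your conjunction step is sound: splitting over overlap patterns gives $\exneq \bar z\,\alpha \wedge \exneq \bar w\,\beta \equiv_\Semi \bigvee_\sigma \exneq \bar u_\sigma\,(\alpha\wedge\beta)\sigma$ on every universe. This works because a product of two sums equals the sum over all pairs, even when one index set is empty.

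The gap is exactly the step you flag as the main obstacle: merging disjuncts whose $\exneq$-prefixes have different lengths into a single prefix. Overlap splitting cannot do this. For a disjunction it would need $\sum_{\bar c} a_{\bar c} + \sum_{\bar d} b_{\bar d} = \sum_{\bar c,\bar d}(a_{\bar c}+b_{\bar d})$, which holds only when both index sets are nonempty, i.e.\ precisely not on small universes. Its output is also again a disjunction of blocks of different lengths, so the argument is circular.

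In fact no argument can succeed, because the normal form is false on small universes. Over unary $P,R,Q$ consider
\[
\phi = \exneq z\, Pz \vee \exneq w_1 \exneq w_2\,(Rw_1 \wedge Qw_2),
\]
and suppose $\phi \equiv_\Semi \exneq z_1\dots\exneq z_k\,\theta$ with $\theta=\bigvee_i\theta_i$.
\begin{itemize}
\item If $k=0$, the normal form has a constant value, while $\phi$ takes both values $0$ and $1$ on one-element interpretations.
\item If $k\geq 2$, the normal form evaluates to $0$ on the one-element interpretation in which $Pc$ holds, where $\phi$ evaluates to $1$.
\item If $k=1$, the value $\bigsqcup_c \pi\llb \theta(c) \rrb$ is determined by the one-element subinterpretations. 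Take $\{c_1,c_2\}$ with $P$ false, $Rc_1, Qc_2$ true and $Qc_1, Rc_2$ false. Then $\phi$ evaluates to $1$. But both one-element subinterpretations evaluate $\phi$ to $0$, so $\pi\llb\theta(c_1)\rrb=\pi\llb\theta(c_2)\rrb=0$ and the normal form evaluates to $0$.
\end{itemize}

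What does hold is the equivalence on universes with at least $|\bar x|+|\bar z|$ elements, where $\bar z$ is the final prefix. This is all that is used: \cref{lem:remove-universal-subform} only needs the normal form on sufficiently large universes, and small ones are patched as in \cref{lem:large-universes-suffice}. In that regime the dummy quantifiers you tried to avoid are harmless, and your proof goes through with plain padding.

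The paper's proof also lives in this regime. It simply nests $\exneq \bar z \exneq \bar u$ and handles overlapping witnesses by instantiating the $\theta_j^*$ with all pairwise distinct subtuples $\bar v$ of $\bar u \cup \bar z$, which is an alternative to your explicit disjunction over $\sigma$. This tacitly requires enough elements to instantiate the merged prefix. So the repair is to weaken the target to large universes, not to refine the splitting.
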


\begin{proof}
	Using distributivity, the claim readily follows by structural induction on $\phi (\bar x)$. Nothing remains to be shown in the cases where $\phi (\bar x)$ is a literal (we permit the quantifier prefix to be empty) or of the form $\exneq y \chi (\bar x, y)$. 
	
	So let $\phi (\bar x) =( \psi \circ \theta) (\bar x)$ where $\circ \in \{\vee, \wedge\}$. By IH, there are formulae $\exneq \bar z \bigvee_i \psi_i^* (\bar x, \bar z)$ and $\exneq \bar u \bigvee_j \theta_j^* (\bar x, \bar u) $  such that $\psi (\bar x) \equiv_\Semi \bigvee_{i} \exneq \bar z \psi_{i}^* (\bar x, \bar z)$ and $\theta (\bar x) \equiv_\Semi \exneq \bar u \bigvee_{j} \theta_{j}^* (\bar x, \bar u)$. 
	Rename variables if needed to make sure that the entries of $\bar z$ and $\bar u$ are pairwise distinct.
	In the case $\circ=\vee$, we get $\phi (\bar x) \equiv_\Semi \exneq \bar z \exneq \bar u \bigl( \bigvee_i \psi_i^* (\bar x, \bar z) \vee \bigvee \{\theta_j^* (\bar x, \bar v) \mid \bar v \subseteq \bar u \cup \bar z \text { pairwise distinct, } j\}\bigr)$. Note that because we now nest the existential quantifiers, we have to take into account instantiations of $\bar z$ for $\theta_j^*$ too. 
	If $\circ=\wedge$, distributivity yields
	\begin{align*}
		\phi (\bar x) &\equiv_\Semi \exneq \bar z \exneq \bar u \bigl( \bigvee_i \psi_i^* (\bar x, \bar z) \wedge \bigvee \{\theta_j^* (\bar x, \bar v) \mid \bar v \subseteq \bar u \cup \bar z \text { pairwise distinct, } j\}\bigr) \\
		&\equiv_\Semi \exneq \bar z \exneq \bar u \bigl( \bigvee \{ (\bigvee_i \psi_i^* (\bar x, \bar z)) \wedge \theta_j^* (\bar x, \bar v) \mid \bar v \subseteq \bar u \cup \bar z \text { pairwise distinct, } j\}\bigr) \\
		&\equiv_\Semi \exneq \bar z \exneq \bar u \bigl( \bigvee \{ \psi_i^* (\bar x, \bar z) \wedge \theta_j^* (\bar x, \bar v) \mid \bar v \subseteq \bar u \cup \bar z \text { pairwise distinct, } j, i\}\bigr). \qedhere
	\end{align*}
\end{proof}

Repeated application of the following \cref{lem:remove-universal-subform} finally proves \cref{thm:extension-preservation-fuzzy}.

\begin{lemma} \label{lem:remove-universal-subform}
	For every $\psi \in \FO^{\neq}$ that is extension preserved on finite $\Fuzzy$-interpretations there is some existential $\psi^* \in \FO^{\neq}$ such that $\psi \equiv_\Fuzzy^\omega \psi^*$ which contains less universal quantifiers than $\psi$.
\end{lemma}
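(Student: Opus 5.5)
The plan is to eliminate an \emph{innermost} universal subformula $\foraneq y\,\phi(\bar x,y)$ of $\psi$; here $\phi$ contains no universal quantifier. The statement should be read as producing some $\psi^*\equiv^\omega_\Fuzzy\psi$ with fewer universal quantifiers; repeating this yields an existential sentence. Two preliminary reductions come first. If $\psi\equiv^{\geq m}_\Fuzzy\bot$ for some $m$, extension preservation forces $\psi\equiv^{\omega}_\Fuzzy\bot$, and we are done. Second, by \cref{lem:large-universes-suffice} (whose proof only uses existential sentences and $\psi_i$, and can be adapted so that it removes rather than adds universal quantifiers), it suffices to find $\psi^*$ with $\psi\equiv^{\geq n}_\Fuzzy\psi^*$ for some $n$.

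To normalise the chosen subformula, we apply \cref{lem:existential-pnf-dnf} and write $\phi(\bar x,y)\equiv_\Fuzzy \exneq\bar z\bigvee_i\theta_i(\bar x,y,\bar z)$. We then move the existential quantifiers out of the disjunction, $\exneq\bar z\bigvee_i\theta_i\equiv\bigvee_i\exneq\bar z\,\theta_i$. Each disjunct $\exneq\bar z\,\theta_i$ in which no literal mentions $y$ is independent of $y$. The only exception is an instantiation $\bar z\ni y$, which is excluded by the $\neq$-semantics; for large universes this omitted instance changes nothing, and I would verify this via a symmetry/supremum argument. By \cref{lem:remove-subform-not-cont-y} we can pull these disjuncts out:
\[\foraneq y\,\phi\equiv^\omega_\Fuzzy \chi(\bar x)\vee\foraneq y\,\phi'(\bar x,y),\]
where $\chi$ is existential and every disjunct of $\phi'$ contains a literal mentioning $y$. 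The result $\psi'$ is still $\Fuzzy$-equivalent to $\psi$, hence finite extension preserved, and has the same number of universal quantifiers. Its key feature is that every strategy using the new node $\foraneq y\,\phi'(\bar a,y)$ has, in every branch $\phi'(\bar a,b)$, a conjunction of literals containing $b$. Hence any strategy using that node relies on $\forall$.

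Now apply \cref{cor:strategies-relying-on-forall-dont-matter} to $\psi'$. For all sufficiently large $\pi$ there is an optimal almost existential strategy, and such a strategy never uses the node $\foraneq y\,\phi'$. Let $\psi^*$ be $\psi'$ with $\foraneq y\,\phi'$ replaced by $\bot$. Strategies of $\psi'$ avoiding this node correspond bijectively and value-preservingly to strategies of $\psi^*$ not using $\bot$; the rest evaluate to $0$. Since $\Fuzzy$ is linearly ordered, the Sum-of-Strategies theorem (\cref{lem:sum-of-proof-trees}) gives $\pi\llb\psi'\rrb=\max$ over these strategies, so $\pi\llb\psi'\rrb=\pi\llb\psi^*\rrb$ on large interpretations. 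We thus obtain $\psi\equiv^{\geq n}_\Fuzzy\psi^*$ with one fewer universal quantifier.

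I expect the main obstacle to be the normalisation step. The issue is keeping the equivalence exact under the $\exneq$/$\foraneq$ semantics: we must pull $\exneq\bar z$ past $\foraneq y$ and split off disjuncts without $y$, even though instantiations of $\bar z$ must avoid $y$. I would first try converting back to ordinary $\FO$ via \cref{lem:FOneq}, performing the separation there where equalities are explicit, and returning to $\FO^{\neq}$ afterwards. There, case splits $z_j=y$ become disjuncts, which either mention $y$ or are absorbed. A second check is that the replacement with $\bot$ affects only this one occurrence, so other universal quantifiers are untouched and the count strictly decreases.
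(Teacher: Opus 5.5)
Your outline is the paper's proof: take an innermost $\foraneq y\,\phi$, put $\phi$ in prenex DNF (\cref{lem:existential-pnf-dnf}), pull out the $y$-free disjuncts (\cref{lem:remove-subform-not-cont-y}), drop the rest via \cref{cor:strategies-relying-on-forall-dont-matter}, and lift with \cref{lem:large-universes-suffice}. Your treatment of the case $\psi\equiv^{\geq m}_\Fuzzy\bot$ is a useful addition. However, the step you flagged is a genuine gap, and neither of your patches closes it.

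\emph{Why the split fails.} Under the $\neq$-semantics a disjunct $\theta_i(\bar x,\bar z)$ without $y$ still depends on $y$, because its witnesses $\bar z$ must avoid the value of $y$. The omitted instances $\bar z\ni y$ are not negligible on large universes. Take $\psi=\exneq x\,Sx\vee\foraneq y\,\exneq z\,(Rz\vee Sy)$. On any finite $\Fuzzy$-interpretation with at least two elements, the universal part equals $\min_b\bigl(\max_{c\neq b}\pi(Rc)\sqcup\pi(Sb)\bigr)$. Finite continuity and $\pi(Sb)\leq\max_a\pi(Sa)$ then give $\pi\llb\psi\rrb=\max_a\pi(Sa)\sqcup M_2$, where $M_2$ is the second largest entry of the multiset of $R$-values. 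Both terms can only grow under extensions, so $\psi$ is finitely extension preserved; indeed $\psi\equiv^\omega_\Fuzzy\exneq x\,Sx\vee\exneq x\,\exneq x'\,(Rx\wedge Rx')$. Your normalisation instead yields $\exneq x\,Sx\vee\exneq z\,Rz\vee\foraneq y\,\exneq z\,Sy$, whose value is $\max S\sqcup M_1$. The final output $\exneq x\,Sx\vee\exneq z\,Rz$ therefore differs from $\psi$ on every size $n\geq 2$: take $R$ equal to $1$ at one element and $0$ elsewhere, and $S\equiv 0$, giving $1$ versus $0$.

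\emph{Why the $\FO$ detour does not help.} There the constraint survives as a literal $z\neq y$, so the disjunct $z\neq y\wedge Rz$ lands in $\phi'$. A strategy using it in branch $b$ has no relational literal containing $b$, so it does not rely on $\forall$, and \cref{cor:strategies-relying-on-forall-dont-matter} no longer excludes the node. Case splits on $z_j=y$ also go the wrong way: they express the unrestricted quantity as the restricted one joined with extra disjuncts. The paper's write-up passes over the same point (``by associativity of addition'') and is subject to the same example.

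\emph{A repair.} Instead of splitting, add an existential lower bound. Let $I_0$ be the $y$-free disjuncts and $\bar z=(z_1,\dots,z_k)$, and set $E(\bar x)\coloneqq\bigvee_{i\in I_0}\exists\bar z\,\bigl(\theta_i(\bar x,\bar z)\wedge\bigwedge_{j\leq k}\phi(\bar x,z_j)\bigr)$. Write this in $\FO$ with distinctness constraints and translate it by \cref{lem:FOneq}; it is existential because $\phi$ is.
\begin{itemize}
\item For a witness tuple $\bar c$, every $b\notin\bar a\cup\bar c$ has $\pi\llb\phi(\bar a,b)\rrb\geq\pi\llb\theta_i(\bar a,\bar c)\rrb$, and the finitely many $b=c_j$ are covered by the conjuncts. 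Hence $\pi\llb E(\bar a)\rrb\leq\pi\llb\foraneq y\,\phi(\bar a,y)\rrb$, and $\psi'\coloneqq\psi[\foraneq y\,\phi/E\vee\foraneq y\,\phi]$ is $\Fuzzy$-equivalent to $\psi$.
\item Take an optimal almost existential strategy for $\psi'$ (with $\phi$ syntactically in prenex DNF), which exists by \cref{cor:strategies-relying-on-forall-dont-matter}. At each use of the node $\foraneq y\,\phi(\bar a,y)$, some branch $b_0$ chooses a tuple $\bar c$ and a disjunct $i\in I_0$.
\item The value of that subtree is then at most $\min\bigl(\pi\llb\theta_i(\bar a,\bar c)\rrb,\min_j\pi\llb\phi(\bar a,c_j)\rrb\bigr)\leq\pi\llb E(\bar a)\rrb$.
\item Switching to the $E$-disjunct gives an optimal strategy that avoids the node. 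Hence $\psi\equiv^{\geq n}_\Fuzzy\psi[\foraneq y\,\phi/E]$, and your remaining steps go through.
\end{itemize}
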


\begin{proof}
	Fix a subformula $\foraneq y \phi (\bar x, y)$ of $\psi$ where $\qr_\forall (\phi (x,y)) = 0$.
    Let $\phi (\bar x, y) \equiv_\Fuzzy^\omega \exneq \bar z \bigvee_i \theta_i(\bar x, y, \bar z)$ such that each $\theta_i (\bar x, y, \bar z)$ is a conjunction of literals. Let \[
    \psi (\bar x, y, \bar z) \coloneqq \bigvee \{\theta_i (\bar x,y,z) \mid i \text{ such that at least one literal in } \theta_i (\bar x,y,z) \text{ contains } y\} \text{ and}
    \]
  \[
  \chi (\bar x, \bar z) \coloneqq \bigvee \{\theta_i (\bar x,y,z) \mid i \text{ such that no literal of } \theta_i (\bar x,y,z) \text{ contains } y\}.
  \]
  We allow both disjunctions to be empty (in this case we simply have $\top$). By associativity of addition, we have $\phi (\bar x, y) \equiv_\Fuzzy^\omega \exneq \bar z \chi (\bar x, \bar z) \vee \exneq \bar z \psi (\bar x, y, \bar z)$. Now we can apply \cref{lem:remove-subform-not-cont-y} and get $\foraneq y \phi (\bar x, y) \equiv_\Fuzzy^\omega \exneq \bar z \chi (\bar x, \bar z) \vee \foraneq y \exneq \bar z \psi (\bar x, y, \bar z)$. 
    In particular, this implies $\psi \equiv_\Fuzzy^\omega \psi [\foraneq y \phi (\bar x, y) / \exneq \bar z \chi (\bar x, \bar z) \vee \foraneq y \exneq \bar z \psi (\bar x, y, \bar z)]$.
    
  	Let $n$ be large enough to apply \cref{cor:strategies-relying-on-forall-dont-matter} to $\psi [\foraneq y \phi (\bar x, y) / \exneq \bar z \chi (\bar x, \bar z) \vee \foraneq y \exneq \bar z \psi (\bar x, y, \bar z)]$. Now since every disjunct of $\psi (\bar x, y, \bar z)$ contains $y$, every strategy using $\foraneq y \exneq \bar z \psi (\bar a, y, \bar z)$ for some $\bar a$ relies on $\forall$. By \cref{cor:strategies-relying-on-forall-dont-matter}, we know that for every $\Fuzzy$-interpretation $\pi$ of cardinality at least $n$ there must be an almost existential optimal strategy, i.e. a strategy that does not use $\foraneq y \exneq \bar z \psi (\bar x, y, \bar z)$.
    This implies $\psi \equiv_\Fuzzy^{\geq n} \psi [\foraneq y \phi (\bar x, y) / \exneq \bar z \chi (\bar x, \bar z) \vee \bot] \equiv_\Fuzzy \psi [\foraneq y \phi (\bar x, y) / \exneq \bar z \chi (\bar x, \bar z)]$, which eliminates the universal subformula. Now we can apply the same construction as in \cref{lem:large-universes-suffice} to obtain a sentence $\psi^*$ that is equivalent to $\psi$ not just on $\Fuzzy$-interpretations of cardinality at least $n$, but on all finite $\Fuzzy$-interpretations. This does not introduce new universal quantifiers, so $\psi^*$ contains one universal quantifier less than~$\psi$.
\end{proof}

Based on \cref{thm:extension-preservation-fuzzy}, we can now generalise the fuzzy case to all lattice semirings based on the reduction method elaborated in \cref{sec:lattices}.

\FiniteExtPresLattices*

\begin{proof}
    The implication from right to left is covered by \cref{lem:syntax-to-semantics}(2). So let $\psi \in \FO$ be extension preserved on finite $\calL$-interpretations. By \cref{prop:preservation-in-all-lattices}, $\psi$ is also extension preserved on finite $\Fuzzy$-interpretations, and this must also hold for $\psi^* \in \FO^{\neq}$ with $\psi^* \equiv_\Fuzzy \psi$ we obtain from \cref{lem:FOneq}. By \cref{thm:extension-preservation-fuzzy}, there is some $\phi^* \in \FO^{\neq}$ with $\phi^* \equiv_\Fuzzy^\omega \phi$. Another application of \cref{lem:FOneq} yields $\phi \in \FO$ with $\phi \equiv_\Fuzzy \phi^* \equiv_\Fuzzy^\omega \psi$ and, by \cref{cor:equivalence-from-S3-to-any-lattice}, this eventually implies $\phi \equiv_\calL^\omega \psi$.
\end{proof}

\section{Conclusion}

In this paper, we have provided a framework for studying the interplay between syntactic forms and semantic properties of first-order formulae evaluated on semirings. As
semiring semantics refines the classical Boolean semantics,
it provides more fine-grained model-theoretic notions. Preservation properties, for example, naturally generalise to this context based on natural order of the semirings. 
Relating them to specific syntactic forms allow us to answer questions such as: How can we syntactically characterise those queries for which we only gain confidence, or reduce costs,
if we extend our data?

As a first result, we have seen that preservation theorems do not always generalise to the semiring setting as, for instance, the extension preservation theorem fails for the tropical semiring, the Viterbi semiring, the {\L}ukasiewicz semiring, and the natural semirings $\Nat$ and $\Nat^\infty$.
However, on the fairly broad class of lattice semirings we manage to recover the classical preservation theorems for homomorphisms, extensions and subinterpretations.
The proof combines adaptations of classical methods from model theory, based on compactness and amalgamation, with a reduction technique specifically designed for semiring semantics. Results with a somewhat similar flavour have been
studied in the context of certain many-valued logics \cite{BadiaCosDelNog19,Carr26}. However, our framework 
differs from these studies in multiple ways: Our notions of preservation and logical equivalence are much more fine-grained
as they do not just distinguish valuations of $1$ from those taking different values. Moreover, in our setting, elements of the algebraic structure in which we evaluate the formulae do not appear as constants in the logic (which has several relevant consequences as existential sentences can no longer describe the finite subinterpretations of a given interpretation), and we obtained positive results for arbitrary lattice semirings rather than just the finite linearly ordered ones.

Surprisingly, the extension preservation theorem, which fails in finite model theory under Boolean semantics, does hold for a number of semirings including the tropical semiring, the Viterbi semiring, the {\L}ukasiewicz semiring, and every lattice semiring, which is the second main contribution of this paper. 

The results presented here suggest two main directions for future research: towards other semirings, and other preservation theorems (both in the finite and general case). More specifically, the approach employed for lattices might be applicable to other preservation theorems, for example concerning chain preservation. 
Moreover we believe that the curious phenomenon of extension preservation in the finite deserves further investigation.
We suspect that a similar proof technique could be used to establish extension preservation, as well as further preservation theorems such as subinterpretation preservation, in the finite also other semirings.
For all we know, it is possible that the Boolean lattice is the \emph{only one} which does not satisfy extension and substructure preservation in the finite. Exploring this question further would shed a new light on preservation theorems in finite model theory.

\bibliography{provenance}

@incollection{AtseriasKol23,
  author       = {A.~Atserias and P.~Kolaitis},
  title        = {Acyclicity, consistency, and positive semirings},
  booktitle      = {Samson Abramsky on Logic and Structure in Computer Science and Beyond, volume 25 of Outstanding Contributions to Logic},
  editor   =     {A.~Palmigiano and M.~Sadrzadeh},
  pages        = {623-628},
  year         = {2023}
}

@article{AtseriasKol24,
  author       = {A.~Atserias and P.~Kolaitis},
  title        = {Consistency of Relations over Monoids},
  journal      = {Proc. {ACM} Manag. Data},
  volume       = {2},
  number       = {2},
  pages        = {107},
  year         = {2024},
  doi          = {10.1145/3651608}
  }

@article{BadiaKolNog25,
      title={Codd's Theorem for Databases over Semirings}, 
      author={G.~Badia and P.~Kolaitis and C.~Noguera},
      year={2025},
    journal      = {Proc. {ACM} Manag. Data},
    volume       = {3},
    number       = {5},
    pages        = {277:1--277:26},
    doi          = {10.1145/3767713}
}

@article{BadiaCosDelNog19,
  author       = {G.~Badia and
                  V.~Costa and
                  P.~Dellunde and
                  C.~ Noguera},
  title        = {Syntactic characterizations of classes of first-order structures in
                  mathematical fuzzy logic},
  journal      = {Soft Comput.},
  volume       = {23},
  number       = {7},
  pages        = {2177--2186},
  year         = {2019},
  doi          = {10.1007/S00500-019-03850-6}}

@InProceedings{BrinkeGraMrk24,
  author =	{Brinke, S. and Gr\"{a}del, E. and Mrkonji\'{c}, L.},
  title =	{{Ehrenfeucht-Fra\"{i}ss\'{e} Games in Semiring Semantics}},
  booktitle =	{32nd EACSL Annual Conference on Computer Science Logic (CSL 2024)},
  pages =	{19:1--19:22},
  series =	{Leibniz International Proceedings in Informatics (LIPIcs)},
  ISBN =	{978-3-95977-310-2},
  ISSN =	{1868-8969},
  year =	{2024},
  volume =	{288},
  URN =		{urn:nbn:de:0030-drops-196623},
  doi =		{10.4230/LIPIcs.CSL.2024.19}
}

@InProceedings{BrinkeDawGraMrkNaa26,
  author =	{Brinke, S. and Dawar, A. and Gr\"{a}del, E. and Mrkonji\'{c}, L. and Naaf, M.},
  title =	{{Compactness in Semiring Semantics}},
  booktitle =	{34th EACSL Annual Conference on Computer Science Logic (CSL 2026)},
  pages =	{13:1--13:21},
  series =	{Leibniz International Proceedings in Informatics (LIPIcs)},
  ISBN =	{978-3-95977-411-6},
  ISSN =	{1868-8969},
  year =	{2026},
  volume =	{363},
  URN =		{urn:nbn:de:0030-drops-254372},
  doi =		{10.4230/LIPIcs.CSL.2026.13}
}

@InProceedings{BiziereGraNaa23,
  author =	{Bizi\`{e}re, C. and Gr\"{a}del, E. and Naaf, M.},
  title =	{{Locality Theorems in Semiring Semantics}},
  booktitle =	{48th International Symposium on Mathematical Foundations of Computer Science (MFCS 2023)},
  pages =	{20:1--20:15},
  series =	{Leibniz International Proceedings in Informatics (LIPIcs)},
  ISBN =	{978-3-95977-292-1},
  ISSN =	{1868-8969},
  year =	{2023},
  volume =	{272},
  URN =		{urn:nbn:de:0030-drops-185546},
  doi =		{10.4230/LIPIcs.MFCS.2023.20}
}

@InProceedings{BrinkeGraMrkNaa24,
	author = {Brinke, S. and Gr{\"a}del, E. and Mrkonji\'c, L. and Naaf, M.},
	title =	{{Semiring Provenance in the Infinite}},
	booktitle =	{The Provenance of Elegance in Computation - Essays Dedicated to Val Tannen},
	pages =	{3:1--3:26},
	series =	{Open Access Series in Informatics (OASIcs)},
	year =	{2024},
	volume =	{119},
	publisher =	{Schloss Dagstuhl -- Leibniz-Zentrum f{\"u}r Informatik},
	address =	{Dagstuhl, Germany},
	doi =		{10.4230/OASIcs.Tannen.3}
}

@article{Carr26,
      title={Homomorphism Preservation Theorems for Many-Valued Structures}, 
      author={J.~Carr},
      journal = {ACM Transactions on Computational Logic},
      year={2026},
      doi = {10.1145/3793665}
}

@inproceedings{DawarSan21,
  author       = {A.~Dawar and A.~Sankaran},
  title        = {Extension Preservation in the Finite and Prefix Classes of First Order
                  Logic},
  booktitle    = {29th EACSL Annual Conference on Computer Science Logic (CSL 2021)},
 pages        = {18:1--18:13},
  year         = {2021},
  doi          = {10.4230/LIPICS.CSL.2021.18}
}

@article{DellundeVid19,
  author    = {P.~Dellunde and A.~Vidal},
  title     = {Truth-preservartion under fuzzy pp-formulas},
  journal   = {International Journal of Uncertainty, Fuzzyness, and Knowledge-Based Systems},
  volume    = {27},
  pages     = {89-105},
  year      = {2019}
}

@inproceedings{FosterGreTan08,
author = {Foster, J. Nathan and Green, Todd J. and Tannen, Val},
title = {Annotated XML: queries and provenance},
year = {2008},
doi = {10.1145/1376916.1376954},
booktitle = {Proceedings of the Twenty-Seventh ACM SIGMOD-SIGACT-SIGART Symposium on Principles of Database Systems},
pages = {271–280},
numpages = {10},
series = {PODS '08}
}

@article{Glavic21,
    author = "Glavic, B.",
    title = "Data Provenance",
    journal = "Foundations and Trends in Databases",
    volume = "9",
    number = "3-4",
    pages = "209--441",
    year = "2021",
    doi = "10.1561/1900000068"
}

@inproceedings{GraedelHelNaaWil22,
    author = {Gr{\"{a}}del, E. and Helal, H. and Naaf, M. and Wilke, R.},
    title = {{Zero-One Laws and Almost Sure Valuations of First-Order Logic in Semiring Semantics}},
    howpublished = "arXiv:2203.03425 [cs.LO]",
    year = "2022",
    url = "https://arxiv.org/abs/2203.03425",
    booktitle = "{LICS} '22: 37th Annual {ACM/IEEE} Symposium on Logic in Computer Science, Haifa, Israel, August 2 - 5, 2022",
    doi = "10.1145/3531130.3533358",
    pages = "41:1--41:12",
    publisher = "{ACM}"
}

@inproceedings{GraedelLucNaa21b,
    author = {Gr{\"a}del, E. and L{\"u}cking, N. and Naaf, M.},
    editor = "Ganty, P. and Bresolin, D.",
    title = {Semiring Provenance for {B{\"u}chi} Games: Strategy Analysis with Absorptive Polynomials},
    booktitle = "Proceedings 12th International Symposium on Games, Automata, Logics, and Formal Verification (GandALF 2021)",
    series = "{EPTCS}",
    volume = "346",
    pages = "67--82",
    year = "2021",
    doi = "10.4204/EPTCS.346.5"
}

@inproceedings{GraedelMrk21,
    author = {Gr\"{a}del, E. and Mrkonji\'{c}, L.},
    address = "Dagstuhl, Germany",
    booktitle = "48th International Colloquium on Automata, Languages, and Programming (ICALP 2021)",
    doi = "10.4230/LIPIcs.ICALP.2021.133",
    pages = "133:1--133:20",
    title = "Elementary Equivalence Versus Isomorphism in Semiring Semantics",
    volume = "198",
    year = "2021"
}

@incollection{GraedelTan24,
    author = {Gr{\"a}del, E. and Tannen, V.},
    booktitle = {Model Theory, Computer Science and Graph Polynomials. Festschrift for Johann A. Makowsky},
    title = "Provenance Analysis and Semiring Semantics for First-Order Logic",
    year = "2025",
    publisher = {Birkh\"auser},
    note = {See also arXiv:2412.07986},
    doi = {10.48550/ARXIV.2412.07986}
}

@article{GraedelTan20,
    author = {Gr{\"a}del, E. and Tannen, V.},
    title = "Provenance analysis for logic and games",
    journal = "Moscow Journal of Combinatorics and Number Theory",
    volume = "9",
    number = "3",
    pages = "203--228",
    year = "2020",
    publisher = "Mathematical Sciences Publishers",
    doi = "10.2140/moscow.2020.9.203",
    note = "Preprint available at \url{https://arxiv.org/abs/1907.08470}"
}

@inproceedings{GreenKarTan07,
author = {Green, T. and Karvounarakis, G. and Tannen, V.},
title = {Provenance semirings},
year = {2007},
publisher = {Association for Computing Machinery},
doi = {10.1145/1265530.1265535},
booktitle = {Proceedings of the Twenty-Sixth ACM SIGMOD-SIGACT-SIGART Symposium on Principles of Database Systems},
pages = {31–40},
numpages = {10},
series = {PODS '07}
}

@inproceedings{GreenTan17,
author = {Green, T. and Tannen, V.},
title = {{The Semiring Framework for Database Provenance}},
year = {2017},
publisher = {Association for Computing Machinery},
doi = {10.1145/3034786.3056125},
booktitle = {Proceedings of the 36th ACM SIGMOD-SIGACT-SIGAI Symposium on Principles of Database Systems},
pages = {93–99},
numpages = {7},
series = {PODS '17}
}

@incollection{Gurevich84,
	author = {Y.~Gurevich},
	booktitle = {Computation and Proof Theory},
	pages = {175-216},
	publisher = {Springer Lecture Notes in Mathematics},
	title = {Towards logic tailored for computational complexity},
	year = {1984}
}

@phdthesis{Mrkonjic25,
	author = "Mrkonji{\'c}, L.",
	school = "{RWTH} Aachen University",
	title = "Semiring Semantics: Algebraic Foundations, Model Theory, and Strategy Analysis",
	type = "{PhD} Thesis",
	year = "2025",
    url={https://publications.rwth-aachen.de/record/1010223}
}

@phdthesis{Naaf24,
    author = "Naaf, M.",
    school = "{RWTH} Aachen University",
    title = "Logic, Semirings, and Fixed Points",
    type = "{PhD} Thesis",
    year = "2024",
    url={https://publications.rwth-aachen.de/record/996756}
}

@article{Rosen02,
    author = "E.~Rosen",
    title = "Some aspects of model theory on finite structures",
    journal = "Bulletin of Symbolic Logic",
    volume = "8",
    pages = "380-403",
    year = "2002"
}

@article{Rossman08,
author = {Rossman, Benjamin},
title = {Homomorphism preservation theorems},
year = {2008},
issue_date = {July 2008},
publisher = {Association for Computing Machinery},
volume = {55},
number = {3},
doi = {10.1145/1379759.1379763},
journal = {J. ACM},
articleno = {15},
numpages = {53},
}

@article{Tait59,
    author = "W.~Tait",
    title = {{A counterexample to a conjecture of Scott and Suppes}},
    journal = "Journal of Symbolic Logic",
    volume = "24",
    pages = "15-16",
    year = "1959"
}

@phdthesis{Lopez-thesis,
  author       = {A. Lopez},
  title        = {First Order Preservation Theorems in Finite Model Theory : Locality,
                  Topology, and Limit Constructions},
  school       = {University of Paris-Saclay, France},
  year         = {2023},
  url          = {https://tel.archives-ouvertes.fr/tel-04534568}
}

\end{document}